\crefname{algocf}{Algorithm}{Algorithms}
\Crefname{algocf}{Algorithm}{Algorithms}
\crefname{claim}{Claim}{Claims}
\theoremstyle{plain}
\newtheorem{theorem}{Theorem}[section]
\newtheorem{lemma}[theorem]{Lemma}
\newtheorem{observation}[theorem]{Observation}
\newtheorem{corollary}[theorem]{Corollary}
\theoremstyle{definition}
\newtheorem{definition}[theorem]{Definition}
\newcommand{\too}[1]{\tilde{O}({#1})}
\newcommand{\abs}[1]{\vert{#1}\vert}
\newcommand{\new}[1]{\pi_{G'}(#1)}
\newcommand{\set}[1]{\{ #1 \}}
\newcommand{\og}[3]{\pi_{G-#3}\left(#1,#2\right)}
\newcommand{\hg}[3]{\pi_{H-#3}\left(#1,#2\right)}
\newcommand{\nng}[3]{\pi_{G'-#3}\left(#1,#2\right)}
\newcommand{\odg}[3]{\varpi_{G-#3}\left(#1,#2\right)}
\newcommand{\ndg}[3]{\varpi_{G'-#3}\left(#1,#2\right)}
\newcommand{\pp}[1]{T_{G',R}\left(#1\right)}
\newcommand{\zdd}[1]{{#1}^{l}}
\newcommand{\ydd}[1]{{#1}^{r}}
\begin{document}

%\title{Almost Optimal 3-Fault Replacement Paths and Incremental Distance Sensitivity Oracles in Undirected Graphs}

\title{Undirected 3-Fault Replacement Path in Nearly Cubic Time}

%\author{
%    Anonymous Authors
%}

%\date{}

\author[1]{Shucheng Chi \thanks{chisc21@mails.tsinghua.edu.cn}}
\author[1]{Ran Duan \thanks{duanran@mail.tsinghua.edu.cn}}
\author[2]{Benyu Wang \thanks{benyuw@umich.edu}}
\author[1]{Tianle Xie \thanks{xtl21@mails.tsinghua.edu.cn}}

\affil[1]{Institute for Interdisciplinary Information Sciences, Tsinghua University}
\affil[2]{University of Michigan, Ann Arbor}

%\maketitle
%%
%% By default, the full list of authors will be used in the page
%% headers. Often, this list is too long, and will overlap
%% other information printed in the page headers. This command allows
%% the author to define a more concise list
%% of authors' names for this purpose.
%\renewcommand{\shortauthors}{Trovato and Tobin, et al.}

%%
%% The abstract is a short summary of the work to be presented in the
%% article.

\maketitle
\begin{abstract}
Given a graph $G=(V,E)$ and two vertices $s,t\in V$, the $f$-fault replacement path ($f$FRP) problem computes for every set of edges $F$ where $|F|\leq f$, the distance from $s$ to $t$ when edges in $F$ fail. A recent result shows that 2FRP in directed graphs can be solved in $\tilde{O}(n^3)$ time [Vassilevska Williams, Woldeghebriel, Xu 2022]. In this paper, we show a 3FRP algorithm in deterministic $\tilde{O}(n^3)$ time for undirected weighted graphs, which almost matches the size of the output. This implies that $f$FRP in undirected graphs can be solved in almost optimal $\tilde{O}(n^f)$ time for all $f\geq 3$.

To construct our 3FRP algorithm, we introduce an incremental distance sensitivity oracle (DSO) with $\tilde{O}(n^2)$ worst-case update time, while preprocessing time, space, and query time are still $\tilde{O}(n^3)$, $\tilde{O}(n^2)$ and $\tilde{O}(1)$, respectively, which match the static DSO [Bernstein and Karger 2009]. Here in a DSO, we can preprocess a graph so that the distance between any pair of vertices given any failed edge can be answered efficiently. From the recent result in [Peng and Rubinstein 2023], we can obtain an offline dynamic DSO from the incremental worst-case DSO, which makes the construction of our 3FRP algorithm more convenient. By the offline dynamic DSO, we can also construct a 2-fault single-source replacement path (2-fault SSRP) algorithm in $\tilde{O}(n^3)$ time, that is, from a given vertex $s$, we want to find the distance to any vertex $t$ when any pair of edges fail. Thus the $\tilde{O}(n^3)$ time complexity for 2-fault SSRP is also almost optimal.

Now we know that in undirected graphs 1FRP can be solved in $\tilde{O}(m)$ time [Nardelli, Proietti, Widmayer 2001], and 2FRP and 3FRP in undirected graphs can be solved in $\tilde{O}(n^3)$ time. In this paper, we also show that a truly subcubic algorithm for 2FRP in undirected graphs does not exist under APSP-hardness conjecture.
\end{abstract}

\thispagestyle{empty}
\setcounter{page}{1}
\pagestyle{plain}

\section{Introduction}

%Tentative Outline:

The shortest path problem is one of the most fundamental problems in computer science, and the single-source shortest path problem is known to have an almost linear $\tilde{O}(m)$ time algorithm~\cite{Dij59,DMSY23}. In a failure-prone graph $G=(V,E)$ ($n=|V|$, $m=|E|$), some edges may fail on the shortest path, so we want to find a \emph{replacement path}. In the classical replacement path (RP) problem, given source $s$ and destination $t$, we want to find the $s$-$t$ shortest path when edge $e$ fails for every edge $e\in E$. Of course, if $e$ is not on the original $s$-$t$ shortest path in $G$, the replacement path is still the original shortest path, so in fact we just need to find an $s$-$t$ replacement path for every edge $e$ on the $s$-$t$ shortest path in $G$, which counts for $O(n)$ replacement paths.

We can generalize the RP problem to any number of failed edges $f$, that is, finding $s$-$t$ shortest paths for every failed edge set $F$ which has $|F|\leq f$. This problem is called \emph{$f$-fault replacement path} ($f$FRP)~\cite{WWX22}, therefore 1FRP is just the original RP problem. As before, given the first $(f-1)$ failed edges and the corresponding $s$-$t$ replacement shortest path, we still only need to consider the cases when the $f$-th failed edge is on it. Because there are $O(n)$ edges on each shortest path, the total number of replacement paths we need to find for $f$FRP problem is $O(n^f)$. The current well-known results of $f$FRP algorithms for \emph{real-weighted} directed and undirected graphs are summarized as follows.

\begin{itemize}
    \item RP problem in directed graphs can be solved in $O(mn+n^2\log\log n)=O(n^3)$ time~\cite{GL09}.
    \item Under APSP-hardness conjecture\footnote{APSP is an abbreviation for all-pair shortest path problem, and APSP-hardness conjecture suggests that APSP cannot be solved in truly subcubic $O(n^{3-\epsilon})$ time for any constant $\epsilon>0$.}, RP in directed graphs does not have $O(n^{3-\epsilon})$ time algorithm for any constant $\epsilon>0$~\cite{WW18}. So the $O(n^3)$-time algorithm is (conditionally) nearly optimal.
    \item RP problem in undirected graphs can be solved in $\tilde{O}(m)$ time\footnote{Here $\tilde{O}(\cdot)$ hides polylogarithmic factors.}~\cite{NPW01}, thus also almost optimal.
    \item Recently, Vassilevska Williams, Woldeghebriel and Xu~\cite{WWX22} gave a 2FRP algorithm with $\tilde{O}(n^3)$ running time for directed graphs, almost matching the 1FRP case. (Although not formulated, it also works for undirected graphs with slight modifications.)
\end{itemize}

Thus, 1FRP and 2FRP problems both have $\tilde{O}(n^3)$-time algorithms, so it is natural to ask whether 3FRP still has a $\tilde{O}(n^3)$-time algorithm. In this paper, we give such a 3FRP algorithm for undirected graphs: (Note that all algorithms in this paper are deterministic.)

\begin{theorem}\label{thm:main1}
    The 3FRP problem in undirected real-weighted graphs can be solved in $\tilde{O}(n^3)$ time.
\end{theorem}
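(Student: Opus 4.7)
The plan is to cast 3FRP as $O(n^3)$ one-fault distance queries on dynamically varying graphs of the form $G-\{e_1,e_2\}$, and to service those queries using the paper's incremental DSO combined with the Peng--Rubinstein offline-dynamic reduction announced in the abstract.

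First, by iterated trimming, I would restrict attention to triples $F=\{e_1,e_2,e_3\}$ for which $\pi_{G-F}(s,t)>\pi_{G-F'}(s,t)$ for every $F'\subsetneq F$; all other triples inherit their answers from the 1FRP output~\cite{NPW01} and the 2FRP output~\cite{WWX22}. A standard argument shows that every such non-trivial triple admits a labeling with $e_1\in \pi_G(s,t)$, $e_2\in \pi_{G-e_1}(s,t)$, and $e_3\in \pi_{G-\{e_1,e_2\}}(s,t)$ --- otherwise removing one of the $e_i$ would not affect the $s$-$t$ distance. Since each of these three paths has $O(n)$ edges, there are $O(n^3)$ relevant triples. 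Crucially, for any such triple the 3-fault distance equals a single 1-fault query (with fault $e_3$) on a DSO of the base graph $G-\{e_1,e_2\}$.

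Second, I would enumerate the pairs $(e_1,e_2)$ with an outer loop over $e_1\in \pi_G(s,t)$ and an inner loop over $e_2\in \pi_{G-e_1}(s,t)$, while maintaining a DSO of the current $G-\{e_1,e_2\}$. Consecutive inner-loop pairs differ in the base graph by $O(1)$ edges (the previous $e_2$ is reinserted, the new one removed), and restarting the inner loop for a new $e_1$ incurs only $O(1)$ further edge swaps. This gives $O(n^2)$ updates and $O(n^3)$ queries in all --- one query per relevant triple, together with the auxiliary 1-fault queries used to reconstruct $\pi_{G-\{e_1,e_2\}}(s,t)$ and thereby identify the eligible $e_3$'s. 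Feeding this update/query trace into the offline-dynamic DSO obtained via Peng--Rubinstein from the paper's $\tilde O(n^2)$-worst-case incremental DSO delivers the target $\tilde O(n^3)$ total time, on top of the $\tilde O(n^3)$ preprocessing that matches Bernstein--Karger.

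The central obstacle is closing the amortization at $\tilde O(n^3)$ rather than $\tilde O(n^4)$: taken in isolation, $O(n^2)$ incremental updates at $\tilde O(n^2)$ each would already consume $\tilde O(n^4)$. The savings must come entirely from the offline-dynamic reduction, which exploits advance knowledge of the full update/query sequence to share recomputation across many logical updates, so the proof's technical core will be designing the enumeration so that its update pattern fits this reduction cleanly. A secondary subtlety is ensuring that identifying the relevant $e_2$'s and $e_3$'s, which itself requires reconstructing $\pi_{G-e_1}(s,t)$ and $\pi_{G-\{e_1,e_2\}}(s,t)$ on the fly, fits within the same $\tilde O(n^3)$ budget rather than blowing it up through path-extraction overhead.
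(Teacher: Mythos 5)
Your reduction of 3FRP to one-fault queries on $G-\{e_1,e_2\}$ is correct in spirit, but the amortization you are counting on does not exist, and this is a fatal gap rather than a detail to be worked out. Theorem~\ref{thm1-2} (Peng--Rubinstein) and Theorem~\ref{thm:offline} in the paper give \emph{per-update} time $O(\Gamma_u\log^2 T)$ for the offline dynamic structure, i.e.\ the cost of each logical update is still $\tilde O(n^2)$; the offline reduction only removes deletions, it does not share work across updates. With your enumeration over $O(n^2)$ pairs $(e_1,e_2)$ you generate $T=\Theta(n^2)$ updates, for a total of $\tilde O(n^4)$. The paper's Theorem~\ref{thm:offline} explicitly caps $T$ at $O(n)$ precisely to stay at $\tilde O(n^3)$, so you cannot lean on it the way you do.

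The paper avoids this by never maintaining a DSO on $G-\{e_1,e_2\}$ for all $O(n^2)$ pairs. Instead it splits on how many of $d_1,d_2,d_3$ lie on $st$ and uses three quite different devices. When exactly one lies on $st$, the auxiliary graph $H$ (with $d^-,d^+$ vertices for every $d\in st$, as in Section~\ref{sec2-1}) encodes the removal of $d_1$ \emph{for free} in the query, so the only true graph updates are deletions of $d_2$; crucially, Lemma~\ref{thm:n-edges} bounds the number of distinct candidate $d_2$ to $O(n)$ (the union of all one-fault $s$--$t$ replacement paths has $O(n)$ edges in undirected graphs), giving $T=O(n)$ updates, not $O(n^2)$. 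When two lie on $st$, even the $H$-trick is not enough, and the paper introduces the $O(\log n)$ binary-partition graphs $H_{i,0},H_{i,1}$ so that both $d_1$ and $d_2$ are removed "structurally" and only $O(n)$ per-graph updates (over $d_3$) are needed; the path is then reassembled via Theorem~\ref{theorem:2edge}. When all three lie on $st$, the problem is no longer a one-fault query at all, and the paper builds the oracles $A$ and $B$ and runs the snake-path binary search (Lemma~\ref{thm1-5}) to amortize to $\tilde O(n)$ per pair $(d_1,d_3)$. Your proposal collapses all three of these regimes into a single "maintain a DSO on $G-\{e_1,e_2\}$" loop, which makes the update budget $n$ times too large, and it also silently treats $e_1,e_2$ as interchangeable although the structure of the argument really depends on how many of them lie on $st$. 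The $O(n^3)$-triple counting and the reduction to one-fault queries are fine; it is the update accounting and the machinery needed to make it $O(n)$ updates (auxiliary vertices, Lemma~\ref{thm:n-edges}, binary-partition graphs, and the snake-path recursion) that are missing.
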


Denote the shortest path between $s$ and $t$ in graph $G$ by $\pi_G(s,t)$. Then in $\tilde{O}(n^3)$ time, for every edge $d_1\in \pi_G(s,t)$, we can find all edges of $\pi_{G-\{d_1\}}(s,t)$, then for every edge $d_2\in \pi_{G-\{d_1\}}(s,t)$, we can find all edges of $\pi_{G-\{d_1,d_2\}}(s,t)$, then for every edge $d_3\in \pi_{G-\{d_1,d_2\}}(s,t)$, we can find $|\pi_{G-\{d_1,d_2,d_3\}}(s,t)|$, so the algorithm outputs $O(n^3)$ distances in total. One can see the difficulty of this problem since every answer only takes $\tilde{O}(1)$ time on average.

For any $f\geq 3$, by Theorem~\ref{thm:main1}, the $f$FRP problem can be solved in $\tilde{O}(n^f)$ time in undirected graphs (see the reduction in~\cite{WWX22}). Since the size of the output of $f$FRP can be $\Theta(n^f)$, the running time is almost optimal. As in~\cite{WWX22}, we also take the 1-failure distance sensitivity oracle as an important subroutine, since it only takes $\tilde{O}(1)$ query time. Here an $f$-failure distance sensitivity oracle (DSO) is a data structure that supports distance queries between any pair of $u,v\in V$ given any $f$ failed edges. It is widely known that 1-failure DSO takes $\tilde{O}(n^2)$ space, $O(1)$ query time, and $\tilde{O}(mn)$ construction time~\cite{Demetrescu2008,2009A}. Since current 2-failure DSOs are still hard to construct efficiently~\cite{duan2009dual}, we instead construct an incremental 1-failure DSO: (In the following DSO means 1-failure DSO for convenience.)

\begin{theorem}
    For a given undirected graph $G$, there is an incremental DSO that can be constructed in $\tilde{O}(n^3)$ time, so that when we insert an edge $e$ into $G$, the DSO can be maintained in worst-case $\tilde{O}(n^2)$ time. The DSO also has $\tilde{O}(n^2)$ space and $\tilde{O}(1)$ query time.
\end{theorem}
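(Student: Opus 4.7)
The plan is to maintain a Bernstein--Karger-style distance sensitivity oracle (DSO) for the evolving graph throughout the insertion sequence. The initial preprocessing on $G_0$ runs the static BK algorithm in $\tilde{O}(n^3)$ time, producing the usual $\tilde{O}(n^2)$-cell representation that answers queries in $\tilde{O}(1)$ and decomposes each pair's shortest path into a balanced binary hierarchy of subpaths. Since there are $\tilde{O}(n^2)$ cells in total and we aim for $\tilde{O}(n^2)$ worst-case work per insertion, each stored value must be updatable in $\tilde{O}(1)$ amortized time.

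For the insertion of $e^* = (x,y)$ of weight $w$, I would first refresh the easy auxiliary data in $\tilde{O}(n^2)$ time. The APSP matrix of $G' := G + e^*$ is recomputed via $d_{G'}(u,v) = \min\{d_G(u,v),\, d_G(u,x)+w+d_G(y,v),\, d_G(u,y)+w+d_G(x,v)\}$, and a shortest-path tree rooted at each source is refreshed incrementally in $O(n)$ per source, giving $O(n^2)$ total. This also reveals the new shortest path $\pi_{G'}(u,v)$ and the $O(\log n)$ subpath intervals of its balanced binary decomposition for every pair. The heart of the algorithm is updating the cached detour distances: for any subpath $\sigma$ appearing in the decomposition of $\pi_{G'}(u,v)$, the new detour satisfies
\[
d_{G'-\sigma}(u,v) = \min\bigl\{d_{G-\sigma}(u,v),\; d_{G-\sigma}(u,x)+w+d_{G-\sigma}(y,v),\; d_{G-\sigma}(u,y)+w+d_{G-\sigma}(x,v)\bigr\},
\]
where the first term is already cached in the old DSO and the other two route around $\sigma$ via the new edge. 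My plan is to augment the BK representation with just enough one-sided detour information (analogous to the per-scale detour trees BK uses in the static setting) so that each term on the right is retrievable in $\tilde{O}(1)$ time, and then to recompute cells in order of increasing path length so that any shorter-range detour appearing on the right-hand side has already been refreshed.

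The main obstacle will be bounding the total cost at $\tilde{O}(n^2)$, since a naive per-cell update crossed with the $O(\log n)$ depth of the decomposition threatens an extra $\log$ factor and the dependencies among cells cross scales. I expect to handle this scale-by-scale: at scale $2^i$, the subpaths of length about $2^i$ are refreshed collectively via an SSSP-style sweep on an auxiliary graph whose edges expose the already-updated scale-$(i{-}1)$ detours as shortcut edges. Because each vertex participates in a bounded number of sweeps per scale, and the contracted auxiliary graphs across scales have total size $\tilde{O}(n^2)$, the update cost telescopes to $\tilde{O}(n^2)$ per insertion, matching the space and query-time guarantees of the static DSO.
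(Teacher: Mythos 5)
Your plan diverges from the paper's in a way that leaves a genuine gap, and the missing idea is exactly what makes the paper's incremental update work. The paper does \emph{not} try to maintain the true detour values $d_{G'-\sigma}(u,v)$ for all decomposition intervals $\sigma$. Instead it introduces a relaxed notion of a \textbf{weak} interval: a subpath $ab \subseteq \pi_{G'}(u,v)$ is weak if some single failed edge $f \in ab$ already has $\pi_{G'-f}(u,v)$ avoiding the whole of $ab$. The stored cell is only required to equal the true detour when the interval is weak; otherwise it may hold any $u$-$v$ path in proper form that avoids the interval, or a sentinel $+\infty$. This relaxation matters because the exact shortest path avoiding a whole interval is not in general a concatenation of a bounded number of shortest paths and is hard to maintain under insertions, whereas a weak detour is automatically of the form (shortest path $\circ$ edge $\circ$ shortest path) by Theorem~\ref{ReplacementPath}, so it can be stored in $O(1)$ words and checked in $\tilde{O}(1)$ time via LCA queries. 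The query procedure (Theorem~\ref{thm4-3}) is engineered to return correct answers despite these lazy cells, since a single-edge interval is always weak.

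Beyond the missing relaxation, your update formula has two concrete problems. First, it treats $\sigma$ as an interval of the \emph{new} shortest path while reading the terms $d_{G-\sigma}(u,v)$, $d_{G-\sigma}(u,x)$, $d_{G-\sigma}(y,v)$ from the old oracle. When $\pi_{G'}(u,v) \neq \pi_G(u,v)$, the intervals of $\pi_{G'}(u,v)$ lie on $\pi_G(u,x)$, on $\pi_G(y,v)$, or straddle the new edge $e^*$, and none of these coincide with the intervals the old cells for the pair $(u,v)$ were indexed by; the paper resolves this with an explicit six-case analysis on the relative positions of the interval endpoints $a,b$, the divergence and convergence points $p,q$ of the old versus new shortest paths, and the endpoints $x,y$ of $e^*$ (Section~\ref{sec3-3-1}), and a separate six-case analysis when the shortest path is unchanged (Section~\ref{sec3-3-2}). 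Second, the min-formula is valid only when $e^* \notin \sigma$; when $\sigma$ straddles $e^*$ we have $G' - \sigma = G - (\sigma \setminus \{e^*\})$ and the detour cannot be split at $e^*$ — that is the paper's \textbf{CASE 1}, where the correct answer is simply $\pi_G(u,v)$ itself. Finally, the ``SSSP-style sweep on an auxiliary graph'' you invoke for the amortization is a placeholder rather than an argument: the paper updates each of the $\tilde{O}(n^2)$ stored cells directly in $\tilde{O}(1)$ time from a constant number of old-cell reads plus LCA-based proper-form transforms, and it is far from clear that your auxiliary-graph approach would both be correct and fit the budget without the weak-interval relaxation to fall back on when the exact interval detour is intractable.
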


Recently Peng and Rubinstein~\cite{peng2023fully} gave a reduction from offline fully dynamic structure to worst-case incremental structure, where ``offline'' means all updates are known ahead. So we can obtain an offline dynamic efficient DSO. (We also give a simple proof of the reduction we need by a different method as~\cite{peng2023fully}, see Theorem~\ref{thm:offline}.)

\begin{theorem} (\cite{peng2023fully})
    Let $T \geq 1$ be the total number of updates. If there exists an incremental dynamic algorithm with query time $\Gamma_q$ and worst-case update time $\Gamma_u$, then there is an offline dynamic algorithm with query time $\Gamma_q$ and worst-case update time $O(\Gamma_u \cdot \log^2(T))$.
\end{theorem}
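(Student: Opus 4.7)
The plan is to use a segment tree over the time line $[1,T]$ together with the canonical decomposition of edge lifespans, driving a hierarchy of black-box incremental structures, one per depth. Because the offline assumption tells us each edge $e$'s alive interval $[s_e, d_e]\subseteq[1,T]$ in advance, I would decompose every $[s_e, d_e]$ into its $O(\log T)$ canonical segment-tree intervals and record $e$ at the corresponding nodes; the familiar identity this yields is that the set of alive edges at any time $t$ equals the union of records at the ancestors of leaf $t$, with total storage $O(T\log T)$.

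Next, I would maintain $\log T+1$ black-box incremental instances $\mathcal{I}_0,\dots,\mathcal{I}_{\log T}$ under the invariant that $\mathcal{I}_d$ holds the canonical records at the first $d+1$ ancestors of the currently processed leaf; hence $\mathcal{I}_{\log T}$ always contains exactly the alive edges and any query is answered on it in time $\Gamma_q$. Walking through the leaves in time order, whenever the root-to-leaf path changes at a shallowest depth $d^\star$, I would discard $\mathcal{I}_{d^\star},\dots,\mathcal{I}_{\log T}$ and rebuild them by initialising fresh black-box incremental instances and replaying the canonical insertions along the new path in order. To lift this to the worst-case $O(\Gamma_u\log^2 T)$ per update, I would spread each rebuild via a shadow instance, begun well before its activation step and advanced by a balanced slice at every intervening step; the shadow replaces the active instance at the moment the new level-$d$ ancestor takes effect. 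Since each edge lives at $O(\log T)$ canonical nodes and each level can be charged $O(\Gamma_u\log T)$ per step, summing over the $O(\log T)$ levels yields the claimed bound while query time remains $\Gamma_q$.

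The hard part will be calibrating the shadow-copy schedule across levels with different rebuild periods so that every rebuild completes strictly before its deadline and no single step accumulates more than $O(\Gamma_u\log^2 T)$ work; the analysis reduces to balancing the canonical-edge counts on upcoming root-to-leaf paths against the doubling geometry of the segment tree. Correctness of the answers themselves is immediate from the canonical-decomposition identity together with the invariant on the deepest instance, so the full technical burden falls on making this schedule work in the worst case rather than only amortized.
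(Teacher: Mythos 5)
Your segment-tree-over-time decomposition is structurally the same one the paper uses in its own self-contained Theorem~\ref{thm:offline}: the canonical records you store at a node $v$ are exactly $G_{I_v}\setminus G_{I_{\mathrm{parent}(v)}}$ in the paper's intersection notation $G_{[i,j]}=\bigcap_{k\in[i,j]}G_k$, and both approaches exploit the nesting $G_R\subseteq G_{R'}$ for a child $R'$ of $R$ so that the incremental black box can be grown down the tree and queries answered at the leaf for the current time. The paper, however, stops at a \emph{total}-time bound (build the whole tree once, $\tilde{O}(n^3)$ over $T=O(n)$ updates), which is all its application needs; you are after the genuinely stronger worst-case $O(\Gamma_u\log^2 T)$ per-update claim of Peng--Rubinstein, and there the proposal leaves two real holes.

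First, your rebuild step ``initialising fresh black-box incremental instances and replaying the canonical insertions along the new path'' reconstructs each $\mathcal{I}_d$ from a blank slate by inserting every edge recorded at depths $0,\dots,d$. For $\mathcal{I}_{\log T}$ that is the entire alive graph at the current time --- up to $\Theta(T)$ insertions at $\Gamma_u$ each --- while a depth-$d$ node lives only $T/2^d$ steps, so for $2^d\gg\log^2 T$ this cannot be charged to any per-step budget of $O(\Gamma_u\log^2 T)$. The paper escapes this because its incremental DSO is an explicit $\tilde{O}(n^2)$-word object that can be snapshotted, so a child's structure is built by cloning the parent's and inserting only the child's own canonical records; the theorem statement hands you only an incremental algorithm, not the ability to copy its state, so you must either make that assumption explicit or avoid full-path rebuilds. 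Second, you acknowledge that calibrating the shadow-copy schedule across levels with incommensurate rebuild periods is ``the hard part'' and leave it unresolved --- but that calibration \emph{is} the worst-case claim. The amortized $O(\Gamma_u\log T)$ bound follows immediately from the $O(T\log T)$ total canonical records; converting this into the stated worst-case per-step guarantee is precisely the technical content of Peng--Rubinstein, and the proposal does not carry it out.
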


\begin{corollary}\label{cor:dynamic}
    Given an undirected graph $G$ and a sequence of $O(n)$ edge insertions and deletions, there is an offline dynamic DSO which can be constructed in $\tilde{O}(n^3)$ time, and the total update time is also $\tilde{O}(n^3)$. It can answer the distance between any pair of vertices when any edge fails in $\tilde{O}(1)$ time.
\end{corollary}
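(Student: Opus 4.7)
The plan is to apply the Peng--Rubinstein reduction from the previous theorem as a black box, using the incremental DSO of the theorem just above it as the underlying data structure. Plugging in the parameters $\Gamma_u = \tilde{O}(n^2)$ and $\Gamma_q = \tilde{O}(1)$ together with $T = O(n)$, the reduction immediately yields an offline fully dynamic DSO with worst-case update time $\Gamma_u \cdot O(\log^2 T) = \tilde{O}(n^2)$ and query time $\tilde{O}(1)$. Summing this per-update bound over the $O(n)$ updates gives the claimed total update cost of $\tilde{O}(n^3)$.

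To bound the preprocessing I would unpack one level of the reduction (or appeal to Theorem~\ref{thm:offline}, whose proof is promised in the introduction). The standard construction places the update sequence at the leaves of a segment tree on $[1,T]$, and each edge---active on some sub-interval of times---is logically inserted into the $O(\log T)$ canonical segment-tree nodes covering that interval. At each such node the reduction maintains an incremental DSO that extends its parent's state. Initializing the root costs one full preprocessing of the incremental DSO, that is $\tilde{O}(n^3)$. Every other node is produced from its parent by $O(T\log T) = \tilde{O}(n)$ edge insertions in total, each of worst-case cost $\tilde{O}(n^2)$; the sum is again $\tilde{O}(n^3)$. So the whole offline fully dynamic DSO can be built in $\tilde{O}(n^3)$ time.

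The only subtlety to check is that the incremental DSO can be initialized on $G$ augmented by any fixed set of edges, which is immediate because its preprocessing is a black-box static construction on whatever input graph it receives. With this, the three desired bounds---$\tilde{O}(n^3)$ preprocessing, $\tilde{O}(n^3)$ total update time, $\tilde{O}(1)$ query time---all drop out, completing the proof. No genuine obstacle arises beyond this routine accounting, so I would expect the formal write-up to be just a few lines.
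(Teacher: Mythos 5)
Your proposal is correct and takes essentially the same route as the paper: the corollary follows by plugging the incremental DSO into the Peng--Rubinstein reduction, with the preprocessing-time bound supplied by the binary range-tree construction of Theorem~\ref{thm:offline}, which is exactly the segment-tree argument you unpack in your second paragraph (the paper builds the tree top-down via the containment $G_R\subseteq G_{R'}$ for a child $R'$, yielding the same $O(T\log T)$ total-insertion bound as your canonical-decomposition framing).
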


We have known 1FRP is as hard as APSP in directed graphs~\cite{WW18}, but in undirected graphs, 1FRP can be solved in $\tilde{O}(m)$ time~\cite{NPW01}, and 2FRP in undirected graphs can be solved in $\tilde{O}(n^3)$ time. One may wonder whether 2FRP in undirected graphs has a truly subcubic time algorithm. However, we show that it is not possible under the APSP-hardness conjecture.

\begin{theorem}
    Assuming the APSP-hardness conjecture that APSP cannot be solved in truly subcubic $O(n^{3-\epsilon})$ time for any constant $\epsilon>0$, then 2FRP problem in undirected weighted graphs cannot be solved in truly subcubic time.
\end{theorem}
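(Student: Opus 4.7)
The plan is to reduce the $(\min,+)$-matrix product, which is subcubic equivalent to APSP \cite{WW18}, to 2FRP on an undirected weighted graph. Given $n\times n$ matrices $A$ and $B$, I would construct an undirected weighted graph $G$ on $\Theta(n)$ vertices with distinguished source $s$ and sink $t$ so that the $\Theta(n^2)$ distances output by 2FRP on $(G,s,t)$ encode every entry of $C=A\star B$. The default $s$-$t$ shortest path will consist of two ``selector segments'' of $n$ edges each, with interior vertices $r_1,\dots,r_n$ and $c_1,\dots,c_n$ joined by an edge $(r_n,c_1)$, all default-path edges having weight $0$. An auxiliary set of $n$ pivot vertices $m_1,\dots,m_n$ is attached by edges $(r_i,m_k)$ of weight $M+(n-i)L+A_{ik}$ and $(m_k,c_j)$ of weight $M+jL+B_{kj}$, where $M$ and $L$ are two parameters with $M\gg nL\gg \max_{ik}|A_{ik}|+\max_{kj}|B_{kj}|$.

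The verification proceeds in three steps. First, the default path has weight $0$, while any detour through the pivot gadget costs at least $2M$, so without failures the default path is the unique shortest $s$-$t$ path. Second, when a single selector edge $e_i=(r_{i-1},r_i)$ fails, the monotone offset $(n-i)L$ together with $L>\max(A,B)$ forces the cheapest replacement to enter the gadget at $r_{i-1}$, cross some $m_k$, return to $r_n$, and then traverse the entire second selector segment $c_1,\dots,c_n$ to $t$; in particular every $f_j=(c_{j-1},c_j)$ lies on this replacement path, so each pair $(e_i,f_j)$ is legitimately queried by 2FRP. Third, when both $e_i$ and $f_j$ fail, the monotone offsets force the cheapest replacement to enter the gadget at $r_{i-1}$, cross some $m_k$, and exit at $c_j$; its total weight equals $2M+(n-i+1+j)L+\min_k(A_{i-1,k}+B_{kj})=2M+(n-i+1+j)L+(A\star B)_{i-1,j}$. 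Subtracting the known constant recovers $(A\star B)_{i-1,j}$, and letting $i$ and $j$ range over all selector edges yields all $n^2$ entries of $A\star B$.

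The main obstacle is ruling out the unintended detours that undirectedness of $G$ would otherwise permit. Three classes must be dominated by the intended single-crossing detour. (a) A detour that crosses the pivot gadget two or more times costs at least $4M$, dominated by the intended $2M+O(nL)$ once $M\gg nL$. (b) A detour that enters or exits the gadget at the wrong position along either selector segment strictly increases the cost by at least $L-\max(A,B)>0$ per misplaced index, thanks to the monotone $L$-scale offsets; this is the most delicate point and is handled by a short case analysis on the entry and exit indices of the single gadget crossing. (c) A detour that avoids the pivot gadget entirely is impossible, because removing both $e_i$ and $f_j$ from the default path disconnects the $s$-side from the $t$-side in the subgraph of default and selector edges. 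Once these invariants are established, any $O(n^{3-\epsilon})$-time algorithm for undirected 2FRP would yield an $O(n^{3-\epsilon})$-time algorithm for $(\min,+)$-matrix product, contradicting the APSP-hardness conjecture.
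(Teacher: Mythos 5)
Your reduction is correct and, at the level of structure, follows the same template as the paper's: both build a long $s$-$t$ shortest path whose edges act as index selectors, attach gadget edges whose weights vary monotonically along the path so that the two failures force the replacement path to enter and exit the gadget at prescribed positions, and then read the desired value off the detour length after subtracting a known offset. The differences lie in the source problem and the gadget. The paper reduces directly from undirected weighted APSP: the graph is $H = S \cup E_1 \cup G \cup E_2 \cup T$, where $S$ and $T$ are zero-weight paths $s_0,\dots,s_{n+1}=s$ and $t_0,\dots,t_{n+1}=t$, the ``gadget'' is the APSP instance $G$ itself, and $E_1=\{(s_i,v_i)\}$, $E_2=\{(v_i,t_i)\}$ are matchings of weight $iN$ with a single scale $N$ exceeding the total weight of $G$. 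Deleting $(s_{i-1},s_i)$ and $(t_{j-1},t_j)$ then forces the replacement path to use exactly $(s_i,v_i)$ and $(v_j,t_j)$, and its middle portion is exactly $\pi_G(v_i,v_j)$ because any further matching edge costs at least $N$, which already dominates every $G$-distance. You reduce from $(\min,+)$-product (which is subcubic-equivalent, so this is fine) and therefore insert an extra pivot layer $m_1,\dots,m_n$ with weights coded by $A$ and $B$. The paper's choice buys a much shorter correctness argument: with one large scale $N$ the only things to rule out are a too-high entry/exit index and a second matching edge in the middle, both trivial, whereas your version genuinely needs two scales $M \gg nL \gg \max(A,B)$ plus the explicit casework you outline on multi-crossing detours and misplaced entry/exit indices. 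Your sketch also has small indexing issues to patch (the extracted entry is $(A\star B)_{i-1,j}$, so $i-1$ ranges over $0,\dots,n-1$ rather than $1,\dots,n$, and the endpoints $r_0$, $c_0$, and the connector edge $(r_n,c_1)$ need care so that every $f_j$ you want is actually on $\pi_{H-e_i}(s,t)$); all are fixable by padding each selector segment with one dummy edge. Overall: correct idea, equivalent source problem, heavier bookkeeping than the paper's version.
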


We can also apply the offline dynamic DSO to get a 2-fault single-source replacement path (2-fault SSRP) algorithm for undirected graphs, that is, given a source $s$, for every other vertex $t\in V$, we can find $\pi_{G-\{d_1,d_2\}}(s,t)$ for all edges $d_1,d_2$. The reduction is very simple: we can remove an edge $d_1$ from the shortest path tree from $s$ each time, and then put it back. By the offline dynamic DSO, we can answer the distance between $s$ and $t$ avoiding $d_2$ in the current graph without $d_1$.

\begin{theorem}\label{thm:SSRP1}
    The 2-fault single-source replacement path problem can be solved in $\tilde{O}(n^3)$ time for undirected graphs.
\end{theorem}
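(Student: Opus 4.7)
The plan is to lift the offline dynamic DSO of \Cref{cor:dynamic} to answer two-fault queries by fixing one failure at a time via the update sequence. First I would compute the shortest-path tree $T_s$ from $s$ in $G$ in $\tilde{O}(m)$ time; with consistent tie-breaking, any edge that appears on some $\pi_G(s,t)$ must lie in $T_s$, so only the $n-1$ edges of $T_s$ can serve as a non-trivial ``first'' failure $d_1$. I would then initialize the offline dynamic DSO on $G$ and plan a deterministic schedule of $2(n-1)$ updates: for each $d_1\in T_s$ in turn, delete $d_1$ (process queries), then reinsert $d_1$. By \Cref{cor:dynamic}, building and executing this schedule costs $\tilde{O}(n^3)$ in total, and between each pair of updates a one-fault query in the current graph is answered in $\tilde{O}(1)$ time.

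While $d_1$ is absent the DSO represents $G-\{d_1\}$, so a query on failed edge $d_2$ returns exactly $|\pi_{G-\{d_1,d_2\}}(s,t)|$. Before firing such queries I would run one Dijkstra call in $G-\{d_1\}$ to build its shortest-path tree $T_s^{(d_1)}$ in $\tilde{O}(m)=\tilde{O}(n^2)$ time; for each vertex $t$ in the subtree of $T_s$ rooted at the lower endpoint of $d_1$, I would walk $\pi_{G-\{d_1\}}(s,t)$ inside $T_s^{(d_1)}$ and issue a DSO query for every edge $d_2$ on this walk, recording the resulting distance. Pairs $(d_1,d_2)$ in which neither edge affects $\pi_G(s,t)$ need no query because the answer equals $|\pi_G(s,t)|$; pairs with $d_1\notin T_s$ but $d_2\in T_s$ are captured symmetrically in the iteration indexed by $d_2$, so every non-trivial two-fault distance from $s$ is reported.

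For the total cost, there is the $\tilde{O}(n^3)$ spent on preprocessing and servicing the $O(n)$ scheduled updates via \Cref{cor:dynamic}, plus $O(n)$ outer iterations, each doing one $\tilde{O}(n^2)$ Dijkstra call and $\sum_t |\pi_{G-\{d_1\}}(s,t)|=O(n^2)$ DSO queries at $\tilde{O}(1)$ apiece. Everything fits in $\tilde{O}(n^3)$. The main obstacle is conceptual rather than technical: one must recognise that the 2-fault SSRP output decomposes into a family of 1-fault DSO queries over the $n-1$ graphs $G-\{d_1\}$ for $d_1\in T_s$, and that the offline dynamic DSO lets us amortise the transition between these graphs at $\tilde{O}(n^2)$ per update rather than rebuilding a DSO from scratch for each $d_1$; the restriction of the second-fault enumeration to edges of $\pi_{G-\{d_1\}}(s,t)$ is what keeps both the query count and the reported output at $O(n^3)$.
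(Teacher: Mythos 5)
Your proposal is correct and follows essentially the same route as the paper's Section on 2-fault SSRP: iterate over the $n-1$ edges $d_1$ of the shortest-path tree from $s$, use the offline dynamic DSO to move between the graphs $G-\{d_1\}$ at $\tilde{O}(n^2)$ per update, and for each $t$ enumerate the second failure $d_2$ only along $\pi_{G-\{d_1\}}(s,t)$ to issue $O(n^2)$ one-fault queries per iteration. The only cosmetic difference is that you recover $\pi_{G-\{d_1\}}(s,t)$ by running Dijkstra in $G-\{d_1\}$, whereas the paper retrieves these one-fault replacement paths by path-retrieval in the static DSO; both cost $\tilde{O}(n^2)$ per iteration.
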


Note that Theorem~\ref{thm:SSRP1} can be extended to undirected $f$-fault SSRP algorithm in $\tilde{O}(n^{f+1})$ time for all $f\geq 2$, which is almost optimal. Previously there are 1-fault single-source replacement path algorithms of running time $\tilde{O}(m\sqrt{n}+n^2)$ for unweighted undirected graphs~\cite{CC19,DG22} and unweighted directed graphs~\cite{CM20}.

Note that the 2FRP, 3FRP, and single-source 2FRP algorithms in this paper obtain every distance from DSOs, which are essentially similar to the one in~\cite{2009A}, or the APSP table. By the properties of the DSOs and the construction of our algorithms, in these algorithms in fact we can obtain an oracle of size $\tilde{O}(n^3)$ in which we can retrieve a shortest path under failures in $O(1)$ time per edge.

\paragraph{Other related work.}

The current fastest running time for the directed all-pair shortest path (APSP) problem is $n^3/2^{\Omega(\sqrt{\log n})}$ \cite{Williams14}. Finding truly subcubic time algorithms for APSP with arbitrary weights is considered a major open problem in algorithm research, and its hardness is one of the major assumptions in proving conditional lower bounds.

For replacement path (RP) and single-source replacement path (SSRP) problems, there are also many subcubic time results for graphs with small integer edge weights. In unweighted directed graphs, the RP problem can be solved in $\tilde{O}(m\sqrt{n})$ time~\cite{ACC19}. If the edge weights are integers in $[-M,M]$, Vassilevska Williams gave a $\tilde{O}(Mn^{\omega})$ time RP algorithm~\cite{Williams14}, where $\omega<2.371339$ is the exponent of the complexity of matrix multiplication~\cite{alman2024,VXXZ24,DWZ23}. Moreover, \cite{WWX22} gave a 2FRP algorithm in small edge weight directed graphs in $\tilde{O}(M^{2/3}n^{2.9153})$ time. For SSRP problem, there is also a $\tilde{O}(Mn^{\omega})$ time algorithm for graphs with integer weights in $[1,M]$ and algorithms with running time $\tilde{O}(M^{0.7519}n^{2.5286})$ and $\tilde{O}(M^{0.8043}n^{2.4957})$ for graphs with integer weights in $[-M,M]$.~\cite{GV20,GV12,GPVX21}.

After the breakthrough result of efficient 1-failure DSO by Demetrescu, Thorup, Chowdhury and Ramachandran~\cite{Demetrescu2008}, there are many efforts to improve the preprocessing time~\cite{BK08,2009A,GV12,GV20,CC19,REN22,GR21}, space~\cite{DZ17}, and extend to more failures. However, keeping query time $\tilde{O}(1)$ and space $\tilde{O}(n^2)$ is difficult when generalized to any constant number of $f$ failures. The 2-failure DSO of $\tilde{O}(n^2)$ space and $\tilde{O}(1)$ query time~\cite{duan2009dual} is much more complicated than the 1-failure case, so it seems impossible to generalize it to $f$-failure DSO in arbitrary graphs. For undirected graphs, recently Dey and Gupta gave an $f$-failure DSO with $O(f^4n^2\log^2(nM))$ space and $O((f\log(nM))^{O(f^2)})$ query time~\cite{DG24_stoc}, improving the result in~\cite{DR22}, but their construction time is still large, thus still not suitable for efficient $f$FRP algorithms. %The preprocessing time of \cite{DG24_stoc} is still unknown.

\paragraph{Organization.}

In Section~\ref{sec:prelim} we introduce basic notations, assumptions and concepts. In Section~\ref{sec:overview} we give a brief description of the 2FRP algorithm for undirected graphs, which is a little simpler than the one for directed graphs in~\cite{WWX22}, then give an overview on how to extend it to 3FRP algorithm. In Section~\ref{sec-inc} the incremental DSO is discussed, which is the crucial part of our 3FRP algorithm. The algorithms for the cases that 1 edge, 2 edges, and 3 edges of the set of failed edges is/are on the original shortest path $\pi_G(s,t)$ be given in Section~\ref{sec:1-edge},~\ref{sec:2-edge},~\ref{sec4}, respectively. In Section~\ref{sec:hardness} we prove the hardness of undirected 2FRP under APSP-hardness conjecture, and in Section~\ref{sec:2ssrp} the 2-fault single-source replacement path algorithm is discussed.
\section{Preliminaries}\label{sec:prelim}

\subsection{Notations and Assumptions}

We consider an undirected weighted graph $G=(V,E)$ and $n=|V|, m=|E|$. For any two vertices $u,v \in V$, we use $\pi_G(u,v)$ to denote the shortest path from $u$ to $v$ in $G$. If the graph is clear in the context, then we use $uv$ to be short for $\pi_G(u,v)$. For a path $P$, we use $|P|, \| P \|$ to denote the length of $P$ and the number of edges in $P$, respectively.

If the context is clear that a vertex $z$ is on the shortest path $uv$ from $u$ to $v$ in a graph $G$, we use $z \oplus i$ to represent the vertex that is $i$ vertices after $z$ in the direction from $u$ to $v$. Similarly, we define $z \ominus i$ to represent the vertex that is $i$ vertices before $z$. If $x$ and $y$ are two vertices on the shortest path from $u$ to $v$ in graph $G$, we use the notation $x>y$ to say that the number of edges between $u$ and $x$ is larger than that between $u$ and $y$, and the vertices $u,v$ should be clear in the context. Similarly $x<y$ means $\|ux\|<\|uy\|$, $x\le y$ means $\|ux\|\le\|uy\|$, $x\ge y$ means $\|ux\|\ge\|uy\|$. % We use $[x,y] \subseteq uv$ to denote the subpath on $uv$ from $x$ to $y$.

Also when $x,y\in \pi_G(u,v)$ we say $\pi_G(x,y)$ is a subpath of $\pi_G(u,v)$. If the context is clear we use the interval $[x,y]$ for $x\leq y$ to denote the subpath $xy$ of $uv$ in graph $G$. 
Thus the interval between $i$-th vertex and $j$-th vertex after $u$ on $uv$ is denoted as $R=[u \oplus i, u\oplus j]$.

Let $P_1,P_2$ be two paths. If they share a same endpoint, we use $P_1\circ P_2$ to denote the concatenation of them. If they do not share any edge, we say $P_1$ avoids $P_2.$ Sometimes for brevity, we use $\min\{P_1,P_2\}$ to denote the shorter path of $P_1,P_2$, which means if $\abs{P_1}\le \abs{P_2}$, $\min\{P_1,P_2\}=P_1$  and otherwise $\min\{P_1,P_2\}=P_2$.

In this paper, for a path $P$ in $G$, $G-P$ denotes the graph obtained by removing \textbf{edges} of $P$ from $G$. (We do not need to remove any vertex in this paper.)

As in many distance oracle papers, we also use the \textbf{unique shortest path assumption}, which means the shortest path between any two vertices in any subgraph of $G$ is unique. For the incremental DSO, we also make the unique shortest path assumption at any time. If this is not the case, we can break the ties by adding small random variables to the edge weights. (See~\cite{DI04}.) So for example, we can check whether an edge $e=(x,y)$ is on $st$ or not by checking whether $|sx\circ (x,y)\circ yt|=|st|$ or $|sy\circ (y,x)\circ xt|=|st|$.

%\begin{theorem} (\cite{2003A}) \label{DynamicAPSP}
%There exists a fully dynamic APSP algorithm on undirected weighted graphs that supports any sequence of updates in $O(n^2\log^3{n})$ amortized time per operation, and one look-up per distance query. 
%\end{theorem}

\subsection{Replacement Paths}

We first consider the divergence and convergence points of two paths: (As before, although our graph is undirected, we may assume a path $P$ starts from a vertex $u$ and ends at $v$.)

\begin{itemize}
    \item For two paths $P,Q$ both starting from $u$, we say they diverge at vertex $a \in P\cap Q$ if $Q$ first gets into an edge out of $P$ from $a$, that is, the subpaths of $P$ and $Q$ before $a$ coincide, and we denote this divergence point $a$ by $\Delta(P,Q)$. If $P,Q$ are both shortest paths, their divergence point is unique.
    
    \item Symmetrically, for $P,Q$ both ends at $v$, we say they converges at vertex $b \in P\cap Q$ if $Q$ gets into an edge in $P$ from $b$, and the subpaths of $P$ and $Q$ coincide after $b$. The convergence point $b$ is denoted by $\nabla(P,Q)$. 

    \item For two shortest paths $P,Q$ which do not share endpoints, if they intersect, the intersection part is also a shortest path, that is, they must first converge and then diverge.
\end{itemize}

The following theorem is well known for replacement paths. %and we will apply it in proving the correctness of the incremental DSO structure. 

\begin{theorem}(\cite{Demetrescu2008})\label{thm4-2}
    Let $l$ be a 1-fault replacement path in $G$, i.e. $l=\og{u}{v}{d}$ for an edge $d$. Let $R_1,R_2$ be two subpaths in $uv$ with no intersection, and $R_3$ be the subpath between them. If $l$ avoids both $R_1$ and $R_2$, then it avoids $R_3.$
\end{theorem}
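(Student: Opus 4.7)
The plan is to argue by contradiction using the unique shortest path assumption. Without loss of generality I would order the subpaths so that along $\pi_G(u,v)$ from $u$ to $v$ we meet $R_1$, then $R_3$, then $R_2$ in succession; the interesting case is when $R_1$ and $R_2$ each contain at least one edge, and I would treat degenerate cases (including $d\notin\pi_G(u,v)$, where $l=\pi_G(u,v)$ itself already uses every edge of $R_1$ and $R_2$) separately as trivialities. I would then suppose toward a contradiction that $l$ uses some edge $e=(x,y)\in R_3$ with $x$ closer to $u$ than $y$ on $\pi_G(u,v)$.

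The heart of the argument would be a symmetric pair of ``squeeze'' claims that locate the failed edge $d$. First I would consider the prefix of $l$ from $u$ to $x$: since $l$ is a shortest path in $G-\{d\}$ and subpaths of shortest paths are shortest paths, this prefix equals $\pi_{G-\{d\}}(u,x)$. If $d\notin\pi_G(u,x)$, then $\pi_G(u,x)$ survives in $G-\{d\}$ with the same length, so by uniqueness of shortest paths $\pi_{G-\{d\}}(u,x)=\pi_G(u,x)$. But $\pi_G(u,x)$ is a prefix of $\pi_G(u,v)$ reaching to $x\in R_3$, so it wholly contains $R_1$, forcing $l$ to use an edge of $R_1$ and contradicting the hypothesis. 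Hence $d\in\pi_G(u,x)$. The symmetric argument applied to the suffix of $l$ from $y$ to $v$ yields $d\in\pi_G(y,v)$.

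Finally I would observe that $\pi_G(u,x)$ and $\pi_G(y,v)$ are edge-disjoint subpaths of $\pi_G(u,v)$, separated by the single edge $e=(x,y)$, which belongs to neither. A single edge $d$ cannot lie in both, which gives the desired contradiction and shows $l$ avoids every edge of $R_3$. I do not expect any real obstacle here; the only mild care required is (a) noting that edge-disjointness of $\pi_G(u,x)$ and $\pi_G(y,v)$ holds precisely because $e$ is the unique edge between the consecutive vertices $x$ and $y$ on $\pi_G(u,v)$, and (b) invoking the unique shortest path assumption to collapse $\pi_{G-\{d\}}(u,x)$ onto $\pi_G(u,x)$ in the key step.
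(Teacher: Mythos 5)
Your proof is correct. The paper states Theorem~\ref{thm4-2} as a citation of \cite{Demetrescu2008} and does not supply a proof of its own, so there is nothing in the paper to compare against; your argument is a valid, self-contained derivation. The key step — using subpath optimality of $l$ together with the unique-shortest-path assumption to force the single failed edge $d$ into both $\pi_G(u,x)$ and $\pi_G(y,v)$, and then observing that these are edge-disjoint pieces of $\pi_G(u,v)$ separated by the edge $e$ itself — is the standard squeeze argument from the replacement-path literature. One small point worth making explicit, which your write-up handles correctly without comment: the argument does not depend on $l$ traversing $e$ in the $u$-to-$v$ direction. If $l$ happens to meet $y$ before $x$, then the prefix of $l$ up to $x$ and the suffix of $l$ from $y$ overlap on the edge $e$; this is harmless, because each squeeze pins $d$ into the corresponding piece of the \emph{original} shortest path $\pi_G(u,v)$, and the final contradiction rests solely on the edge-disjointness of $\pi_G(u,x)$ and $\pi_G(y,v)$ inside $uv$, not on any disjointness of subpaths of $l$.
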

\begin{center}
    \tikzset{every picture/.style={line width=0.75pt}} %set default line width to 0.75pt        

\begin{tikzpicture}[x=0.75pt,y=0.75pt,yscale=-1,xscale=1]
%uncomment if require: \path (0,337); %set diagram left start at 0, and has height of 337

%Straight Lines [id:da7054170875306608] 
\draw    (96,223.82) -- (144,224) ;
%Straight Lines [id:da398885789933604] 
\draw    (336,224) -- (458,223.82) ;
%Curve Lines [id:da2820845345267594] 
\draw [color={rgb, 255:red, 144; green, 19; blue, 254 }  ,draw opacity=0.5 ][line width=1.5]    (127,223.82) .. controls (187.2,152.62) and (285.2,154.22) .. (347,223.82) ;
%Straight Lines [id:da4216869804751102] 
\draw [color={rgb, 255:red, 144; green, 19; blue, 254 }  ,draw opacity=0.5 ][line width=1.5]    (96,223.82) -- (127,223.82) ;
%Straight Lines [id:da7616727026569905] 
\draw [color={rgb, 255:red, 144; green, 19; blue, 254 }  ,draw opacity=0.5 ][line width=1.5]    (347,223.82) -- (458,223.82) ;
%Straight Lines [id:da2751252828645582] 
\draw    (192,224) -- (240,224) ;
%Straight Lines [id:da6648339566198745] 
\draw  [dash pattern={on 0.84pt off 2.51pt}]  (296,224) -- (336,224) ;
%Straight Lines [id:da9827166207898881] 
\draw  [dash pattern={on 0.84pt off 2.51pt}]  (144,224) -- (192,224) ;
%Straight Lines [id:da9946292398677565] 
\draw  [dash pattern={on 0.84pt off 2.51pt}]  (240,224) -- (272,224) ;
%Straight Lines [id:da315953620055631] 
\draw    (152,240) -- (144,240) ;
\draw [shift={(144,240)}, rotate = 360] [color={rgb, 255:red, 0; green, 0; blue, 0 }  ][line width=0.75]    (0,5.59) -- (0,-5.59)   ;
%Straight Lines [id:da16433183880535285] 
\draw    (168,240) -- (176,240) ;
\draw [shift={(176,240)}, rotate = 180] [color={rgb, 255:red, 0; green, 0; blue, 0 }  ][line width=0.75]    (0,5.59) -- (0,-5.59)   ;
%Straight Lines [id:da32856126369386573] 
\draw    (296,240) -- (272,240) ;
\draw [shift={(272,240)}, rotate = 360] [color={rgb, 255:red, 0; green, 0; blue, 0 }  ][line width=0.75]    (0,5.59) -- (0,-5.59)   ;
%Straight Lines [id:da11648055820064351] 
\draw    (312,240) -- (336,240) ;
\draw [shift={(336,240)}, rotate = 180] [color={rgb, 255:red, 0; green, 0; blue, 0 }  ][line width=0.75]    (0,5.59) -- (0,-5.59)   ;
%Straight Lines [id:da6158805292604249] 
\draw    (208,240) -- (192,240) ;
\draw [shift={(192,240)}, rotate = 360] [color={rgb, 255:red, 0; green, 0; blue, 0 }  ][line width=0.75]    (0,5.59) -- (0,-5.59)   ;
%Straight Lines [id:da5528168884403554] 
\draw    (224,240) -- (240,240) ;
\draw [shift={(240,240)}, rotate = 180] [color={rgb, 255:red, 0; green, 0; blue, 0 }  ][line width=0.75]    (0,5.59) -- (0,-5.59)   ;

% Text Node
\draw (281,219.4) node [anchor=north west][inner sep=0.75pt]  [font=\footnotesize]  {$d$};
% Text Node
\draw (460,218.4) node [anchor=north west][inner sep=0.75pt]  [font=\footnotesize]  {$v$};
% Text Node
\draw (84,218.4) node [anchor=north west][inner sep=0.75pt]  [font=\footnotesize]  {$u$};
% Text Node
\draw (217,154.4) node [anchor=north west][inner sep=0.75pt]  [font=\footnotesize]  {$l=\pi _{G-d}( u,v)$};
% Text Node
\draw (153,234.4) node [anchor=north west][inner sep=0.75pt]  [font=\footnotesize]  {$R_{1}$};
% Text Node
\draw (297,234.4) node [anchor=north west][inner sep=0.75pt]  [font=\footnotesize]  {$R_{2}$};
% Text Node
\draw (209,234.4) node [anchor=north west][inner sep=0.75pt]  [font=\footnotesize]  {$R_{3}$};

\end{tikzpicture}
\end{center}

From this theorem, we can see for any $f$-fault replacement path $\og{u}{v}{F}$ where only one failed edge in $F$ is on the original shortest path $uv$, then $\og{u}{v}{F}$ only diverges and converges once on $uv$. And if two failed edges $d_1,d_2$ in $F$ are on $uv$, they will cut $uv$ into three intervals $D_1,D_2,D_3$. Because $D_1,D_2,D_3$ are all shortest paths in $G-F$, $\og{u}{v}{F}$ maybe:
\begin{itemize}
    \item diverges at $D_1$ and converges at $D_3$
    \item diverges at $D_1$, converges at some point in $D_2$, diverges from $D_2$, then converges at $D_3$
\end{itemize}

%For a given path $p$ from $u$ to $v$ in $G$, we say another path $q$ starting from $u$ diverges at vertex $a \in p\cap q$ if $q$ first get into an edge out of $p$ from $a$, that is, the subpaths of $p$ and $q$ before $a$ coincide. We denote this divergence point $a$ by $\Delta(p,q).$ Symmetrically, for $p$ from $u$ to $v$, we say another path $q$ going to $v$ converges at vertex $b \in p\cap q$ if $q$ get into an edge in $p$ from $b$, and the subpaths of $p$ and $q$ coincide after $b$. The convergence point $b$ is denoted by $\nabla(p,q).$ We can observe that, if $p$ and $q$ are both shortest paths (with different endpoints) that intersect, then $p$ must first converge and then diverge from $q$.

In undirected weighted graphs, a notable theorem on the structure of replacement paths says that a 1-fault replacement path $\og{u}{v}{f}$ is a concatenation of a shortest path $ux$, an edge $e=(x,y)$, and a shortest path $yv$, for some vertices $x$ and $y.$ (Any part can degenerate to a point.) More generally,

\begin{theorem} (\cite{2001Restoration}) \label{ReplacementPath}
For any undirected weighted graph $G$ and any set of failed edges $F$ with $|F|=f$, an $f$-fault replacement path can be partitioned into a concatenation interleaving at most $f+1$ subpaths and $f$ edges, where each subpath is a shortest path in the graph $G$. 
\end{theorem}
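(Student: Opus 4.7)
The plan is a greedy decomposition of $P := \pi_{G-F}(u,v)$ into maximal shortest-$G$-path pieces separated by ``bridge'' edges, followed by a charging argument bounding the number of bridges by $f$. Scanning $P$ from $u$, let $P_0$ be the longest prefix that equals $\pi_G(u, y_1)$, where $y_1$ is its endpoint; by the unique shortest path assumption this is the same as requiring $P_0$ itself to be a shortest path in $G$. If $P_0 = P$, the decomposition is trivial. Otherwise let $e_1 = (y_1, x_1)$ be the next edge of $P$, and recurse on the suffix $P[x_1, v] = \pi_{G-F}(x_1, v)$ (by optimal substructure). This yields
\[
P \;=\; P_0 \circ e_1 \circ P_1 \circ e_2 \circ \cdots \circ e_k \circ P_k,
\]
with each $P_i$ a shortest path in $G$ (possibly degenerate to a single vertex) and each $e_i$ an edge. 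It remains to establish $k \le f$, which is the entire content of the theorem.

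At each bridge $e_i$, maximality of $P_{i-1}$ gives $|P_{i-1}| + w(e_i) > d_G(x_{i-1}, x_i)$, so the $G$-shortcut $Q_i := \pi_G(x_{i-1}, x_i)$ must traverse at least one edge of $F$; otherwise $Q_i$ would be an $(x_{i-1},x_i)$-path in $G-F$ strictly shorter than the subpath $P[x_{i-1}, x_i]$, contradicting optimality of $P$ in $G-F$. I would define $\phi(e_i) \in F$ to be the first failed edge encountered along $Q_i$ starting from $x_{i-1}$, and aim to show that $\phi$ is injective, which would immediately give $k \le |F| = f$.

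The hard part is the injectivity of $\phi$. My planned argument is a splicing contradiction: if two bridges $e_i, e_j$ (with $i<j$) were both charged to the same failed edge $f^* = (a,b)$, then the initial segments of the shortcuts $Q_i$ and $Q_j$ up to the first-touched endpoint of $f^*$ both avoid $F$ (by the ``first failed edge'' choice of $\phi$). Using that the graph is undirected and shortest paths are unique, these two $F$-avoiding segments can be combined with $P[u, x_{i-1}]$ and $P[x_j, v]$ --- possibly reversing one of them depending on which endpoint of $f^*$ each shortcut touches first --- to build a $u$-$v$ walk in $G - F$ of length strictly less than $|P|$, contradicting the optimality of $P$. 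The delicate step is handling the two possible orientations in which $f^*$ appears on $Q_i$ versus $Q_j$, and verifying the strict length improvement using only the strict inequalities $|Q_i| < |P_{i-1}\circ e_i|$ and $|Q_j| < |P_{j-1}\circ e_j|$ together with non-negativity of edge weights; the undirected hypothesis is essential here since it allows freely reversing the shortest-path pieces.
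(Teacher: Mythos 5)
The paper does not prove \cref{ReplacementPath}; it is cited from \cite{2001Restoration}, so there is no in-paper argument to compare against. Your greedy decomposition into maximal shortest-$G$-path pieces separated by bridge edges is standard, and the claim that each $G$-shortcut $Q_i=\pi_G(x_{i-1},x_i)$ must traverse a failed edge (else it could be spliced into $P$ to contradict optimality in $G-F$) is correct.

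The gap is precisely in the injectivity of $\phi$, the step you flag as delicate but do not actually carry out. Suppose $\phi(e_i)=\phi(e_j)=f^*=(a,b)$ with $i<j$. When $Q_i$ (traversed from $x_{i-1}$) and $Q_j$ (from $x_{j-1}$) both reach the same endpoint of $f^*$ first, say $a$, the splicing $P[u,x_{i-1}]\circ Q_i[x_{i-1},a]\circ\overline{Q_j[x_{j-1},a]}\circ P[x_{j-1},v]$ is $F$-avoiding, and using $|Q_i|<|P[x_{i-1},x_i]|$, the fact that $a$ lies on the shortest $G$-path $Q_i$, and the triangle inequality, one can check it is strictly shorter than $P$. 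But if $Q_i$ reaches $a$ first while $Q_j$ reaches $b$ first, the two $F$-free prefixes $Q_i[x_{i-1},a]$ and $Q_j[x_{j-1},b]$ terminate at the two distinct endpoints of the deleted edge. Joining them within $Q_i\cup Q_j$ requires crossing $f^*$ itself, and the remaining arcs $Q_i[b,x_i]$ and $Q_j[a,x_j]$ are \emph{not} known to avoid $F$ (you only control the first failed edge along each shortcut), so no shorter $F$-free $u$-$v$ walk is produced; reversing pieces --- the undirected trick you invoke --- changes the direction of traversal but not the edge set, and does not bridge $a$ to $b$. The naive fix of charging to the ordered pair (failed edge, first-touched endpoint) restores injectivity but only yields $k\le 2f$, losing a factor of $2$. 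As written, the injectivity claim, and hence the bound $k\le f$, is not established.
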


From this theorem, we have:
\begin{lemma}(\cite{BodwinGPW17})\label{thm:n-edges}
    The union of all 1-fault $u$-$v$ replacement paths has $O(n)$ edges in total in undirected graphs.
\end{lemma}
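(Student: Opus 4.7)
The plan is to apply Theorem~\ref{ReplacementPath} with $f=1$: every 1-fault $u$-$v$ replacement path $\pi_{G-\{d\}}(u,v)$ decomposes as a concatenation $\pi_G(u,x)\circ(x,y)\circ\pi_G(y,v)$ for some vertices $x,y\in V$ and some single edge $(x,y)\in E$ (where either shortest-path piece may be empty). Only failed edges $d$ lying on the original shortest path $\pi_G(u,v)$ produce a nontrivial replacement path, so there are at most $O(n)$ such paths to consider; the rest of the replacement paths coincide with $\pi_G(u,v)$ itself, contributing only $O(n)$ edges in total to the union.

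Next I would invoke the unique shortest path assumption to bound the contribution of the two shortest-path pieces. Under this assumption, the family $\{\pi_G(u,x):x\in V\}$ is nested into the shortest path tree rooted at $u$, which has exactly $n-1$ edges; symmetrically, $\{\pi_G(y,v):y\in V\}$ lies inside the shortest path tree rooted at $v$. Hence, ranging over all $O(n)$ nontrivial replacement paths, the union of all ``left halves'' $\pi_G(u,x)$ contributes $O(n)$ edges total, and so does the union of all ``right halves'' $\pi_G(y,v)$.

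Finally, each nontrivial replacement path contributes exactly one ``swap'' edge $(x,y)$, and there are at most $O(n)$ such edges, one per failed edge $d\in\pi_G(u,v)$. Summing the three contributions, the union of all 1-fault $u$-$v$ replacement paths consists of $O(n)$ edges. There is no real obstacle here beyond justifying the decomposition provided by Theorem~\ref{ReplacementPath} and appealing to the unique shortest path assumption to ensure the shortest-path halves all fit into two trees; without uniqueness one would a priori fear that different replacement paths choose different $u$-to-$x$ shortest paths, but the assumption rules this out.
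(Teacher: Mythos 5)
Your proof is correct and follows essentially the same route as the paper: invoke Theorem~\ref{ReplacementPath} with $f=1$ to get the $\pi_G(u,x)\circ(x,y)\circ\pi_G(y,v)$ decomposition, then observe that the two shortest-path halves lie in the shortest path trees rooted at $u$ and $v$ respectively (each contributing $O(n)$ edges), while the $O(n)$ swap edges contribute $O(n)$ more. Your explicit mention of the unique shortest path assumption to justify that the halves nest into the two trees is a fair clarification of a point the paper leaves implicit, but it does not change the approach.
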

\begin{proof}
    From Theorem~\ref{ReplacementPath}, a 1-fault $u$-$v$ replacement path in undirected graphs is composed of a shortest path $ux$, an edge $(x,y)$, and a shortest path $yv$. The total number of the middle edges $(x,y)$ is bounded by $O(n)$ since there are $O(n)$ 1-fault replacement paths, and other edges are on the shortest path trees from $u$ and $v$, thus there are also at most $O(n)$ edges.
    
\end{proof}

\subsection{DSOs}\label{sec:intro-dso}

The paper uses the 1-fault DSO given by Bernstein and Karger (\cite{2009A}) to initialize the incremental DSO. While \cite{2009A} gives a 1-fault DSO in directed weighted graphs, it also works for undirected weighted graphs by a simple reduction. Suppose $G$ is an undirected weighted graph, we introduce a directed graph $G'$ by replacing each edge in $G$ by a pair of directed edges with the same weight. Suppose that in graph $G$ edge $(x,y)$ on the shortest path $st$ is removed ($x$ is closer to $s$), then we remove the directed edge $(x,y)$ on the shortest path from $s$ to $t$ in $G'.$ We can see that the 1-fault replacement path $\pi_{G'-(x,y)}(s,t)$ cannot go through edge $(y,x).$ This is because that no edge on $\pi_{G'}(s,x)$ is removed, so $\pi_{G'-(x,y)}(s,x)=\pi_{G'}(s,x)$ does not go through $(y,x).$ Therefore, $\pi_{G-(x,y)}(s,t)=\pi_{G'-(x,y)}(s,t).$

%\begin{proof}
%    By Theorem \ref{ReplacementPath}, an 1-replacement-path is a concatenation of a shortest path $uc$, an edge $cd$, and a shortest path $dv.$  Note that there does not exist a shortest path that starts from $u$, divergent before $R_1$, and converges at $R_3$, by the unique shortest path assumption. Therefore the edge $cd$ is on the subpath between $u$ and $R_3.$ However, with the same reason we can know that $cd$ is is on the subpath between $v$ and $R_3$, contradictory.
%\end{proof}

There is a theorem in \cite{peng2023fully} to view incremental structures as offline fully dynamic structures.

\begin{theorem} (\cite{peng2023fully})\label{thm1-2}
Let $T \geq 1$ be the total number of updates. Suppose there exists an incremental dynamic algorithm with query time $\Gamma_q$ and worst-case update time $\Gamma_u$, then there is a dynamic algorithm for deletions-look-ahead setting with query time $\Gamma_q$ and worst-case update time $O(\Gamma_u \cdot \log^2(T))$.
\end{theorem}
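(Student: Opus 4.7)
The plan is to perform a segment-tree decomposition of the offline timeline $[1,T]$ and simulate the offline dynamic structure by a carefully rebuilt collection of incremental substructures. I will build a segment tree $\mathcal{T}$ on $[1,T]$ and, for each edge $e$ whose lifespan $[b_e,d_e]$ is known from the offline update sequence, decompose the interval into the $O(\log T)$ canonical segment-tree nodes whose intervals disjointly cover it; mark $e$ at each. Writing $A_v$ for the set of edges marked at node $v$, we have $\sum_v |A_v| = O(T\log T)$, and the key identity is that the alive edges at time $t$ equal $\bigcup_{u \text{ on the root-to-leaf}(t) \text{ path}} A_u$.

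Next, I will run an in-order DFS of $\mathcal{T}$, maintaining, for each level $l=0,1,\ldots,L$ with $L=O(\log T)$, an incremental structure $D_l$ with the invariant that, when the DFS is currently processing the leaf for time $t$ with root-to-leaf path $v_0,v_1,\ldots,v_L$, $D_l$ contains precisely $\bigcup_{i\le l} A_{v_i}$. As the DFS advances chronologically, the level-$l$ path node changes $2^l$ times; each change triggers a rebuild of $D_l$, implemented by instantiating a fresh incremental structure and re-running the incremental algorithm to insert the required edges (using $D_{l-1}$ as a base and then adding $A_{v_l}$). Queries at time $t$ are answered through $D_L$, which at that moment contains exactly the alive edges $E_t$. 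The symmetric decremental direction is handled by the standard time-reversal trick, which is available in the offline setting.

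For complexity, the analysis charges each edge $e$, marked at a canonical node $u$ at level $l_0$, to the rebuilds in which it participates: rebuilds at level $l \ge l_0$ that occur within the subtree rooted at $u$. Summing via the segment-tree structure bounds the total insertions into the incremental substructures by $O(T \log^2 T)$, where one $\log T$ factor comes from the canonical-mark decomposition of each edge's lifespan and the other from the propagation of rebuild work across segment-tree levels. Since each insertion costs $\Gamma_u$, the total cost is $O(\Gamma_u T \log^2 T)$, or amortized $O(\Gamma_u \log^2 T)$ per update. A standard de-amortization that distributes the rebuild work of each level across the $T/2^l$ time-steps of its epoch converts the amortized bound to worst-case update time.

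The main obstacle is the rebuild accounting. A naive rebuild from scratch at every DFS node would incur $\Omega(T^2)$ work in the worst case, proportional to the total edge alive-time $\sum_e (d_e - b_e)$, which can be as large as $\Theta(T^2)$. To obtain the $\log^2 T$ bound it is essential to (i) group rebuilds by segment-tree level rather than by DFS node, so that each level's rebuild cost can be bounded via the canonical marks at that level, and (ii) reuse the lower-level structure $D_{l-1}$ as a persistent base when rebuilding $D_l$, avoiding repeated insertions from ancestor levels. A secondary subtlety is the de-amortization to worst-case per-update time, which requires explicit scheduling of rebuild work interleaved with the online operation stream so that no single time step is saddled with the full cost of a large rebuild.
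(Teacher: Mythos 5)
The statement you are proving is cited from Peng and Rubinstein and is \emph{not} proved in the paper; what the paper does prove is the related, weaker Theorem~\ref{thm:offline}, which it explicitly says uses ``a different method'' from \cite{peng2023fully}. That proof builds a binary range tree on $[0,T]$ where the node with interval $[i,j]$ stores the intersection graph $G_{[i,j]}=G_i\cap\cdots\cap G_j$ together with a DSO, and each child's DSO is obtained from its parent's by copying the $\tilde{O}(n^2)$-sized DSO and performing at most $j-i$ insertions. Your canonical-lifespan decomposition is in fact the \emph{same} object in disguise: for a segment-tree node $u$ with interval $[a,b]$, one has $G_{[a,b]}=\bigcup_{w\text{ ancestor of or equal to }u}A_w$, since $e\in G_{[a,b]}$ iff $[a,b]\subseteq[b_e,d_e]$ iff some ancestor $w$ of $u$ is a canonical node of $[b_e,d_e]$. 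So the two decompositions are equivalent; the difference is bookkeeping, not substance.

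The genuine gap in your proposal is the step that ``reuses the lower-level structure $D_{l-1}$ as a persistent base when rebuilding $D_l$.'' The theorem's hypothesis grants only an incremental algorithm with worst-case update time $\Gamma_u$; it does not grant a cheap copy or a persistent fork. Without that, rebuilding $D_l$ means re-inserting all of $\bigcup_{i\le l}A_{v_i}$ from scratch, and the total insertion count becomes $\sum_v |A_v|\cdot(\text{number of descendants of }v)$, which a single edge alive for $\Theta(T)$ steps already pushes to $\Theta(T^2)$. With an assumed $O(1)$ (or even $O(\Gamma_u)$) copy, your charging argument actually gives $O(T\log T)$ insertions, not $O(T\log^2 T)$, so the claimed $\log^2 T$ bound is not derived in either regime; the ``one $\log$ from marks, one $\log$ from levels'' intuition double-counts, because an edge marked at level $l_0$ is only re-inserted when $v_{l_0}$ itself changes, not when deeper levels change, under the persistent-base assumption. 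Finally, the de-amortization to worst-case per-update time is dismissed as standard, but it is not: a rebuild of $D_l$ may involve up to $|A_{v_l}|=\Theta(T)$ insertions yet must finish within $T/2^l$ time steps, which does not obviously spread to $O(\Gamma_u\log^2 T)$ per step. The paper's own Theorem~\ref{thm:offline} sidesteps all of this precisely by restricting to a DSO with $\tilde{O}(n^2)$ space and $T=O(n)$ updates, where copying is an explicit $\tilde{O}(n^2)$ operation comfortably inside the $\tilde{O}(n^3)$ total budget. Your proposal does not state any such assumption, so as written the complexity analysis does not close.
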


This theorem can be used to utilize our incremental DSO in Section~\ref{sec-inc} as an offline fully dynamic DSO, and here we also prove the reduction we need in a self-contained way:

\begin{theorem}\label{thm:offline}
    Starting from a graph of $O(n)$ vertices, if we have an incremental DSO of $\tilde{O}(n^2)$ space with preprocessing time $\tilde{O}(n^3)$, worst case update time $\tilde{O}(n^2)$ and query time $\tilde{O}(1)$, then it can be transformed into an offline fully dynamic DSO, such that if the total number of updates is $T=O(n)$, the total preprocessing and update time is $\tilde{O}(n^3)$ and the query time is still $\tilde{O}(1)$.
\end{theorem}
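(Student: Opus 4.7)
The plan is to impose a segment tree $\mathcal{T}$ on the time axis $[1,T]$ of the offline update sequence, with each leaf corresponding to a single query instant, and then simulate the whole sequence by a single DFS on $\mathcal{T}$ that keeps just one live copy of the incremental DSO at a time. For each node $v$ of $\mathcal{T}$ covering the range $[l,r]$, define $A_v$ to be the set of edges that are present at \emph{every} time step in $[l,r]$. The key observation is that $A_v\subseteq A_{v'}$ whenever $v'$ is a child of $v$, because a shorter interval is easier to survive. Hence the transition from the DSO at $v$ to the DSO at $v'$ only requires edge \emph{insertions}, which is precisely what the given incremental DSO supports, and this is where the offline ``look-ahead'' assumption is exploited: knowing the whole future lets us compute all the $A_v$'s in advance and commit to a schedule of monotone insertions.

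Concretely, I would first build the DSO at the root by invoking the $\tilde O(n^3)$ preprocessing on the subgraph induced by $A_{\text{root}}$ (the edges that are never deleted throughout the sequence). Then I would run a DFS on $\mathcal{T}$: upon entering a non-leaf node $v$, snapshot the current DSO; when descending to a child $v'$, insert the edges of $A_{v'}\setminus A_v$ one by one into the incremental DSO; recurse; upon returning, restore the snapshot before descending to the sibling. At each leaf covering a single time $t$ we have $A_{\text{leaf}}=G_t$ exactly, so any $1$-fault distance query at time $t$ is answered in $\tilde O(1)$ by the DSO currently in memory.

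The cost analysis is a standard interval-decomposition argument. Each alive-interval of any edge decomposes into $O(\log T)$ canonical segments of $\mathcal{T}$, and an edge is newly added to $A_{v'}$ relative to $A_v$ only at one of these canonical nodes; summing over the $O(T)$ total alive intervals gives $O(T\log T)$ incremental insertions over the whole DFS, each of worst-case cost $\tilde O(n^2)$, which is $\tilde O(n^3)$ when $T=O(n)$. The snapshot at each of the $O(T)$ internal nodes costs $\tilde O(n^2)$ because the DSO's entire state fits in $\tilde O(n^2)$ space, contributing another $\tilde O(n^3)$. The DFS stack simultaneously stores at most $O(\log T)$ snapshots, so the working space stays $\tilde O(n^2)$, and the query time inherits $\tilde O(1)$ from the underlying oracle.

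The main obstacle is the snapshot-and-restore mechanism: the reduction only works if the incremental DSO's internal state is plain-copyable in $\tilde O(n^2)$ time, so that after finishing one subtree we can revert to the ancestor's state before descending into the sibling. This is not an issue here, since the incremental DSO we construct in Section~\ref{sec-inc} stores only tables and arrays of total size $\tilde O(n^2)$, which admit a naive copy within budget; consequently the $O(\log^2 T)$ overhead from Theorem~\ref{thm1-2} collapses into the $\tilde O(\cdot)$ notation and all stated bounds follow.
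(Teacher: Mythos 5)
Your proposal is correct and uses essentially the same construction as the paper's proof: a binary range tree on $[0,T]$, with each node holding the intersection graph of all timesteps in its range, so that moving from a parent to a child is a monotone insertion. The one place you go beyond the paper is that you make the snapshot-and-restore DFS explicit, which pins the working space at $\tilde O(n^2)$; the paper simply says to ``build the DSOs on this tree from the root using the incremental structure'' without addressing how many DSOs live in memory at once, but since the theorem does not claim a space bound for the resulting offline oracle, both phrasings suffice.
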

\begin{proof}
    In the offline fully dynamic DSO, let the initial graph be $G_0$ and the graph after the $i$-th update be $G_i$. We first list all graphs $G_0,G_1,\cdots,G_T$. For an interval of integers $[i,j]$, define the graph $G_{[i,j]}=G_i\cap G_{i+1}\cap\cdots\cap G_j$. Then for an interval $[i',j']\subseteq [i,j]$, we know that $G_{[i,j]}\subseteq G_{[i',j']}$ and $|G_{[i',j']}-G_{[i,j]}|\leq j-i$.

    Then we build a binary range tree on $[0,T]$, in which the root contains the graph $G_{[0,T]}$, the children of the root contain the graphs $G_{[0,\lfloor T/2\rfloor]}$ and $G_{[\lfloor T/2\rfloor,T]}$, respectively, and each leaf contains a graph $G_i$. So we can build the DSOs on this tree from the root using the incremental structure, since $G_R\subseteq G_{R'}$ if $R'$ is a child of $R$. Then we can see the total construction time is $\tilde{O}(n^3)$, and we can use the DSOs on leaves to answer queries.
\end{proof}

%\begin{corollary} 
%Let $T \geq 1$ be the total number of updates. Suppose there exists an incremental dynamic algorithm with query time $\Gamma_q$ and worst-case update time $\Gamma_u$, then there is a dynamic algorithm for deletions-look-ahead setting with query time $\Gamma_q$ and amortized update time $O(\Gamma_u \cdot log(T))$.
%\end{corollary}

\section{Technical Overview}\label{sec:overview}

In this section, we first give a brief description of 1FRP and 2FRP algorithms for undirected graphs, which can list all edges of each path in $\tilde{O}(n^3)$ time, so that we can find the third failed edge in our 3FRP algorithm. Then the high-level ideas for the 3FRP algorithm and incremental DSO are discussed. 

\subsection{2FRP algorithm between $s,t$}\label{sec:2FRP}

All 1-fault replacement paths are easy to find in $\tilde{O}(mn)$ time since we can delete each edge in $st$ and then run Dijkstra algorithm~\cite{Dij59} from $s$. Then for edges $d_1\in st$ and $d_2\in \og{s}{t}{d_1}$, consider whether $d_2$ is on $st$ or not.

\subsubsection{Only $d_1$ is on $st$}\label{sec:one-2FRP}

By the methods in~\cite{WWX22}, we create a graph $H$ from $G-st$ by adding new vertices as follows. Let $N$ be a number that is large enough. (Namely, $N$ can be the sum of all edge weights in $G$.)
\begin{itemize}
    \item First let $H$ be a copy of $G-st$, that is, remove all edges on $\pi_G(s,t)$ from $G$.
    \item For every edge $d=(x,y)\in st$ and $x$ is before $y$ on $st$, introduce two new vertices $d^-$ and $d^+$. 
    \begin{itemize}
        \item For every vertex $x'$ on $sx$, the node $d^-$ is connected with $x'$ by an edge with weight $|sx'|+N$.
        \item For every vertex $y'$ on $yt$, the node $d^+$ is connected with $y'$ by an edge with weight $|y't|+N$.
    \end{itemize}
\end{itemize}

The number of vertices in $H$ is still $O(n)$, so we construct a 1-failure DSO on $H$ with $\tilde{O}(n^3)$ construction time, $\tilde{O}(n^2)$ space and $O(1)$ query time by~\cite{2009A}. We say the edge $(d^-,x')$ corresponds to the path $sx'$ since their length differs by a fixed number $N$, similarly $(y',d^+)$ corresponds to the path $y't$. (Here $N$ guarantees that any path between new vertices will not travel through other new vertices.) Since $d_2$ is not on $st$, we can obtain $\og{s}{t}{\set{d_1,d_2}}$ by querying the DSO. We have:
\begin{lemma} \label{lemma:1edge}
    Given $G$ and $d_1\in st$, $d_2\notin st$, in the graph $H$ we defined,
    $$ |\og{s}{t}{\set{d_1, d_2}}| = |\hg{d_1^-}{d_1^+}{d_2}| - 2N$$
\end{lemma}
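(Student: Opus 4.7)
The plan is to prove the two inequalities $|\hg{d_1^-}{d_1^+}{d_2}| \le |\og{s}{t}{\set{d_1,d_2}}| + 2N$ and $|\hg{d_1^-}{d_1^+}{d_2}| \ge |\og{s}{t}{\set{d_1,d_2}}| + 2N$ separately, by setting up a correspondence between $s$-$t$ walks in $G-\set{d_1,d_2}$ and $d_1^-$-$d_1^+$ walks in $H-d_2$ in which the two auxiliary edges incident to $d_1^\pm$ account for the additive $2N$ offset. The key structural input is the remark recorded immediately after Theorem~\ref{thm4-2}: since only one failed edge, namely $d_1=(x,y)$, lies on $st$, the path $\og{s}{t}{\set{d_1,d_2}}$ diverges from $st$ exactly once and converges exactly once, giving a canonical decomposition through well-defined vertices of $st$.

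For the upper bound, I would write $P:=\og{s}{t}{\set{d_1,d_2}}=\pi_G(s,x')\circ Q\circ \pi_G(y',t)$, where $x'$ with $x'\le x$ is the unique divergence point, $y'$ with $y'\ge y$ is the unique convergence point, and the detour $Q$ uses no edge of $st$ (otherwise a second convergence would occur on $st$) and also avoids $d_2$. Hence $Q$ is a walk in $H-d_2$ from $x'$ to $y'$, and the walk $(d_1^-,x')\circ Q\circ (y',d_1^+)$ in $H$ avoids $d_2$ with total length $(|sx'|+N)+|Q|+(|y't|+N)=|P|+2N$, which bounds $|\hg{d_1^-}{d_1^+}{d_2}|$ from above.

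For the reverse bound, let $P':=\hg{d_1^-}{d_1^+}{d_2}$. I would first argue $P'$ uses no auxiliary vertex other than $d_1^-$ and $d_1^+$: by construction, each auxiliary vertex is adjacent only to original vertices of $G$ and every such edge carries an additive $+N$ term, so any simple path between two auxiliary vertices spends at least $2N$ on such edges, and detouring through a third auxiliary vertex costs at least $4N$; choosing $N$ strictly larger than the sum of all edge weights in $G$ makes any such detour strictly worse than the trivial upper bound just established. Hence $P'$ begins with a single edge $(d_1^-,x'')$ with $x''$ on $sx$, ends with $(y'',d_1^+)$ with $y''$ on $yt$, and its middle part $Q''$ is a walk in $(G-st)-d_2$ from $x''$ to $y''$. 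Then $\pi_G(s,x'')\circ Q''\circ \pi_G(y'',t)$ is a valid $s$-$t$ walk in $G-\set{d_1,d_2}$ (the two $st$-prefix/suffix pieces contain neither $d_1$ nor $d_2$, and $Q''$ avoids both by construction) of length $|P'|-2N$, so $|P'|-2N\ge |\og{s}{t}{\set{d_1,d_2}}|$, closing the proof.

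The only genuinely delicate point is the $N$-threshold argument that rules out a cheaper $d_1^-$-to-$d_1^+$ path in $H-d_2$ that detours through additional auxiliary vertices; everything else is bookkeeping around the divergence/convergence structure granted by Theorem~\ref{thm4-2}.
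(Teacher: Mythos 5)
Your proof is correct and takes essentially the same route as the paper's: decompose the replacement path via its unique divergence and convergence points on $st$, replace the $st$-prefix and $st$-suffix by the auxiliary edges $(d_1^-,x')$ and $(y',d_1^+)$ for the upper bound, and do the reverse substitution for the lower bound. You spell out two steps the paper leaves implicit---the single diverge/converge structure from Theorem~\ref{thm4-2} (valid because $d_2\notin st$ keeps $st$ a shortest path in $G-d_2$), and the $N$-threshold argument that $\pi_{H-d_2}(d_1^-,d_1^+)$ visits no auxiliary vertex besides $d_1^\pm$---but the underlying bijection between candidate paths is the same.
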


\begin{proof}
The proof is easy to see since the first edge $(d_1^-,x')$ and last edge $(y',d_1^+)$ on the path $\hg{d_1^-}{d_1^+}{d_2}$ have:
$$|(d_1^-,x')|=|sx'|+N, \; |(y',d_1^+)|=|y't|+N$$
Also $x'$ is before $d_1$ and $y'$ is after $d_1$, so on $\hg{d_1^-}{d_1^+}{d_2}$ the first edge and last edge can be replaced by the original shortest paths $s'x$ and $y't$, respectively, so it is a path in $G-\{d_1,d_2\}$, and $|\og{s}{t}{\set{d_1, d_2}}| \leq |\hg{d_1^-}{d_1^+}{d_2}| - 2N$. Similarly the subpaths of $\og{s}{t}{\set{d_1, d_2}}$ before divergence point $x'$ and after convergence point $y'$ on $st$ can be replaced by edges $(d_1^-,x')$ and $(y',d_1^+)$ in $H$, respectively (with value $2N$ added), so $|\og{s}{t}{\set{d_1, d_2}}| \geq |\hg{d_1^-}{d_1^+}{d_2}| - 2N$.
\end{proof}
Since the 1-failure DSO supports path retrieving~\cite{Demetrescu2008, 2009A} in $O(1)$ time per edge, we can restore all paths $\og{s}{t}{\set{d_1, d_2}}$ in $\tilde{O}(n^3)$ time.

\subsubsection{Only $d_1,d_2$ are on $st$}\label{sec:two-2FRP}

Our structure in this case is also similar to~\cite{WWX22}, but is simpler since $G$ is undirected here. If we use the graph $H$ before, the distance $\pi_H(d_1^-,d_2^+)$ cannot capture $\og{s}{t}{\set{d_1, d_2}}$ since it only formulates the subpaths before divergence point and after convergence point on $st$, but $\og{s}{t}{\set{d_1, d_2}}$ may go through some edges between $d_1$ and $d_2$ on $st$, which are not included in $H$. 

W.l.o.g. assume $d_1=(x_1,y_1)$ is before $d_2=(x_2,y_2)$ on $st$, so $s\leq x_1<y_1\leq x_2<y_2\leq t$ on the shortest path $st$. For all pairs of $d_1,d_2$, we want to find the paths: ($w,a,b,z$ are all on $st$.)
$$P(d_1,d_2)=\min_{w\leq x_1, y_1\leq a\leq b\leq x_2, z\geq y_2} \{sw\circ \pi_{G-st}(w,a)\circ ab\circ \pi_{G-st}(b,z)\circ zt\}$$
and also:
$$P'(d_1,d_2)=\min_{w\leq x_1, y_1\leq b\leq a\leq x_2, z\geq y_2} \{sw\circ \pi_{G-st}(w,a)\circ ab\circ \pi_{G-st}(b,z)\circ zt\}$$

First we run the APSP algorithm on the graph $G-st$, then find the paths $U(d_1,a)=\min_{w\leq x_1}\{sw\circ \pi_{G-st}(w,a)\}$ for all edge $d_1=(x_1,y_1)$ and vertex $a\geq y_1$ on $st$ in $O(n^3)$ time, since we need $O(n)$ time to enumerate $w$ for every pair of $d_1$ and $a$. Symmetrically also find $U'(d_2,b)=\min_{z\geq y_2}\{\pi_{G-st}(b,z)\circ zt\}$ for all edge $d_2=(x_2,y_2)$ and vertex $b\leq x_2$ on $st$ in $O(n^3)$ time. Then for every pair of $d_1=(x_1,y_1),d_2=(x_2,y_2)$ and $y_1\leq b\leq x_2$, we can find:
$$W(d_1,d_2,b)=\min_{w\leq x_1, b\leq a\leq x_2} \{sw\circ \pi_{G-st}(w,a)\circ ab\}$$

That is, the path diverges before $d_1$, converges with $st$ at some point $a$ between $b$ and $x_2$, then goes through subpath of $st$ to $b$. If $b=x_2$, it is just $U(d_1,x_2)$. Then we can compute $W(d_1,d_2,b)$ for $b$ from $x_2$ backward to $y_1$. If $(b,b')$ with $b<b'$ is an edge on $st$, the path $W(d_1,d_2,b)$ can either first go through $W(d_1,d_2,b')$ or directly go to $b$, so $W(d_1,d_2,b)=\min\{U(d_1,b), W(d_1,d_2,b')\circ (b,b')\}$. So the total time to get all of $W(d_1,d_2,b)$ is still $O(n^3)$.

Thus, given $d_1,d_2$, we can get $P'(d_1,d_2)$ from $W(d_1,d_2,b)\circ U'(d_2,b)$ by enumerating all possible $b$ between $d_1,d_2$ in $O(n)$ time. $P(d_1,d_2)$ can be obtained similarly. So the total time is $O(n^3)$.

\subsection{Ideas of 3FRP algorithm}

As the 2FRP algorithm in Section~\ref{sec:2FRP}, we also consider how many of $d_1,d_2,d_3$ are on the original shortest path $st$, and use different methods to solve them.

%The case that all of them are on $st$ can be easily solved by the method of Section~\ref{sec:two-2FRP}, since we only need to construct a 1-failure DSO on $K$ in $\tilde{O}(n^3)$ time~\cite{2009A}. If the order is $d_1,d_2,d_3$ on $st$, we just need to find the shortest path between $d_1^\rightarrow, d_3^\leftarrow$ on $K$ avoiding the middle failure $d_2$, that is, return $\min\{|\pi_H(d_1^-,d_3^+)|, |\pi_{K-d_2}(d_1^\rightarrow, d_3^\leftarrow)|\}-2N$.

\paragraph{Only $d_1$ is on $st$.} (Although there is 2-failure DSO of $\tilde{O}(n^2)$ space and $\tilde{O}(1)$ query time~\cite{duan2009dual}, it is hard to utilize it here since its construction time is too large. Here we build an incremental 1-failure DSO instead.) To solve the case that only $d_1$ is on $st$ but $d_2,d_3$ are not, note that if $d_2\in \pi_{G-d_1}(s,t)$, from Lemma~\ref{thm:n-edges}, the number of possible ``second failure'' $d_2$ is $O(n)$. If we have a dynamic 1-failure DSO with $\tilde{O}(n^2)$ update time when updating an edge, then for every possible $d_2$ we can delete it from $H$ temporarily and query $|\pi_{(H-d_2)-d_3}(d_1^-, d_1^+)|$ for every $d_1$ and $d_3$, which equals $|\og{s}{t}{\set{d_1, d_2,d_3}}|+2N$. After that add $d_2$ back to $H$. Although fully dynamic DSO may be difficult, in our algorithm we only need an offline dynamic version, and by the reduction from offline dynamic structure to worst-case incremental structure in~\cite{peng2023fully} and Theorem~\ref{thm:offline}, we construct an incremental DSO with worst-case $\tilde{O}(n^2)$ update time instead.

\paragraph{Only $d_1,d_2$ are on $st$.} Even with the incremental DSO, it is not very easy to solve the case that $d_1,d_2$ are on $st$ but $d_3$ is not. %By $H$ we can only capture the paths which do not go through the edges between $d_1$ and $d_2$ on $st$, and because there is a failed edge $d_3$ not on $st$, $K$ does not work anymore. 
So we apply the incremental DSO on the binary partition structure as in the 2-failure DSO of~\cite{duan2009dual}, that is, we build a binary range tree on the path $st$, and define two graphs $H_{i,0}$ and $H_{i,1}$ (like the graph $H$ in Section~\ref{sec:one-2FRP}) on each level $i$ of the range tree, where in $H_{i,0}$ we remove all odd ranges on level $i$ and in $H_{i,1}$ we remove all even ranges. Given $d_1$ and $d_2$, we find the first level on which $d_1$ and $d_2$ are in different ranges, then $d_1$ is in an even range and $d_2$ is in an adjacent odd range, and let $m$ be the point separating these two ranges. Consider the following cases on the intersection of $\og{s}{t}{\set{d_1, d_2,d_3}}$ and the edges between $d_1$ and $d_2$ on $st$.
\begin{itemize}
    \item If $\og{s}{t}{\set{d_1, d_2,d_3}}$ does not travel through edges between $d_1$ and $d_2$, we can use DSO on $H$ to give the answer.
    \item If $\og{s}{t}{\set{d_1, d_2,d_3}}$ only goes through edges between $d_1$ and $m$, we can use $H_{i,0}$ which does not contain edges between $m$ and $d_2$. So the distance between $d_1^-$ and $d_2^+$ in $H_{i,0}-d_1$ avoiding $d_3$ can give the answer. (Note that edges between $s$ and $d_1$ in $H_{i,0}$ will not affect the answer, similarly for edges between $d_2$ and $t$.)
    \item If $\og{s}{t}{\set{d_1, d_2,d_3}}$ only goes through edges between $m$ and $d_2$, symmetrically we can use $H_{i,1}-d_2$.
    \item The only case left is that $\og{s}{t}{\set{d_1, d_2,d_3}}$ travels through $m$, we can also use the offline dynamic DSO on $H_{i,0}-d_1$ and $H_{i,1}-d_2$ to answer $|\og{s}{m}{\set{d_1, d_2,d_3}}|+|\og{m}{t}{\set{d_1, d_2,d_3}}|$.
\end{itemize}

So we need to build offline dynamic DSOs on $H_{i,0}$ and $H_{i,1}$ in each level $i$. Since there are $O(\log n)$ levels, the total time is still $\tilde{O}(n^3)$.

\paragraph{All of $d_1,d_2,d_3$ are on $st$.} We assume the order of them on $st$ is just $d_1,d_2,d_3$, then they will cut $st$ into four ranges $D_1,D_2,D_3,D_4$ in order. The difficulty will come from the fact that the path $\og{s}{t}{\set{d_1, d_2,d_3}}$ can go through both of $D_2$ and $D_3$, in any order. To solve this case, we first build a structure in $\tilde{O}(n^3)$ time which can answer the following query in $\tilde{O}(1)$ time: 
\begin{itemize}
    \item Given two vertex-disjoint ranges $R_1,R_2$ of consecutive edges on the original shortest path $st$ ($R_1, R_2$ can be single vertices), answer the shortest path that starts from the leftmost (rightmost) vertex of $R_1$, diverges from $R_1,$ converges in $R_2$, and finally ends at the leftmost (rightmost) vertex of $R_2$.
    \item Given an edge $d_1$ on $st$ and two vertex-disjoint ranges $R_1,R_2$ of consecutive edges on $st$ after $d_1$, answer the shortest path that starts from $s$, diverges before $d_1$, converges in $R_1$, and diverges from $R_1,$ then converges in $R_2,$ and finally ends at the leftmost (rightmost) point of $R_2$. 
\end{itemize}

%Remind that in the binary range tree, every range can be represented as the union of $O(\log n)$ disjoint ranges in the binary range tree. 

Note that in the structure we store the paths for all ranges $R_1,R_2$ in the binary range tree, then given any $R_1,R_2$ in the query, they can be split into $O(\log n)$ ranges in the binary range tree.

Similar to \cref{sec:two-2FRP}, by this structure, when given $d_1,d_2,d_3$ and a vertex $b$ on $st$, it takes $\tilde{O}(1)$ time to find the shortest path goes through $b$ avoiding $d_1,d_2,d_3$. So it will take $\tilde{O}(n)$ time to find the shortest path goes through both of $D_2$ and $D_3$ avoiding $d_1,d_2,d_3$, but we need the time bound to be $\tilde{O}(1)$ on average. For every pair of $d_1$ and $d_3$, the main ideas are as follows.

\begin{itemize}
    \item[-] First find the middle edge $e_1$ on the range between $d_1$ and $d_3$, and use $\tilde{O}(n)$ time to find the shortest path $\og{s}{t}{\set{d_1,d_3,e_1}}$.
    \item[-] For other edges $d_2$ between $d_1$ and $d_3$, if $d_2$ is not on $\og{s}{t}{\set{d_1,d_3,e_1}}$, then either $\og{s}{t}{\set{d_1, d_2,d_3}}$ is equal to $\og{s}{t}{\set{d_1,d_3,e_1}}$ or it goes through $e_1$, so this case can be solved in $\tilde{O}(1)$ time.
    \item[-] The intersection of $\og{s}{t}{\set{d_1,d_3,e_1}}$ and $st$ between $d_1$ and $d_3$ consists of at most two ranges, then find the middle edge $e_2$ of the larger range, and find the shortest path avoiding $d_1,d_3,e_1,e_2$ and goes through two ranges between them. This can also be done in $\tilde{O}(n)$ time.
    \item[-] As before, for $d_2$ not on the new path we can solve it by querying the path goes through $e_1$ and $e_2$, so next we only need to consider the edges in the intersection of all paths so far, and we can argue that the size of the intersection between $d_1$ and $d_3$ will shrink by a constant factor in every two iterations, so only $O(\log n)$ iterations are needed. Thus the time for every pair of $d_1$ and $d_3$ is only $\tilde{O}(n)$. 
\end{itemize}

\subsection{Ideas of incremental DSO}

To maintain APSP incrementally, when inserting an edge $e=(x,y)$, for every pair of vertices $u,v$ we just need to check whether $e$ can make their distance shorter, that is, the new distance will be $\min\{|uv|,|ux|+|e|+|yv|,|uy|+|e|+|xv|\}$, thus update time is $O(n^2)$. Similar to previous works~\cite{Demetrescu2008}, we want to make a DSO structure that maintains the following for all pairs of vertices $u,v$.
\begin{itemize}
    \item Shortest path tree from every vertex.
    \item $\og{u}{v}{(u \oplus i)(v \ominus j)}$ for all $i, j\in\mathbb{N}$ that are 0 or integer powers of 2 such that $i+j < \|uv\|$.
\end{itemize}

(Remind that $\|uv\|$ stands for the number of edges in $uv$.) Note that it is a little different from previous DSOs~\cite{Demetrescu2008,2009A} since we want to query $\pi_{G-ab}(u,v)$ for any subpath $ab$ in $uv$, which is needed in computing new paths after inserting an edge $e$ into $G$.

As in~\cite{2009A}, we can also use $\max_{f\in (u \oplus i)(v \ominus j)}\og{u}{v}{f}$ to replace $\og{u}{v}{(u \oplus i)(v \ominus j)}$. When inserting new edges into the graph it is very difficult to maintain the path avoiding the whole range or the maximum shortest detour of the range.
However, in $\max_{f\in (u \oplus i)(v \ominus j)}\og{u}{v}{f}$, we only need the path $\og{u}{v}{f}$ for some $f$ which avoids the whole range between $(u \oplus i)$ and $(v \ominus j)$. When such an $f$ does not exist, that is, even the maximum one still intersects with the range between $(u \oplus i)$ and $(v \ominus j)$, we do not need to care about it. This makes the incremental DSO possible.
\section{Incremental DSO}\label{sec-inc}

In this section, we give the construction of the incremental distance sensitive oracle. That is, given an undirected weighted graph $G$,  we can construct an oracle of space $\tilde{O}(n^2)$ in $\too{n^3}$ preprocessing time, then for any pair of vertices $u,v$, and an edge $e$, we can query the shortest $u$-$v$ path in $G$ that avoids $e$ in time $\too{1}$. Further, the oracle supports incremental update: when a weighted edge is added to the graph, we can maintain the oracle in worst-case update time $\too{n^2}$. By Theorem~\ref{thm:offline}, it can be transformed into an offline fully dynamic DSO, such that if the total number of updates is $O(n)$, the total preprocessing and update time is $\tilde{O}(n^3)$ and the query time is still $\tilde{O}(1)$.

%Then by Theorem~\ref{thm1-2}(\cite{peng2023fully}), we can obtain an offline dynamic DSO with $\tilde{O}(n^3)$ preprocessing time, worst-case update time $\tilde{O}(n^2)$ and query time $\tilde{O}(1)$.

\subsection{Notations}

Below we say a subpath $ab \subseteq uv$ is \textbf{weak} under $uv$ if there exists an edge $f \in ab$, $\og{u}{v}{f} = \og{u}{v}{ab}$, and $f$ is called a \textbf{weak point} of $ab$ (under $uv$). (Recall that $G-ab$ means removing edges of $ab$ from $G$.) In other words, if $ab$ is weak under $uv$, the shortest $u-v$ path avoiding the weak point also avoids the whole subpath $ab.$ By definition, on $uv$ all weak points of $ab$ correspond to the same replacement path $\og{u}{v}{ab}$. 

\begin{definition}
    If a path $P$ between $u$ and $v$ in a graph $G$ is represented by the concatenation of a shortest path $ux$, an edge $(x,y)$ and a shortest path $yv$ in $G$, then we say $P$ is in the \textbf{proper form} in $G$. Here the edge $(x,y)$ and shortest path $yv$ can be empty.
\end{definition}

By ~\cite{2001Restoration} and Theorem~\ref{ReplacementPath} in this paper, when $ab$ is weak under $uv$, $\og{u}{v}{ab}$ can be represented in the proper form $ux\circ (x,y)\circ yv$.
We further define $\odg{u}{v}{ab}$ to be a null path, or of the proper form, such that,

%$ux\circ xy\circ yv$ where $xy$ is an edge in $G$ and $ux,vy$ are shortest paths in $G$, s.t. 

\begin{itemize}
    \item  If $ab$ is weak, $\odg{u}{v}{ab}=\og{u}{v}{ab}$
    \item If $ab$ is not weak, $\odg{u}{v}{ab}$ is any path of the proper form $ux\circ (x,y)\circ yv$ avoiding $ab$, or is a null path with weight $+\infty.$ 
\end{itemize}

Note that in this structure we generally can hardly say whether a given $ab$ is weak or not under $uv$, even if we know the value of $\odg{u}{v}{ab}$. We also point out that even if $ab$ is weak under $uv$ we do not record the position of the weak point.

In our incremental structure, the graph before introducing edge $e$ is called $G$, and the graph $G\cup \{ e \}$ is called $G'.$  %Suppose $g$ denotes a node/edge/path in $G$, then $g'$ denotes that in $G'.$ 
For example, $\new{u, v}$ is the shortest $u-v$ path in $G'$, which is possibly different from $uv$, the shortest $u-v$ path in $G$, as it may go through $e$.

We sometimes need to check whether a path is in proper form, and if so, transform it to its proper form representation. From Lemma \ref{lemma2-2} and \ref{lemma2-3}, by constructing a least common ancestor (LCA) structure with $O(n)$ preprocessing time and $O(1)$ query time~\cite{BF00} on the shortest path tree from every vertex, in our incremental DSO structure we can transform every path to its proper form if possible in $\too{1}$ time.

\begin{definition}
Let $P$ be a path from $u$ to $v$ in $G$ and let $R$ be a subpath of $uv$ in $G$. We define a transform $T_{G,R}(P)$ on $P$ to be
\begin{itemize}
    \item the path $P$ of the proper form, if $P$ can be transformed to the proper form and it avoids edges of $R.$
    \item a null path with weight $+\infty$, otherwise.
\end{itemize}
\end{definition}

When this transform is used in this section, $P$ is given by a concatenation of several shortest paths and edges in $G$, and we want to transform it into the proper form in $G$ or the new graph $G'$. (Then $R$ is also a shortest path in $G'$.) By Lemma \ref{lemma2-2} and \ref{lemma2-3}, such transform $T_{G,R}$ can be done in $\tilde{O}(1)$ time.

\begin{lemma}\label{lemma2-2}
%Let $p=a_1a_2\circ a_2a_3\circ\cdots\circ a_{k-1}a_k$, for $k=O(n)$ be the concatenation of several shortest paths and edges in $G$. 
Given a path $P$ explicitly or implicitly, suppose the list of vertices on $P$ is $v_0,v_1,\cdots, v_r$, such that when given an index $i$, we can find $v_i$ and the length of the subpath of $P$ from $v_0$ to $v_i$ in $\tilde{O}(1)$ time.
Then we can transform $P$ into the proper form if it can, or otherwise point out that it cannot, in $\too{1}$ time.
\end{lemma}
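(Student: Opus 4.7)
The plan is to find, by binary search, the longest prefix of $P$ that is already a shortest path in $G$, and then verify that the remainder of $P$ has the simple ``edge plus shortest tail'' structure required by the proper form. All primitive checks are done via $O(1)$-time distance queries into the shortest-path trees maintained by the DSO (together with the hypothesized random access to prefix lengths on $P$).

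Call an index $i\in\{0,\ldots,r\}$ \emph{left-good} if $P[v_0,v_i]$ is the shortest $v_0$-$v_i$ path in $G$. By the unique shortest path assumption, this holds exactly when the length of $P[v_0,v_i]$ (available in $\tilde{O}(1)$ by hypothesis) equals $|v_0 v_i|$ (available in $O(1)$). Left-goodness is monotone, since a prefix of a shortest path is a shortest path; hence I binary-search on $\{0,\ldots,r\}$ for the largest left-good index $i^{\ast}$ using $O(\log n)$ primitive queries. If $i^{\ast}=r$, then $P$ itself is a shortest path and is trivially in proper form (with empty middle edge and empty tail). Otherwise, set $x=v_{i^{\ast}}$ and $y=v_{i^{\ast}+1}$, which is an edge on $P$, and perform one more length comparison to check whether $P[v_{i^{\ast}+1},v_r]=v_{i^{\ast}+1}v_r$. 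If the check succeeds, output $v_0 v_{i^{\ast}}\circ(v_{i^{\ast}},v_{i^{\ast}+1})\circ v_{i^{\ast}+1}v_r$; otherwise declare that $P$ admits no proper-form representation.

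Soundness of a positive answer is immediate from the definition. The only point that needs argument is soundness of a negative answer: if $P$ does admit a proper-form decomposition $v_0 v_k\circ(v_k,v_{k+1})\circ v_{k+1}v_r$, then $k$ is left-good, so $k\le i^{\ast}$ by maximality. When $k=i^{\ast}$ the suffix check succeeds directly. When $k<i^{\ast}$, the subpath $P[v_{k+1},v_{i^{\ast}}]$ is both the initial segment of the shortest $v_{k+1}v_r$ and, as a subpath of the shortest $v_0 v_{i^{\ast}}$, a shortest $v_{k+1}v_{i^{\ast}}$-path; by uniqueness these two paths coincide, so the shortest $v_{k+1}v_r$ passes through $v_{i^{\ast}+1}$, whence $P[v_{i^{\ast}+1},v_r]$ is a subpath of a shortest path and is itself a shortest path, so the suffix check would succeed. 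Hence failure of the suffix check indeed rules out proper form.

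Putting it together, the algorithm performs $O(\log n)$ primitive queries, each of cost $\tilde{O}(1)$, for a total of $\tilde{O}(1)$. The only real obstacle is the negative-correctness argument above; the algorithmic pattern is a standard monotone-predicate binary search plus a single follow-up check.
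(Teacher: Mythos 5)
Your proof is correct and follows the same approach as the paper: binary-search for the longest prefix of $P$ that is a shortest path (exploiting monotonicity under the unique shortest path assumption), then check that the remaining suffix is a shortest path. The one small difference is that the paper also separately checks whether $v_{i^\ast}\cdots v_r$ is a shortest path, but as your negative-soundness argument shows (a suffix of a shortest path is a shortest path), the single check on $v_{i^\ast+1}\cdots v_r$ already suffices, so this is a harmless redundancy rather than a substantive gap on either side.
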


\begin{proof}
%We name the vertices on the path by $v_0v_1\cdots v_r.$ We can calculate $\abs{v_0v_1}+\abs{v_1v_2}+\cdots+\abs{v_{i-1}v_{i}}$ for $0\le i\le r$ in time $\too{1}.$ 

Use a binary search to find the largest $j$ such that the subpath $v_0v_1\cdots v_j$ is a shortest path in the graph, then check whether $v_{j+1}v_{j+2}\cdots v_r$ or $v_jv_{j+1}\cdots v_r$ is a shortest path by the shortest path tree structure. So $P$ is in the proper form if one of them is a shortest path.
%the path $v_0v_1\cdots v_r$ is a concatenation of a shortest path, a single edge, and a shortest path if and only if $v_{j+1}v_{j+2}\cdots v_r$ is a shortest path, which we can also check 
\end{proof}

\begin{lemma}\label{lemma2-3}
Let $P$ be a path in the proper form $ux\circ e\circ yv$ in $G$, and let $R$ be a subpath in the shortest path between $u$ and $v$ in $G$, then we can check whether $P$ and $R$ share any edge in $O(1)$ time.
\end{lemma}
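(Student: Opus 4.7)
The plan is to decompose $P = ux \circ e \circ yv$ into its three pieces and test each one against $R$ separately in $O(1)$ time. Since $R$ is a subpath of $\pi_G(u,v)$, I will index the vertices of $\pi_G(u,v)$ along this path and represent $R$ by a pair of indices $[i,j]$; then any ``sharing an edge with $R$'' test reduces to testing whether a certain subpath of $\pi_G(u,v)$ has a nonempty index-intersection with $[i,j]$, which is a constant-time arithmetic check. The intuition is that each of $\pi_G(u,x)$ and $\pi_G(y,v)$ either avoids $\pi_G(u,v)$ entirely or overlaps with it on exactly a prefix (resp.\ suffix), because by the unique shortest path assumption these paths all live in the shortest path trees rooted at $u$ and $v$ respectively.

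First, to test the middle edge: using the precomputed position (if any) of $x$ and of $y$ on $\pi_G(u,v)$, I can detect in $O(1)$ whether $e=(x,y)$ is an edge of $\pi_G(u,v)$ and, if so, whether its position lies in $[i,j-1]$. Second, to test $\pi_G(u,x)$: since $\pi_G(u,x)$ and $\pi_G(u,v)$ are both root-paths in the shortest path tree rooted at $u$, their maximal shared prefix ends at $\ell_1 := \mathrm{LCA}(x, v)$ in that tree, which I obtain in $O(1)$ from the Bender--Farach-Colton LCA oracle~\cite{BF00} maintained on every shortest path tree. The edges shared by $\pi_G(u,x)$ and $R$ are precisely those with index in $[\, 0,\mathrm{idx}(\ell_1)) \cap [i,j)$, so nonempty intersection is equivalent to $\mathrm{idx}(\ell_1) > i$. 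Third, symmetrically for $\pi_G(y,v)$: compute $\ell_2 := \mathrm{LCA}(y,u)$ in the shortest path tree rooted at $v$ and compare $\mathrm{idx}(\ell_2)$ against $j$. The path $P$ shares an edge with $R$ iff at least one of these three tests succeeds.

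Correctness follows from two observations: (i) since $\pi_G(u,x)$ and $\pi_G(u,v)$ are both unique shortest paths out of $u$, they coincide on a prefix ending exactly at their LCA in the tree rooted at $u$, so their overlap with $\pi_G(u,v)$ is completely captured by $[\,u,\ell_1]$; (ii) when $e$ does lie on $\pi_G(u,v)$, its unique position is determined in $O(1)$ from $x$ and $y$, and this handles the only way $e$ can contribute a shared edge with $R$. There is no real obstacle here beyond bookkeeping: the only precomputed structures invoked are the LCA oracles on the $n$ shortest path trees and, for each vertex $v$, an index table giving the position on $\pi_G(u,v)$ of every vertex that happens to lie on that path. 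These are standard to maintain and fit comfortably inside the $\tilde{O}(n^2)$ worst-case update budget of the incremental DSO.
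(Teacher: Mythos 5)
Your proposal is correct and follows essentially the same route as the paper: both exploit the facts that $ux$ and $\pi_G(u,v)$ are root paths in the shortest-path tree from $u$ (resp.\ $yv$, $\pi_G(v,u)$ in the tree from $v$), so overlap with $R$ reduces to a constant-time LCA query plus an index comparison, with the middle edge $e$ handled by a direct membership test. Your write-up is merely a bit more explicit than the paper's one-liner about which LCA to compute ($\mathrm{LCA}(x,v)$ in the $u$-rooted tree, $\mathrm{LCA}(y,u)$ in the $v$-rooted tree) and how the index test follows; there is no substantive difference in method.
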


\begin{proof}
Note that both $ux$ and $R$ are paths in the shortest path tree from $u.$ By calculating the least common ancestor~\cite{BF00} we can check whether they intersect in $O(1)$ time. Similarly for $yv$ and $R.$ Checking $e\in R$ or not is also in $O(1)$ time. 
\end{proof}

\subsection{Preprocessing}
%As in \cite{2009A} 
In our structure for all pairs of vertices $u,v$, we maintain
\begin{itemize}
    \item Shortest path tree from every vertex, and the LCA structure~\cite{BF00} on them.
 %   \item $\og{u}{v}{u\oplus 2^i}$ and $\og{u}{v}{v\ominus 2^i}$ for all $i\in\mathbb{N}$ satisfying $2^i < ||uv||$.
    \item $\odg{u}{v}{(u \oplus i)(v \ominus j)}$ for all $i, j\in\{0\}\cup\{2^k\}_{k\in\mathbb{N}}$ such that $i+j < ||uv||$.
\end{itemize}

(Recall that $\|uv\|$ stands for the number of edges in $uv$.) Note that all the detours in this structure are maintained in the proper form. We can see the space of this structure is $\tilde{O}(n^2)$.
Although our structure is a little different from the DSO in~\cite{2009A}, we can use their DSO to construct our structure:

\begin{lemma}
    We can construct this initial structure for any graph $G$ in $\too{n^3}$ time.
\end{lemma}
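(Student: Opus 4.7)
The plan is to bootstrap the structure from the static $1$-fault DSO of Bernstein and Karger~\cite{2009A}, which, as recalled in Section~\ref{sec:intro-dso}, can be built in $\tilde{O}(mn)=\tilde{O}(n^3)$ time, uses $\tilde{O}(n^2)$ space, answers $|\og{u}{v}{f}|$ in $\tilde{O}(1)$ time, and explicitly exposes the bridging edge $(x,y)$ of the proper form $ux\circ(x,y)\circ yv$ of each returned replacement path (Theorem~\ref{ReplacementPath} guarantees such a form exists). First I would run Dijkstra from every vertex to obtain all shortest path trees in $\tilde{O}(n^3)$ time and install the LCA structure of~\cite{BF00} on each, realising the first bullet of the structure and, via Lemmas~\ref{lemma2-2} and~\ref{lemma2-3}, enabling $O(1)$-time tests of whether a given proper-form $u$-$v$ path intersects any prescribed subpath of $uv$.

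Next I construct the Bernstein-Karger DSO on $G$, and then, for every pair $(u,v)$ and every admissible $(i,j)$ with $i,j\in\{0\}\cup\{2^{k}\}_{k\in\mathbb{N}}$ and $i+j<\|uv\|$, I compute $\odg{u}{v}{(u\oplus i)(v\ominus j)}$ as follows. Let $R=(u\oplus i)(v\ominus j)$; for each edge $f\in R$ I query the DSO to obtain $\og{u}{v}{f}$ in proper form and use Lemma~\ref{lemma2-3} to test in $O(1)$ time whether it avoids the whole of $R$. Among those $f\in R$ whose reply avoids $R$, I keep the longest such path and store only its bridging edge $(x,y)$, which uses $O(1)$ space per entry; if no such $f$ exists, I set the entry to the null path with weight $+\infty$.

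For correctness, if $R$ is weak under $uv$ then by definition some weak point $f^{*}\in R$ satisfies $\og{u}{v}{f^{*}}=\og{u}{v}{R}$; this path is of proper form, avoids $R$, and is longest among $\{\og{u}{v}{f}\}_{f\in R}$ because $|\og{u}{v}{R}|\ge|\og{u}{v}{f}|$ for every $f\in R$, so my procedure returns exactly $\og{u}{v}{R}$, matching the first clause of the definition of $\varpi$. If $R$ is not weak, I return either a valid proper-form $R$-avoiding $u$-$v$ path or the null path, both of which the definition of $\varpi$ permits. For the running time, for each $(u,v)$ the number of admissible $(i,j)$ is $O(\log^{2} n)$ and each range $R$ satisfies $|R|\le\|uv\|$, so the total work is $\sum_{(u,v)}O(\|uv\|\log^{2} n)\cdot\tilde{O}(1)=\tilde{O}(n^{3})$, which together with the $\tilde{O}(n^{3})$ cost of the Dijkstra runs and of the DSO construction yields the claimed preprocessing bound.

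I do not expect any serious obstacle: the only subtle point is ensuring that the DSO output is available in proper form so that the subpath-intersection test of Lemma~\ref{lemma2-3} can be invoked in $O(1)$ time per query, which is guaranteed by the way Bernstein-Karger (and the underlying~\cite{Demetrescu2008}) store the bridging edge of every $1$-fault replacement path; should only the edge-by-edge retrieval interface of~\cite{Demetrescu2008} be used, the split $(x,y)$ can still be recovered in $\tilde{O}(1)$ time by the binary search of Lemma~\ref{lemma2-2}, which preserves the overall $\tilde{O}(n^{3})$ budget.
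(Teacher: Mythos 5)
Your correctness argument is sound, and it relies on the same key observation the paper uses, namely that $|\og{u}{v}{R}|\ge|\og{u}{v}{f}|$ for every $f\in R$, with equality exactly at a weak point, so that the longest replacement path among those avoiding $R$ is precisely $\og{u}{v}{R}$ when $R$ is weak. The gap is in the time bound of your inner loop. You ask the Bernstein--Karger DSO, for \emph{each} edge $f\in R$, to return $\og{u}{v}{f}$ ``in proper form'' and then test intersection via Lemma~\ref{lemma2-3} in $O(1)$. But that DSO is a directed-graph construction recalled only as returning lengths in $O(1)$ and supporting \emph{edge-by-edge} path retrieval; the undirected proper-form decomposition $ux\circ(x,y)\circ yv$ is guaranteed to exist by Theorem~\ref{ReplacementPath}, but nothing in the paper asserts the oracle exposes the bridging edge directly. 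Your stated fallback --- the binary search of Lemma~\ref{lemma2-2} --- requires $\tilde O(1)$-time random access to the $i$-th vertex and prefix length of the returned path, which edge-by-edge retrieval does not provide without first reading out the whole path in $O(n)$ time. If each of your $|R|$ per-interval queries pays that $O(n)$, the total becomes $\sum_{(u,v)}O(\|uv\|\cdot n\log^{2}n)=\tilde O(n^{4})$, which is too slow.

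The paper avoids this exactly by exploiting your own observation more aggressively: for each interval $R$ it scans all $f\in R$ with pure $O(1)$ length queries to identify the single maximizer $f^{*}$ of $|\og{u}{v}{f}|$, then performs \emph{one} $O(n)$ path retrieval of $\og{u}{v}{f^{*}}$, checks whether it intersects $R$, and if not applies Lemma~\ref{lemma2-2} (now legitimate, since the path is explicit) to store it in proper form. If the maximizer intersects $R$ then no $f\in R$ can have an $R$-avoiding replacement path, so the entry is correctly set to null. This yields $O(n)$ per interval and $\tilde O(n^{3})$ overall without any assumption about the oracle's output format. To repair your proposal, simply restructure the loop: select the maximizer by length alone, then retrieve, test, and transform that single path.
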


\begin{proof}
The shortest path tree structure and LCA structure can be constructed in $O(n^3)$ time. From \cite{2009A} and the reduction in \cref{sec:intro-dso} we can obtain a DSO for undirected graph $G$ in $\too{n^3}$ time, then for all pair of vertices $u,v$ and interval $(u \oplus i)(v \ominus j)$ where $i, j$ are 0 or integer powers of 2, run the following.

\begin{itemize}
    \item Find $f \in (u \oplus i)(v \ominus j)$ which maximizes $|\og{u}{v}{f}|$.
    \item After finding such $f$, retrieve the path $\og{u}{v}{f}$ and check whether it intersects with the edges in interval $(u \oplus i)(v \ominus j)$. 
    \item If it intersects with $(u \oplus i)(v \ominus j)$, set $\odg{u}{v}{(u \oplus i)(v \ominus j)}$ to be a null path.
    \item If it does not intersect with $(u \oplus i)(v \ominus j)$, use Lemma~\ref{lemma2-2} to transform the path $\og{u}{v}{f}$ into the proper form then store it as $\odg{u}{v}{(u \oplus i)(v \ominus j)}$.
\end{itemize}

We can see the time needed to compute every $\odg{u}{v}{(u \oplus i)(v \ominus j)}$ is $O(n)$, thus the total time is $\tilde{O}(n^3)$.

%All values $\odg{u}{v}{(u \oplus 2^i)(v \ominus 2^j)}$ are obtained by range queries. If $(u \oplus 2^i)(v \ominus 2^j)$ is weak, then 
%$$|\odg{u}{v}{(u \oplus 2^i)(v \ominus 2^j)}| = \max_{f \in (u \oplus 2^i)(v \ominus 2^j)}\set{|\og{u}{v}{f}|}.$$ 
%Therefore, if $(u \oplus 2^i)(v \ominus 2^j)$ is weak then our algorithm will find a correct path. Otherwise, for any $f$, $\og{u}{v}{f}$ will not avoid the whole interval, which means our algorithm will store the null path $+\infty$ correctly. 

%Finally, from \cite{2009A}, all DSO queries (with path retrieving) is in $\too{1}$ time. And we can perform range queries also in $\too{1}$ time. Therefore, we can construct our initial structure in $\too{n^3}$ time.\footnote{Benyu: Please look if this part is rigorous and consistent with other parts in this section.}
 
\end{proof}

Using our structure we can answer the shortest path between $u,v$ avoiding any subpath $ab$ of $uv$ if $ab$ is weak under $uv$.

\begin{theorem}\label{thm4-3}
    Let $ab$ be a subpath of $uv.$ By the information we maintain in the DSO, we can calculate the value of $\odg{u}{v}{ab}$ in $\too{1}$ time. W.l.o.g., suppose $a$ is closer to $u$ than $b$ is. Let $i$ be the largest integer power of 2 s.t. $u\oplus i$ falls in $ua.$ Similarly $j$ is the largest integer power of 2 s.t. $v\ominus j$ falls in $bv.$ Let $a'=a\ominus i, u'=u\oplus i, v'=v\ominus j, b'=b\oplus j.$ (Note that if $a=u$, $a'=u'=u$ and if $b=v$, $b'=v'=v$.) Then
    
\begin{equation}\label{eq-1}   
    \begin{aligned}
    \odg{u}{v}{ab}=\min\Big\{&T_{G,ab}\left(ua'\circ \odg{a'}{b'}{ab}\circ b'v\right), ~\odg{u}{v}{u'v'}, ~T_{G,ab}\left(ua'\circ \odg{a'}{v}{av'}\right), \\
    &~T_{G,ab}\left(\odg{u}{b'}{u'b}\circ b'v\right)\Big\}.\\
    \end{aligned}
\end{equation}

\end{theorem}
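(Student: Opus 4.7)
The plan is to establish equation~\eqref{eq-1} in three steps: (i) the right-hand side is computable in $\tilde O(1)$ from the stored table; (ii) it is always a legitimate value of $\odg{u}{v}{ab}$, which immediately settles the case when $ab$ is not weak under $uv$; and (iii) when $ab$ is weak, the minimum equals the true replacement path $\og{u}{v}{ab}$. For~(i), each of the inner detours $\odg{a'}{b'}{ab}$, $\odg{u}{v}{u'v'}$, $\odg{a'}{v}{av'}$, $\odg{u}{b'}{u'b}$ is exactly an entry of the form $\odg{x}{y}{(x\oplus i')(y\ominus j')}$ stored by the DSO, because the choice of $i$ and $j$ as powers of $2$ makes every offset appearing in each term a power of $2$; the three $T_{G,ab}(\cdot)$ transforms then each run in $\tilde O(1)$ by Lemmas~\ref{lemma2-2} and~\ref{lemma2-3}. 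Step~(ii) is immediate, since each $T_{G,ab}$ outputs either $+\infty$ or a proper-form $u$-$v$ path avoiding $ab$, and the second term $\odg{u}{v}{u'v'}$ is already in proper form and (when finite) avoids $u'v'\supseteq ab$.

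For step~(iii), I would invoke Theorem~\ref{ReplacementPath} on the weakness witness $f\in ab$: since $\og{u}{v}{ab}=\og{u}{v}{f}$ is a $1$-fault replacement path, it has the proper form $ux_0\circ(x_0,y_0)\circ y_0 v$ and coincides with $uv$ before a divergence point $p\le a$ and after a convergence point $q\ge b$. Splitting on whether $p\in[u,u']$ or $p\in[u',a]$ and whether $q\in[b,v']$ or $q\in[v',v]$ yields four subcases that correspond, respectively, to the fourth, second, first, and third terms of~\eqref{eq-1}. In every subcase the argument follows the same template: the relevant sub-segment of $\og{u}{v}{ab}$ (for example, the portion from $a'$ to $b'$ in the subcase matching the first term) is, by subpath optimality of shortest paths avoiding a single edge, equal to the true replacement path between its endpoints avoiding the corresponding range, and the same $f$ continues to witness weakness of that range in the smaller instance, so the stored $\odg$ agrees with the actual $\og$; concatenating with the intact prefix and/or suffix of $uv$ then recovers $\og{u}{v}{ab}$ in its proper form, which $T_{G,ab}$ returns intact. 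Since every term is either $+\infty$ or a valid $u$-$v$ path avoiding $ab$, each has length at least $|\og{u}{v}{ab}|$, so the minimum is exactly $|\og{u}{v}{ab}|$.

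The main obstacle I anticipate is the weakness-transfer and proper-form verification inside each subcase. For example, in the subcase matching the first term, I need both to argue that $ab$ is weak under $a'b'$ with the same witness $f$, so that $\odg{a'}{b'}{ab}=\og{a'}{b'}{ab}$, and to show that the concatenation $ua'\circ\og{a'}{b'}{ab}\circ b'v$ actually collapses to the proper form $ux_0\circ(x_0,y_0)\circ y_0v$; the latter relies on $p\in[a',a]$ so that $ua'$ is a genuine prefix of the shortest path $ux_0$, together with the uniqueness of shortest paths. Boundary situations such as $a=u$ or $b=v$ (where $a'=u'=u$ or $b'=v'=v$ and several terms coincide or degenerate) need to be handled separately but reduce to the same template.
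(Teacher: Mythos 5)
Your proposal is correct and takes essentially the same approach as the paper: establish $\mathrm{LHS}\le\mathrm{RHS}$ because every finite term is a proper-form $u$-$v$ path avoiding $ab$, and for the weak case show one term equals $\og{u}{v}{ab}$ by casing on the divergence/convergence points $p,q$ and transferring the witness $f$ to a smaller range. The only (harmless) cosmetic difference is that you split at $u'$ and $v'$ while the paper splits at $a'$ and $b'$; since $a'\le u'$ and $v'\le b'$, both case partitions are exhaustive and your term assignments (4th, 2nd, 1st, 3rd) are valid under the split you chose.
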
      

\begin{center}
    \tikzset{every picture/.style={line width=0.75pt}} %set default line width to 0.75pt        

\begin{tikzpicture}[x=0.75pt,y=0.75pt,yscale=-1,xscale=1]
%uncomment if require: \path (0,263); %set diagram left start at 0, and has height of 263

%Straight Lines [id:da4555139177451747] 
\draw    (96,183.82) -- (272,184) ;
%Straight Lines [id:da3096088422432044] 
\draw [color={rgb, 255:red, 144; green, 19; blue, 254 }  ,draw opacity=0.5 ][line width=1.5]    (96,183.82) -- (192,184) ;
%Straight Lines [id:da2838563304892848] 
\draw [color={rgb, 255:red, 144; green, 19; blue, 254 }  ,draw opacity=0.5 ][line width=1.5]    (416,184) -- (488,184) ;
%Straight Lines [id:da932256308944191] 
\draw    (272,184) -- (488,184) ;
%Curve Lines [id:da7656032341747229] 
\draw [color={rgb, 255:red, 144; green, 19; blue, 254 }  ,draw opacity=0.5 ][line width=1.5]    (192,184) .. controls (252.2,112.8) and (354.2,114.4) .. (416,184) ;

% Text Node
\draw (492,178.4) node [anchor=north west][inner sep=0.75pt]  [font=\footnotesize]  {$v$};
% Text Node
\draw (84,178.4) node [anchor=north west][inner sep=0.75pt]  [font=\footnotesize]  {$u$};
% Text Node
\draw (188,187.4) node [anchor=north west][inner sep=0.75pt]  [font=\footnotesize]  {$p$};
% Text Node
\draw (412,187.4) node [anchor=north west][inner sep=0.75pt]  [font=\footnotesize]  {$q$};
% Text Node
\draw (212,187.4) node [anchor=north west][inner sep=0.75pt]  [font=\footnotesize]  {$a$};
% Text Node
\draw (380,187.4) node [anchor=north west][inner sep=0.75pt]  [font=\footnotesize]  {$b$};
% Text Node
\draw (164,187.4) node [anchor=north west][inner sep=0.75pt]  [font=\footnotesize]  {$u'$};
% Text Node
\draw (140,187.4) node [anchor=north west][inner sep=0.75pt]  [font=\footnotesize]  {$a'$};
% Text Node
\draw (396,187.4) node [anchor=north west][inner sep=0.75pt]  [font=\footnotesize]  {$v'$};
% Text Node
\draw (467,186.4) node [anchor=north west][inner sep=0.75pt]  [font=\footnotesize]  {$b'$};
% Text Node
\draw (268,114.4) node [anchor=north west][inner sep=0.75pt]  [font=\footnotesize]  {$l=\pi _{G-ab}( u,v)$};

\end{tikzpicture}
\end{center}

\begin{proof}

First, we suppose that $ab$ is weak under $uv$, with a weak point $f$. Let $l=\og{u}{v}{ab}=\og{u}{v}{f}.$ Suppose $l$ diverges from $uv$ at $p$, and converges at $q.$ 
\begin{itemize}
    \item[-] If $p\ge a'$ and $q\le b'$, $l$ is a concatenation of $ua', \og{a'}{b'}{f}$, and $b'v$, i.e. $l=ua'\circ \og{a'}{b'}{f}\circ b'v.$ Since $l$ avoids $ab$, we know $\og{a'}{b'}{f}$ avoids $ab$, so $ab$ is also weak under $a'b'$, so $\og{a'}{b'}{ab}=\odg{a'}{b'}{ab}.$ Note that the numbers of edges in $aa'$ and $bb'$ are integer powers of 2 (or zero), so $\odg{a'}{b'}{ab}$ is a value we maintained in the DSO structure.
    \item[-] If $p<a'$ and $q>b'$, we have $p<u'$ and $q>v'.$ In this case $l=\og{u}{v}{u'v'}.$ Also note that $\og{u}{v}{u'v'}=\og{u}{v}{ab}=\og{u}{v}{f}$, which means that $u'v'$ is also weak under $uv$ with weak point $f$, so $\og{u}{v}{u'v'}=\odg{u}{v}{u'v'}.$ The numbers of edges in $uu$ and $vv'$ are integer powers of 2 (or zero), so $\odg{u}{v}{u'v'}$ is a value we maintained in the DSO structure.
    \item[-] If $p\ge a'$ and $q>b'$, similarly we can show that $av'$ is weak under $a'v$ and $l=ua'\circ \odg{a'}{v}{av'}.$
    \item[-] If $p<a'$ and $q\le b'$, $u'b$ is weak under $ub'$ and $l=\odg{u}{b'}{u'b}\circ b'v.$
\end{itemize}
 
So when $ab$ is weak under $uv$, $\og{u}{v}{ab}$ must be in one of these 4 cases which can be transformed to the proper form by $T_{G,ab}$, so $LHS \geq RHS$ in Equation~(\ref{eq-1}). Also, any path that can pass the transform $T_{G,ab}$ is a path of proper form avoiding $ab$, so $LHS \leq RHS$ in Equation~(\ref{eq-1}).

%Every term in the RHS of equation \ref{eq-1} is a $(u,v)$ path avoiding $ab$, and we can get the distance on the path between any two vertices on it in time $\too{1}$. Therefore, by lemma \ref{lemma2-2} we can check whether each term is a proper form for the $\varpi$ function. If not, meaning that in that case, $l$ is not weak, we set that term to a null path with weight $+\infty.$

Of course, if $ab$ is not weak under $uv$, by our definition any proper form $u-v$ path avoiding $ab$ or the null path is legal for $\odg{u}{v}{ab}$, so the algorithm also works. 
\end{proof}

We remark that when $(a,b)$ is an edge, the path $ab$ is always weak by definition, so we can get $\og{u}{v}{ab}$ by Theorem~\ref{thm4-3}. Therefore, we can answer any DSO query $\og{u}{v}{f}$ in $\too{1}$ time.

\subsection{Incremental Update}\label{sec4-3}

Consider the current DSO structure for $G$ and an edge $e=(x,y)$ to be added. Recall that we denote $G'=G\cup \{ e \}$, so we want to construct the DSO structure for $G'$ using current structure.
\begin{itemize}
   % \vspace{5pt}
    \item For every pair of vertices $u,v$, we first maintain the new information of their shortest path as:
          \[\new{u, v}=\min\{uv,ux\circ e\circ yv,~uy\circ e\circ xv\}.\]
  %  \vspace{5pt}
    
          Since if the new shortest path does not go through $e$, it will be the original $uv$ in $G$. Otherwise, there are two cases that it goes through $e$ from $x$ to $y$ or from $y$ to $x$, which correspond to the other two cases. We also construct the shortest path tree from every vertex $u$ and the LCA structure on it.~\cite{BF00}

%    \vspace{5pt}
%    \item Now let's consider $\nng{u}{v}{u\oplus 2^i}$ and $\nng{u}{v}{v\ominus 2^i}$ for any $i < \log(||uv||)$. 
%    \vspace{5pt}
    
        %In both cases that either $\new{u, v}=uv$, or $\new{u, v} \neq uv$. We process all failures $f$ we need to handle.
        
 %       For $f=u\oplus 2^i$ or $f=v\ominus 2^i$, if $f=e$, trivially $\nng{u}{v}{f}=T_{G',f}(uv)$, otherwise we still consider if the replacement shortest path passes $e$ or not, which gives:
%        \[\nng{u}{v}{f}=T_{G',f}(\min\{\og{u}{v}{f}, ~\og{u}{x}{f}\circ e\circ \og{y}{v}{f}, ~\og{u}{y}{f}\circ e\circ \og{x}{v}{f}\}).\]
 %       (It is easy to see that the transform $T_{G',f}$ must succeed.)
        
        %. Otherwise, the terms on the right can be obtained in $\Tilde{O}(1)$ time via DSO queries in $G$.
%    \vspace{5pt}
    \item For the detours on intervals $\odg{u}{v}{(u \oplus i)(v \ominus j)}$, where $i,j$ are zero or integer powers of 2, we consider two cases separately as below: either $\new{u, v} \neq uv$ (see \ref{sec3-3-1}), or $\new{u, v} = uv$ (see \ref{sec3-3-2}).
%    \vspace{5pt}
\end{itemize}

In the remaining part of this section, when showing the correctness of the incremental algorithm, by default we assume that $R=\new{u \oplus i,v \ominus j}$ is weak under $\new{u,v}$. This is because we only care about the value $\nng{u}{v}{R}$ when $R$ is weak. If $R$ is not weak, because of the transform of $T_{G',R}$, we can always make sure that the path given by the algorithm is a proper form for the $\varpi$ function (or a null path).

\subsubsection{$\pi_G(u,v)\neq\pi_{G'}(u,v)$}\label{sec3-3-1}

We know in this case $e=(x,y)$ is on $\new{u, v}$. W.l.o.g., suppose that $x$ is nearer to $u$ than $y$. We can see that $uv$ is a shortest $u-v$ path in $G=G'-e$, i.e. $uv=\nng{u}{v}{e}$. Let $p=\Delta(uv,\new{u, v})$ be the divergence point of $uv$ and $\new{u, v}$, and $q=\nabla(uv,\new{u, v})$ be the convergence point. Let $a=u \oplus i, b=v \ominus j$, then $R= \new{a,b}$ and we want to find $\ndg{u}{v}{R}$. So the points $a,b,x,y$ and $p,q$ are all on the new shortest path $\new{u, v}.$  

There are six possible relative positions of $R,p,q,e$:
\begin{itemize}
    \item \textbf{CASE 1:} $p,q\not\in R$, $e\in R$
    \item \textbf{CASE 2:} $p,q\not\in R$, $e\not\in R,$ $pq\cap R=\emptyset$
    \item \textbf{CASE 3:} $p,q\not\in R$, $e\not\in R,$ $pq\cap R\ne\emptyset$
    \item \textbf{CASE 4:} One of $p,q$ is in $R$, $e\in R$
    \item \textbf{CASE 5:} One of $p,q$ is in $R$, $e\not\in R$
    \item \textbf{CASE 6:} $p,q\in R$, $e\in R$
\end{itemize}

Note that it is impossible to have both $p,q\in R$ but $e\not\in R.$ This is because $e\in\new{u,v}$ but $e\not\in \pi_G(u,v)$, meaning that $e\in \new{p,q}.$ Therefore, the relative positions of $p,q,x,y,a,b$ fall into one of the following six cases or their symmetric cases. (Recall that we only consider the case that $R$ is weak under $\new{u,v}$.)

~\\
\noindent\textbf{CASE 1:} $p\le a\le x<y\le b\le q$,

$$\ndg{u}{v}{R}=\pp{uv}.$$

\begin{center}
    \tikzset{every picture/.style={line width=0.75pt}} %set default line width to 0.75pt        

\begin{tikzpicture}[x=0.75pt,y=0.75pt,yscale=-1,xscale=1]
%uncomment if require: \path (0,263); %set diagram left start at 0, and has height of 263

%Straight Lines [id:da6898663817279173] 
\draw    (96,183.82) -- (248,184) ;
%Straight Lines [id:da7914543492049062] 
\draw [color={rgb, 255:red, 144; green, 19; blue, 254 }  ,draw opacity=0.5 ][line width=1.5]    (96,183.82) -- (192,184) ;
%Straight Lines [id:da5012392412564012] 
\draw [color={rgb, 255:red, 74; green, 144; blue, 226 }  ,draw opacity=0.5 ][line width=1.5]    (248,184) -- (272,184) ;
%Straight Lines [id:da42517899439222806] 
\draw    (320,216) -- (216,216) ;
\draw [shift={(216,216)}, rotate = 360] [color={rgb, 255:red, 0; green, 0; blue, 0 }  ][line width=0.75]    (0,5.59) -- (0,-5.59)   ;
%Straight Lines [id:da375863293459912] 
\draw    (336,216) -- (376,216) ;
\draw [shift={(376,216)}, rotate = 180] [color={rgb, 255:red, 0; green, 0; blue, 0 }  ][line width=0.75]    (0,5.59) -- (0,-5.59)   ;
%Curve Lines [id:da8623963291772269] 
\draw [color={rgb, 255:red, 144; green, 19; blue, 254 }  ,draw opacity=0.5 ][line width=1.5]    (192,184) .. controls (252.2,112.8) and (350.2,114.4) .. (412,184) ;
%Straight Lines [id:da5872869781834724] 
\draw [color={rgb, 255:red, 144; green, 19; blue, 254 }  ,draw opacity=0.5 ][line width=1.5]    (412,184) -- (488,184) ;
%Straight Lines [id:da476317671137528] 
\draw    (272,184) -- (488,184) ;

% Text Node
\draw (84,178.4) node [anchor=north west][inner sep=0.75pt]  [font=\footnotesize]  {$u$};
% Text Node
\draw (276,114.4) node [anchor=north west][inner sep=0.75pt]  [font=\footnotesize]  {$\pi _{G}( u,\ v)$};
% Text Node
\draw (244,187.4) node [anchor=north west][inner sep=0.75pt]  [font=\footnotesize]  {$x$};
% Text Node
\draw (188,187.4) node [anchor=north west][inner sep=0.75pt]  [font=\footnotesize]  {$p$};
% Text Node
\draw (268,187.4) node [anchor=north west][inner sep=0.75pt]  [font=\footnotesize]  {$y$};
% Text Node
\draw (212,187.4) node [anchor=north west][inner sep=0.75pt]  [font=\footnotesize]  {$a$};
% Text Node
\draw (324,210.4) node [anchor=north west][inner sep=0.75pt]  [font=\footnotesize]  {$R$};
% Text Node
\draw (492,178.4) node [anchor=north west][inner sep=0.75pt]  [font=\footnotesize]  {$v$};
% Text Node
\draw (409,187.4) node [anchor=north west][inner sep=0.75pt]  [font=\footnotesize]  {$q$};
% Text Node
\draw (372,187.4) node [anchor=north west][inner sep=0.75pt]  [font=\footnotesize]  {$b$};
% Text Node
\draw (256,187.4) node [anchor=north west][inner sep=0.75pt]  [font=\footnotesize]  {$f$};

\end{tikzpicture}
\end{center}

In this case, since $uv=\nng{u}{v}{e}$, $e\in R$, $R= \new{a, b}$, and $uv$ avoids $R$, so $R$ is weak with a weak point $f=e$, and $\nng{u}{v}{e}=uv$, which must pass the transform $T_{G',R}$. 

~\\
\noindent\textbf{CASE 2:} $a\le b\le p\le x<y\le q$,
$$\ndg{u}{v}{R}=
    \min\{\pp{\odg{u}{v}{R}}, \pp{\odg{u}{x}{R}\circ e\circ yv}\}.
$$

\begin{center}
    \tikzset{every picture/.style={line width=0.75pt}} %set default line width to 0.75pt        

\begin{tikzpicture}[x=0.75pt,y=0.75pt,yscale=-1,xscale=1]
%uncomment if require: \path (0,263); %set diagram left start at 0, and has height of 263

%Straight Lines [id:da6226999431369357] 
\draw    (96,183.82) -- (248,184) ;
%Straight Lines [id:da7988025618893674] 
\draw [color={rgb, 255:red, 144; green, 19; blue, 254 }  ,draw opacity=0.5 ][line width=1.5]    (96,183.82) -- (192,184) ;
%Straight Lines [id:da33790163687337815] 
\draw [color={rgb, 255:red, 74; green, 144; blue, 226 }  ,draw opacity=0.5 ][line width=1.5]    (248,184) -- (272,184) ;
%Straight Lines [id:da7006516002738674] 
\draw    (140,216) -- (125,216) ;
\draw [shift={(125,216)}, rotate = 360] [color={rgb, 255:red, 0; green, 0; blue, 0 }  ][line width=0.75]    (0,5.59) -- (0,-5.59)   ;
%Straight Lines [id:da7432186855186068] 
\draw    (160,216) -- (170,216) ;
\draw [shift={(170,216)}, rotate = 180] [color={rgb, 255:red, 0; green, 0; blue, 0 }  ][line width=0.75]    (0,5.59) -- (0,-5.59)   ;
%Curve Lines [id:da43818089325670684] 
\draw [color={rgb, 255:red, 144; green, 19; blue, 254 }  ,draw opacity=0.5 ][line width=1.5]    (192,184) .. controls (252.2,112.8) and (350.2,114.4) .. (412,184) ;
%Straight Lines [id:da6312661935879138] 
\draw [color={rgb, 255:red, 144; green, 19; blue, 254 }  ,draw opacity=0.5 ][line width=1.5]    (412,184) -- (488,184) ;
%Straight Lines [id:da44670701305507876] 
\draw    (272,184) -- (488,184) ;
%Straight Lines [id:da20186975371345472] 
\draw    (140,192) -- (140,184) ;
\draw [shift={(140,184)}, rotate = 90] [color={rgb, 255:red, 0; green, 0; blue, 0 }  ][line width=0.75]    (0,3.91) -- (0,-3.91)   ;
%Straight Lines [id:da9114308875808828] 
\draw    (156,192) -- (156,184) ;
\draw [shift={(156,184)}, rotate = 90] [color={rgb, 255:red, 0; green, 0; blue, 0 }  ][line width=0.75]    (0,5.59) -- (0,-5.59)   ;

% Text Node
\draw (84,178.4) node [anchor=north west][inner sep=0.75pt]  [font=\footnotesize]  {$u$};
% Text Node
\draw (276,114.4) node [anchor=north west][inner sep=0.75pt]  [font=\footnotesize]  {$\pi _{G}( u,\ v)$};
% Text Node
\draw (244,187.4) node [anchor=north west][inner sep=0.75pt]  [font=\footnotesize]  {$x$};
% Text Node
\draw (188,187.4) node [anchor=north west][inner sep=0.75pt]  [font=\footnotesize]  {$p$};
% Text Node
\draw (268,187.4) node [anchor=north west][inner sep=0.75pt]  [font=\footnotesize]  {$y$};
% Text Node
\draw (122,187.4) node [anchor=north west][inner sep=0.75pt]  [font=\footnotesize]  {$a$};
% Text Node
\draw (144,210.4) node [anchor=north west][inner sep=0.75pt]  [font=\footnotesize]  {$R$};
% Text Node
\draw (492,178.4) node [anchor=north west][inner sep=0.75pt]  [font=\footnotesize]  {$v$};
% Text Node
\draw (409,187.4) node [anchor=north west][inner sep=0.75pt]  [font=\footnotesize]  {$q$};
% Text Node
\draw (167,187.4) node [anchor=north west][inner sep=0.75pt]  [font=\footnotesize]  {$b$};
% Text Node
\draw (144,187.4) node [anchor=north west][inner sep=0.75pt]  [font=\footnotesize]  {$f$};

\end{tikzpicture}
\end{center}

\begin{itemize}
    \item[-] If $\nng{u}{v}{R}$ does not go through $e$,
    
\vspace{5pt}
 similar as \textbf{CASE 1}, it is in $G'-e$, which equals $G.$ Since $R$ is weak under $\new{u,v}$, it is also weak under $uv$, %Since $\odg{u}{v}{R}$ is the shortest $u-v$ path in $G$ that avoids $R$, we have 
 so $\nng{u}{v}{R}=\odg{u}{v}{R}$.

\vspace{5pt}
    \item[-] If $\nng{u}{v}{R}$ goes through $e$,
    
\vspace{5pt}
by Theorem~\ref{ReplacementPath} and~\ref{thm4-2}, it takes a shortest $u-x$ path that avoids $R$, then goes along $\new{x,v}$, i.e. it equals $\og{u}{x}{R}\circ e\circ yv.$ Suppose $f\in ab$ is a weak point of $R$ under $\new{u,v}$, $\og{u}{x}{f}\circ e\circ yv=\nng{u}{v}{f}=\nng{u}{v}{R}=\og{u}{x}{R}\circ e\circ yv$, so $R$ is also weak under $ux$ with a weak point $f$, so $\odg{u}{x}{R}=\og{u}{x}{R}.$
\vspace{5pt}
    
\end{itemize}

This means if $\odg{u}{x}{R}\circ e\circ yv$ is not a proper form, $\nng{u}{v}{R}$ does not go through $e$, so $\ndg{u}{v}{R}=uv.$ Otherwise we should choose the minimum of both. Note that if $\odg{u}{x}{R}$ intersects with $yv$, it cannot be the shortest one and also cannot pass the transform $T_{G',R}$.

~\\
\noindent\textbf{CASE 3:} $p\le x<y\le a\le b\le q$,

$$\ndg{u}{v}{R}=
    \min\{\pp{uv}, \pp{ux\circ e\circ\odg{y}{v}{R}}\}.
$$

\begin{center}
    \tikzset{every picture/.style={line width=0.75pt}} %set default line width to 0.75pt        

\begin{tikzpicture}[x=0.75pt,y=0.75pt,yscale=-1,xscale=1]
%uncomment if require: \path (0,263); %set diagram left start at 0, and has height of 263

%Straight Lines [id:da8305337039197984] 
\draw    (96,183.82) -- (248,184) ;
%Straight Lines [id:da3913899686460398] 
\draw [color={rgb, 255:red, 144; green, 19; blue, 254 }  ,draw opacity=0.5 ][line width=1.5]    (96,183.82) -- (192,184) ;
%Straight Lines [id:da019355456125179193] 
\draw [color={rgb, 255:red, 74; green, 144; blue, 226 }  ,draw opacity=0.5 ][line width=1.5]    (248,184) -- (272,184) ;
%Straight Lines [id:da6493247196789557] 
\draw    (320,216) -- (296,216) ;
\draw [shift={(296,216)}, rotate = 360] [color={rgb, 255:red, 0; green, 0; blue, 0 }  ][line width=0.75]    (0,5.59) -- (0,-5.59)   ;
%Straight Lines [id:da9604277944376711] 
\draw    (336,216) -- (376,216) ;
\draw [shift={(376,216)}, rotate = 180] [color={rgb, 255:red, 0; green, 0; blue, 0 }  ][line width=0.75]    (0,5.59) -- (0,-5.59)   ;
%Curve Lines [id:da7600717535290847] 
\draw [color={rgb, 255:red, 144; green, 19; blue, 254 }  ,draw opacity=0.5 ][line width=1.5]    (192,184) .. controls (252.2,112.8) and (350.2,114.4) .. (412,184) ;
%Straight Lines [id:da02138203677680839] 
\draw [color={rgb, 255:red, 144; green, 19; blue, 254 }  ,draw opacity=0.5 ][line width=1.5]    (412,184) -- (488,184) ;
%Straight Lines [id:da00976963773822459] 
\draw    (272,184) -- (488,184) ;
%Straight Lines [id:da5047139792250841] 
\draw    (336,192) -- (336,184) ;
\draw [shift={(336,184)}, rotate = 90] [color={rgb, 255:red, 0; green, 0; blue, 0 }  ][line width=0.75]    (0,3.91) -- (0,-3.91)   ;
%Straight Lines [id:da31523492917628426] 
\draw    (352,192) -- (352,184) ;
\draw [shift={(352,184)}, rotate = 90] [color={rgb, 255:red, 0; green, 0; blue, 0 }  ][line width=0.75]    (0,5.59) -- (0,-5.59)   ;

% Text Node
\draw (84,178.4) node [anchor=north west][inner sep=0.75pt]  [font=\footnotesize]  {$u$};
% Text Node
\draw (276,114.4) node [anchor=north west][inner sep=0.75pt]  [font=\footnotesize]  {$\pi _{G}( u,\ v)$};
% Text Node
\draw (244,187.4) node [anchor=north west][inner sep=0.75pt]  [font=\footnotesize]  {$x$};
% Text Node
\draw (188,187.4) node [anchor=north west][inner sep=0.75pt]  [font=\footnotesize]  {$p$};
% Text Node
\draw (268,187.4) node [anchor=north west][inner sep=0.75pt]  [font=\footnotesize]  {$y$};
% Text Node
\draw (292,187.4) node [anchor=north west][inner sep=0.75pt]  [font=\footnotesize]  {$a$};
% Text Node
\draw (324,210.4) node [anchor=north west][inner sep=0.75pt]  [font=\footnotesize]  {$R$};
% Text Node
\draw (492,178.4) node [anchor=north west][inner sep=0.75pt]  [font=\footnotesize]  {$v$};
% Text Node
\draw (409,187.4) node [anchor=north west][inner sep=0.75pt]  [font=\footnotesize]  {$q$};
% Text Node
\draw (372,187.4) node [anchor=north west][inner sep=0.75pt]  [font=\footnotesize]  {$b$};
% Text Node
\draw (340,187.4) node [anchor=north west][inner sep=0.75pt]  [font=\footnotesize]  {$f$};

\end{tikzpicture}
\end{center}

The discussion is very similar to \textbf{CASE 2}. If $\nng{u}{v}{R}$ does not go through $e$, then $\nng{u}{v}{R}$ is in $G'-e$, which equals $G,$ so we have $\nng{u}{v}{R}=uv$. If $\nng{u}{v}{R}$ goes through $e$, it is the same as \textbf{CASE 2}, so $\nng{u}{v}{R}=ux\circ e\circ\odg{y}{v}{R}$.

~\\
Now we introduce Lemma \ref{thm4-1} to be used in the following cases.

\begin{lemma}\label{thm4-1}
    Under the settings of Section \ref{sec3-3-1}. suppose that $e\in R=\new{a,b}$, and $R$ is weak under $\new{u,v}$ with weak point $f.$ Assuming that $uv\cap R\neq\emptyset$, then:\\
    (1) $f\in uv$,\\
    (2) $\og{u}{v}{f}=\nng{u}{v}{f}=\nng{u}{v}{R}.$
\end{lemma}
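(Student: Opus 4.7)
The plan is to prove (1) by contradiction and then obtain (2) as a short length comparison using (1). The pivotal observation driving both parts is that since $f$ is a weak point of $R$ under $\new{u,v}$, we have $\nng{u}{v}{f} = \nng{u}{v}{R}$, so this replacement path avoids every edge of $R$; in particular it avoids $e$. Because $G' = G \cup \{e\}$, any $u$-$v$ path in $G'$ that avoids $e$ is already a $u$-$v$ path in $G$, so $\nng{u}{v}{R}$ (and hence $\nng{u}{v}{f}$) lies entirely in $G$.

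For part (1), I assume for contradiction that $f \notin uv$. Then $uv$ itself is a $u$-$v$ path in $G \subseteq G'$ that avoids $f$, giving $|\nng{u}{v}{f}| \le |uv|$. Conversely, $\nng{u}{v}{f}$ lies in $G$ by the observation above, and $uv$ is the unique shortest $u$-$v$ path in $G$, so $|\nng{u}{v}{f}| \ge |uv|$. Equality of lengths together with the unique shortest path assumption applied to $G' - f$ forces $\nng{u}{v}{f} = uv$. But the hypothesis $uv \cap R \neq \emptyset$ says that $uv$ does not avoid $R$, contradicting $\nng{u}{v}{f} = \nng{u}{v}{R}$. Hence $f \in uv$.

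For part (2), with $f \in uv$ established, I run a symmetric length comparison between $\og{u}{v}{f}$ and $\nng{u}{v}{f}$. On one side, $\og{u}{v}{f}$ is a $u$-$v$ path in $G \subseteq G'$ avoiding $f$, so $|\nng{u}{v}{f}| \le |\og{u}{v}{f}|$. On the other side, $\nng{u}{v}{f}$ lies in $G$ and avoids $f$, so $|\nng{u}{v}{f}| \ge |\og{u}{v}{f}|$. Uniqueness of shortest paths in $G - f$ then yields $\og{u}{v}{f} = \nng{u}{v}{f}$, and combined with $\nng{u}{v}{f} = \nng{u}{v}{R}$ from the weak-point hypothesis we obtain the full chain $\og{u}{v}{f} = \nng{u}{v}{f} = \nng{u}{v}{R}$.

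The step I expect to be most delicate to phrase cleanly is the contradiction in (1): it simultaneously invokes that $uv$ is the unique shortest $u$-$v$ path in $G$, that $\nng{u}{v}{f}$ is forced into $G$ by the weak-point property of $f$, and that the unique shortest path assumption holds in $G' - f$. Once these three uniqueness facts are separated, no case analysis over the relative positions of $p, q, x, y, a, b$ is needed, which is precisely why Lemma \ref{thm4-1} can be applied uniformly in both \textbf{CASE 4} and \textbf{CASE 6} of Section~\ref{sec3-3-1}.
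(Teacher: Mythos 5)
Your proposal is correct and takes essentially the same approach as the paper: the central observation in both arguments is that a weak point $f$ forces $\nng{u}{v}{f}=\nng{u}{v}{R}$ to avoid all of $R$, hence avoid $e$, hence lie entirely in $G$. You phrase the conclusion via explicit two-sided length comparisons together with the unique-shortest-path assumption, whereas the paper concludes $\nng{u}{v}{f}=\og{u}{v}{f}$ directly once the path is known to lie in $G$, but the logical content is identical.
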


\begin{proof}
If $f$ is not in $uv$, meaning that $uv$ is a $u-v$ path avoiding $f.$ Since $R$ is weak and $f$ is a weak point, we know $\nng{u}{v}{f}$ avoids the whole interval $R$, so it avoids $e$. As a result, $\nng{u}{v}{f}$ is a path in $G$, i.e. $\nng{u}{v}{f}=\og{u}{v}{f}.$ However, $f$ is not in $uv$, so $\nng{u}{v}{f}=uv$. Since $uv$ does not avoid the whole interval $R$, this is contradictory with the fact that $f$ is a weak point of $R$, so (1) is proved.

Let $f\in uv$ be a weak point of $R.$ We know $\nng{u}{v}{f}$ avoids $R$, and $e\in R$, so it avoids $e.$ So $\nng{u}{v}{f}$ is a path in $G$, i.e. $\nng{u}{v}{f}=\og{u}{v}{f}$, so (2) is proved.

\end{proof}

\noindent\textbf{CASE 4:} $p\le a\le x<y\le q\le b$,

$$\ndg{u}{v}{R}=\pp{\odg{u}{v}{qb}} $$

%$$\ndg{u}{v}{R}=\left\{
%\begin{aligned}
%    &\pp{\odg{u}{v}{qb}},\text{ if }\odg{u}{v}{qb}\text{ avoids }R\\
%    &\text{a null path with weight }+\infty,\text{ otherwise}\\
%\end{aligned}\right.
%$$

\begin{center}
    \tikzset{every picture/.style={line width=0.75pt}} %set default line width to 0.75pt        

\begin{tikzpicture}[x=0.75pt,y=0.75pt,yscale=-1,xscale=1]
%uncomment if require: \path (0,263); %set diagram left start at 0, and has height of 263

%Straight Lines [id:da83005930556276] 
\draw    (96,183.82) -- (248,184) ;
%Straight Lines [id:da40191559133852717] 
\draw [color={rgb, 255:red, 144; green, 19; blue, 254 }  ,draw opacity=0.5 ][line width=1.5]    (96,183.82) -- (192,184) ;
%Straight Lines [id:da3777326521694292] 
\draw [color={rgb, 255:red, 74; green, 144; blue, 226 }  ,draw opacity=0.5 ][line width=1.5]    (248,184) -- (272,184) ;
%Straight Lines [id:da6026373632073296] 
\draw    (320,216) -- (216,216) ;
\draw [shift={(216,216)}, rotate = 360] [color={rgb, 255:red, 0; green, 0; blue, 0 }  ][line width=0.75]    (0,5.59) -- (0,-5.59)   ;
%Straight Lines [id:da07692559147020839] 
\draw    (336,216) -- (456,216) ;
\draw [shift={(456,216)}, rotate = 180] [color={rgb, 255:red, 0; green, 0; blue, 0 }  ][line width=0.75]    (0,5.59) -- (0,-5.59)   ;
%Curve Lines [id:da11466281738347062] 
\draw [color={rgb, 255:red, 144; green, 19; blue, 254 }  ,draw opacity=0.5 ][line width=1.5]    (192,184) .. controls (252.2,112.8) and (350.2,114.4) .. (412,184) ;
%Straight Lines [id:da8474580709619517] 
\draw [color={rgb, 255:red, 144; green, 19; blue, 254 }  ,draw opacity=0.5 ][line width=1.5]    (412,184) -- (488,184) ;
%Straight Lines [id:da11276071009237365] 
\draw    (272,184) -- (488,184) ;
%Straight Lines [id:da9607905881818143] 
\draw    (424,192) -- (424,184) ;
\draw [shift={(424,184)}, rotate = 90] [color={rgb, 255:red, 0; green, 0; blue, 0 }  ][line width=0.75]    (0,3.91) -- (0,-3.91)   ;
%Straight Lines [id:da3145159926540717] 
\draw    (440,192) -- (440,184) ;
\draw [shift={(440,184)}, rotate = 90] [color={rgb, 255:red, 0; green, 0; blue, 0 }  ][line width=0.75]    (0,5.59) -- (0,-5.59)   ;

% Text Node
\draw (84,178.4) node [anchor=north west][inner sep=0.75pt]  [font=\footnotesize]  {$u$};
% Text Node
\draw (276,114.4) node [anchor=north west][inner sep=0.75pt]  [font=\footnotesize]  {$\pi _{G}( u,\ v)$};
% Text Node
\draw (244,187.4) node [anchor=north west][inner sep=0.75pt]  [font=\footnotesize]  {$x$};
% Text Node
\draw (188,187.4) node [anchor=north west][inner sep=0.75pt]  [font=\footnotesize]  {$p$};
% Text Node
\draw (268,187.4) node [anchor=north west][inner sep=0.75pt]  [font=\footnotesize]  {$y$};
% Text Node
\draw (212,187.4) node [anchor=north west][inner sep=0.75pt]  [font=\footnotesize]  {$a$};
% Text Node
\draw (324,210.4) node [anchor=north west][inner sep=0.75pt]  [font=\footnotesize]  {$R$};
% Text Node
\draw (492,178.4) node [anchor=north west][inner sep=0.75pt]  [font=\footnotesize]  {$v$};
% Text Node
\draw (409,187.4) node [anchor=north west][inner sep=0.75pt]  [font=\footnotesize]  {$q$};
% Text Node
\draw (452,187.4) node [anchor=north west][inner sep=0.75pt]  [font=\footnotesize]  {$b$};
% Text Node
\draw (428,187.4) node [anchor=north west][inner sep=0.75pt]  [font=\footnotesize]  {$f$};

\end{tikzpicture}
\end{center}

By Lemma~\ref{thm4-1}, if $R$ is weak under $\new{u, v}$, the weak point $f$ is in $qb$, and $\og{u}{v}{f}=\nng{u}{v}{f}=\nng{u}{v}{R}$. By the unique shortest path assumption, this means $\og{u}{v}{f}$ also avoids $R$, so it also avoids $qb$  thus $qb$ is weak under $uv$ and $f$ is a weak point of it. Now $\nng{u}{v}{R}=\og{u}{v}{f}=\odg{u}{v}{qb}$.

Also, if $\odg{u}{v}{qb}$ does not avoid $R$, $R$ is not weak under $\new{u, v}$, so any path passing the transform $T_{G',R}$ or the null path is legal for $\ndg{u}{v}{R}$. (Note that it is possible that $\odg{u}{v}{qb}$ intersects with $ax$ in $R$.)

~\\

\noindent\textbf{CASE 5:} $p\le x<y\le a\le q\le b$,

$$\ndg{u}{v}{R}=
    \min\{\pp{\odg{u}{v}{qb}}, \pp{ux\circ e\circ \odg{y}{v}{ab}}\}.
$$

\begin{center}
    \tikzset{every picture/.style={line width=0.75pt}} %set default line width to 0.75pt        

\begin{tikzpicture}[x=0.75pt,y=0.75pt,yscale=-1,xscale=1]
%uncomment if require: \path (0,263); %set diagram left start at 0, and has height of 263

%Straight Lines [id:da9725167229851012] 
\draw    (96,183.82) -- (248,184) ;
%Straight Lines [id:da637674684972106] 
\draw [color={rgb, 255:red, 144; green, 19; blue, 254 }  ,draw opacity=0.5 ][line width=1.5]    (96,183.82) -- (192,184) ;
%Straight Lines [id:da22968929218899048] 
\draw [color={rgb, 255:red, 74; green, 144; blue, 226 }  ,draw opacity=0.5 ][line width=1.5]    (248,184) -- (272,184) ;
%Straight Lines [id:da572535013883901] 
\draw    (320,216) -- (296,216) ;
\draw [shift={(296,216)}, rotate = 360] [color={rgb, 255:red, 0; green, 0; blue, 0 }  ][line width=0.75]    (0,5.59) -- (0,-5.59)   ;
%Straight Lines [id:da8525755802370067] 
\draw    (336,216) -- (456,216) ;
\draw [shift={(456,216)}, rotate = 180] [color={rgb, 255:red, 0; green, 0; blue, 0 }  ][line width=0.75]    (0,5.59) -- (0,-5.59)   ;
%Curve Lines [id:da1496883412097204] 
\draw [color={rgb, 255:red, 144; green, 19; blue, 254 }  ,draw opacity=0.5 ][line width=1.5]    (192,184) .. controls (252.2,112.8) and (350.2,114.4) .. (412,184) ;
%Straight Lines [id:da6276976346034285] 
\draw [color={rgb, 255:red, 144; green, 19; blue, 254 }  ,draw opacity=0.5 ][line width=1.5]    (412,184) -- (488,184) ;
%Straight Lines [id:da36043218543840627] 
\draw    (272,184) -- (488,184) ;
%Straight Lines [id:da4978215680377167] 
\draw    (336,192) -- (336,184) ;
\draw [shift={(336,184)}, rotate = 90] [color={rgb, 255:red, 0; green, 0; blue, 0 }  ][line width=0.75]    (0,3.91) -- (0,-3.91)   ;
%Straight Lines [id:da45614819860550293] 
\draw    (352,192) -- (352,184) ;
\draw [shift={(352,184)}, rotate = 90] [color={rgb, 255:red, 0; green, 0; blue, 0 }  ][line width=0.75]    (0,5.59) -- (0,-5.59)   ;
%Straight Lines [id:da34460686330349477] 
\draw    (424,192) -- (424,184) ;
\draw [shift={(424,184)}, rotate = 90] [color={rgb, 255:red, 0; green, 0; blue, 0 }  ][line width=0.75]    (0,3.91) -- (0,-3.91)   ;
%Straight Lines [id:da5419123960536889] 
\draw    (440,192) -- (440,184) ;
\draw [shift={(440,184)}, rotate = 90] [color={rgb, 255:red, 0; green, 0; blue, 0 }  ][line width=0.75]    (0,5.59) -- (0,-5.59)   ;

% Text Node
\draw (84,178.4) node [anchor=north west][inner sep=0.75pt]  [font=\footnotesize]  {$u$};
% Text Node
\draw (276,114.4) node [anchor=north west][inner sep=0.75pt]  [font=\footnotesize]  {$\pi _{G}( u,\ v)$};
% Text Node
\draw (244,187.4) node [anchor=north west][inner sep=0.75pt]  [font=\footnotesize]  {$x$};
% Text Node
\draw (188,187.4) node [anchor=north west][inner sep=0.75pt]  [font=\footnotesize]  {$p$};
% Text Node
\draw (268,187.4) node [anchor=north west][inner sep=0.75pt]  [font=\footnotesize]  {$y$};
% Text Node
\draw (292,187.4) node [anchor=north west][inner sep=0.75pt]  [font=\footnotesize]  {$a$};
% Text Node
\draw (324,210.4) node [anchor=north west][inner sep=0.75pt]  [font=\footnotesize]  {$R$};
% Text Node
\draw (492,178.4) node [anchor=north west][inner sep=0.75pt]  [font=\footnotesize]  {$v$};
% Text Node
\draw (409,187.4) node [anchor=north west][inner sep=0.75pt]  [font=\footnotesize]  {$q$};
% Text Node
\draw (452,187.4) node [anchor=north west][inner sep=0.75pt]  [font=\footnotesize]  {$b$};
% Text Node
\draw (338,187.4) node [anchor=north west][inner sep=0.75pt]  [font=\footnotesize]  {$f_{1}$};
% Text Node
\draw (426,187.4) node [anchor=north west][inner sep=0.75pt]  [font=\footnotesize]  {$f_{2}$};

\end{tikzpicture}
\end{center}

Assume $R$ is weak and has a weak point $f$, so $\nng{u}{v}{f}=\nng{u}{v}{R}$.

\begin{itemize}
    \item If $\nng{u}{v}{f}$ goes through $e$, similar as the discussion in \textbf{CASE 2} we know it equals $ux\circ e\circ\og{y}{v}{f}$. Since $f$ is a weak point of $R$, this path avoids $R$, so $\og{y}{v}{f}=\og{y}{v}{ab}$, thus $ab$ is weak under $yv$ with weak point $f$. So $\nng{u}{v}{f}=ux\circ e\circ\odg{y}{v}{ab}$.
    \item If $\nng{u}{v}{f}$ does not go through $e$, so $\nng{u}{v}{f}=\og{u}{v}{f}$ and they avoids $R$.
    \begin{itemize}
        \item If $f\in aq$, then $f$ is not in $uv$, $\og{u}{v}{f}=uv$ which intersects with $R$, making a contradiction.
        \item Thus $f\in qb$. Since $\og{u}{v}{f}$ avoids $R$, it avoids $qb$, so $\og{u}{v}{qb}=\og{u}{v}{f}$, and $qb$ is weak under $uv$ with weak point $f$. So in this case $\ndg{u}{v}{R}=\odg{u}{v}{qb}$.
    \end{itemize}
\end{itemize}

Thus, if $R$ is weak, the minimum one of $\odg{u}{v}{qb}$ and $ux\circ e\circ \odg{y}{v}{ab}$ which pass the transform $T_{G',R}$ must be $\ndg{u}{v}{R}$. If $R$ is not weak, any path passing the transform or the null path is legal.

~\\
\noindent\textbf{CASE 6:} $a\le p\le x<y\le q\le b$,

$$\ndg{u}{v}{R}=\pp{\odg{u}{v}{ab}}.$$

\begin{center}
    \tikzset{every picture/.style={line width=0.75pt}} %set default line width to 0.75pt        

\begin{tikzpicture}[x=0.75pt,y=0.75pt,yscale=-1,xscale=1]
%uncomment if require: \path (0,263); %set diagram left start at 0, and has height of 263

%Straight Lines [id:da9567148305966306] 
\draw    (96,183.82) -- (248,184) ;
%Curve Lines [id:da8789970229925268] 
\draw [color={rgb, 255:red, 144; green, 19; blue, 254 }  ,draw opacity=0.5 ][line width=1.5]    (192,184) .. controls (252.2,112.8) and (350.2,114.4) .. (412,184) ;
%Straight Lines [id:da7406893297177843] 
\draw [color={rgb, 255:red, 144; green, 19; blue, 254 }  ,draw opacity=0.5 ][line width=1.5]    (96,183.82) -- (192,184) ;
%Straight Lines [id:da3044679938074456] 
\draw [color={rgb, 255:red, 144; green, 19; blue, 254 }  ,draw opacity=0.5 ][line width=1.5]    (412,184) -- (488,184) ;
%Straight Lines [id:da37061934721362955] 
\draw [color={rgb, 255:red, 74; green, 144; blue, 226 }  ,draw opacity=0.5 ][line width=1.5]    (248,184) -- (272,184) ;
%Straight Lines [id:da14793699826389473] 
\draw    (272,184) -- (488,184) ;
%Straight Lines [id:da8585873450832834] 
\draw    (320,216) -- (136,216) ;
\draw [shift={(136,216)}, rotate = 360] [color={rgb, 255:red, 0; green, 0; blue, 0 }  ][line width=0.75]    (0,5.59) -- (0,-5.59)   ;
%Straight Lines [id:da30040080017331916] 
\draw    (336,216) -- (456,216) ;
\draw [shift={(456,216)}, rotate = 180] [color={rgb, 255:red, 0; green, 0; blue, 0 }  ][line width=0.75]    (0,5.59) -- (0,-5.59)   ;
%Straight Lines [id:da4110022666239247] 
\draw    (152,192) -- (152,184) ;
\draw [shift={(152,184)}, rotate = 90] [color={rgb, 255:red, 0; green, 0; blue, 0 }  ][line width=0.75]    (0,3.91) -- (0,-3.91)   ;
%Straight Lines [id:da9610229206446729] 
\draw    (168,192) -- (168,184) ;
\draw [shift={(168,184)}, rotate = 90] [color={rgb, 255:red, 0; green, 0; blue, 0 }  ][line width=0.75]    (0,5.59) -- (0,-5.59)   ;

% Text Node
\draw (492,178.4) node [anchor=north west][inner sep=0.75pt]  [font=\footnotesize]  {$v$};
% Text Node
\draw (84,178.4) node [anchor=north west][inner sep=0.75pt]  [font=\footnotesize]  {$u$};
% Text Node
\draw (276,114.4) node [anchor=north west][inner sep=0.75pt]  [font=\footnotesize]  {$\pi _{G}( u,\ v)$};
% Text Node
\draw (244,187.4) node [anchor=north west][inner sep=0.75pt]  [font=\footnotesize]  {$x$};
% Text Node
\draw (188,187.4) node [anchor=north west][inner sep=0.75pt]  [font=\footnotesize]  {$p$};
% Text Node
\draw (409,187.4) node [anchor=north west][inner sep=0.75pt]  [font=\footnotesize]  {$q$};
% Text Node
\draw (268,187.4) node [anchor=north west][inner sep=0.75pt]  [font=\footnotesize]  {$y$};
% Text Node
\draw (132,187.31) node [anchor=north west][inner sep=0.75pt]  [font=\footnotesize]  {$a$};
% Text Node
\draw (452,187.4) node [anchor=north west][inner sep=0.75pt]  [font=\footnotesize]  {$b$};
% Text Node
\draw (324,210.4) node [anchor=north west][inner sep=0.75pt]  [font=\footnotesize]  {$R$};
% Text Node
\draw (156,187.4) node [anchor=north west][inner sep=0.75pt]  [font=\footnotesize]  {$f$};

\end{tikzpicture}
\end{center}

As we have assumed, $R$ is weak. Suppose $f\in R$ is a weak point of $R.$ By Lemma \ref{thm4-1}, $f\in ap\cup qb$, and $\og{u}{v}{f}=\nng{u}{v}{f}=\nng{u}{v}{R}$. By the unique shortest path assumption, $\og{u}{v}{f}$ also avoids $R$, so it avoids $ap$ and $qb$. If $\og{u}{v}{f}$ intersects $ab=\pi_G(a,b)$, it can only intersects $pq=\pi_G(p,q).$ By Theorem \ref{thm4-2}, it is impossible that $\og{u}{v}{f}$ intersects $pq$ but avoids $ap$ and $qb$, so $\og{u}{v}{f}$ avoids $ab$, and $ab$ is weak under $uv$ with weak point $f$, thus $\ndg{u}{v}{R}=\og{u}{v}{f}=\odg{u}{v}{ab}$.

Also, if $\odg{u}{v}{ab}$ does not pass the transform $T_{G',R}$, then $R$ is not weak under $\new{u, v}$, so any path passing the transform $T_{G',R}$ or the null path is legal for $\ndg{u}{v}{R}$.

\iffalse
\begin{itemize}
    \item[*] If $\new{a, b}$ is weak under $\new{u, v}$, then $ab=\pi_G(a,b)$ is weak under $uv=\pi_G(u,v).$
    
\vspace{5pt}
    By Lemma \ref{thm4-1}, $l=\nng{u}{v}{f}=\og{u}{v}{f}.$ If $l$ does not avoids $ab$, it intersects with $pq$ (following $\pi_G(u,v)$), avoids $ap$ and $qb.$ By Theorem \ref{thm4-2}, this is impossible. So $l=\og{u}{v}{f}$ avoids $pq.$ We know by the weakness of $\new{u, v}$, $f$ avoids $ap$ and $qb$, so it avoids $ab.$ It means that $ab$ is weak under $uv$, and $f$ is a weak point.
\vspace{5pt}
    \item[*] $\nng{u}{v}{R}=\og{u}{v}{ab}.$
    
\vspace{5pt}
    On the one side, $\og{u}{v}{ab}$ avoids $R.$ Note that it is a path in $G$, so of course it does not pass $e.$ If it does not avoid $R$, it should intersect with $ua,px,yq,bv$, but avoiding the intervals between them, contradictory.

    On the other side, we prove that $\nng{u}{v}{R}$ avoids $ab.$ If not, then $\nng{u}{v}{R}$ intersects with $ua,pq$ and $bv$, but avoids the intervals between them, contradictory. Note that $pq$ is a shortest path in $G$ but not a shortest path in $G'$, meaning that it should split into several shortest paths in $G'$, and our proof in Theorem \ref{thm4-2} still works.

    Since both sides are the minimum of such possible paths, they should be the same.
\end{itemize}
\fi

\subsubsection{$\pi_G(u,v)=\pi_{G'}(u,v)$}\label{sec3-3-2}

As before, let $a = u \oplus i$, $b = v \ominus j$,  $G'=G\cup\{e\}$, $R=\new{a, b}$, and we want to find $\ndg{u}{v}{R}$. Also suppose that $R$ is weak under $\pi_{G'}(u,v)$ in graph $G'$, and $f$ is a weak point of it. The algorithm is as follows.

%We execute the following algorithm, then swap $x,y$ and repeat again.

%Let $p=\Delta(ux,uv)$ be the last point that $\pi_G(u,v)$ and $\pi_G(u,x)$ shares, and similarly $q=\nabla(yv,uv)$ be the first point that $\pi_G(u,v)$ and $\pi_G(y,v)$ shares. 

\begin{itemize}
    \item Set $\ndg{u}{v}{R}\leftarrow \pp{\odg{u}{v}{R}}$
    \item Execute the following algorithm, then swap $x,y$ and repeat again
    \item Let $p=\Delta(ux,uv)$ be the last point that $\pi_G(u,x)$ and $\pi_G(u,v)$ shares, and similarly $q=\nabla(yv,uv)$ be the first point that $\pi_G(y,v)$ and $\pi_G(u,v)$ shares, consider the following cases
    \begin{itemize}
        \item If $p\le a<b\le q$, set $\ndg{u}{v}{R}\leftarrow\min\{\ndg{u}{v}{R},\pp{ux\circ e\circ yv}\}$
        \item If $a<p\le b\le q$ , set $\ndg{u}{v}{R}\leftarrow\min\{\ndg{u}{v}{R},\pp{\odg{u}{x}{ap}\circ e\circ yv}\}$
        \item If $p\le a\le q< b$, set $\ndg{u}{v}{R}\leftarrow\min\{\ndg{u}{v}{R},\pp{ux\circ e\circ \odg{y}{v}{qb}}\}$
        \item If $a<p\le q<b$, do nothing
        \item If $a<b\le p<q$, set $\ndg{u}{v}{R}\leftarrow\min\{\ndg{u}{v}{R},\pp{\odg{u}{x}{R}\circ e\circ yv}\}$
        \item if $p<q\le a<b$, set $\ndg{u}{v}{R}\leftarrow\min\{\ndg{u}{v}{R},\pp{ux\circ e\circ\odg{y}{v}{R}}\}$
    \end{itemize}
\end{itemize}

We first prove some lemmas we need.
\begin{lemma}\label{lemma4-5}
    For any edge $f$, if $\nng{u}{v}{f}$ does not go through $e=(x,y)$, then $\nng{u}{v}{f}=\og{u}{v}{f}$.
\end{lemma}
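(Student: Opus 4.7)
The plan is to argue by double inclusion on path lengths, using the unique shortest path assumption to conclude equality of the paths themselves. Observe that $G' = G \cup \{e\}$, so a $u$-$v$ walk in $G' - f$ that does not traverse $e$ is in fact a walk entirely in $G - f$, and conversely every $u$-$v$ walk in $G - f$ is trivially also a walk in $G' - f$. The set of $u$-$v$ walks in $G - f$ is therefore exactly the subset of $u$-$v$ walks in $G' - f$ that avoid $e$.

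From this, I would deduce the two inequalities on path weights. Since $\og{u}{v}{f}$ is a walk in $G - f \subseteq G' - f$, the minimality of $\nng{u}{v}{f}$ gives $|\nng{u}{v}{f}| \le |\og{u}{v}{f}|$. Conversely, by hypothesis $\nng{u}{v}{f}$ avoids $e$, so it is a walk in $G - f$, and minimality of $\og{u}{v}{f}$ in $G - f$ yields $|\og{u}{v}{f}| \le |\nng{u}{v}{f}|$. Hence the two path weights agree, and by the unique shortest path assumption in effect throughout the paper the two paths must coincide as edge sets.

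There is no real obstacle here: the argument is just bookkeeping about which graph each path lives in, plus one application of the uniqueness convention. The only subtlety worth mentioning explicitly is that we need $f \neq e$ (otherwise $G' - f$ and $G - f$ coincide and the statement is trivial), but even this case is covered by the same argument since ``$\nng{u}{v}{f}$ does not go through $e$'' is automatic when $f=e$.
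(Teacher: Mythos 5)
Your proof is correct and takes essentially the same approach as the paper: both observe that $\nng{u}{v}{f}$ avoiding $e$ makes it a path in $G-f$, then conclude equality with $\og{u}{v}{f}$ by minimality. You spell out the two-sided inequality that the paper leaves implicit, which is a small improvement in rigor but not a different argument.
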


\begin{proof}
    $\nng{u}{v}{f}$ not going through $e=(x,y)$ means that $\nng{u}{v}{f}$ is path in $G.$ In other words, it is the shortest $u-v$ path in $G$ that avoids $f$, which by definition equals $\og{u}{v}{f}.$
\end{proof}

\begin{lemma}\label{lemma4-6}
    For any edge $f\in pq$, if $\nng{u}{v}{f}$ goes through $e=(x,y)$ and $u$ is closer to $x$ than to $y$ in $G'$, then $\nng{u}{v}{f}=ux\circ e \circ yv$.
\end{lemma}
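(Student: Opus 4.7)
My plan is to decompose $\nng{u}{v}{f}$ around its forced use of $e$ and identify each subpath with a concrete shortest path in $G$, using the minimality of $\nng{u}{v}{f}$ and the unique-shortest-path assumption to force equality.

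First, I would show that both $ux$ and $yv$ avoid $f$, making $ux\circ e\circ yv$ a legitimate $u$-$v$ path in $G'-f$. Since $p=\Delta(ux,uv)$, the path $ux$ shares edges with $uv$ only up to $p$, so $f\in pq$ (which lies strictly after $p$ on $uv$) does not appear on $ux$; the symmetric argument with $q=\nabla(yv,uv)$ gives $f\notin yv$.

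Next, write $\nng{u}{v}{f}=P_1\circ e\circ P_2$, where $P_1$ reaches the first endpoint of $e$ visited from $u$. The closeness hypothesis ``$u$ is closer to $x$ than to $y$ in $G'$'', combined with the standing assumption $\pi_{G'}(u,v)=uv$ of Section~\ref{sec3-3-2}, reduces to $|ux|<|uy|$ in $G$, and this pins the direction of $e$-traversal in $\nng{u}{v}{f}$ to $x\to y$: in the opposite direction the initial $u$-$y$ segment would satisfy $\pi_{G-f}(u,y)\geq |uy|>|ux|$, yielding a candidate no shorter than $ux\circ e\circ yv$, which minimality rules out by the unique shortest path assumption. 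With this direction fixed, both $P_1$ and $P_2$ lie in $G-f$ (the edge $e$ is used only once on a simple path) and, by minimality of $\nng{u}{v}{f}$, must be shortest paths there; because $ux\subseteq G-f$ and $G-f\subseteq G$, one obtains $\pi_{G-f}(u,x)=|ux|$ and hence $P_1=ux$ by uniqueness, with $P_2=yv$ following analogously. Concatenating these yields $\nng{u}{v}{f}=ux\circ e\circ yv$.

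The main obstacle is the direction step: the hypothesis is phrased via distances in $G'$, while the structural claim concerns the path $\nng{u}{v}{f}$ living in $G'-f$, so one must translate between them carefully. I would handle it by directly comparing the two concatenation candidates through $e$ and exploiting the fact that $ux$ and $yv$ are unaffected by removing $f$, while the symmetric pieces $uy$ and $xv$ can only grow in $G-f$.
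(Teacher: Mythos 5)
Your proof follows the same decomposition as the paper's: split $\nng{u}{v}{f}$ around its use of $e$, observe that $f\in pq$ forces $f\notin ux$ and $f\notin yv$, and identify the two legs with $ux$ and $yv$ by minimality and uniqueness. The one place you go beyond the paper is in trying to derive the direction of the $e$-traversal from the ``$u$ closer to $x$'' hypothesis, which the paper simply asserts; that is a reasonable instinct, but the step as written has a gap.

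Concretely, from $\og{u}{y}{f}\geq|uy|>|ux|$ you conclude that the reversed-direction option is ``a candidate no shorter than $ux\circ e\circ yv$.'' That does not follow. The reversed candidate is $Q_1\circ e\circ Q_2$ with $Q_1$ a $u$--$y$ path and $Q_2$ an $x$--$v$ path in $G-f$; a lower bound on $|Q_1|$ alone says nothing about $|Q_2|$ versus $|yv|$, and in general $|uy|+|\og{x}{v}{f}|$ need not dominate $|ux|+|yv|$. So comparing the two full paths leg by leg, using only the first leg, cannot close the argument.

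The clean way to pin the direction is to compare $u$--$x$ \emph{prefixes}, not whole paths. If $\nng{u}{v}{f}$ traversed $e$ from $y$ to $x$, its $u$--$x$ prefix would be $Q_1\circ e$; a subpath of a shortest path is shortest, so this prefix equals $\nng{u}{x}{f}$, giving $|\nng{u}{x}{f}|=|Q_1|+|e|\geq|uy|+|e|>|ux|$. But $f\notin ux$ makes $ux$ a valid $u$--$x$ path in $G'-f$, so $|\nng{u}{x}{f}|\leq|ux|$ --- a contradiction. This uses exactly the two inequalities you already established, just aimed at the right comparison. Once the direction is fixed this way, the remainder of your argument is sound and matches the paper.
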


\begin{proof}
    By assumption, $\nng{u}{v}{f}$ goes through $e$ and $u$ is closer to $x$ than to $y$, so it has the form $\nng{u}{x}{f}\circ e\circ \nng{y}{v}{f}.$ Since $f\in pq$, it is not in $ux$ or $yv$, meaning that $\nng{u}{x}{f}=ux$ and $\nng{y}{v}{f}=yv.$
\end{proof}

\begin{lemma}\label{thm:not-pass-e}
    If $R$ is weak under $\pi_{G'}(u,v)$ in $G'$ and $\nng{u}{v}{R}$ does not go through $e$, then $\ndg{u}{v}{R}=\odg{u}{v}{R}$.
\end{lemma}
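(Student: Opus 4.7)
The plan is to chain the weakness of $R$ from the new graph $G'$ back to the original graph $G$, using Lemma~\ref{lemma4-5} as the bridge. Since $R$ is weak under $\new{u,v}$, by definition there exists a weak point $f\in R$ with $\nng{u}{v}{f}=\nng{u}{v}{R}$. The hypothesis tells us $\nng{u}{v}{R}$ avoids $e$, so $\nng{u}{v}{f}$ also avoids $e$, and Lemma~\ref{lemma4-5} immediately gives $\nng{u}{v}{f}=\og{u}{v}{f}$.

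The next step is to push weakness through to $G$. Because $f$ is a weak point of $R$ in $G'$, the path $\nng{u}{v}{f}$ avoids every edge of $R$. Composed with the previous equality, this says that $\og{u}{v}{f}$ is a $u$-$v$ path in $G$ avoiding all of $R$. By the definition of a weak point, this means $R$ is also weak under $uv$ in $G$, with the same weak point $f$. Consequently $\og{u}{v}{R}=\og{u}{v}{f}$, and by the definition of $\varpi$ (which records $\og{u}{v}{R}$ in proper form when $R$ is weak), $\odg{u}{v}{R}=\og{u}{v}{R}$.

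Chaining everything together yields
\[
\odg{u}{v}{R}\;=\;\og{u}{v}{f}\;=\;\nng{u}{v}{f}\;=\;\nng{u}{v}{R}.
\]
By the structural invariant maintained on $\varpi$, the stored $\odg{u}{v}{R}$ is already in proper form in $G$, and since it avoids $R$ (as just shown) and does not use $e$, it is also in proper form in $G'$; hence the transform $\pp{\cdot}$ leaves it unchanged. Therefore the algorithm's initialization $\ndg{u}{v}{R}\leftarrow \pp{\odg{u}{v}{R}}$ sets $\ndg{u}{v}{R}=\odg{u}{v}{R}$, which matches $\nng{u}{v}{R}$ as required.

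There is no genuine obstacle here: the lemma is essentially a reduction, and every step is a direct appeal to definitions plus Lemma~\ref{lemma4-5}. The only subtlety worth flagging is verifying that the proper-form representation survives the move from $G$ to $G'$ — this holds because the proper form of $\odg{u}{v}{R}$ is a concatenation of two shortest paths in $G$ and an edge, and these remain shortest paths in $G'$ along the relevant pieces (otherwise $\nng{u}{v}{R}$ would have been strictly shorter than $\og{u}{v}{R}$, contradicting the equality we just derived).
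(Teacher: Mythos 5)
Your core argument is correct and is exactly the paper's approach: pick a weak point $f\in R$ under $\pi_{G'}(u,v)$, use Lemma~\ref{lemma4-5} (with the hypothesis that $\nng{u}{v}{R}$ avoids $e$) to get $\nng{u}{v}{R}=\nng{u}{v}{f}=\og{u}{v}{f}$, then observe this path avoids $R$ with $f\in R$, so $R$ is weak under $uv$ in $G$ with weak point $f$ and thus $\odg{u}{v}{R}=\og{u}{v}{f}=\nng{u}{v}{R}=\ndg{u}{v}{R}$.

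The closing paragraph about why the proper-form representation survives from $G$ to $G'$ is not quite sound as stated. You argue ``otherwise $\nng{u}{v}{R}$ would have been strictly shorter than $\og{u}{v}{R}$,'' but if, say, $ux'$ were no longer a shortest path in $G'$, the shorter $G'$-path from $u$ to $x'$ might pass through $R$, and then splicing it in does not give a path avoiding $R$, so no contradiction with the minimality of $\nng{u}{v}{R}$ would follow. The clean justification is that $\nng{u}{v}{R}=\nng{u}{v}{f}$ is a single-failure replacement path in $G'$, hence by Theorem~\ref{ReplacementPath} it already has a proper-form representation in $G'$, and since it avoids $R$, $T_{G',R}$ does not null it out. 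This is also a side concern: the lemma as stated is purely the value identity $\ndg{u}{v}{R}=\odg{u}{v}{R}$, and the paper defers the transform discussion to the surrounding text rather than the proof itself.
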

\begin{proof}
  Let $f$ be a weak point of $R$ under $\pi_{G'}(u,v)$. Then by Lemma \ref{lemma4-5}, $\ndg{u}{v}{R}=\nng{u}{v}{f}=\og{u}{v}{f}.$ Since $\og{u}{v}{f}=\ndg{u}{v}{R}$ avoids $R$ and $f\in R$, we know $R$ is also weak under $uv$ in graph $G$ with weak point $f$, so $\ndg{u}{v}{R}=\odg{u}{v}{R}.$
\end{proof}

In Lemma~\ref{thm:not-pass-e} we can see if $\odg{u}{v}{R}$ does not pass the transform $T_{G',R}$, $R$ is not weak or $\ndg{u}{v}{R}$ must pass $e$. So in the algorithm we first set $\ndg{u}{v}{R}$ to be $\pp{\odg{u}{v}{R}}$. Then we consider the cases that $\ndg{u}{v}{R}$ goes through $e=(x,y)$ ($u$ is closer to $x$ than to $y$):

\vspace{5pt}

\noindent\textbf{CASE 1:} $p\le a<b\le q$,

\begin{center}
    \tikzset{every picture/.style={line width=0.75pt}} %set default line width to 0.75pt        

\begin{tikzpicture}[x=0.75pt,y=0.75pt,yscale=-1,xscale=1]
%uncomment if require: \path (0,263); %set diagram left start at 0, and has height of 263

%Straight Lines [id:da8158725968307696] 
\draw    (96,183.82) -- (272,184) ;
%Straight Lines [id:da018175106435850408] 
\draw [color={rgb, 255:red, 144; green, 19; blue, 254 }  ,draw opacity=0.5 ][line width=1.5]    (96,183.82) -- (192,184) ;
%Straight Lines [id:da6183815602139331] 
\draw [color={rgb, 255:red, 144; green, 19; blue, 254 }  ,draw opacity=0.5 ][line width=1.5]    (416,184) -- (488,184) ;
%Straight Lines [id:da7551143134234282] 
\draw [color={rgb, 255:red, 74; green, 144; blue, 226 }  ,draw opacity=0.5 ][line width=1.5]    (288,136) -- (312,136) ;
%Straight Lines [id:da7933123302394407] 
\draw    (272,184) -- (488,184) ;
%Straight Lines [id:da18366409015048846] 
\draw    (320,216) -- (216,216) ;
\draw [shift={(216,216)}, rotate = 360] [color={rgb, 255:red, 0; green, 0; blue, 0 }  ][line width=0.75]    (0,5.59) -- (0,-5.59)   ;
%Straight Lines [id:da2954059734917992] 
\draw    (336,216) -- (384,216) ;
\draw [shift={(384,216)}, rotate = 180] [color={rgb, 255:red, 0; green, 0; blue, 0 }  ][line width=0.75]    (0,5.59) -- (0,-5.59)   ;
%Straight Lines [id:da16422759634840323] 
\draw    (336,192) -- (336,184) ;
\draw [shift={(336,184)}, rotate = 90] [color={rgb, 255:red, 0; green, 0; blue, 0 }  ][line width=0.75]    (0,3.91) -- (0,-3.91)   ;
%Straight Lines [id:da4003405444476561] 
\draw    (352,192) -- (352,184) ;
\draw [shift={(352,184)}, rotate = 90] [color={rgb, 255:red, 0; green, 0; blue, 0 }  ][line width=0.75]    (0,5.59) -- (0,-5.59)   ;
%Curve Lines [id:da7240998763996397] 
\draw [color={rgb, 255:red, 144; green, 19; blue, 254 }  ,draw opacity=0.5 ][line width=1.5]    (192,184) .. controls (192.87,160.56) and (255.87,136.56) .. (288,136) ;
%Curve Lines [id:da434354971079574] 
\draw [color={rgb, 255:red, 144; green, 19; blue, 254 }  ,draw opacity=0.5 ][line width=1.5]    (416,184) .. controls (416.87,160.56) and (344.37,136.56) .. (312,136) ;

% Text Node
\draw (492,178.4) node [anchor=north west][inner sep=0.75pt]  [font=\footnotesize]  {$v$};
% Text Node
\draw (84,178.4) node [anchor=north west][inner sep=0.75pt]  [font=\footnotesize]  {$u$};
% Text Node
\draw (284,139.4) node [anchor=north west][inner sep=0.75pt]  [font=\footnotesize]  {$x$};
% Text Node
\draw (188,187.4) node [anchor=north west][inner sep=0.75pt]  [font=\footnotesize]  {$p$};
% Text Node
\draw (412,187.4) node [anchor=north west][inner sep=0.75pt]  [font=\footnotesize]  {$q$};
% Text Node
\draw (308,139.4) node [anchor=north west][inner sep=0.75pt]  [font=\footnotesize]  {$y$};
% Text Node
\draw (212,187.4) node [anchor=north west][inner sep=0.75pt]  [font=\footnotesize]  {$a$};
% Text Node
\draw (380,186.4) node [anchor=north west][inner sep=0.75pt]  [font=\footnotesize]  {$b$};
% Text Node
\draw (324,210.4) node [anchor=north west][inner sep=0.75pt]  [font=\footnotesize]  {$R$};
% Text Node
\draw (340,187.4) node [anchor=north west][inner sep=0.75pt]  [font=\footnotesize]  {$f$};
% Text Node
\draw (281,114.4) node [anchor=north west][inner sep=0.75pt]  [font=\footnotesize]  {$\pi _{G'-R}( u,v)$};

\end{tikzpicture}
\end{center}

%Suppose that $u$ is closer to $x$ than to $y$ on $\new{u, v}.$ 
By Lemma~\ref{lemma4-6}, we set $\ndg{u}{v}{R}=\pp{ux\circ e\circ yv}$. %(Note that $ux\circ e\circ yv$ do not need transform $T_{G',R}$.)

%Now $\nng{u}{v}{R}$ equals the shortest $(u,x)$ path avoiding $R$ plus $e$ plus the shortest $(y,v)$ path avoiding $R.$ In this case, neither $p$ nor $q$ is in $R$, meaning that $ux$ and $vy$ avoids $R$, so $\ndg{u}{v}{R}=ux\circ e\circ yv.$

%If $R$ is weak under $\pi_{G'}(uv)$, we have $$\ndg{u}{v}{R}=\min\{\pp{\odg{u}{v}{R}},ux\circ e\circ yv\}.$$ 
(Note that when $\ndg{u}{v}{R}=ux\circ e\circ yv$, we do not know whether $R$ is weak, and even if it is weak we cannot find the weak point. We only know that if $R$ is weak then it is correct. That is the main reason why $\varpi$ is defined in this way.)

%In general we do not know in advance whether $\nng{u}{v}{R}$ passes $e$ or not. But because the shortest path is the minimum value of them, we just take $$\ndg{u}{v}{R}=\min\{\pp{\odg{u}{v}{R}},ux\circ e\circ yv\}.$$
 
~\\
\noindent\textbf{CASE 2:} $a<p\le b\le q$,

%If $\nng{u}{v}{R}$ does not pass $e$, similar as the former case we know $\nng{u}{v}{R}=\og{u}{v}{R}.$ Now suppose $\nng{u}{v}{R}$ passes $e$, and W.L.O.G.  $x$ is closer to $u$ than $y$ on $\new{u, v}.$

\begin{center}
    \tikzset{every picture/.style={line width=0.75pt}} %set default line width to 0.75pt        

\begin{tikzpicture}[x=0.75pt,y=0.75pt,yscale=-1,xscale=1]
%uncomment if require: \path (0,263); %set diagram left start at 0, and has height of 263

%Straight Lines [id:da4514079943255612] 
\draw    (96,183.82) -- (272,184) ;
%Straight Lines [id:da8476014252831836] 
\draw [color={rgb, 255:red, 144; green, 19; blue, 254 }  ,draw opacity=0.5 ][line width=1.5]    (96,183.82) -- (192,184) ;
%Straight Lines [id:da7449337860682346] 
\draw [color={rgb, 255:red, 144; green, 19; blue, 254 }  ,draw opacity=0.5 ][line width=1.5]    (416,184) -- (488,184) ;
%Straight Lines [id:da6313187660912629] 
\draw [color={rgb, 255:red, 74; green, 144; blue, 226 }  ,draw opacity=0.5 ][line width=1.5]    (288,136) -- (312,136) ;
%Straight Lines [id:da3311251748345714] 
\draw    (272,184) -- (488,184) ;
%Straight Lines [id:da7980061985240616] 
\draw    (320,216) -- (136,216) ;
\draw [shift={(136,216)}, rotate = 360] [color={rgb, 255:red, 0; green, 0; blue, 0 }  ][line width=0.75]    (0,5.59) -- (0,-5.59)   ;
%Straight Lines [id:da3979984351973551] 
\draw    (336,216) -- (384,216) ;
\draw [shift={(384,216)}, rotate = 180] [color={rgb, 255:red, 0; green, 0; blue, 0 }  ][line width=0.75]    (0,5.59) -- (0,-5.59)   ;
%Curve Lines [id:da7854767752650275] 
\draw [color={rgb, 255:red, 144; green, 19; blue, 254 }  ,draw opacity=0.5 ][line width=1.5]    (192,184) .. controls (192.87,160.56) and (255.87,136.56) .. (288,136) ;
%Curve Lines [id:da4127475784285586] 
\draw [color={rgb, 255:red, 144; green, 19; blue, 254 }  ,draw opacity=0.5 ][line width=1.5]    (416,184) .. controls (416.87,160.56) and (344.37,136.56) .. (312,136) ;
%Curve Lines [id:da704466592023695] 
\draw [color={rgb, 255:red, 80; green, 227; blue, 194 }  ,draw opacity=0.4 ][line width=1.5]    (120,184) .. controls (120.61,167.5) and (240.85,147.36) .. (248,144) .. controls (255.15,140.64) and (278.5,136.17) .. (288,136) ;
%Straight Lines [id:da6969209877897032] 
\draw [color={rgb, 255:red, 80; green, 227; blue, 194 }  ,draw opacity=0.4 ][line width=1.5]    (96,184) -- (120,184) ;

% Text Node
\draw (492,178.4) node [anchor=north west][inner sep=0.75pt]  [font=\footnotesize]  {$v$};
% Text Node
\draw (84,178.4) node [anchor=north west][inner sep=0.75pt]  [font=\footnotesize]  {$u$};
% Text Node
\draw (284,139.4) node [anchor=north west][inner sep=0.75pt]  [font=\footnotesize]  {$x$};
% Text Node
\draw (188,187.4) node [anchor=north west][inner sep=0.75pt]  [font=\footnotesize]  {$p$};
% Text Node
\draw (412,187.4) node [anchor=north west][inner sep=0.75pt]  [font=\footnotesize]  {$q$};
% Text Node
\draw (308,139.4) node [anchor=north west][inner sep=0.75pt]  [font=\footnotesize]  {$y$};
% Text Node
\draw (132,187.4) node [anchor=north west][inner sep=0.75pt]  [font=\footnotesize]  {$a$};
% Text Node
\draw (380,186.4) node [anchor=north west][inner sep=0.75pt]  [font=\footnotesize]  {$b$};
% Text Node
\draw (324,210.4) node [anchor=north west][inner sep=0.75pt]  [font=\footnotesize]  {$R$};

\end{tikzpicture}
\end{center}

If $\nng{u}{v}{R}$ goes through $e$, from the assumption that $R$ is weak, if its weak point $f\in pb$, by Lemma~\ref{lemma4-6}, $\nng{u}{v}{f}=ux\circ e\circ yv$, so it intersects with $R$, thus the weak point $f$ of $R$ must be in $ap$. Then $\nng{u}{v}{f}=\nng{u}{v}{R}$ which goes through $e$, so $\nng{u}{v}{f}=\nng{u}{x}{f}\circ e\circ \nng{y}{v}{f}=\og{u}{x}{f}\circ e\circ yv$. 
%Since $\nng{u}{x}{f}$ does not go through $e$, $\nng{u}{x}{f}=\og{u}{x}{f}$, 
By the unique shortest path assumption, this path also does not go through $R$, thus $\og{u}{x}{f}$ does not intersect with $ap$, so $ap$ is weak under $ux$ with weak point $f$. Thus, we get:
$$\ndg{u}{v}{R}=\pp{\odg{u}{x}{ap}\circ e\circ yv}$$

Thus if $\odg{u}{x}{ap}\circ e\circ yv$ does not pass the transform $T_{G',R}$, either $\nng{u}{v}{R}$ does not go through $e$, or $R$ is not weak in $\pi_{G'}(u,v)$.

    %We do not know in advance whether $\nng{u}{v}{R}$ passes $e$ or not, so we choose the minimum value of them as the shortest path, i.e. $$\nng{u}{v}{R}=\min\{\pp{\odg{u}{v}{R}},\pp{\odg{u}{x}{ap}\circ e\circ yv}\}.$$

~\\
\noindent\textbf{CASE 3:} $p\le a\le q< b$,

This is symmetric to \textbf{CASE 2}. $$\ndg{u}{v}{R}=\pp{ux\circ e\circ \odg{y}{v}{qb}}.$$

~\\
\noindent\textbf{CASE 4:} $a<p\le q<b$,

We can prove that in this case if $R$ is weak then $\nng{u}{v}{R}$ cannot go through $e$. %so $\ndg{u}{v}{R}=\odg{u}{v}{R}.$

%\begin{lemma}\label{lemma3-5}
%    If both $p$ and $q$ are in $R$, $\nng{u}{v}{R}$ does not go through $e.$
%\end{lemma}

\begin{center}
    \tikzset{every picture/.style={line width=0.75pt}} %set default line width to 0.75pt        

\begin{tikzpicture}[x=0.75pt,y=0.75pt,yscale=-1,xscale=1]
%uncomment if require: \path (0,263); %set diagram left start at 0, and has height of 263

%Straight Lines [id:da042369749136306645] 
\draw    (96,183.82) -- (272,184) ;
%Straight Lines [id:da031390730867729144] 
\draw [color={rgb, 255:red, 144; green, 19; blue, 254 }  ,draw opacity=0.5 ][line width=1.5]    (96,183.82) -- (192,184) ;
%Straight Lines [id:da14809471508313987] 
\draw [color={rgb, 255:red, 144; green, 19; blue, 254 }  ,draw opacity=0.5 ][line width=1.5]    (416,184) -- (488,184) ;
%Straight Lines [id:da29241259857519053] 
\draw [color={rgb, 255:red, 74; green, 144; blue, 226 }  ,draw opacity=0.5 ][line width=1.5]    (288,136) -- (312,136) ;
%Straight Lines [id:da9608764884521194] 
\draw    (272,184) -- (488,184) ;
%Straight Lines [id:da8897494117647601] 
\draw    (320,216) -- (136,216) ;
\draw [shift={(136,216)}, rotate = 360] [color={rgb, 255:red, 0; green, 0; blue, 0 }  ][line width=0.75]    (0,5.59) -- (0,-5.59)   ;
%Straight Lines [id:da8728299429641694] 
\draw    (336,216) -- (456,216) ;
\draw [shift={(456,216)}, rotate = 180] [color={rgb, 255:red, 0; green, 0; blue, 0 }  ][line width=0.75]    (0,5.59) -- (0,-5.59)   ;
%Straight Lines [id:da2584624979296345] 
\draw    (336,192) -- (336,184) ;
\draw [shift={(336,184)}, rotate = 90] [color={rgb, 255:red, 0; green, 0; blue, 0 }  ][line width=0.75]    (0,3.91) -- (0,-3.91)   ;
%Straight Lines [id:da20081091898655845] 
\draw    (352,192) -- (352,184) ;
\draw [shift={(352,184)}, rotate = 90] [color={rgb, 255:red, 0; green, 0; blue, 0 }  ][line width=0.75]    (0,5.59) -- (0,-5.59)   ;
%Curve Lines [id:da5219563660415634] 
\draw [color={rgb, 255:red, 144; green, 19; blue, 254 }  ,draw opacity=0.5 ][line width=1.5]    (192,184) .. controls (192.87,160.56) and (255.87,136.56) .. (288,136) ;
%Curve Lines [id:da2702518732688328] 
\draw [color={rgb, 255:red, 144; green, 19; blue, 254 }  ,draw opacity=0.5 ][line width=1.5]    (416,184) .. controls (416.87,160.56) and (344.37,136.56) .. (312,136) ;
%Curve Lines [id:da7078865399632257] 
\draw [color={rgb, 255:red, 80; green, 227; blue, 194 }  ,draw opacity=0.4 ][line width=1.5]    (120,184) .. controls (120.61,167.5) and (240.85,147.36) .. (248,144) .. controls (255.15,140.64) and (278.5,136.17) .. (288,136) ;
%Curve Lines [id:da19826023106372104] 
\draw [color={rgb, 255:red, 80; green, 227; blue, 194 }  ,draw opacity=0.4 ][line width=1.5]    (472,184) .. controls (472.5,170.55) and (368,146.4) .. (360,144) .. controls (352,141.6) and (320.67,135.73) .. (312,136) ;
%Straight Lines [id:da1788229282270526] 
\draw [color={rgb, 255:red, 80; green, 227; blue, 194 }  ,draw opacity=0.4 ][line width=1.5]    (96,184) -- (120,184) ;
%Straight Lines [id:da1404661496170151] 
\draw [color={rgb, 255:red, 80; green, 227; blue, 194 }  ,draw opacity=0.4 ][line width=1.5]    (472,184) -- (488,184) ;

% Text Node
\draw (492,178.4) node [anchor=north west][inner sep=0.75pt]  [font=\footnotesize]  {$v$};
% Text Node
\draw (84,178.4) node [anchor=north west][inner sep=0.75pt]  [font=\footnotesize]  {$u$};
% Text Node
\draw (284,139.4) node [anchor=north west][inner sep=0.75pt]  [font=\footnotesize]  {$x$};
% Text Node
\draw (188,187.4) node [anchor=north west][inner sep=0.75pt]  [font=\footnotesize]  {$p$};
% Text Node
\draw (412,186.4) node [anchor=north west][inner sep=0.75pt]  [font=\footnotesize]  {$q$};
% Text Node
\draw (308,139.4) node [anchor=north west][inner sep=0.75pt]  [font=\footnotesize]  {$y$};
% Text Node
\draw (132,187.4) node [anchor=north west][inner sep=0.75pt]  [font=\footnotesize]  {$a$};
% Text Node
\draw (452,187.4) node [anchor=north west][inner sep=0.75pt]  [font=\footnotesize]  {$b$};
% Text Node
\draw (324,210.4) node [anchor=north west][inner sep=0.75pt]  [font=\footnotesize]  {$R$};
% Text Node
\draw (340,187.4) node [anchor=north west][inner sep=0.75pt]  [font=\footnotesize]  {$f$};

\end{tikzpicture}
\end{center}

%\begin{proof}
Prove by contradiction. Suppose $R$ is weak under $\pi_{G'}(u,v)$ with weak point $f$ and $\nng{u}{v}{R}$ goes through $e$ (first $x$ then $y$), then $\pi_{G'}(u,x)=ux$ and $\pi_{G'}(y,v)=yv$ since they cannot go through $e$. We know $\nng{u}{v}{f}$ is a 1-replacement-path in $G'$, so by Theorem~\ref{ReplacementPath}, it equals a concatenation of a shortest path, an edge, and a shortest path in $G'$. However, $a<p$ means that $\nng{u}{x}{R}$ is different from $ux$, so it is a concatenation of at least two shortest paths in $G'$. Same for $\nng{y}{v}{R}$. Therefore, $\nng{u}{v}{R}$ equals the shortest $u-x$ path avoiding $R$ plus $e$ plus the shortest $y-v$ path avoiding $R$, which will be a concatenation of at least three shortest paths in $G$, contradicting to $R$ is weak.
%\end{proof}
\\

\vspace{5pt}
\noindent\textbf{CASE 5:} $a<b\le p<q$,
$$\ndg{u}{v}{R}=\pp{\odg{u}{x}{R}\circ e\circ yv}.$$

\begin{center}
    \tikzset{every picture/.style={line width=0.75pt}} %set default line width to 0.75pt        

\begin{tikzpicture}[x=0.75pt,y=0.75pt,yscale=-1,xscale=1]
%uncomment if require: \path (0,263); %set diagram left start at 0, and has height of 263

%Straight Lines [id:da7487640729729188] 
\draw    (96,183.82) -- (272,184) ;
%Straight Lines [id:da4372943535935472] 
\draw [color={rgb, 255:red, 144; green, 19; blue, 254 }  ,draw opacity=0.5 ][line width=1.5]    (96,183.82) -- (192,184) ;
%Straight Lines [id:da9971732034699214] 
\draw [color={rgb, 255:red, 144; green, 19; blue, 254 }  ,draw opacity=0.5 ][line width=1.5]    (416,184) -- (488,184) ;
%Straight Lines [id:da8699400268012739] 
\draw [color={rgb, 255:red, 74; green, 144; blue, 226 }  ,draw opacity=0.5 ][line width=1.5]    (288,136) -- (312,136) ;
%Straight Lines [id:da979219241494598] 
\draw    (272,184) -- (488,184) ;
%Straight Lines [id:da7459748616800118] 
\draw    (150,216) -- (136,216) ;
\draw [shift={(136,216)}, rotate = 360] [color={rgb, 255:red, 0; green, 0; blue, 0 }  ][line width=0.75]    (0,5.59) -- (0,-5.59)   ;
%Straight Lines [id:da1331731562061481] 
\draw    (170,216) -- (180,216) ;
\draw [shift={(180,216)}, rotate = 180] [color={rgb, 255:red, 0; green, 0; blue, 0 }  ][line width=0.75]    (0,5.59) -- (0,-5.59)   ;
%Straight Lines [id:da8275607519660262] 
\draw    (152,192) -- (152,184) ;
\draw [shift={(152,184)}, rotate = 90] [color={rgb, 255:red, 0; green, 0; blue, 0 }  ][line width=0.75]    (0,3.91) -- (0,-3.91)   ;
%Straight Lines [id:da47331357084860637] 
\draw    (168,192) -- (168,184) ;
\draw [shift={(168,184)}, rotate = 90] [color={rgb, 255:red, 0; green, 0; blue, 0 }  ][line width=0.75]    (0,5.59) -- (0,-5.59)   ;
%Curve Lines [id:da9108716595557582] 
\draw [color={rgb, 255:red, 144; green, 19; blue, 254 }  ,draw opacity=0.5 ][line width=1.5]    (192,184) .. controls (192.87,160.56) and (255.87,136.56) .. (288,136) ;
%Curve Lines [id:da6245388234396722] 
\draw [color={rgb, 255:red, 144; green, 19; blue, 254 }  ,draw opacity=0.5 ][line width=1.5]    (416,184) .. controls (416.87,160.56) and (344.37,136.56) .. (312,136) ;
%Curve Lines [id:da11611066084978505] 
\draw [color={rgb, 255:red, 80; green, 227; blue, 194 }  ,draw opacity=0.4 ][line width=1.5]    (120,184) .. controls (120.61,167.5) and (240.85,147.36) .. (248,144) .. controls (255.15,140.64) and (278.5,136.17) .. (288,136) ;
%Straight Lines [id:da5216509795775631] 
\draw [color={rgb, 255:red, 80; green, 227; blue, 194 }  ,draw opacity=0.4 ][line width=1.5]    (96,184) -- (120,184) ;

% Text Node
\draw (492,178.4) node [anchor=north west][inner sep=0.75pt]  [font=\footnotesize]  {$v$};
% Text Node
\draw (84,178.4) node [anchor=north west][inner sep=0.75pt]  [font=\footnotesize]  {$u$};
% Text Node
\draw (284,139.4) node [anchor=north west][inner sep=0.75pt]  [font=\footnotesize]  {$x$};
% Text Node
\draw (188,187.4) node [anchor=north west][inner sep=0.75pt]  [font=\footnotesize]  {$p$};
% Text Node
\draw (412,187.4) node [anchor=north west][inner sep=0.75pt]  [font=\footnotesize]  {$q$};
% Text Node
\draw (308,139.4) node [anchor=north west][inner sep=0.75pt]  [font=\footnotesize]  {$y$};
% Text Node
\draw (132,187.4) node [anchor=north west][inner sep=0.75pt]  [font=\footnotesize]  {$a$};
% Text Node
\draw (177,186.4) node [anchor=north west][inner sep=0.75pt]  [font=\footnotesize]  {$b$};
% Text Node
\draw (155,210.4) node [anchor=north west][inner sep=0.75pt]  [font=\footnotesize]  {$R$};
% Text Node
\draw (156,187.4) node [anchor=north west][inner sep=0.75pt]  [font=\footnotesize]  {$f$};

\end{tikzpicture}
\end{center}

Suppose that $R$ is weak under $uv$ with a weak point $f.$ By assumption, $\nng{u}{v}{f}$ goes through $e,$ so it equals $\nng{u}{x}{f}\circ e\circ\nng{y}{v}{f}.$ By Lemma~\ref{lemma4-5}, $\nng{y}{v}{f}=\og{y}{v}{f}=yv$ and $\nng{u}{x}{f}=\og{u}{x}{f}$. We can see that $R$ is also weak under $ux$ with weak point $f$ in $G'$, because $\nng{u}{x}{f}$ is a part of $\nng{u}{v}{f}$, which avoids $R.$ Therefore, by Lemma~\ref{thm:not-pass-e}, $\nng{u}{x}{f}=\nng{u}{x}{R}=\odg{u}{x}{R}.$

\vspace{5pt}
\noindent\textbf{CASE 6:} $p<q\le a<b$,

This is symmetric to \textbf{CASE 5}.
$$\ndg{u}{v}{R}=\pp{ux\circ e\circ\odg{y}{v}{R}}.$$

\section{One Failed Edge on Original Shortest Path}\label{sec:1-edge}

With the incremental DSO, we can proceed to solve the 3FRP problem for given vertices $s$ and $t$. Here, by Theorem~\ref{thm:offline}, we transform our incremental DSO in Section~\ref{sec-inc} to an offline fully dynamic DSO. In this section, we suppose only one failed edge, named $d_1$, is on the original shortest path $st$. If $d_2, d_3$ are both not on $\og{s}{t}{d_1}$, then the replacement path will be $\og{s}{t}{d_1}$. Therefore, we suppose that $d_2$ is on $\og{s}{t}{d_1}$, and we can restore all edges $d_3 \in \og{s}{t}{\set{d_1, d_2}}$ based on Section \ref{sec:one-2FRP}.

\begin{theorem}
There exists an algorithm that, given $G$ with two vertices $s,t \in V(G)$, for all edges $d_1 \in st$, $d_2,d_3 \not \in st$, $d_2 \in \og{s}{t}{d_1}$ and $d_3 \in \og{s}{t}{\set{d_1, d_2}}$, answers $|\og{s}{t}{\set{d_1, d_2, d_3}}|$ in $\too{n^3}$ time.
\end{theorem}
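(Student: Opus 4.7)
The plan is to extend the construction of Section~\ref{sec:one-2FRP} by placing an offline fully dynamic DSO on top of the auxiliary graph $H$. The identical argument that proves Lemma~\ref{lemma:1edge} yields the two-fault generalization
\[
|\og{s}{t}{\set{d_1,d_2,d_3}}| \;=\; |\pi_{H-\set{d_2,d_3}}(d_1^-,d_1^+)| - 2N
\]
whenever $d_1\in st$ and $d_2,d_3\notin st$: the first and last edges of the $H$-path still correspond to shortest subpaths of $st$ on either side of $d_1$, the middle portion lies in $G-st$, and the new auxiliary vertices $d^{\pm}$ are only incident to edges that are unaffected by removing $d_2$ or $d_3$. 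Thus the task reduces to answering a single 1-failure DSO query in the graph $H-d_2$ for each relevant pair $(d_2,d_3)$.

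First I would run the 2FRP pipeline of Section~\ref{sec:2FRP} in $\tilde{O}(n^3)$ time, using the $O(1)$-per-edge path retrieval of the DSO on $H$ to list every edge of every $\og{s}{t}{d_1}$ and every $\og{s}{t}{\set{d_1,d_2}}$. Let $D_2$ be the set of edges that appear in some $\og{s}{t}{d_1}$ but not on $st$; by Lemma~\ref{thm:n-edges} we have $|D_2|=O(n)$, and only $d_2\in D_2$ can arise in this case of the algorithm. For each $d_2\in D_2$ I would also store the list of $d_1$'s with $d_2\in\og{s}{t}{d_1}$, and for each such pair the edge list of $\og{s}{t}{\set{d_1,d_2}}$ already computed above.

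Next, instantiate the offline fully dynamic DSO on $H$ from Corollary~\ref{cor:dynamic}, driven by the precomputed update schedule: for each $d_2\in D_2$ in turn, delete $d_2$ from $H$ and then re-insert $d_2$. Since the sequence has length $O(n)$, total preprocessing plus update time is $\tilde{O}(n^3)$ and queries cost $\tilde{O}(1)$ each. During the interval in which $d_2$ is deleted, for every $d_1$ with $d_2\in\og{s}{t}{d_1}$ and every edge $d_3\in\og{s}{t}{\set{d_1,d_2}}$ with $d_3\notin st$, issue one DSO query on $H-d_2$ to obtain $|\pi_{(H-d_2)-d_3}(d_1^-,d_1^+)|$ and subtract $2N$. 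The total number of queries is bounded by
\[
\sum_{d_1\in st}\ \sum_{d_2\in \og{s}{t}{d_1}\setminus st}\ \|\og{s}{t}{\set{d_1,d_2}}\| \;\le\; O(n)\cdot O(n)\cdot O(n)=O(n^3),
\]
so the query phase also takes $\tilde{O}(n^3)$ time.

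The main obstacle, and the reason the incremental DSO of Section~\ref{sec-inc} is required, is supporting the repeated ``delete $d_2$ / reinsert $d_2$'' operations on a 1-failure DSO within the overall $\tilde{O}(n^3)$ budget; a static DSO rebuild per $d_2$ would cost $\Omega(n\cdot mn)$ in total. Corollary~\ref{cor:dynamic}, obtained from the incremental DSO via Theorem~\ref{thm1-2}, delivers exactly this, and the entire update schedule above is fixed before any query is issued, so the deletions-look-ahead requirement is automatically met.
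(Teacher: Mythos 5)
Your proposal is correct and follows essentially the same route as the paper: construct the auxiliary graph $H$, observe via the 2-failure analogue of Lemma~\ref{lemma:1edge} that $|\og{s}{t}{\set{d_1,d_2,d_3}}| = |\hg{d_1^-}{d_1^+}{\set{d_2,d_3}}| - 2N$, use Lemma~\ref{thm:n-edges} to bound the number of relevant $d_2$ by $O(n)$, and then drive the offline fully dynamic DSO on $H$ by a predetermined delete/reinsert schedule over those $d_2$, answering each $(d_1,d_3)$ with a single 1-failure query against $H-d_2$. The time accounting ($\tilde{O}(n^2)$ per update over $O(n)$ updates, plus $O(n^3)$ queries at $\tilde{O}(1)$ each) matches the paper's argument.
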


\subsection{The Graph $H$}\label{sec2-1}

As in Section~\ref{sec:one-2FRP}, we construct a graph $H$ as follows: (Here let $N$ be the sum of all edge weights in $G$.)

\begin{itemize}
    \item First let $H$ be a copy of $G-st$, that is, remove all edges on $\pi_G(s,t)$.
    \item For every edge $d=(x,y)\in st$ and $x$ is before $y$ on $st$, introduce two new vertices $d^-$ and $d^+$. 
    \begin{itemize}
        \item For every vertex $x'$ on $sx$, the vertex $d^-$ is connected with $x'$ by an edge with weight $|sx'|+N$.
        \item For every vertex $y'$ on $yt$, the vertex $d^+$ is connected with $y'$ by an edge with weight $|y't|+N$.
    \end{itemize}
\end{itemize}

%For any edge $d=(s \oplus i, s \oplus (i+1))$ on the shortest path $st$, we introduce two new vertices $d^-$ and $d^+$. For every $0 \leq j \leq i$, the vertex $d^-$ is connected with $s \oplus j$, and the weight is set as the total weight on $st$ from $s$ to $s \oplus j$, adding $N$. Namely, $w(d^-,s \oplus j)=N+|[s,s \oplus j]|$. Similarly, For every $i+1 \leq j \leq 2^k$, the vertex $d^+$ is connected with $s \oplus j$ and the weight is set as $w(d^+,s \oplus j)=N+|[s \oplus j,t]|)$. The total number of vertices introduced is $O(n)$. 

%Let $H$ denote the graph $G-st$ by introducing these new vertices and their edges. 

Here, since $N$ is large enough, we can easily observe that for any shortest path $\pi_{H}(u,v)$ in $H$, if $u,v\in G$, it does not use the new edges; and if one or both of them are new vertices, new edges can only appear at both ends in $\pi_{H}(u,v)$.

We proceed to our lemma for this case:

\begin{lemma} \label{lemma:1edge}
    Given $G$ and one failure $d_1$ on $st$, for any $d_2, d_3$ not on $st$, in the graph $H$ we defined,
    \[|\hg{d_1^-}{d_1^+}{\set{d_2, d_3}}| = |\og{s}{t}{\set{d_1, d_2, d_3}}| + 2N.\]
\end{lemma}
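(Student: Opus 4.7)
The plan is to mirror the proof of the earlier two-failure Lemma~\ref{lemma:1edge} (Section~\ref{sec:one-2FRP}), replacing the single off-path failure $d_2$ by the pair $\{d_2,d_3\}$. The construction of $H$ and the large weight $N$ are designed exactly so that any shortest path between $d_1^-$ and $d_1^+$ must begin with one ``sx'$+N$'' edge, then travel through $H$'s copy of $G-st$, and end with one ``y't$+N$'' edge; this bijection between paths on the two sides will give both inequalities.

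For the direction $|\hg{d_1^-}{d_1^+}{\set{d_2,d_3}}| \le |\og{s}{t}{\set{d_1,d_2,d_3}}| + 2N$, I would take $P = \og{s}{t}{\set{d_1,d_2,d_3}}$ and invoke Theorem~\ref{thm4-2}: since only $d_1 \in st$, the path $P$ diverges from $st$ at a unique point $p \le d_1$ and converges at a unique point $q \ge d_1$. Thus $P = sp \circ P' \circ qt$ where $P'$ is a $p$-to-$q$ path that avoids every edge of $st$ between $p$ and $q$ and avoids $\{d_2,d_3\}$. Since $d_2,d_3 \notin st$, the prefix $sp$ and suffix $qt$ are entirely in $st$, and $P'$ lives in $G-st$, so $P'$ exists unchanged in $H$. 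Forming the $H$-path $(d_1^-,p) \circ P' \circ (q,d_1^+)$ (which avoids $\{d_2,d_3\}$, since neither endpoint of the new edges is affected by them) yields the desired upper bound of $|sp| + |P'| + |qt| + 2N = |P| + 2N$.

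For the reverse direction, let $Q = \hg{d_1^-}{d_1^+}{\set{d_2,d_3}}$. Because $N$ dominates all edge weights in $G$, the path $Q$ uses exactly one new edge at each endpoint, so $Q = (d_1^-,x') \circ Q' \circ (y',d_1^+)$ where $x' \in sd_1$, $y' \in d_1 t$, and $Q'$ is a path from $x'$ to $y'$ in the underlying copy of $G-st$ avoiding $\{d_2,d_3\}$. Concatenating the original shortest path pieces in $G$ with $Q'$ gives $sx' \circ Q' \circ y't$, a genuine $s$-$t$ path in $G$. It avoids $d_1$ because $d_1$ lies strictly between $x'$ and $y'$ on $st$ and $Q' \subseteq G-st$; and it avoids $\{d_2,d_3\}$ because those lie outside $st$ (so outside the $sx'$ and $y't$ prefixes) and $Q'$ already avoids them. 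Its total weight is $|Q|-2N$, which upper bounds $|\og{s}{t}{\set{d_1,d_2,d_3}}|$.

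The proof is essentially routine once the earlier lemma is in hand; the only thing to be careful about is the justification that $Q$ uses the new edges only at its two endpoints and never in the middle. This is immediate from the observation that $N$ exceeds the sum of all original edge weights, so any shortest path between two original vertices cannot afford to touch a new edge, and any shortest path starting or ending at a new vertex $d_1^{\pm}$ pays $N$ exactly once per endpoint. With that in place, the two inequalities combine to give the stated equality.
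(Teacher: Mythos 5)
Your proof is correct and follows the same approach as the paper: both directions of the equality come from the same endpoint-swapping correspondence between $s$-$t$ paths in $G-\{d_1,d_2,d_3\}$ and $d_1^-$-$d_1^+$ paths in $H-\{d_2,d_3\}$, exchanging the $st$-prefix and $st$-suffix with the weight-$(|sp|+N)$ new edges. You are slightly more explicit than the paper in invoking Theorem~\ref{thm4-2} for the unique divergence/convergence of the replacement path (viewed as a 1-fault replacement path in $G-\{d_2,d_3\}$) and in arguing that $Q$ touches new edges only at its two endpoints, but these are the same facts the paper treats as understood.
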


\begin{center}
    \tikzset{every picture/.style={line width=0.75pt}} %set default line width to 0.75pt        

\begin{tikzpicture}[x=0.75pt,y=0.75pt,yscale=-1,xscale=1]
%uncomment if require: \path (0,635); %set diagram left start at 0, and has height of 635

%Straight Lines [id:da39493705966555404] 
\draw  [dash pattern={on 0.84pt off 2.51pt}]  (96,431.82) -- (160,432) ;
%Straight Lines [id:da755085509748309] 
\draw [color={rgb, 255:red, 74; green, 144; blue, 226 }  ,draw opacity=0.5 ][line width=1.5]    (248,504) -- (160,432) ;
%Curve Lines [id:da8792334609575486] 
\draw [color={rgb, 255:red, 144; green, 19; blue, 254 }  ,draw opacity=0.5 ][line width=1.5]    (160,432) .. controls (220.2,360.8) and (322.2,362.4) .. (384,432) ;
%Straight Lines [id:da7257751133915809] 
\draw [color={rgb, 255:red, 0; green, 0; blue, 0 }  ,draw opacity=0.2 ]   (96,431.82) -- (248,504) ;
%Straight Lines [id:da05709813339396197] 
\draw [color={rgb, 255:red, 0; green, 0; blue, 0 }  ,draw opacity=0.2 ]   (208,432) -- (248,504) ;
%Straight Lines [id:da6947986535251931] 
\draw [color={rgb, 255:red, 0; green, 0; blue, 0 }  ,draw opacity=0.2 ]   (248,432) -- (248,504) ;
%Straight Lines [id:da8702481220622242] 
\draw [color={rgb, 255:red, 0; green, 0; blue, 0 }  ,draw opacity=0.2 ]   (320,432) -- (272,504) ;
%Straight Lines [id:da7707316524545788] 
\draw [color={rgb, 255:red, 0; green, 0; blue, 0 }  ,draw opacity=0.2 ]   (488,432.18) -- (272,504) ;
%Straight Lines [id:da7586929700228665] 
\draw [color={rgb, 255:red, 80; green, 227; blue, 194 }  ,draw opacity=0.5 ][line width=1.5]    (272,504) -- (384,432) ;
%Straight Lines [id:da5503155473448561] 
\draw [color={rgb, 255:red, 0; green, 0; blue, 0 }  ,draw opacity=0.2 ]   (272,432) -- (272,504) ;
%Straight Lines [id:da7141101615579902] 
\draw  [dash pattern={on 0.84pt off 2.51pt}]  (384,432) -- (488,432.18) ;
%Straight Lines [id:da33139465918920386] 
\draw  [dash pattern={on 0.84pt off 2.51pt}]  (160,432) -- (384,432) ;
%Straight Lines [id:da5910163592122204] 
\draw [color={rgb, 255:red, 80; green, 227; blue, 194 }  ,draw opacity=0.5 ][line width=1.5]    (384,432) -- (488,432) ;
%Straight Lines [id:da594848022193803] 
\draw [color={rgb, 255:red, 0; green, 0; blue, 0 }  ,draw opacity=0.2 ]   (384,432) -- (272,504) ;
%Straight Lines [id:da09227716007316555] 
\draw [color={rgb, 255:red, 0; green, 0; blue, 0 }  ,draw opacity=0.2 ]   (160,432) -- (248,504) ;
%Straight Lines [id:da7970729353203121] 
\draw [color={rgb, 255:red, 74; green, 144; blue, 226 }  ,draw opacity=0.5 ][line width=1.5]    (160,432) -- (96,431.82) ;

% Text Node
\draw (84,426.4) node [anchor=north west][inner sep=0.75pt]  [font=\footnotesize]  {$s$};
% Text Node
\draw (231,354.4) node [anchor=north west][inner sep=0.75pt]  [font=\footnotesize]  {$\pi _{H-\{d_{2} ,d_{3}\}}\left( d_{1}^{-} ,d_{1}^{+}\right)$};
% Text Node
\draw (492,426.4) node [anchor=north west][inner sep=0.75pt]  [font=\footnotesize]  {$t$};
% Text Node
\draw (240,506.4) node [anchor=north west][inner sep=0.75pt]  [font=\footnotesize]  {$d_{1}^{-}$};
% Text Node
\draw (265,506.4) node [anchor=north west][inner sep=0.75pt]  [font=\footnotesize]  {$d_{1}^{+}$};
% Text Node
\draw (256,435.4) node [anchor=north west][inner sep=0.75pt]  [font=\footnotesize]  {$d_{1}$};
% Text Node
\draw (156,435.4) node [anchor=north west][inner sep=0.75pt]  [font=\footnotesize]  {$p$};
% Text Node
\draw (380,434.4) node [anchor=north west][inner sep=0.75pt]  [font=\footnotesize]  {$q$};

\end{tikzpicture}
\end{center}

\begin{proof}
    Consider the path $\og{s}{t}{\set{d_1, d_2, d_3}}$. Suppose in $G$, $\og{s}{t}{\set{d_1, d_2, d_3}}$ diverges at point $p$ before $d_1$ and converges at point $q$ after $d_1$. (Recall that diverging means the path first gets into an edge out of $st$ and converging means the opposite.) In $H-\set{d_2, d_3}$ we can use the edge $(d_1^-,p)$ to replace the part $sp$ and the edge $(q,d_1^+)$ to replace the part $qt$. 
    
    Since $w(d_1^-,p) = |sp| + N$ and $w(q,d_1^+) = |qt| + N$, we know the path from $d_1^-$ to $d_1^+$ in $H-d_2-d_3$ is with weight $|\og{s}{t}{\set{d_1, d_2, d_3}}|+2N$ in $G$. Therefore, 
    $$|\hg{d_1^-}{d_1^+}{\set{d_2, d_3}}| \leq |\og{s}{t}{\set{d_1, d_2, d_3}}| + 2N.$$

    Conversely, take the shortest path from $d_1^-$ to $d_1^+$ in $H-\set{d_2, d_3}$, we can use the original shortest path $sp$ to replace the first edge $(d_1^-,p)$ and $qt$ to replace the last edge $(q,d_1^+)$. Since $N$ is large enough, the other edges will fall in $G$. Since $w(d_1^-,p) = N + |sp|$ and $w(q,d_1^+) = N + |qt|$, we know the path from $s$ to $t$ in $G-\set{d_1,d_2,d_3}$ is with weight $|\hg{d_1^-}{d_1^+}{\set{d_2, d_3}}|-2N$ in $H-\set{d_2, d_3}$.Therefore, 
    $$|\hg{d_1^-}{d_1^+}{\set{d_2, d_3}}| \geq |\og{s}{t}{\set{d_1, d_2, d_3}}| + 2N.$$

    Therefore, the two values are equal.
\end{proof}

\subsection{The Algorithm}

We consider every $d_2$ that is on $\og{s}{t}{d_1}$ for some $d_1$ on $st$. By Lemma \ref{thm:n-edges}, the number of possible $d_2$ is $O(n)$. We start from the graph $H$ and consider every $d_2$ in any predetermined order and use the offline dynamic DSO from Section~\ref{sec-inc} and Theorem~\ref{thm:offline}. 

For every $d_2$ we considered, we first delete $d_2$ from the graph to get $H-d_2$. Then for every $d_1$ on the original shortest path $st$, and every $d_3$ on $\og{s}{t}{\set{d_1, d_2}}$, we calculate the value $|\hg{d_1^-}{d_1^+}{\set{d_2, d_3}}|-2N$ in this graph $H-d_2$ using our oracle. Finally, we add $d_2$ back to proceed.

By Lemma~\ref{lemma:1edge}, the value $|\hg{d_1^-}{d_1^+}{\set{d_2, d_3}}|-2N$ we computed equals $|\og{s}{t}{\set{d_1, d_2, d_3}}|$. Therefore, we have got the correct replacement path lengths for ${d_1,d_2,d_3}$.

We have deleted and added $O(n)$ edges in total and queried the oracle once for every triple $(d_1,d_2,d_3)$. By Section~\ref{sec-inc} and Theorem~\ref{thm:offline}, the total update time is in $\too{n^3}$, and each query can be performed in $\tilde{O}(1)$ time. Therefore, the total time we use is $\too{n^3}$. 

% \begin{theorem} (\cite{2003A})\label{thm3-1}
% There exists a fully dynamic APSP algorithm on undirected weighted graphs that supports any sequence of updates in $O(n^2\log^3{n})$ amortized time per operation, and one look-up per distance query. 
% \end{theorem}

% \begin{theorem}\label{thm3-2} (\cite{2001Restoration})
% For any weighted undirected graph $G,$ any $f$-fault replacement path can be partitioned into $f+1$ subpaths and $f$ edges, where each subpath is a shortest path in the graph $G$. 
% \end{theorem}

% We construct two shortest path trees $T_s,T_t,$ rooted at $s$ and $t.$ (Note that each vertex appears both in $T_s$ and $T_t$.) We name the edges on $\pi(s,t)$ by $(s)e_1e_2\cdots e_{l-1}e_l(t).$ In this section, we will require some pair-wise distances on graphs $G-\pi_G(s,t)-\pi_{G-e_i}(s,t)$ for $1\le i\le l.$

% \begin{theorem}\label{thm3-3}
% On any undirected weighted graph $G$, there exists an algorithm that can answer a sequence of $k$ queries of this type: for any two vertexs $a,b$ in $G$ and $1\le i\le l$ where $\pi(s,t)=(s)e_1e_2\cdots e_{l-1}e_l(t)$ is a shortest path in $G$, the distance of $a,b$ in graph $G-\pi_G(s,t)-\pi_{G-e_i}(s,t).$ The time is $O(n^3\log^4{n}+k).$
% \end{theorem}

\section{Two Failed Edges on Original Shortest Path}\label{sec:2-edge}

In this section, we consider the case that exactly two failures $d_1,d_2$ are on the original shortest path $st$. Our approach follows the framework and adapts some techniques from \cite{duan2009dual}.

In this section, without loss of generality, we assume there are $2^k$ edges on $st$ for some integer $k=O(\log n)$. Otherwise, for any $\| st \|$ let $k=\lceil \log \| st \| \rceil = O(\log n)$, we can add an extra path from $s'$ to $s$ with $2^k - \| st \|$ edges with weight $0$ and consider the 3FRP problem for $(s',t)$ to solve the original 3FRP problem for $(s,t)$. With this assumption, we have $t=s \oplus 2^k$ on the original shortest path $st$. 
%Let $N$ be a number that is large enough. (Namely, $N$ can be the sum of all edge weights in $G$.)

\begin{theorem}
There exists an algorithm that, given $G$ with two points $s,t \in V(G)$, for all edges $d_1,d_2 \in st$, $d_3 \not \in st$ and $d_3 \in \og{s}{t}{\set{d_1, d_2}}$, answers $|\og{s}{t}{\set{d_1, d_2, d_3}}|$ in $\too{n^3}$ time.
\end{theorem}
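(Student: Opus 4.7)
The plan is to adapt the binary range-tree idea underlying the 2-failure DSO of~\cite{duan2009dual}: I will preprocess $O(\log n)$ ``level graphs'' alongside $H$, attach an offline dynamic DSO (\cref{cor:dynamic}) to each, and resolve every triple $(d_1,d_2,d_3)$ via $O(1)$ queries of cost $\too{1}$ after choosing the right level of the tree. Concretely, for each level $i\in\{0,1,\dots,k\}$ I will partition $st$ into $2^i$ equal ranges and define two graphs $H_{i,0}$ and $H_{i,1}$: $H_{i,0}$ removes all $st$-edges in odd-indexed ranges, $H_{i,1}$ those in even-indexed ranges; in either graph I then attach to every edge $d=(x,y)\in st$ the auxiliary vertices $d^-,d^+$ with edges of weight $|sx'|+N$ to each $x'$ before $d$ and weight $|y't|+N$ from each $y'$ after $d$, exactly as in the graph $H$ of \cref{sec2-1} (which I also keep). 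Then \cref{cor:dynamic} gives an offline dynamic DSO on each of these $O(\log n)$ graphs with total preprocessing cost $\too{n^3}$.

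\textbf{Case analysis.}
For any pair $(d_1,d_2)$ on $st$, I will take $i$ to be the smallest level at which $d_1$ and $d_2$ fall in sibling (adjacent even--odd) ranges; WLOG $d_1\in E$ (even) and $d_2\in O$ (odd), and let $m$ be the vertex separating them. I will split the optimal replacement path $P=\og{s}{t}{\set{d_1,d_2,d_3}}$ according to its intersection with the $st$-edges strictly between $d_1$ and $d_2$. In case~(a) the intersection is empty and a DSO query on $H$ at $(d_1^-,d_2^+)$ avoiding $d_3$ returns $|P|+2N$, as in \cref{lemma:1edge}. In case~(b) the intersection lies on the $d_1$-side of $m$; I will query the DSO of $H_{i,0}$ at $(d_1^-,d_2^+)$ with $d_1$ deleted and $d_3$ failed, and by the same reasoning the returned value will be $|P|+2N$. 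Case~(c) is symmetric in $H_{i,1}$ with $d_2$ deleted. In case~(d), $P$ passes through $m$, and the two queries $\pi_{H_{i,0}-d_1}(d_1^-,m)$ and $\pi_{H_{i,1}-d_2}(m,d_2^+)$ (both avoiding $d_3$) each contribute exactly one shortcut of weight $N$, so their sum is $|P|+2N$. The algorithm returns the minimum of the four case values minus $2N$.

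\textbf{Schedule and accounting.}
At each level $i$ I will sweep over each sibling boundary $(E,O)$: for every $d_1\in E$, dynamically delete $d_1$ from $H_{i,0}$, perform every query over $(d_2,d_3)$ needing $H_{i,0}-d_1$, and re-insert; a symmetric sweep handles $d_2\in O$ in $H_{i,1}$. The number of distinct $d_1$'s swept at level $i$ is $O(n)$, so across $O(\log n)$ levels and both parities there are $\too{n}$ dynamic updates, each of worst-case cost $\too{n^2}$ by \cref{cor:dynamic}, for $\too{n^3}$ total update work. On the query side, the number of pairs $(d_1,d_2)$ that first separate at level $i$ is $O(n^2/2^i)$, summing to $O(n^2)$ across levels; multiplying by the $O(n)$ candidates for $d_3$ gives $O(n^3)$ triples, each answered by $O(1)$ queries of cost $\too{1}$.

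\textbf{Main obstacle.}
The delicate step will be verifying that the four cases exhaustively capture the optimal $P$. The subtle situation is when $P$ uses $st$-edges on both sides of $m$: I will need to argue that at the chosen level $i$ either $P$ actually passes through $m$ (so case~(d) is tight) or a length-preserving alternative witness is already realized by case~(b) or case~(c). This is the structural counterpart of the partition lemma behind the 2-failure DSO of~\cite{duan2009dual}, and reproving it in the presence of the arbitrary off-path failure $d_3$ will be the technical heart of \cref{sec:2-edge}; the remaining work (invariance of the $d^-,d^+$ shortcut weights under the dynamic updates, and $\too{1}$-per-edge path retrieval from the DSO) follows standard arguments from \cref{sec:one-2FRP} and \cref{sec-inc}.
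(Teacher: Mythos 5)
Your overall plan matches the paper's Section~\ref{sec:2-edge}: the graph $H$ plus a binary range tree giving $O(\log n)$ level graphs $H_{i,0},H_{i,1}$, an offline dynamic DSO on each, and a four-case split for each pair $(d_1,d_2)$ at the first separating level, with the bookkeeping amortizing to $\too{n^3}$. However, your case~(d) formula is incomplete. You propose to take $\pi_{H_{i,0}-d_1}(d_1^-,m)$ for the $s$-to-$m$ half and $\pi_{H_{i,1}-d_2}(m,d_2^+)$ for the $m$-to-$t$ half. But when $P=\og{s}{t}{\set{d_1,d_2,d_3}}$ passes through $m$ and its (single, contiguous) intersection with the interval $R$ strictly contains $m$, the path can traverse that intersection in either direction. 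If $P$ first converges at a point $q_1\in R_r$ (on the far side of $m$), walks backward along $st$ through $m$ to $q_2\in R_l$, and only then diverges, the $s$-to-$m$ portion of $P$ uses edges of $R_r$ and the $m$-to-$t$ portion uses edges of $R_l$ — exactly the opposite of what your two queries allow. Then $\pi_{H_{i,0}-d_1}(d_1^-,m)$ (which lacks $R_r$) overestimates the first half and $\pi_{H_{i,1}-d_2}(m,d_2^+)$ (which lacks $R_l$) overestimates the second half, while cases~(a)--(c) don't apply because $P$ touches both sides of $m$; the algorithm would output something too large. The fix, which the paper uses in Theorem~\ref{theorem:2edge} / Lemma~\ref{lemma:2edgescase4-2}, is to take for each half the minimum over \emph{both} level graphs: $\min\{|\pi_{H_{i,0}-d_1-d_3}(d_1^-,m)|,|\pi_{H_{i,1}-d_2-d_3}(d_1^-,m)|\}$ and similarly for the $m$-to-$t$ half.

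You flag this region as your ``main obstacle,'' which is the right instinct, but the stated query set is wrong rather than merely unverified. To close the gap you also need the supporting fact (Lemma~\ref{lemma:sm} / \ref{lemma:mt} in the paper) that when $P$ goes through $m$, the $s$-to-$m$ half cannot pass through $[s\oplus(r+1),t]$ and the $m$-to-$t$ half cannot pass through $[s,s\oplus(l-1)]$ — otherwise the level-graph equalities fail — and that each half can intersect $R$ only on one side of $m$ (so one of the two min-candidates is tight). With those additions your schedule and accounting go through unchanged, since each half's extra query hits a DSO that is already materialized during the same sweep.
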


\subsection{The Graph $H$}

We use the same graph $H$ that is obtained by adding new vertices $d^-$ and $d^+$ and their edges to $G-st$, as defined in Section~\ref{sec2-1}. Also, we use the interval notation $[x,y]$ for $x,y\in st$ and $x$ is closer to $s$, to denote the subpath $xy$ on $st$ in the original graph $G$. 

Consider two edge failures $d_1=(s \oplus (l-1), s \oplus l),d_2=(s \oplus r, s \oplus (r+1))$ on $st$, $0 < l \leq r < 2^k$. Define $R=[s \oplus l, s\oplus r]$, that is, the interval between $d_1$ and $d_2$, but not including them. Remind that in $H$, edges from $d_1^-$ may replace the subpaths in the interval $[s, s \oplus (l-1)]$, and edges from $d_2^+$ may replace the subpaths in the interval $[s \oplus (r+1), t]$, but the detour can still go through edges in $R$. Before introducing our algorithm, we first prove some properties on $H$.

Consider any failure set $D$ that is in $(G-st) \cup R$. Similarly, we may use the shortest path between $d_1^-$ and $d_2^+$ that avoids $D$ in $H \cup R$ to represent $\og{s}{t}{(\set{d_1,d_2} \cup D)}$ in the original graph $G$. 

Moreover, consider edges in $[s,s\oplus(l-1)]$ and $[s\oplus(r+1),t]$. Adding an arbitrary subset of them to the graph $H \cup R$ will not influence the distance between $d_1^-$ and $d_2^+$, as the edges from $d_1^-$ to any vertex in $[s,s\oplus(l-1)]$ and (similarly) from $d_2^+$ to any vertex in $[s\oplus(r+1),t]$ will not be shortcut by adding these edges. We demonstrate the results in the following lemma:

\begin{lemma} \label{lemma:2edge}
    Given $G$ and two failures $d_1,d_2$ on $st$, consider arbitrary edge subsets $E_1 \subseteq [s,s\oplus(l-1)]$ and $E_2 \subseteq [s\oplus(r+1),t]$. Define $H' = (H \cup R \cup E_1 \cup E_2)$ to be the graph obtained by adding the whole $R=[s \oplus l, s \oplus r]$ and $E_1,E_2$ to $H$. Then for any edge set $D \subseteq (G-st) \cup R$,
    %$|d_1^- d_2^+ \dm D|$ in $H \cup [s \oplus l, s \oplus r] \cup E_1 \cup E_2$ equals $|st \dm ( \set{d_1,d_2} \cup D)|+2N$ in $G$.
    \[|\pi_{(H'-D)}(d_1^-, d_2^+)| = |\og{s}{t}{(\set{d_1, d_2}\cup D)}|+2N.\]
    
\end{lemma}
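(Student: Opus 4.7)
The plan is to generalize the argument for Lemma~\ref{lemma:1edge}, which handled a single failure on $st$, by accommodating a full interior range $R$ between $d_1$ and $d_2$ and the extra edges $E_1, E_2$. I will prove the equality by two inequalities, each constructing a correspondence between $s$-$t$ walks in $G - (\{d_1, d_2\} \cup D)$ and $d_1^-$-$d_2^+$ walks in $H' - D$, differing uniformly by $2N$. The two new endpoints $d_1^-, d_2^+$ are only adjacent to vertices in $[s, s \oplus (l-1)]$ and $[s \oplus (r+1), t]$ respectively, and the new edges carry the additive $N$ term that gives the ``$+2N$'' offset.

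For the direction $|\pi_{(H'-D)}(d_1^-, d_2^+)| \le |\og{s}{t}{(\set{d_1, d_2}\cup D)}| + 2N$, take $P = \og{s}{t}{(\set{d_1, d_2}\cup D)}$ and let $p$ be the last vertex of $P$ lying in $[s, s \oplus (l-1)]$ and $q$ the first vertex of $P$ after $p$ lying in $[s \oplus (r+1), t]$ (both exist, since $s$ and $t$ belong to these ranges). The subpath of $P$ from $s$ to $p$ must have length exactly $|sp|$: the direct subpath $sp$ of $st$ avoids $\{d_1,d_2\}$ (as $p$ comes before $d_1$) and avoids $D$ (since $D \subseteq (G - st) \cup R$ and $sp$ lies entirely in $[s, s\oplus(l-1)]$), so a strictly longer prefix would contradict the optimality of $P$; the unique shortest path assumption then forces the prefix to coincide with $sp$ itself. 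Symmetrically the suffix is $qt$. The middle subpath from $p$ to $q$ uses only edges of $(G - st) \cup R$ minus $D$, all present in $H' - D$. Concatenating the new edge $(d_1^-, p)$ of weight $|sp|+N$, this middle subpath, and $(q, d_2^+)$ of weight $|qt|+N$ gives a $d_1^-$-$d_2^+$ walk in $H'-D$ of length $|P| + 2N$.

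For the direction $|\pi_{(H'-D)}(d_1^-, d_2^+)| \ge |\og{s}{t}{(\set{d_1, d_2}\cup D)}| + 2N$, take a shortest $d_1^-$-$d_2^+$ path $Q$ in $H' - D$. Since $N$ is chosen to exceed the sum of all edge weights in $G$, any path using four or more new edges costs at least $4N$, which is strictly larger than $2N + |\og{s}{t}{(\set{d_1,d_2}\cup D)}|$ (the value already achieved in the first direction); hence $Q$ uses exactly two new edges, necessarily the first $(d_1^-, x')$ with $x' \in [s, s \oplus (l-1)]$ and the last $(y', d_2^+)$ with $y' \in [s \oplus (r+1), t]$. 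Replacing those two edges by the subpaths $sx'$ and $y't$ of $st$ gives a walk $sx' \circ Q' \circ y't$ from $s$ to $t$ in $G - (\{d_1,d_2\} \cup D)$ of length $|Q| - 2N$, since the middle portion $Q'$ stays within $(G-st) \cup R \cup E_1 \cup E_2 \subseteq G - \{d_1, d_2\}$ and already avoids $D$, and $sx', y't$ avoid $\{d_1,d_2\} \cup D$ for the same reasons as before. Hence $|\og{s}{t}{(\set{d_1,d_2}\cup D)}| \le |Q| - 2N$.

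The routine parts are the two ``endpoint-conversion'' arguments; the one genuinely delicate step is bounding the number of new-vertex edges used by $Q$, which rests on the choice of $N$ as a number larger than any natural distance in $G$. A secondary subtlety is that adding $E_1$ and $E_2$ to $H$ could a priori shorten $|\pi_{(H'-D)}(d_1^-, d_2^+)|$ by creating shortcuts along $st$, but since every edge in $E_1 \cup E_2$ is itself an edge of $st - \{d_1, d_2\}$, any such shortcut can be reused verbatim in the $G$-side walk, so no tightness is lost in the $\ge$ direction.
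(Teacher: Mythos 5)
Your proof is correct and follows essentially the same route as the paper's: both directions are established by the same endpoint-conversion argument, swapping the edges $(d_1^-,p)$ and $(q,d_2^+)$ (of weight $|sp|+N$ and $|qt|+N$) for the subpaths $sp$ and $qt$, with the observation that the middle portion stays in $(G-st)\cup R\cup E_1\cup E_2$, which is present on both sides. Your choice of $p$ as the last vertex of $P$ in $[s,s\oplus(l-1)]$ and $q$ as the first vertex thereafter in $[s\oplus(r+1),t]$ is a slightly more careful bookkeeping than the paper's ``divergence/convergence'' phrasing but yields the same construction; the only tiny gap is that ruling out ``four or more'' new edges does not literally rule out three — you'd want to note that every intermediate new vertex contributes two new edges, so the count is always even — but this parity observation is immediate and the paper glosses over the same point.
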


\begin{center}
    \tikzset{every picture/.style={line width=0.75pt}} %set default line width to 0.75pt        

\begin{tikzpicture}[x=0.75pt,y=0.75pt,yscale=-1,xscale=1]
%uncomment if require: \path (0,635); %set diagram left start at 0, and has height of 635

%Straight Lines [id:da05872750337706745] 
\draw  [dash pattern={on 0.84pt off 2.51pt}]  (96,431.82) -- (160,432) ;
%Straight Lines [id:da025815959393750698] 
\draw [color={rgb, 255:red, 74; green, 144; blue, 226 }  ,draw opacity=0.5 ][line width=1.5]    (248,504) -- (160,432) ;
%Curve Lines [id:da03416594601315326] 
\draw [color={rgb, 255:red, 144; green, 19; blue, 254 }  ,draw opacity=0.5 ][line width=1.5]    (160,432) .. controls (220.2,360.8) and (322.2,362.4) .. (384,432) ;
%Straight Lines [id:da023175911479717093] 
\draw [color={rgb, 255:red, 0; green, 0; blue, 0 }  ,draw opacity=0.2 ]   (96,431.82) -- (248,504) ;
%Straight Lines [id:da7403364747207029] 
\draw [color={rgb, 255:red, 0; green, 0; blue, 0 }  ,draw opacity=0.2 ]   (208,432) -- (248,504) ;
%Straight Lines [id:da3702470285220477] 
\draw [color={rgb, 255:red, 0; green, 0; blue, 0 }  ,draw opacity=0.2 ]   (320,432) -- (272,504) ;
%Straight Lines [id:da08687205337471415] 
\draw [color={rgb, 255:red, 0; green, 0; blue, 0 }  ,draw opacity=0.2 ]   (488,432.18) -- (272,504) ;
%Straight Lines [id:da15509004885816302] 
\draw [color={rgb, 255:red, 0; green, 0; blue, 0 }  ,draw opacity=0.2 ]   (272,432) -- (272,504) ;
%Straight Lines [id:da9441677270064399] 
\draw [color={rgb, 255:red, 0; green, 0; blue, 0 }  ,draw opacity=0.2 ]   (272,432) -- (320,504.18) ;
%Straight Lines [id:da4121704802827433] 
\draw [color={rgb, 255:red, 0; green, 0; blue, 0 }  ,draw opacity=0.2 ]   (296,432) -- (320,504.18) ;
%Straight Lines [id:da4525939412411958] 
\draw [color={rgb, 255:red, 0; green, 0; blue, 0 }  ,draw opacity=0.2 ]   (320,432.18) -- (320,504.18) ;
%Straight Lines [id:da44618022898610377] 
\draw [color={rgb, 255:red, 0; green, 0; blue, 0 }  ,draw opacity=0.2 ]   (488,432.18) -- (344,504.18) ;
%Straight Lines [id:da6379371579880999] 
\draw [color={rgb, 255:red, 0; green, 0; blue, 0 }  ,draw opacity=0.2 ]   (344,432.18) -- (344,504.18) ;
%Straight Lines [id:da01314907748267169] 
\draw [color={rgb, 255:red, 0; green, 0; blue, 0 }  ,draw opacity=0.2 ]   (296,432) -- (272,504.18) ;
%Straight Lines [id:da9806797072853781] 
\draw [color={rgb, 255:red, 0; green, 0; blue, 0 }  ,draw opacity=0.2 ]   (208,432) -- (320,504.18) ;
%Straight Lines [id:da18719758351929805] 
\draw [color={rgb, 255:red, 0; green, 0; blue, 0 }  ,draw opacity=0.2 ]   (96,431.82) -- (320,504.18) ;
%Straight Lines [id:da17864334455483344] 
\draw [color={rgb, 255:red, 0; green, 0; blue, 0 }  ,draw opacity=0.2 ]   (384,432) -- (272,504) ;
%Straight Lines [id:da4594331731488598] 
\draw [color={rgb, 255:red, 0; green, 0; blue, 0 }  ,draw opacity=0.3 ]   (248,432) -- (248,504) ;
%Straight Lines [id:da03717562495066562] 
\draw  [dash pattern={on 0.84pt off 2.51pt}]  (384,432) -- (488,432.18) ;
%Straight Lines [id:da14134126282500403] 
\draw  [dash pattern={on 0.84pt off 2.51pt}]  (160,432) -- (384,432) ;
%Straight Lines [id:da438424218754469] 
\draw [color={rgb, 255:red, 80; green, 227; blue, 194 }  ,draw opacity=0.5 ][line width=1.5]    (344,504.18) -- (384,432) ;
%Straight Lines [id:da856591165772828] 
\draw    (272,416) -- (288,416) ;
\draw [shift={(272,416)}, rotate = 0] [color={rgb, 255:red, 0; green, 0; blue, 0 }  ][line width=0.75]    (0,5.59) -- (0,-5.59)(10.93,-3.29) .. controls (6.95,-1.4) and (3.31,-0.3) .. (0,0) .. controls (3.31,0.3) and (6.95,1.4) .. (10.93,3.29)   ;
%Straight Lines [id:da32512232327662494] 
\draw    (304,416) -- (320,416) ;
\draw [shift={(320,416)}, rotate = 180] [color={rgb, 255:red, 0; green, 0; blue, 0 }  ][line width=0.75]    (0,5.59) -- (0,-5.59)(10.93,-3.29) .. controls (6.95,-1.4) and (3.31,-0.3) .. (0,0) .. controls (3.31,0.3) and (6.95,1.4) .. (10.93,3.29)   ;
%Straight Lines [id:da9773857166004617] 
\draw    (184,416) -- (200,416) ;
\draw [shift={(184,416)}, rotate = 0] [color={rgb, 255:red, 0; green, 0; blue, 0 }  ][line width=0.75]    (0,5.59) -- (0,-5.59)(10.93,-3.29) .. controls (6.95,-1.4) and (3.31,-0.3) .. (0,0) .. controls (3.31,0.3) and (6.95,1.4) .. (10.93,3.29)   ;
%Straight Lines [id:da00570707089349709] 
\draw    (216,416) -- (232,416) ;
\draw [shift={(232,416)}, rotate = 180] [color={rgb, 255:red, 0; green, 0; blue, 0 }  ][line width=0.75]    (0,5.59) -- (0,-5.59)(10.93,-3.29) .. controls (6.95,-1.4) and (3.31,-0.3) .. (0,0) .. controls (3.31,0.3) and (6.95,1.4) .. (10.93,3.29)   ;
%Straight Lines [id:da19158011226132077] 
\draw    (360,416) -- (384,416) ;
\draw [shift={(360,416)}, rotate = 0] [color={rgb, 255:red, 0; green, 0; blue, 0 }  ][line width=0.75]    (0,5.59) -- (0,-5.59)(10.93,-3.29) .. controls (6.95,-1.4) and (3.31,-0.3) .. (0,0) .. controls (3.31,0.3) and (6.95,1.4) .. (10.93,3.29)   ;
%Straight Lines [id:da7428660155341618] 
\draw    (400,416) -- (424,416) ;
\draw [shift={(424,416)}, rotate = 180] [color={rgb, 255:red, 0; green, 0; blue, 0 }  ][line width=0.75]    (0,5.59) -- (0,-5.59)(10.93,-3.29) .. controls (6.95,-1.4) and (3.31,-0.3) .. (0,0) .. controls (3.31,0.3) and (6.95,1.4) .. (10.93,3.29)   ;
%Straight Lines [id:da6365329725813101] 
\draw    (184,432) -- (232,432) ;
%Straight Lines [id:da9880469690691525] 
\draw    (360,432) -- (424,432) ;
%Straight Lines [id:da9683885131563827] 
\draw    (272,432) -- (320,432) ;
%Straight Lines [id:da9811439260424445] 
\draw [color={rgb, 255:red, 74; green, 144; blue, 226 }  ,draw opacity=0.5 ][line width=1.5]    (160,432) -- (96,431.82) ;
%Straight Lines [id:da8659019734771027] 
\draw [color={rgb, 255:red, 80; green, 227; blue, 194 }  ,draw opacity=0.5 ][line width=1.5]    (384,432) -- (488,432.18) ;

% Text Node
\draw (84,426.4) node [anchor=north west][inner sep=0.75pt]  [font=\footnotesize]  {$s$};
% Text Node
\draw (237,354.4) node [anchor=north west][inner sep=0.75pt]  [font=\footnotesize]  {$\pi _{H'-D}\left( d_{1}^{-} ,d_{2}^{+}\right)$};
% Text Node
\draw (492,426.4) node [anchor=north west][inner sep=0.75pt]  [font=\footnotesize]  {$t$};
% Text Node
\draw (240,506.4) node [anchor=north west][inner sep=0.75pt]  [font=\footnotesize]  {$d_{1}^{-}$};
% Text Node
\draw (265,506.4) node [anchor=north west][inner sep=0.75pt]  [font=\footnotesize]  {$d_{1}^{+}$};
% Text Node
\draw (256,435.4) node [anchor=north west][inner sep=0.75pt]  [font=\footnotesize]  {$d_{1}$};
% Text Node
\draw (156,435.4) node [anchor=north west][inner sep=0.75pt]  [font=\footnotesize]  {$p$};
% Text Node
\draw (380,434.4) node [anchor=north west][inner sep=0.75pt]  [font=\footnotesize]  {$q$};
% Text Node
\draw (329,434.4) node [anchor=north west][inner sep=0.75pt]  [font=\footnotesize]  {$d_{2}$};
% Text Node
\draw (312,506.58) node [anchor=north west][inner sep=0.75pt]  [font=\footnotesize]  {$d_{2}^{-}$};
% Text Node
\draw (337,506.58) node [anchor=north west][inner sep=0.75pt]  [font=\footnotesize]  {$d_{2}^{+}$};
% Text Node
\draw (292,410.4) node [anchor=north west][inner sep=0.75pt]  [font=\footnotesize]  {$R$};
% Text Node
\draw (201,410.4) node [anchor=north west][inner sep=0.75pt]  [font=\footnotesize]  {$E_{1}$};
% Text Node
\draw (385,410.4) node [anchor=north west][inner sep=0.75pt]  [font=\footnotesize]  {$E_{2}$};

\end{tikzpicture}
\end{center}

\begin{proof}
    The proof is similar to Lemma~\ref{lemma:1edge}. Consider the path $\og{s}{t}{(\set{d_1, d_2}\cup D)}$. Suppose on the original shortest path $st$, $\og{s}{t}{(\set{d_1, d_2}\cup D)}$ diverges at point $p$ before $d_1$ and converges at point $q$ after $d_2$. In $H \cup R$ (and so in $H'$), we can use the edge $(d_1^-,p)$ to replace the part $[s,p]$ and the edge $(q,d_2^+)$ to replace the part $[q,t]$. Since $w(d_1^-,p) = N + |sp|$ and $w(q,d_1^+) = N + |qt|$, we know the path from $d_1^-$ to $d_2^+$ in $H'-D$ is with weight $|\og{s}{t}{(\set{d_1, d_2}\cup D)}|+2N$. Therefore, 
    \[|\pi_{(H'-D)}(d_1^-, d_2^+)| \leq |\og{s}{t}{(\set{d_1, d_2}\cup D)}|+2N.\]

    Conversely, take the shortest path from $d_1^-$ to $d_2^+$ avoiding $d_1,d_2$ and $D$ in $H'-D$, we can use the original shortest path $sx$ to replace the first edge $(d_1^-,x)$ and $yt$ to replace the last edge $(y,d_1^+)$. The path we constructed avoids $d_1$, $d_2$, and $D$, and is fully in $G$ because $N$ is large enough. Since $w(d_1^-,p) = N + |sp|$ and $w(q,d_1^+) = N + |qt|$, we know the new path we constructed is with weight $|\pi_{H'-D}(d_1^-, d_2^+)|-2N$. Therefore, 
    \[|\pi_{(H'-D)}(d_1^-, d_2^+)| \geq |\og{s}{t}{(\set{d_1, d_2}\cup D)}|+2N.\]

    So the two values are equal.
\end{proof}

%\setgreen{Can we draw a figure for this, similar to Lemma \ref{lemma:1edge}?}

\subsection{The Binary Partition Structure}
\label{sec:binary}

We define a binary partition structure similar to $\cite{duan2009dual}$ as follows. Recall that we assume there are $2^k$ edges on $st$, so $t=s \oplus 2^k$. Consider a set $\set{m_{i,j}|0 \leq i \leq k, 0 \leq j \leq 2^i}$ to be:

\begin{itemize}
    \item $m_{0,0} = s, m_{0,1}=t$.
    \item For any $1 \leq i \leq k$, $0 \leq j \leq 2^i$, if $j$ is even, we define $m_{i,j}=m_{i-1,j/2}$; if $j$ is odd, we define $m_{i,j}=m_{i-1,(j-1)/2} \oplus 2^{k-i}$.
\end{itemize}

Each vertex may be represented by multiple elements in $\set{m_{i,j}}$. Moreover, we define intervals $Q_{i,j} = [m_{i,j},m_{i,j+1}]$ for all $0 \leq i \leq k$ and $0 \leq j < 2^i$. Thus, in level $i$, we will have $2^i$ edge-disjoint ranges
$Q_{i,0}, Q_{i,1}, ..., Q_{i,2^i-1}$, each with $2^{k-i}$ edges, and their union is the whole original shortest path $st$.

\begin{center}
    \tikzset{every picture/.style={line width=0.75pt}} %set default line width to 0.75pt        

\begin{tikzpicture}[x=0.75pt,y=0.75pt,yscale=-1,xscale=1]
%uncomment if require: \path (0,841); %set diagram left start at 0, and has height of 841

%Straight Lines [id:da21957735311789106] 
\draw    (96,423.47) -- (480,423.65) ;
%Straight Lines [id:da23864355632457568] 
\draw    (96,447.65) -- (272,448) ;
\draw [shift={(96,447.65)}, rotate = 0.12] [color={rgb, 255:red, 0; green, 0; blue, 0 }  ][line width=0.75]    (0,5.59) -- (0,-5.59)(10.93,-3.29) .. controls (6.95,-1.4) and (3.31,-0.3) .. (0,0) .. controls (3.31,0.3) and (6.95,1.4) .. (10.93,3.29)   ;
%Straight Lines [id:da8592663103088932] 
\draw    (304,448) -- (480,448.35) ;
\draw [shift={(480,448.35)}, rotate = 180.12] [color={rgb, 255:red, 0; green, 0; blue, 0 }  ][line width=0.75]    (0,5.59) -- (0,-5.59)(10.93,-3.29) .. controls (6.95,-1.4) and (3.31,-0.3) .. (0,0) .. controls (3.31,0.3) and (6.95,1.4) .. (10.93,3.29)   ;
%Straight Lines [id:da35542593927516086] 
\draw    (96,471.65) -- (176,472) ;
\draw [shift={(96,471.65)}, rotate = 0.25] [color={rgb, 255:red, 0; green, 0; blue, 0 }  ][line width=0.75]    (0,5.59) -- (0,-5.59)(10.93,-3.29) .. controls (6.95,-1.4) and (3.31,-0.3) .. (0,0) .. controls (3.31,0.3) and (6.95,1.4) .. (10.93,3.29)   ;
%Straight Lines [id:da23798394439301318] 
\draw    (208,472) -- (288,472) ;
\draw [shift={(288,472)}, rotate = 180] [color={rgb, 255:red, 0; green, 0; blue, 0 }  ][line width=0.75]    (0,5.59) -- (0,-5.59)(10.93,-3.29) .. controls (6.95,-1.4) and (3.31,-0.3) .. (0,0) .. controls (3.31,0.3) and (6.95,1.4) .. (10.93,3.29)   ;
%Straight Lines [id:da34385963796477836] 
\draw    (288,472) -- (368,472.35) ;
\draw [shift={(288,472)}, rotate = 0.25] [color={rgb, 255:red, 0; green, 0; blue, 0 }  ][line width=0.75]    (0,5.59) -- (0,-5.59)(10.93,-3.29) .. controls (6.95,-1.4) and (3.31,-0.3) .. (0,0) .. controls (3.31,0.3) and (6.95,1.4) .. (10.93,3.29)   ;
%Straight Lines [id:da9992436425685423] 
\draw    (400,472) -- (480,472) ;
\draw [shift={(480,472)}, rotate = 180] [color={rgb, 255:red, 0; green, 0; blue, 0 }  ][line width=0.75]    (0,5.59) -- (0,-5.59)(10.93,-3.29) .. controls (6.95,-1.4) and (3.31,-0.3) .. (0,0) .. controls (3.31,0.3) and (6.95,1.4) .. (10.93,3.29)   ;
%Straight Lines [id:da363465057885511] 
\draw    (96,496) -- (128,496) ;
\draw [shift={(96,496)}, rotate = 0] [color={rgb, 255:red, 0; green, 0; blue, 0 }  ][line width=0.75]    (0,5.59) -- (0,-5.59)(10.93,-3.29) .. controls (6.95,-1.4) and (3.31,-0.3) .. (0,0) .. controls (3.31,0.3) and (6.95,1.4) .. (10.93,3.29)   ;
%Straight Lines [id:da7301164537725489] 
\draw    (160,496) -- (192,496) ;
\draw [shift={(192,496)}, rotate = 180] [color={rgb, 255:red, 0; green, 0; blue, 0 }  ][line width=0.75]    (0,5.59) -- (0,-5.59)(10.93,-3.29) .. controls (6.95,-1.4) and (3.31,-0.3) .. (0,0) .. controls (3.31,0.3) and (6.95,1.4) .. (10.93,3.29)   ;
%Straight Lines [id:da1444829964207941] 
\draw    (192,496) -- (224,496) ;
\draw [shift={(192,496)}, rotate = 0] [color={rgb, 255:red, 0; green, 0; blue, 0 }  ][line width=0.75]    (0,5.59) -- (0,-5.59)(10.93,-3.29) .. controls (6.95,-1.4) and (3.31,-0.3) .. (0,0) .. controls (3.31,0.3) and (6.95,1.4) .. (10.93,3.29)   ;
%Straight Lines [id:da9623154371100392] 
\draw    (256,496) -- (288,496) ;
\draw [shift={(288,496)}, rotate = 180] [color={rgb, 255:red, 0; green, 0; blue, 0 }  ][line width=0.75]    (0,5.59) -- (0,-5.59)(10.93,-3.29) .. controls (6.95,-1.4) and (3.31,-0.3) .. (0,0) .. controls (3.31,0.3) and (6.95,1.4) .. (10.93,3.29)   ;
%Straight Lines [id:da7485459822012457] 
\draw    (288,496) -- (320,496) ;
\draw [shift={(288,496)}, rotate = 0] [color={rgb, 255:red, 0; green, 0; blue, 0 }  ][line width=0.75]    (0,5.59) -- (0,-5.59)(10.93,-3.29) .. controls (6.95,-1.4) and (3.31,-0.3) .. (0,0) .. controls (3.31,0.3) and (6.95,1.4) .. (10.93,3.29)   ;
%Straight Lines [id:da7440278961551957] 
\draw    (352,496) -- (384,496) ;
\draw [shift={(384,496)}, rotate = 180] [color={rgb, 255:red, 0; green, 0; blue, 0 }  ][line width=0.75]    (0,5.59) -- (0,-5.59)(10.93,-3.29) .. controls (6.95,-1.4) and (3.31,-0.3) .. (0,0) .. controls (3.31,0.3) and (6.95,1.4) .. (10.93,3.29)   ;
%Straight Lines [id:da9573534955297789] 
\draw    (384,496) -- (416,496) ;
\draw [shift={(384,496)}, rotate = 0] [color={rgb, 255:red, 0; green, 0; blue, 0 }  ][line width=0.75]    (0,5.59) -- (0,-5.59)(10.93,-3.29) .. controls (6.95,-1.4) and (3.31,-0.3) .. (0,0) .. controls (3.31,0.3) and (6.95,1.4) .. (10.93,3.29)   ;
%Straight Lines [id:da8254227304435314] 
\draw    (448,496) -- (480,496) ;
\draw [shift={(480,496)}, rotate = 180] [color={rgb, 255:red, 0; green, 0; blue, 0 }  ][line width=0.75]    (0,5.59) -- (0,-5.59)(10.93,-3.29) .. controls (6.95,-1.4) and (3.31,-0.3) .. (0,0) .. controls (3.31,0.3) and (6.95,1.4) .. (10.93,3.29)   ;

% Text Node
\draw (84,418.4) node [anchor=north west][inner sep=0.75pt]  [font=\footnotesize]  {$s$};
% Text Node
\draw (484,418.05) node [anchor=north west][inner sep=0.75pt]  [font=\footnotesize]  {$t$};
% Text Node
\draw (89,450.05) node [anchor=north west][inner sep=0.75pt]  [font=\footnotesize]  {$m_{0,0}$};
% Text Node
\draw (473,450.05) node [anchor=north west][inner sep=0.75pt]  [font=\footnotesize]  {$m_{0,1}$};
% Text Node
\draw (89,474.4) node [anchor=north west][inner sep=0.75pt]  [font=\footnotesize]  {$m_{1,0}$};
% Text Node
\draw (473,474.4) node [anchor=north west][inner sep=0.75pt]  [font=\footnotesize]  {$m_{1,2}$};
% Text Node
\draw (281,474.4) node [anchor=north west][inner sep=0.75pt]  [font=\footnotesize]  {$m_{1,1}$};
% Text Node
\draw (89,498.4) node [anchor=north west][inner sep=0.75pt]  [font=\footnotesize]  {$m_{2,0}$};
% Text Node
\draw (473,498.4) node [anchor=north west][inner sep=0.75pt]  [font=\footnotesize]  {$m_{2,4}$};
% Text Node
\draw (281,498.4) node [anchor=north west][inner sep=0.75pt]  [font=\footnotesize]  {$m_{2,2}$};
% Text Node
\draw (185,498.4) node [anchor=north west][inner sep=0.75pt]  [font=\footnotesize]  {$m_{2,1}$};
% Text Node
\draw (377,498.4) node [anchor=north west][inner sep=0.75pt]  [font=\footnotesize]  {$m_{2,3}$};
% Text Node
\draw (278,442.4) node [anchor=north west][inner sep=0.75pt]  [font=\footnotesize]  {$Q_{0,0}$};
% Text Node
\draw (181,466.4) node [anchor=north west][inner sep=0.75pt]  [font=\footnotesize]  {$Q_{1,0}$};
% Text Node
\draw (373,466.4) node [anchor=north west][inner sep=0.75pt]  [font=\footnotesize]  {$Q_{1,1}$};
% Text Node
\draw (133,490.4) node [anchor=north west][inner sep=0.75pt]  [font=\footnotesize]  {$Q_{2,0}$};
% Text Node
\draw (229,490.4) node [anchor=north west][inner sep=0.75pt]  [font=\footnotesize]  {$Q_{2,1}$};
% Text Node
\draw (325,490.4) node [anchor=north west][inner sep=0.75pt]  [font=\footnotesize]  {$Q_{2,2}$};
% Text Node
\draw (421,490.4) node [anchor=north west][inner sep=0.75pt]  [font=\footnotesize]  {$Q_{2,3}$};
% Text Node
\draw (281,522.4) node [anchor=north west][inner sep=0.75pt]  [font=\footnotesize]  {$\vdots $};

\end{tikzpicture}
\end{center}

We further define $2k$ graphs as follows. For all $1 \leq i \leq k$, let $H_{i,0}=H\cup_{\text{even }j} Q_{i,j}$ and let $H_{i,1}=H \cup_{\text{odd }j} Q_{i,j}$. These graphs are obtained from $H$, with $H_{i,0}$ adding the ranges with even orders in level $i$, and $H_{i,1}$ adding the ranges with odd orders. (Note that in $H$ the original shortest path $st$ is removed, and we now add half of the ranges back to get $H_{i,0}$ and $H_{i,1}$.) A demonstration of the graphs $H_{i,0}$ and $H_{i,1}$ is as below. %\setgreen{as follows (draw a picture as in \cite{duan2009dual}, may also include the $R$-intervals and two failures).}

\begin{center}
    \tikzset{every picture/.style={line width=0.75pt}} %set default line width to 0.75pt        

\begin{tikzpicture}[x=0.75pt,y=0.75pt,yscale=-1,xscale=1]
%uncomment if require: \path (0,841); %set diagram left start at 0, and has height of 841

%Straight Lines [id:da7420010810257869] 
\draw    (96,623.65) -- (144,624) ;
%Straight Lines [id:da8040749359777927] 
\draw    (192,624.35) -- (240,624.35) ;
%Straight Lines [id:da8539889484596856] 
\draw  [dash pattern={on 0.84pt off 2.51pt}]  (144,624) -- (192,624.35) ;
%Straight Lines [id:da3665942899167507] 
\draw  [dash pattern={on 0.84pt off 2.51pt}]  (240,624) -- (288,624.35) ;
%Straight Lines [id:da9658129744947028] 
\draw  [dash pattern={on 0.84pt off 2.51pt}]  (336,624) -- (384,624.35) ;
%Straight Lines [id:da12655095388769322] 
\draw  [dash pattern={on 0.84pt off 2.51pt}]  (432,624) -- (480,624.35) ;
%Straight Lines [id:da7231455398125579] 
\draw    (288,624.35) -- (336,624.35) ;
%Straight Lines [id:da9172715937971785] 
\draw    (384,624) -- (432,624) ;
%Straight Lines [id:da10340396685465236] 
\draw  [dash pattern={on 0.84pt off 2.51pt}]  (96,607.65) -- (144,608) ;
%Straight Lines [id:da2665825415936187] 
\draw  [dash pattern={on 0.84pt off 2.51pt}]  (192,608.35) -- (240,608.35) ;
%Straight Lines [id:da3197931270074198] 
\draw    (144,608) -- (192,608.35) ;
%Straight Lines [id:da9906591296169945] 
\draw    (240,608) -- (288,608.35) ;
%Straight Lines [id:da5369439530977661] 
\draw    (336,608) -- (384,608.35) ;
%Straight Lines [id:da39735375968622444] 
\draw    (432,608) -- (480,608.35) ;
%Straight Lines [id:da7646817215867399] 
\draw  [dash pattern={on 0.84pt off 2.51pt}]  (288,608.35) -- (336,608.35) ;
%Straight Lines [id:da9797569037886408] 
\draw  [dash pattern={on 0.84pt off 2.51pt}]  (384,608) -- (432,608) ;
%Straight Lines [id:da5039625711821207] 
\draw  [dash pattern={on 0.84pt off 2.51pt}]  (96,591.82) -- (480,592) ;

% Text Node
\draw (73,618.4) node [anchor=north west][inner sep=0.75pt]  [font=\footnotesize]  {$H_{3,0}$};
% Text Node
\draw (73,602.4) node [anchor=north west][inner sep=0.75pt]  [font=\footnotesize]  {$H_{3,1}$};
% Text Node
\draw (69,586.4) node [anchor=north west][inner sep=0.75pt]  [font=\footnotesize]  {$[ s,t]$};

\end{tikzpicture}
\end{center}

%\setgreen{Draw the look of $G_{i,0}$ and $G_{i,1}$ in $st$.}

Consider two failures $d_1=(s \oplus (l-1), s \oplus l), d_2=(s \oplus r, s \oplus (r+1))$ on $st$, with $0 < l \leq r < 2^k$. In the partition structure, we can find a vertex $m_{i,j}$ (where $i$ is minimized) separating them, so $d_1$ is on $Q_{i,j-1}$ and $d_2$ is on $Q_{i,j}$. We know $j$ should be odd since otherwise the vertex $m_{i,j}$ will also appear as $m_{i-1,j/2}$. Let the whole interval between $d_1$ and $d_2$ be $R=[s \oplus l, s \oplus r]$, the left part of $R$ be $R_l=[s \oplus l, m_{i,j}]$ and the right part be $R_r=[m_{i,j},s \oplus r]$. We can see that $H_{i,0}$ contains $R_l$ but not $R_r$, and $H_{i,1}$ contains $R_r$ but not $R_l$.

\subsection{The Algorithm}

Let's consider what we need to compute $|\og{s}{t}{\set{d_1,d_2,d_3}}|$. Here, we claim the following theorem for this case with two edge failures on $st$:

\begin{theorem}\label{theorem:2edge}
The length of the replacement path $|\og{s}{t}{\set{d_1, d_2, d_3}}|$ equals the minimum one in the following four values:
\begin{itemize}
    \item $|\hg{d_1^-}{d_2^+}{d_3}|-2N$ (as in Lemma \ref{2edgecase1});
    \item $|\pi_{H_{i,0}-d_1-d_3}(d_1^-,d_2^+)|-2N$ (as in Lemma \ref{2edgecase2});
    \item $|\pi_{H_{i,1}-d_2-d_3}(d_1^-,d_2^+)|-2N$ (as in Lemma \ref{2edgecase3});
    \item $\min\{|\pi_{H_{i,0}-d_1-d_3}(d_1^-,m_{i,j})|,|\pi_{H_{i,1}-d_2-d_3}(d_1^-,m_{i,j})|\}$
        
         $+ \min\{|\pi_{H_{i,0}-d_1-d_3}(m_{i,j},d_2^+)|,|\pi_{H_{i,1}-d_2-d_3}(m_{i,j},d_2^+)|\}-2N.$ (as in Lemma \ref{lemma:2edgescase4-2})
\end{itemize}

\end{theorem}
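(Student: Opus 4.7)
The plan is to show each of the four quantities is at least $|\og{s}{t}{\set{d_1,d_2,d_3}}|+2N$, and that at least one of them achieves equality. The upper-bound direction is obtained by adapting Lemma~\ref{lemma:2edge} to each of the graphs $H$, $H_{i,0}-d_1$, and $H_{i,1}-d_2$: quantity one equals $|\pi_{G-(\set{d_1,d_2,d_3}\cup R)}(s,t)|+2N$ (since $H$ contains no edge of $R$), quantity two equals $|\pi_{G-(\set{d_1,d_2,d_3}\cup R_r)}(s,t)|+2N$ (since $H_{i,0}$ contains $R_l$ but excludes $R_r$ and $d_2$), quantity three is symmetric, and each summand of quantity four is the sum of two shortest-subpath lengths through $m_{i,j}$ in $G-\set{d_1,d_2,d_3}$ with further exclusion of $R_l$ or $R_r$ on the two halves. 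Each value is at least $|\og{s}{t}{\set{d_1,d_2,d_3}}|+2N$ by monotonicity of shortest paths under enlarging the failure set.

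For the matching direction, let $P := \og{s}{t}{\set{d_1,d_2,d_3}}$ and examine $P \cap R$. Because $d_3 \notin st$, no failed edge lies in the interior of $R$, so $P \cap R$ must be a single contiguous subpath $I_2$ (possibly empty): if it had two maximal components, the gap between them would be a failed-edge-free subpath of $st$, and by the unique shortest path assumption $P$ would have to traverse that gap, merging the components. Now case-split on $I_2$. If $I_2$ is empty then $|P| = |\pi_{G-(\set{d_1,d_2,d_3}\cup R)}(s,t)|$ and quantity one matches. If $I_2 \subseteq R_l$ then $P$ avoids $R_r$ and quantity two matches; the case $I_2 \subseteq R_r$ is symmetric. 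If $I_2$ straddles $m_{i,j}$, then $P$ passes through $m_{i,j}$ and splits as $P = P_1 \circ P_2$ at that vertex, with $P_1$ avoiding $R_r$ and $P_2$ avoiding $R_l$; in this subcase, $|P_1|+N = |\pi_{H_{i,0}-d_1-d_3}(d_1^-,m_{i,j})|$ and $|P_2|+N = |\pi_{H_{i,1}-d_2-d_3}(m_{i,j},d_2^+)|$, so the $(H_{i,0},H_{i,1})$ summand of the fourth quantity achieves $|P_1|+|P_2| = |P|$.

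The hard part will be justifying the two ``$+N$'' identities in the last subcase. I need to argue that the shortest $d_1^-$-$m_{i,j}$ path in $H_{i,0}-d_1-d_3$ uses exactly one attachment edge (the initial $(d_1^-,\cdot)$) and that the extra $st$-edges that $H_{i,0}$ retains outside $R$ do not change the distance, because the $d_1^-$ and $d_2^+$ attachments already serve as shortcuts for $[s,s\oplus(l-1)]$ and $[s\oplus(r+1),t]$, and using any second attachment vertex midroute would add another $N$ to the length, which by the choice of $N$ exceeds any distance of interest in $G$. Once these identities are pinned down, the four cases become the four subsidiary lemmas (Lemmas~\ref{2edgecase1}--\ref{lemma:2edgescase4-2}) referenced in the theorem, whose combination proves the theorem.
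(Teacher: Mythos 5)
Your high-level structure matches the paper's (upper bounds on all four quantities plus a case split on $P\cap R$), and your contiguity argument for $P\cap R$ is a nice explicit version of what the paper only asserts. But the fourth case contains a genuine error. You claim that when $P\cap R$ straddles $m_{i,j}$, the split $P = P_1\circ P_2$ at $m_{i,j}$ must have $P_1$ avoiding $R_r$ and $P_2$ avoiding $R_l$. This is false: the ``backward'' orientation can occur, where $P$ first enters $R$ at a vertex $u>m_{i,j}$, traverses $st$ \emph{leftward} through $m_{i,j}$ down to some $w<m_{i,j}$, then detours out to $t$. Concretely, if $st = s,v_1,\dots,v_7,t$ with unit-weight edges, $d_1=(s,v_1)$, $d_2=(v_7,t)$, $m_{i,j}=v_4$, and $G-st$ contains only $(s,v_5)$ and $(v_3,t)$ each of weight $6$, then $\og{s}{t}{\set{d_1,d_2,d_3}}=s\to v_5\to v_4\to v_3\to t$, so $P_1=s\to v_5\to v_4$ uses an edge of $R_r$ and $P_2=v_4\to v_3\to t$ uses an edge of $R_l$. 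Your specific identity $|P_1|+N=|\pi_{H_{i,0}-d_1-d_3}(d_1^-,m_{i,j})|$ then fails, since $H_{i,0}$ lacks $R_r$. The fourth quantity already carries an inner $\min$ over $\{H_{i,0},H_{i,1}\}$ for each half precisely to cover both orientations; your argument needs to split on which side each of $P_1,P_2$ intersects (neither can intersect both, by the shortest-path argument you already used) and pick the matching $H_{i,0}$ or $H_{i,1}$ term in each $\min$, rather than hard-coding one combination.

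Separately, you correctly flag that the two ``$+N$'' identities are the hard part, but you leave them unresolved. The issue is not just that a second attachment edge would add $N$; it is that $H_{i,0}$ is \emph{missing} the odd ranges of $st$ outside $R$, so the $d_1^-$-to-$m_{i,j}$ distance in $H_{i,0}-d_1-d_3$ could a priori be \emph{larger} than $|\og{s}{m_{i,j}}{\set{d_1,d_2,d_3}}|+N$. The paper closes this by first showing (Lemma~\ref{lemma:2edgescase4-2}) that when $P$ passes through $m_{i,j}$, the half-path $P_1$ cannot touch $[s\oplus(r+1),t]$ at all (otherwise one could shortcut to $t$ and skip $m_{i,j}$), and then invoking Lemma~\ref{lemma:smcase1}/\ref{lemma:smcase2}, which require exactly that side condition. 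Your proposal would need to supply this subsidiary argument; ``the attachments already serve as shortcuts'' addresses a different direction of the inequality than the one that is actually in danger.
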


The proof will be shown in Section \ref{section:2edge}. 

Now let's consider what do we need to compute these values. For $|\hg{d_1^-}{d_2^+}{d_3}|-2N$, we can obtain it by one DSO query with failure $d_3$ in $H$. For the other values, we notice that all these shortest path values are either shortest paths in $\pi_{H_{i,0}-d_1-d_3}$, or shortest paths in $\pi_{H_{i,1}-d_2-d_3}$. These can be obtained by one DSO query with failure $d_3$ in either $H_{i,0}-d_1$ or $H_{i,1}-d_2$.

Now note that there are only $O(\log n)$ graphs in $\set{H_{i,j}}$. By the offline dynamic DSO from Section~\ref{sec-inc} and Theorem~\ref{thm:offline}, we can compute all DSOs for graphs like $\set{H_{i,j}}-d$ for every $d \in st$ by deleting and adding back $d$ one by one. There are $O(n)$ operations done for each graph $\set{H_{i,j}}$, therefore, the total update time is $\Tilde{O}(n^3)$.

When we obtained a DSO for $\set{H_{i,j}}-d$ at some time in the offline dynamic oracle, we can operate all queries related to it. There are $O(n^3)$ queries in total (constant number of queries need to be made for any set $\set{d_1,d_2,d_3}$), and each query can be done in time $\Tilde{O}(1)$. Therefore, the total query time is also in $\Tilde{O}(n^3)$. In conclusion, we can run the algorithm for all sets $\set{d_1,d_2,d_3}$ satisfying only $d_1,d_2$ are on $st$, in $\Tilde{O}(n^3)$ time.

\subsection{Resolving Possible Path Cases}\label{section:2edge}

We observe that the replacement shortest path $\og{s}{t}{\set{d_1, d_2, d_3}}$ has the following structure:

\begin{itemize}
    \item It starts from $s$, goes along $st$, and diverges before $d_1$. 
    \item It may converges and diverges in $st$ between $d_1$ and $d_2$ (possibly in either direction). 
    \item It converges in $st$ after $d_2$, goes along $st$, and ends at $t$.
\end{itemize}

Now consider the part $R=[s \oplus l, s \oplus r]$ between $d_1$ and $d_2$ on $st$. For the replacement shortest path $\og{s}{t}{\set{d_1, d_2, d_3}}$, regarding the vertex $m_{i,j}$, there are four cases we need to consider:

\begin{itemize}
    \item There is no intersection with $R=[s \oplus l, s \oplus r]$.
    \item The edge intersection is fully in $R_l=[s \oplus l, m_{i,j}]$.
    \item The edge intersection is fully in $R_r=[m_{i,j},s \oplus r]$.
    \item The edge intersection is an interval containing the vertex $m_{i,j}$.
\end{itemize}

First, consider the case that the replacement path has no intersection with $R=[s \oplus l, s \oplus r]$. Here, we can assume that the whole interval $R$ is also avoided and we consider simulating it using the graph $H$ (with everything in $R$ deleted). We get the following lemma:

\begin{lemma}\label{2edgecase1}
    Starting from the graph $H$, we have:
    \[|\hg{d_1^-}{d_2^+}{d_3}| = |\og{s}{t}{(\set{d_1, d_2, d_3} \cup R)}| + 2N \geq |\og{s}{t}{\set{d_1, d_2, d_3}}| + 2N.\]
    Moreover, if $\og{s}{t}{\set{d_1, d_2, d_3}}$ has no edge intersection with $R=[s \oplus l, s \oplus r]$, then:
    \[|\hg{d_1^-}{d_2^+}{d_3}| = |\og{s}{t}{\set{d_1, d_2, d_3}}| + 2N.\]
\end{lemma}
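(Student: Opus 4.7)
The plan is to derive this lemma almost entirely from Lemma~\ref{lemma:2edge}, which has already been established earlier in the section. For the first equality, I would instantiate Lemma~\ref{lemma:2edge} with $E_1 = E_2 = \emptyset$ (so that $H' = H \cup R$) and with failure set $D = \{d_3\} \cup R$. Since $d_3 \notin st$ by assumption and $R$ is a subpath of $st$ strictly between $d_1$ and $d_2$, the containment $D \subseteq (G - st) \cup R$ required by the hypothesis of Lemma~\ref{lemma:2edge} is immediate. The graph $H' - D = (H \cup R) - (\{d_3\} \cup R)$ then collapses to $H - d_3$, and Lemma~\ref{lemma:2edge} reads precisely
\[ |\hg{d_1^-}{d_2^+}{d_3}| = |\og{s}{t}{(\{d_1,d_2,d_3\} \cup R)}| + 2N, \]
which is the first equality in the statement.

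For the inequality $|\og{s}{t}{(\{d_1,d_2,d_3\} \cup R)}| \geq |\og{s}{t}{\{d_1,d_2,d_3\}}|$, I would invoke the trivial monotonicity of shortest path lengths under edge deletion: since $\{d_1,d_2,d_3\} \subseteq \{d_1,d_2,d_3\} \cup R$, every $s$-$t$ path surviving the larger deletion also survives the smaller one, so the minimum length can only grow (or stay the same). Adding $2N$ to both sides gives the displayed chain in the lemma.

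For the ``moreover'' clause, suppose $\og{s}{t}{\{d_1,d_2,d_3\}}$ shares no edge with $R$. Then this path is itself a feasible $s$-$t$ path in $G - (\{d_1,d_2,d_3\} \cup R)$, which yields the reverse inequality $|\og{s}{t}{(\{d_1,d_2,d_3\} \cup R)}| \leq |\og{s}{t}{\{d_1,d_2,d_3\}}|$; combined with the monotonicity bound above, this forces equality, and substituting into the first displayed equation gives the claimed identity. The entire argument is essentially bookkeeping on top of Lemma~\ref{lemma:2edge} plus edge-monotonicity, so I do not anticipate any substantive obstacle; the only point that needs mild care is verifying that $D = \{d_3\} \cup R$ is indeed an admissible failure set for Lemma~\ref{lemma:2edge} so that $H' - D$ collapses cleanly to $H - d_3$.
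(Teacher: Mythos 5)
Your proposal is correct and follows essentially the same route as the paper: both instantiate Lemma~\ref{lemma:2edge} with $E_1=E_2=\emptyset$ and $D=\{d_3\}\cup R$, observe that $H'-D$ collapses to $H-d_3$, and then close the ``moreover'' case by noting that a replacement path avoiding $R$ certifies the reverse inequality. Your explicit check that $D\subseteq (G-st)\cup R$ is a small but welcome bit of extra rigor that the paper leaves implicit.
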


\begin{proof}
    Consider Lemma \ref{lemma:2edge}. Here, we let $D = \set{d_3} \cup R$ and $E_1=E_2=\emptyset$ and we get $H'-D = (H \cup R \cup E_1 \cup E_2)-D = H-d_3$ and $G-(\set{d_1, d_2}\cup D) = G- (\set{d_1, d_2, d_3} \cup R)$. Therefore, $|\hg{d_1^-}{d_2^+}{d_3}| = |\og{s}{t}{(\set{d_1, d_2, d_3} \cup R)}| + 2N$ holds. Moreover, since $|\og{s}{t}{\set{d_1, d_2, d_3}}| \leq |\og{s}{t}{(\set{d_1, d_2, d_3} \cup R)}|$, and when there is no edge intersection for $\og{s}{t}{\set{d_1, d_2, d_3}}$ and $R$, we can get $\og{s}{t}{\set{d_1, d_2, d_3}} = \og{s}{t}{(\set{d_1, d_2, d_3} \cup R)}$. Therefore, the lemma holds.
\end{proof}

Now suppose the replacement path has an intersection with $R$ that is only on a subinterval of $R_l$. Here we can assume that the interval $R_r$ is also avoided and we consider simulating it using the graph $H_{i,0}$ (with everything in $R_r$ deleted but everything in $R_l$ still included). We get the following lemma:

\begin{lemma}\label{2edgecase2}
    Starting from the graph $H_{i,0} - d_1$, we have:
    \[|\pi_{H_{i,0}-d_1-d_3}(d_1^-,d_2^+)| = |\og{s}{t}{(\set{d_1, d_2, d_3}\cup R_r)}| + 2N \geq |\og{s}{t}{\set{d_1, d_2, d_3}}| + 2N.\]
    Moreover, if $\og{s}{t}{\set{d_1, d_2, d_3}}$ intersects $R = [s \oplus l, s \oplus r]$ only on $R_l = [s \oplus l, m_{i,j}]$, then we have:
    \[|\pi_{H_{i,0}-d_1-d_3}(d_1^-,d_2^+)| = |\og{s}{t}{\set{d_1, d_2, d_3}}| + 2N.\]
    
    (Note that $H_{i,0}$ includes $\set{d_1} \cup R_l$ and does not include $R_r \cup \set{d_2}$.)
\end{lemma}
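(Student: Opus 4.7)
The plan is to reduce the claim directly to Lemma~\ref{lemma:2edge} by rewriting $H_{i,0}-d_1-d_3$ as an instance of $H'-D$ for a suitable choice of $E_1$, $E_2$, and $D$. First I would unpack the structure of $H_{i,0}$ near the failures on $st$. Since $j$ is odd, $j-1$ is even, so the range $Q_{i,j-1}$ is included in $H_{i,0}$; it naturally splits into the sub-interval $[m_{i,j-1}, s\oplus(l-1)]$ lying inside $[s, s\oplus(l-1)]$, the edge $d_1$, and $R_l=[s\oplus l, m_{i,j}]$. All other even-indexed ranges $Q_{i,j'}$ with $j'\neq j-1$ sit entirely to the left of $d_1$ (when $j'<j-1$) or to the right of $d_2$ (when $j'>j$). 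The odd-indexed range $Q_{i,j}$, which contains $d_2$ and $R_r$, is excluded.

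Consequently there exist edge sets $E_1\subseteq [s, s\oplus(l-1)]$ and $E_2\subseteq [s\oplus(r+1), t]$ (namely, the union of the even-range edges outside $R_l\cup\{d_1\}$) such that
\[
    H_{i,0}-d_1 \;=\; H\cup R_l\cup E_1\cup E_2.
\]
Since $R = R_l\cup R_r$ as edge sets (they share only the vertex $m_{i,j}$), this is equivalently
\[
    H_{i,0}-d_1-d_3 \;=\; (H\cup R\cup E_1\cup E_2)-(R_r\cup\{d_3\}).
\]
Setting $D=R_r\cup\{d_3\}$, which satisfies $D\subseteq (G-st)\cup R$ because $d_3\not\in st$ and $R_r\subseteq R$, I can apply Lemma~\ref{lemma:2edge} with exactly these $E_1, E_2, D$ to obtain
\[
    |\pi_{H_{i,0}-d_1-d_3}(d_1^-, d_2^+)| \;=\; |\og{s}{t}{(\set{d_1, d_2, d_3}\cup R_r)}|+2N.
\]
The inequality $|\og{s}{t}{(\set{d_1, d_2, d_3}\cup R_r)}|\ge |\og{s}{t}{\set{d_1, d_2, d_3}}|$ is immediate, since adding more forbidden edges can only lengthen the shortest path.

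For the \emph{moreover} clause, suppose $\og{s}{t}{\set{d_1, d_2, d_3}}$ intersects $R$ only inside $R_l$; then it already avoids every edge of $R_r$, so it is itself a feasible $s$-$t$ path in $G-(\set{d_1, d_2, d_3}\cup R_r)$. This yields the reverse inequality and hence equality. The whole argument is strictly parallel to Lemma~\ref{2edgecase1}, with $R$ replaced by $R_r$ and with the bonus edges $E_1, E_2$ accounting for the portions of $st$ that $H_{i,0}$ retains. The only care needed is the bookkeeping that identifies which edges of $st$ survive in $H_{i,0}$; once that is made precise, the reduction to Lemma~\ref{lemma:2edge} is mechanical, and I do not anticipate any combinatorial obstacle.
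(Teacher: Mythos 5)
Your proof is correct and takes essentially the same route as the paper: both instantiate Lemma~\ref{lemma:2edge} with $D=\set{d_3}\cup R_r$ and $E_1,E_2$ chosen as the edges of $H_{i,0}$ lying in $[s,s\oplus(l-1)]$ and $[s\oplus(r+1),t]$ respectively, then observe that the moreover clause follows because a path avoiding $R_r$ is already feasible with $R_r$ added to the failure set. Your extra bookkeeping about how $Q_{i,j-1}$ splits around $d_1$ is a helpful elaboration of the identity $H_{i,0}-d_1=H\cup R_l\cup E_1\cup E_2$, which the paper states more tersely.
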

%\setgreen{I want to draw a graph here. Do you think we need to draw for all cases?}

\begin{center}
    \tikzset{every picture/.style={line width=0.75pt}} %set default line width to 0.75pt        

\begin{tikzpicture}[x=0.75pt,y=0.75pt,yscale=-1,xscale=1]
%uncomment if require: \path (0,272); %set diagram left start at 0, and has height of 272

%Straight Lines [id:da44959559680369954] 
\draw  [dash pattern={on 0.84pt off 2.51pt}]  (176,127.82) -- (560,128) ;
%Straight Lines [id:da989988191794137] 
\draw    (392,128) -- (448,128) ;
%Straight Lines [id:da32008171676988706] 
\draw    (176,143.65) -- (224,144) ;
%Straight Lines [id:da0033084105725885715] 
\draw    (272,144) -- (320,144) ;
%Straight Lines [id:da7363151401019882] 
\draw    (368,144) -- (384,144) ;
%Straight Lines [id:da15829673502047203] 
\draw    (464,160) -- (512,160) ;
%Straight Lines [id:da3636317815929381] 
\draw    (392,176) -- (416,176) ;
%Straight Lines [id:da32790737191468733] 
\draw    (176,175.65) -- (224,176) ;
%Straight Lines [id:da8738486772867764] 
\draw    (272,176) -- (320,176) ;
%Straight Lines [id:da20953969249950888] 
\draw    (368,176) -- (384,176) ;
%Straight Lines [id:da3455205162856353] 
\draw    (464,176) -- (512,176) ;

% Text Node
\draw (381,114.4) node [anchor=north west][inner sep=0.75pt]  [font=\footnotesize]  {$d_{1}$};
% Text Node
\draw (445,114.4) node [anchor=north west][inner sep=0.75pt]  [font=\footnotesize]  {$d_{2}$};
% Text Node
\draw (153,123.4) node [anchor=north west][inner sep=0.75pt]  [font=\footnotesize]  {$R$};
% Text Node
\draw (153,138.4) node [anchor=north west][inner sep=0.75pt]  [font=\footnotesize]  {$E_{1}$};
% Text Node
\draw (153,154.4) node [anchor=north west][inner sep=0.75pt]  [font=\footnotesize]  {$E_{2}$};
% Text Node
\draw (409,114.4) node [anchor=north west][inner sep=0.75pt]  [font=\footnotesize]  {$m_{i,j}$};
% Text Node
\draw (73,170.4) node [anchor=north west][inner sep=0.75pt]  [font=\footnotesize]  {$R\cup E_{1} \cup E_{2} -R_{r}$};

\end{tikzpicture}
\end{center}

\begin{proof}
     Consider Lemma \ref{lemma:2edge}. Here, we let $D = \set{d_3} \cup R_r$ and $E_1=H_{i,0} \cap [s,s\oplus(l-1)]$, $E_2=H_{i,0} \cap [s\oplus(r+1),t]$ in Lemma \ref{lemma:2edge}, we can get $H'-D = (H \cup R \cup E_1 \cup E_2)-D = H_{i,0}-d_1-d_3$ and $G-(\set{d_1, d_2}\cup D) = G- (\set{d_1, d_2, d_3} \cup R_r)$. Therefore, $|\pi_{H_{i,0}-d_1-d_3}(d_1^-,d_2^+)| = |\og{s}{t}{(\set{d_1, d_2, d_3}\cup R_r)}| + 2N$.
     
     Moreover, we can observe that $|\og{s}{t}{\set{d_1, d_2, d_3}}| \leq |\og{s}{t}{(\set{d_1, d_2, d_3} \cup R_r)}|$, and when there is no edge intersection for $\og{s}{t}{\set{d_1, d_2, d_3}}$ and $R_r$, we can get $\og{s}{t}{\set{d_1, d_2, d_3}} = \og{s}{t}{(\set{d_1, d_2, d_3} \cup R_r)}$, so the lemma holds.
     %Therefore, the lemma follows. Moreover, when the edge intersection is only on $R_l$ in $G$, $\og{s}{t}{\set{d_1, d_2, d_3}} = \og{s}{t}{(\set{d_1, d_2, d_3} \cup R_r)}$, and the equation holds.
\end{proof}

Now suppose the replacement path has an intersection with $R$ that is only on a subinterval of $R_r$. This is exactly the symmetric case and we demonstrate it in the following lemma:

\begin{lemma}\label{2edgecase3}
    Starting from the graph $H_{i,1} - d_2$, we have:
    \[|\pi_{H_{i,1}-d_2-d_3}(d_1^-,d_2^+)| = |\og{s}{t}{(\set{d_1, d_2, d_3}\cup R_l)}| + 2N \geq |\og{s}{t}{\set{d_1, d_2, d_3}}| + 2N.\]
    If $\og{s}{t}{\set{d_1, d_2, d_3}}$ intersects $R = [s \oplus l, s \oplus r]$ only on $R_r = [m_{i,j},s \oplus r]$, then we have:
    \[|\pi_{H_{i,1}-d_2-d_3}(d_1^-,d_2^+)| = |\og{s}{t}{\set{d_1, d_2, d_3}}| + 2N.\]
    
    (Note that $H_{i,1}$ includes $R_r \cup \set{d_2}$ and does not include $\set{d_1} \cup R_l$.)
\end{lemma}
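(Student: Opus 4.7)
The plan is to apply Lemma~\ref{lemma:2edge} symmetrically to the proof of Lemma~\ref{2edgecase2}, this time choosing the failure set $D := \set{d_3} \cup R_l$ together with the extra edge sets $E_1 := H_{i,1} \cap [s, s\oplus(l-1)]$ and $E_2 := H_{i,1} \cap [s\oplus(r+1), t]$. The hypotheses of Lemma~\ref{lemma:2edge} are immediate to check: since $d_3 \notin st$ we have $d_3 \in G-st$ and $R_l \subseteq R$, so $D \subseteq (G-st) \cup R$; the sets $E_1, E_2$ sit in the required intervals by construction.

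The key bookkeeping step is verifying $H' - D = H_{i,1} - d_2 - d_3$ for $H' := H \cup R \cup E_1 \cup E_2$. Subtracting $D$ from $H'$ removes $R_l$ and $d_3$, leaving $H \cup R_r \cup E_1 \cup E_2 - d_3$. On the other hand, because $j$ is odd (by the choice of the separating vertex $m_{i,j}$ in Section~\ref{sec:binary}), the range $Q_{i,j}$ is an odd-ordered range included in $H_{i,1}$; its edges split as $R_r \cup \set{d_2} \cup [s\oplus(r+1), m_{i,j+1}]$, where the tail $[s\oplus(r+1), m_{i,j+1}]$ and every other odd-ordered range in $[s, s\oplus(l-1)]$ or in $[s\oplus(r+1), t]$ is captured by $E_1 \cup E_2$. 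Hence $H_{i,1} - d_2 = H \cup R_r \cup E_1 \cup E_2$, and subtracting $d_3$ as well gives exactly $H' - D$.

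With this identification, and noting $G - (\set{d_1,d_2} \cup D) = G - (\set{d_1,d_2,d_3} \cup R_l)$, Lemma~\ref{lemma:2edge} directly gives
\[
|\pi_{H_{i,1}-d_2-d_3}(d_1^-, d_2^+)| \;=\; |\og{s}{t}{(\set{d_1,d_2,d_3} \cup R_l)}| + 2N.
\]
Adding more failures can only increase the replacement distance, so the right-hand side is at least $|\og{s}{t}{\set{d_1,d_2,d_3}}| + 2N$, yielding the inequality. For the equality claim, if $\og{s}{t}{\set{d_1,d_2,d_3}}$ meets $R$ only inside $R_r$, then it avoids every edge of $R_l$, and so declaring $R_l$ failed does not alter the optimum path; thus $|\og{s}{t}{(\set{d_1,d_2,d_3} \cup R_l)}| = |\og{s}{t}{\set{d_1,d_2,d_3}}|$, and the claimed equality follows.

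The main obstacle here is purely combinatorial bookkeeping: it is only the decomposition of the odd-ordered range $Q_{i,j}$ into the parts $R_r$, $\set{d_2}$, and a tail inside $E_2$ that takes any real thought; once that identification is in place, the statement is a direct instance of Lemma~\ref{lemma:2edge} in complete analogy with Lemma~\ref{2edgecase2}.
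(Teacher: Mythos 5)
Your proof is correct and follows exactly the same route as the paper's: apply Lemma~\ref{lemma:2edge} with $D=\set{d_3}\cup R_l$, $E_1=H_{i,1}\cap[s,s\oplus(l-1)]$, $E_2=H_{i,1}\cap[s\oplus(r+1),t]$, verify $H'-D=H_{i,1}-d_2-d_3$, and observe that excluding $R_l$ is free when the replacement path already avoids it. You spell out the $H'-D=H_{i,1}-d_2-d_3$ bookkeeping more explicitly than the paper does (which just invokes symmetry with Lemma~\ref{2edgecase2}), but the argument is the same one.
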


\begin{proof}
    The lemma can be proved symmetrically. Here, we let $D = \set{d_3} \cup R_l$ and $E_1=H_{i,1} \cap [s,s\oplus(l-1)]$, $E_2=H_{i,1} \cap [s\oplus(r+1),t]$. In Lemma \ref{lemma:2edge}, we can get $H'-D = (H \cup R \cup E_1 \cup E_2)-D = H_{i,1}-d_2-d_3$ and $G-(\set{d_1, d_2}\cup D) = G- (\set{d_1, d_2, d_3} \cup R_l)$. Therefore, $|\pi_{H_{i,1}-d_2-d_3}(d_1^-,d_2^+)| = |\og{s}{t}{(\set{d_1, d_2, d_3}\cup R_l)}| + 2N$.
    
    Moreover, we can observe that $|\og{s}{t}{\set{d_1, d_2, d_3}}| \leq |\og{s}{t}{(\set{d_1, d_2, d_3} \cup R_l)}|$, and when there is no edge intersection for $\og{s}{t}{\set{d_1, d_2, d_3}}$ and $R_l$, we can get $\og{s}{t}{\set{d_1, d_2, d_3}} = \og{s}{t}{(\set{d_1, d_2, d_3} \cup R_l)}$, so the lemma holds.
    %Therefore, the lemma follows. Moreover, when the edge intersection is only on $R_r$ in $G$, $\og{s}{t}{\set{d_1, d_2, d_3}} = \og{s}{t}{(\set{d_1, d_2, d_3} \cup R_l)}$, and the equation holds.
\end{proof}

Finally we consider the case that $\og{s}{t}{\set{d_1, d_2, d_3}}$ goes through the vertex $m_{i,j}$. Here, we consider breaking the whole replacement shortest path into two parts: One from $s$ to $m_{i,j}$, and one from $m_{i,j}$ to $t$, in the graph $G-\set{d_1, d_2, d_3}$. We can immediately obtain the following lemma:

\begin{lemma}\label{2edgecase4}
    We have:
    \[|\og{s}{t}{\set{d_1, d_2, d_3}}| \leq |\og{s}{m_{i,j}}{\set{d_1, d_2, d_3}}| + |\og{m_{i,j}}{t}{\set{d_1, d_2, d_3}}|.\]
    If $\og{s}{t}{\set{d_1, d_2, d_3}}$ contains the vertex $m_{i,j}$, then: 
    \[|\og{s}{t}{\set{d_1, d_2, d_3}}| = |\og{s}{m_{i,j}}{\set{d_1, d_2, d_3}}| + |\og{m_{i,j}}{t}{\set{d_1, d_2, d_3}}|.\]
\end{lemma}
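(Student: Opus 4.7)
The proof proposal is a standard shortest-path decomposition argument, essentially the subpath optimality property adapted to a graph with edge failures. The plan is as follows.

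For the inequality, I would concatenate the two shortest subpaths. Specifically, let $P_1 = \og{s}{m_{i,j}}{\set{d_1,d_2,d_3}}$ and $P_2 = \og{m_{i,j}}{t}{\set{d_1,d_2,d_3}}$. Both are paths (or more generally walks) in $G - \set{d_1,d_2,d_3}$ sharing the endpoint $m_{i,j}$, so $P_1 \circ P_2$ is a walk from $s$ to $t$ in $G - \set{d_1,d_2,d_3}$ of length $|P_1| + |P_2|$. Since $\og{s}{t}{\set{d_1,d_2,d_3}}$ is by definition the shortest such path, its length is at most $|P_1| + |P_2|$, which gives the stated inequality.

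For the equality when $m_{i,j}$ lies on $\og{s}{t}{\set{d_1,d_2,d_3}}$, I would split this path at $m_{i,j}$ into two subpaths $Q_1$ from $s$ to $m_{i,j}$ and $Q_2$ from $m_{i,j}$ to $t$, both lying inside $G - \set{d_1,d_2,d_3}$. By subpath optimality, $|Q_1| \geq |\og{s}{m_{i,j}}{\set{d_1,d_2,d_3}}|$ and $|Q_2| \geq |\og{m_{i,j}}{t}{\set{d_1,d_2,d_3}}|$; otherwise, replacing $Q_1$ or $Q_2$ by a strictly shorter avoiding path would produce a walk from $s$ to $t$ in $G - \set{d_1,d_2,d_3}$ of total length strictly less than $|\og{s}{t}{\set{d_1,d_2,d_3}}|$, contradicting its minimality. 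Combined with the inequality direction, this forces equality.

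There is essentially no obstacle: the entire argument is the standard optimal substructure property of shortest paths, applied to the modified graph $G - \set{d_1,d_2,d_3}$. The only tiny care is to note that $P_1 \circ P_2$ may not be a simple path in general (it can repeat vertices), but this does not matter because shortest paths in graphs with non-negative weights never need to be longer than shortest walks, so the length comparison still works and the minimum is attained by a simple path.
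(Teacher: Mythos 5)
Your proof is correct and follows exactly the same approach as the paper: triangle inequality in $G-\set{d_1,d_2,d_3}$ for the upper bound, and splitting the optimal path at $m_{i,j}$ plus subpath optimality for the equality case. You spell out the details a bit more carefully than the paper does, but there is no substantive difference.
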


\begin{proof}
    The inequality is just by triangle inequality in $G-\set{d_1, d_2, d_3}$. And if $\og{s}{t}{\set{d_1, d_2, d_3}}$ contains the vertex $m_{i,j}$, we can break the path in two parts: one from $s$ to $m_{i,j}$, and another one from $m_{i,j}$ to $t$, and the lemma follows.
\end{proof}

Finally, we consider how to compute these two parts when $\og{s}{t}{\set{d_1, d_2, d_3}}$ contains the vertex $m_{i,j}$ as in Lemma~\ref{2edgecase4}. Here, we will show in the following lemmas the approach to calculate $\og{s}{m_{i,j}}{\set{d_1, d_2, d_3}}$, and after that, we can calculate $\og{m_{i,j}}{t}{\set{d_1, d_2, d_3}}$ symmetrically.

Consider the path $\og{s}{m_{i,j}}{\set{d_1, d_2, d_3}}$. We observe that the path cannot intersect the interval $R$ on edges both in $R_l$ and $R_r$. If so, then there must be a shorter path from $s$ to $m_{i,j}$ by taking the shortest path after it first intersects $R$. 

Moreover, if $|\og{s}{t}{\set{d_1, d_2, d_3}}|$ goes through $m_{i,j}$, then we show that the first part $\og{s}{m_{i,j}}{\set{d_1, d_2, d_3}}$ cannot go through any vertex in $[s \oplus (r+1),t]$ as well. Otherwise, if it goes through some vertex $q \in [s \oplus (r+1),t]$ before $m_{i,j}$, however, $\og{s}{q}{\set{d_1, d_2, d_3}} \circ qt$ will be a shorter path that does not go through $m_{i,j}$, which comes to contradiction.

In the following lemmas, we make use of the graphs $H_{i,0}-d_1$ and $H_{i,1}-d_2$ to capture both cases that $\og{s}{m_{i,j}}{\set{d_1, d_2, d_3}}$ intersects $R$ only on $R_l$ or only on $R_r$. We obtain the following lemmas:

\begin{lemma}\label{lemma:smcase1}
    Consider the graph $H_{i,0}-d_1$, we have:
    \[|\pi_{H_{i,0}-d_1-d_3}(d_1^-,m_{i,j})| \geq |\og{s}{m_{i,j}}{(\set{d_1, d_2, d_3} \cup R_r)}| + N \geq |\og{s}{m_{i,j}}{\set{d_1, d_2, d_3}}| + N.\]
    Moreover, if $\og{s}{m_{i,j}}{\set{d_1, d_2, d_3}}$ does not go through any vertex in $[s \oplus (r+1),t]$ and intersects $R$ only on $R_l$, then:
    \[|\pi_{H_{i,0}-d_1-d_3}(d_1^-,m_{i,j})| = |\og{s}{m_{i,j}}{(\set{d_1, d_2, d_3} \cup R_r)}| + N = |\og{s}{m_{i,j}}{\set{d_1, d_2, d_3}}| + N.\]
\end{lemma}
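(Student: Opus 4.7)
The plan is to adapt the strategy of Lemma~\ref{2edgecase2}, which proves the analogous statement with $d_2^+$ in place of the interior vertex $m_{i,j}$, and to establish the two inequalities separately before deriving the equality condition. Because the target is an interior vertex of $G$ rather than an auxiliary vertex $d^-$ or $d^+$, only one factor of $N$ (rather than $2N$) will appear.

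For the first inequality, I would take any $d_1^-$-to-$m_{i,j}$ path $P$ in $H_{i,0}-d_1-d_3$. Since $N$ exceeds the total weight of $G$, any shortest such $P$ visits at most one auxiliary vertex, namely $d_1^-$ at the very start; any additional visit to an auxiliary vertex would incur cost at least $2N$ that cannot be amortized. Hence $P$ begins with an edge $(d_1^-,p)$ of weight $|sp|+N$ for some $p\in[s,s\oplus(l-1)]$, followed by a subpath $P'$ from $p$ to $m_{i,j}$ using only edges of $G$ that avoid every odd-indexed range $Q_{i,j'}$ (which include $R_r$ and $d_2$) and avoid $\set{d_1,d_3}$. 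Prepending the segment $[s,p]$ of $st$, which avoids every deleted edge, yields an $s$-to-$m_{i,j}$ path in $G-(\set{d_1,d_2,d_3}\cup R_r)$ of weight $|sp|+|P'|=|P|-N$, proving $|P|\ge |\og{s}{m_{i,j}}{(\set{d_1,d_2,d_3}\cup R_r)}|+N$. The second inequality is immediate, since removing the additional edges of $R_r$ can only lengthen shortest paths.

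For the equality case under the stated hypotheses, I would build an explicit $d_1^-$-to-$m_{i,j}$ path in $H_{i,0}-d_1-d_3$ of weight exactly $|\rho|+N$, where $\rho:=\og{s}{m_{i,j}}{\set{d_1,d_2,d_3}}$. Let $p$ be the last vertex of $\rho$ lying in $[s,s\oplus(l-1)]$. The prefix of $\rho$ from $s$ to $p$ is some shortest $s$-to-$p$ path in $G-\set{d_1,d_2,d_3}$, which by the unique shortest path assumption coincides with the segment $[s,p]$ of $st$, so its length is $|sp|$. Replacing this prefix by the edge $(d_1^-,p)$ of weight $|sp|+N$ and appending the remaining portion of $\rho$ from $p$ to $m_{i,j}$ gives a candidate path of total weight $|\rho|+N$.

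The main obstacle will be verifying that this remaining portion of $\rho$ lies entirely in $H_{i,0}-d_1-d_3$; equivalently, that it uses no $st$-edge belonging to an odd range of level $i$. Simplicity of $\rho$ together with the maximality of $p$ guarantees that no vertex of $[s,s\oplus(l-1)]\setminus\set{p}$ appears after $p$ on $\rho$, ruling out every $st$-edge with both endpoints in $[s,s\oplus(l-1)]$. The hypothesis that $\rho$ does not visit $[s\oplus(r+1),t]$ rules out every edge with an endpoint in that range, while the hypothesis that $\rho$ meets $R$ only on $R_l$ rules out the edges of $R_r$. The only $st$-edges remaining available are therefore those in $R_l\subseteq Q_{i,j-1}$, an even range that is present in $H_{i,0}$. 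Combined with the lower bound above, this forces all three quantities in the lemma to coincide.
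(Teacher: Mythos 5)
Your proof is correct and follows essentially the same strategy as the paper's: for the first inequality, replace the initial auxiliary edge $(d_1^-,p)$ by the prefix $[s,p]$ of $st$; for the equality, identify a vertex $p$ before $d_1$ on $\rho$ and reverse the substitution, using the two hypotheses to verify that the remaining suffix of $\rho$ lies in $H_{i,0}-d_1-d_3$. Your choice of $p$ as the \emph{last} vertex of $\rho$ in $[s,s\oplus(l-1)]$ (rather than the divergence point, as the paper phrases it) and your explicit appeal to unique-shortest-path to force the prefix to equal $[s,p]$ make the suffix-validity check a bit more airtight, but this is a refinement of the same argument rather than a different route.
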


\begin{center}
    \tikzset{every picture/.style={line width=0.75pt}} %set default line width to 0.75pt        

\begin{tikzpicture}[x=0.75pt,y=0.75pt,yscale=-1,xscale=1]
%uncomment if require: \path (0,259); %set diagram left start at 0, and has height of 259

%Straight Lines [id:da13726130949949422] 
\draw  [dash pattern={on 0.84pt off 2.51pt}]  (96,175.82) -- (480,176) ;
%Straight Lines [id:da7053514187129436] 
\draw    (312,192) -- (336,192) ;
%Straight Lines [id:da9676440466248374] 
\draw    (96,191.65) -- (144,192) ;
%Straight Lines [id:da6337402413375806] 
\draw    (192,192) -- (240,192) ;
%Straight Lines [id:da20796998625878194] 
\draw    (288,192) -- (304,192) ;
%Straight Lines [id:da2875014732631952] 
\draw    (384,192) -- (432,192) ;

% Text Node
\draw (69,170.4) node [anchor=north west][inner sep=0.75pt]  [font=\footnotesize]  {$[ s,t]$};
% Text Node
\draw (305,162.4) node [anchor=north west][inner sep=0.75pt]  [font=\footnotesize]  {$d_{1}$};
% Text Node
\draw (361,162.4) node [anchor=north west][inner sep=0.75pt]  [font=\footnotesize]  {$d_{2}$};
% Text Node
\draw (329,162.4) node [anchor=north west][inner sep=0.75pt]  [font=\footnotesize]  {$m_{i,j}$};
% Text Node
\draw (46,186.4) node [anchor=north west][inner sep=0.75pt]  [font=\footnotesize]  {$H_{i,0} -d_{1}$};

\end{tikzpicture}
\end{center}

%\setgreen{I want to draw a graph here.}

\begin{proof}
    We consider a proof that is similar to Lemma \ref{lemma:2edge}. Note that the graph $H_{i,0}-d_1$ contains the part $R_l$ but not $R_r$ or $d_2$. First, suppose in $\pi_{H_{i,0}-d_1-d_3}(d_1^-,m_{i,j})$ the first edge is $(d_1^-,p)$, as in Lemma \ref{lemma:2edge}, we can use $[s,p]$ to replace it and get a path with length $|\pi_{H_{i,0}-d_1-d_3}(d_1^-,m_{i,j})| - N$ in $G-(\set{d_1, d_2, d_3} \cup R_r)$. So here we have the inequality $|\pi_{H_{i,0}-d_1-d_3}(d_1^-,m_{i,j})| \geq |\og{s}{m_{i,j}}{(\set{d_1, d_2, d_3} \cup R_r)}| + N$.
    
    Conversely, we consider the shortest replacement path $\og{s}{m_{i,j}}{(\set{d_1, d_2, d_3} \cup R_r)}$, given the condition that it {does not go through $[s \oplus (r+1),t]$ and} intersects $R$ only on $R_l$. Suppose it diverges $st$ at point $p$ before $d_1$, then we can use $(d_1^-,p)$ to replace the part $[s,p]$ to get a path with length $|\og{s}{m_{i,j}}{(\set{d_1, d_2, d_3} \cup R_r)}| + N$ in $H_{i,0}-d_1-d_3$.

    Moreover, if $\og{s}{m_{i,j}}{\set{d_1, d_2, d_3}}$ intersects $R$ only on $R_l$, then we know it also avoids $R_r$ and thus $\og{s}{m_{i,j}}{\set{d_1, d_2, d_3}} = \og{s}{m_{i,j}}{(\set{d_1, d_2, d_3} \cup R_r)}$. Therefore, the equation holds.
\end{proof}

Following we state the symmetric case that $\og{s}{m_{i,j}}{\set{d_1, d_2, d_3}}$ intersects $R$ only on $R_r$. Essentially the same proof works for this lemma.

\begin{lemma}\label{lemma:smcase2}
    Consider the graph $H_{i,1}-d_2$, we have:
    \[|\pi_{H_{i,1}-d_2-d_3}(d_1^-,m_{i,j})| \geq |\og{s}{m_{i,j}}{(\set{d_1, d_2, d_3} \cup R_l)}| + N \geq |\og{s}{m_{i,j}}{\set{d_1, d_2, d_3}}| + N.\]
    Moreover, if $\og{s}{m_{i,j}}{\set{d_1, d_2, d_3}}$ {does not go through any vertex in $[s \oplus (r+1),t]$ and} intersects $R$ only on $R_r$, then:
    \[|\pi_{H_{i,1}-d_2-d_3}(d_1^-,m_{i,j})| = |\og{s}{m_{i,j}}{(\set{d_1, d_2, d_3} \cup R_l)}| + N = |\og{s}{m_{i,j}}{\set{d_1, d_2, d_3}}| + N.\]
\end{lemma}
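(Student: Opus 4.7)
The plan is to mirror the proof of Lemma \ref{lemma:smcase1} under the symmetry $(d_1, R_l, H_{i,0}) \leftrightarrow (d_2, R_r, H_{i,1})$. The key fact I will invoke throughout is the one recorded in Lemma \ref{2edgecase3}: the graph $H_{i,1}$ contains $R_r$ and $d_2$ but excludes every edge of $R_l$ as well as $d_1$.

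For the first inequality, I take any shortest $d_1^-$-$m_{i,j}$ path $Q$ in $H_{i,1} - d_2 - d_3$. Since $d_1^-$ is attached only to vertices of $[s, s\oplus(l-1)]$, the first edge of $Q$ is some $(d_1^-, p)$ of weight $|sp| + N$. Replacing that edge by the genuine subpath $[s, p]$ of $G$ turns $Q$ into an $s$-$m_{i,j}$ walk in $G$ of length $|Q| - N$ that avoids $\set{d_1, d_2, d_3} \cup R_l$: the forbidden edges are either explicitly deleted in $H_{i,1} - d_2 - d_3$, or absent from $H_{i,1}$ in the first place. This establishes $|\pi_{H_{i,1}-d_2-d_3}(d_1^-,m_{i,j})| \geq |\og{s}{m_{i,j}}{(\set{d_1, d_2, d_3} \cup R_l)}| + N$, and the second inequality is trivial monotonicity of shortest-path length under enlargement of the forbidden set.

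For equality under the stated hypotheses, I reverse the construction. Let $P := \og{s}{m_{i,j}}{\set{d_1, d_2, d_3}}$. Because $P$ must avoid $d_1$ yet reach $m_{i,j}$, which lies past $d_1$ on $st$, $P$ leaves the prefix of $st$ at some vertex $p \in [s, s\oplus(l-1)]$, and by the unique-shortest-path assumption its portion from $s$ to $p$ coincides with $[s, p]$. Under the two hypotheses, after leaving $st$ at $p$ the remainder of $P$ uses only (i) edges outside $st$, which live in $H \subseteq H_{i,1}$, and (ii) edges inside $R_r$, which lie in $H_{i,1}$; it cannot use any edge of $R_l$ (ruled out by the intersection-in-$R_r$ hypothesis) nor any edge incident to a vertex of $[s\oplus(r+1), t]$ (ruled out by the vertex-avoidance hypothesis). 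Replacing the prefix $[s, p]$ of $P$ by the single edge $(d_1^-, p)$ therefore produces a legitimate $d_1^-$-$m_{i,j}$ path in $H_{i,1} - d_2 - d_3$ of length $|P| + N$, yielding the reverse inequality. Since $P$ already avoids $R_l$, it realizes $\og{s}{m_{i,j}}{(\set{d_1, d_2, d_3} \cup R_l)}$ as well, so all three quantities collapse.

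The one point meriting care is verifying that the simulated path in the previous paragraph really stays inside $H_{i,1} - d_2 - d_3$. This is where the second hypothesis earns its keep: $H_{i,1}$ also contains pieces of $st$ lying past $d_2$ (the odd-indexed ranges $Q_{i, j'}$ with $j' > j$), and without the vertex-avoidance condition on $[s\oplus(r+1), t]$ one could imagine a path that exits $st$ before $d_1$, re-enters $st$ strictly past $d_2$ via some non-$st$ edge, and uses those extra segments of $H_{i,1}$. The two hypotheses are calibrated precisely to forbid this pathology, and with them in hand the symmetric argument goes through.
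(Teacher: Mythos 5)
Your proof is correct and mirrors the paper's approach exactly: the paper handles Lemma~\ref{lemma:smcase2} by asserting that the argument for Lemma~\ref{lemma:smcase1} carries over under the symmetry $(d_1, R_l, H_{i,0}) \leftrightarrow (d_2, R_r, H_{i,1})$, and you have spelled out that symmetric argument in detail, correctly using the edge $(d_1^-,p)$ / subpath $[s,p]$ correspondence and the fact that $d_1$ and $R_l$ are absent from $H_{i,1}$. (A small quibble on your closing remark: the obstruction the vertex-avoidance hypothesis rules out is $P$ using edges of $st$ past $d_2$ that are \emph{not} in $H_{i,1}$, i.e. even-indexed ranges, rather than the path using the odd ranges that $H_{i,1}$ does contain; but this does not affect the correctness of the main argument.)
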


\begin{proof}
    We still use the same proof that corresponds to the edge $(d_1^-,p)$ and the subpath $[s,p]$, and note that $H_{i,1}$ does not intersect $R_l$ or $d_1$. The same arguments will also work for this lemma here.
\end{proof}

\begin{lemma}\label{lemma:sm}
    We can bound the length of $\og{s}{m_{i,j}}{\set{d_1, d_2, d_3}}$ by:
    \[|\og{s}{m_{i,j}}{\set{d_1, d_2, d_3}}| \leq \min \{ |\pi_{H_{i,0}-d_1-d_3}(d_1^-,m_{i,j})|,|\pi_{H_{i,1}-d_2-d_3}(d_1^-,m_{i,j})| \}-N.\]
    Moreover, if $\og{s}{m_{i,j}}{\set{d_1, d_2, d_3}}$ does not go through any node in $[s \oplus (r+1),t]$, then it will become equation:
    \[|\og{s}{m_{i,j}}{\set{d_1, d_2, d_3}}| = \min \{ |\pi_{H_{i,0}-d_1-d_3}(d_1^-,m_{i,j})|,|\pi_{H_{i,1}-d_2-d_3}(d_1^-,m_{i,j})| \} - N.\]
\end{lemma}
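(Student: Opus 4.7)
The inequality direction follows immediately by combining the two previous lemmas. From Lemma~\ref{lemma:smcase1} we have $|\pi_{H_{i,0}-d_1-d_3}(d_1^-,m_{i,j})| \geq |\og{s}{m_{i,j}}{\set{d_1, d_2, d_3}}| + N$, and from Lemma~\ref{lemma:smcase2} we have $|\pi_{H_{i,1}-d_2-d_3}(d_1^-,m_{i,j})| \geq |\og{s}{m_{i,j}}{\set{d_1, d_2, d_3}}| + N$. Taking the minimum of the two left-hand sides and subtracting $N$ yields the desired upper bound on $|\og{s}{m_{i,j}}{\set{d_1, d_2, d_3}}|$.

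For the equality case, assume $\og{s}{m_{i,j}}{\set{d_1, d_2, d_3}}$ does not pass through any vertex in $[s \oplus (r+1),t]$. The key structural observation is that this path cannot intersect both $R_l$ and $R_r$: if it did, then since $R_l$ and $R_r$ meet only at $m_{i,j}$, the path would revisit the interior of $R$ after leaving $m_{i,j}$, and one could shortcut by following the shortest subpath of $st$ starting from the first entry into $R$. By the unique shortest path assumption, any such shortcut strictly decreases the length, contradicting optimality. Hence the intersection of $\og{s}{m_{i,j}}{\set{d_1, d_2, d_3}}$ with $R$ lies entirely in $R_l$ or entirely in $R_r$.

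In the former case, Lemma~\ref{lemma:smcase1} applies (both hypotheses, avoiding $[s\oplus(r+1),t]$ and intersecting $R$ only in $R_l$, are satisfied) and gives $|\pi_{H_{i,0}-d_1-d_3}(d_1^-,m_{i,j})| = |\og{s}{m_{i,j}}{\set{d_1, d_2, d_3}}| + N$. In the latter case, Lemma~\ref{lemma:smcase2} applies analogously and gives $|\pi_{H_{i,1}-d_2-d_3}(d_1^-,m_{i,j})| = |\og{s}{m_{i,j}}{\set{d_1, d_2, d_3}}| + N$. If the intersection with $R$ is empty, either lemma applies (the intersection lies trivially in both $R_l$ and $R_r$). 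In every case, at least one of the two quantities in the minimum equals $|\og{s}{m_{i,j}}{\set{d_1, d_2, d_3}}| + N$, while by the inequality already proved neither can be smaller; thus the minimum attains this value, and subtracting $N$ yields the claimed equality.

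The only step that requires care is the dichotomy argument that the replacement path cannot hit both $R_l$ and $R_r$; everything else is a direct assembly of the two previous lemmas. This is the spot to be explicit about using the unique shortest path assumption and the fact that any subpath of a shortest path is itself a shortest path.
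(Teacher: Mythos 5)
Your proof is correct and follows essentially the same structure as the paper's: the inequality is a direct combination of Lemmas~\ref{lemma:smcase1} and~\ref{lemma:smcase2}, and the equality case rests on the dichotomy that the replacement path can intersect $R$ on at most one of $R_l$, $R_r$. One small nit: the phrase ``the path would revisit the interior of $R$ after leaving $m_{i,j}$'' is muddled, since $m_{i,j}$ is the terminal endpoint of the path and is never ``left''; the clean formulation (which your next clause in fact gives) is that from the \emph{first} point of $R$ that the path touches, the unique shortest path to $m_{i,j}$ in $G-\{d_1,d_2,d_3\}$ is the subpath of $st$ inside $R$, so the entire intersection with $R$ is a single contiguous piece confined to $R_l$ or to $R_r$. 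With that sentence tightened, the argument matches the paper's.
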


\begin{proof}
    The inequality is immediately obtained by considering both inequalities in Lemma \ref{lemma:smcase1} and Lemma \ref{lemma:smcase2}.
    
    Moreover, if we suppose $\og{s}{m_{i,j}}{\set{d_1, d_2, d_3}}$ does not go through any node in $[s \oplus (r+1),t]$. Now it can intersect $R$ only on one of $R_l$ or $R_r$ since it will take the shortest path using edges in $R$ after first intersecting $R$. Therefore, if it intersects $R$ only on $R_l$, from Lemma \ref{lemma:smcase1} we can obtain that the equation $|\og{s}{m_{i,j}}{\set{d_1, d_2, d_3}}| = |\pi_{H_{i,0}-d_1-d_3}(d_1^-,m_{i,j})|-N$ holds. Otherwise, it must intersect $R$ only on $R_r$, and therefore we know from Lemma \ref{lemma:smcase2} that $|\og{s}{m_{i,j}}{\set{d_1, d_2, d_3}}| = |\pi_{H_{i,1}-d_2-d_3}(d_1^-,m_{i,j})|-N$ holds. These will conclude the result.
\end{proof}

Symmetrically, we can obtain the length of the second part of the shortest path $|\og{m_{i,j}}{t}{\set{d_1, d_2, d_3}}|$. We state the following lemma here and the proof is necessarily the same as Lemma \ref{lemma:sm}.

\begin{lemma}\label{lemma:mt}
    We can bound the length of $\og{m_{i,j}}{t}{\set{d_1, d_2, d_3}}$ by:
    \[|\og{m_{i,j}}{t}{\set{d_1, d_2, d_3}}| {\leq} \min \{ |\pi_{H_{i,0}-d_1-d_3}(m_{i,j},d_2^+)|,|\pi_{H_{i,1}-d_2-d_3}(m_{i,j},d_2^+)| \} -N.\]
    Moreover, if $\og{s}{m_{i,j}}{\set{d_1, d_2, d_3}}$ {does not go through any node in $[s, s \oplus (l-1)]$}, then it will become equation:
    \[|\og{m_{i,j}}{t}{\set{d_1, d_2, d_3}}| = \min \{ |\pi_{H_{i,0}-d_1-d_3}(m_{i,j},d_2^+)|,|\pi_{H_{i,1}-d_2-d_3}(m_{i,j},d_2^+)| \} -N.\]
\end{lemma}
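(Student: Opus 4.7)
The plan is to mirror the proof of Lemma~\ref{lemma:sm} verbatim after swapping the roles of the prefix $[s,s\oplus(l-1)]$ and the suffix $[s\oplus(r+1),t]$, and the roles of ``first edge'' and ``last edge.'' Concretely, I would first prove two sub-lemmas analogous to Lemmas~\ref{lemma:smcase1} and~\ref{lemma:smcase2} that individually bound the two terms inside the $\min$, and then combine them just as in the proof of Lemma~\ref{lemma:sm}.

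For the first sub-lemma I consider $\pi_{H_{i,0}-d_1-d_3}(m_{i,j},d_2^+)$. Any $m_{i,j}$-$d_2^+$ path in $H_{i,0}-d_1-d_3$ must end with an edge $(q,d_2^+)$ for some $q\in[s\oplus(r+1),t]$ of weight $|qt|+N$, and replacing this edge by the subpath $qt$ in $G$ yields an $m_{i,j}$-$t$ path in $G-(\set{d_1,d_2,d_3}\cup R_r)$ of weight exactly $|\pi_{H_{i,0}-d_1-d_3}(m_{i,j},d_2^+)|-N$. Conversely, if $\og{m_{i,j}}{t}{\set{d_1,d_2,d_3}\cup R_r}$ converges to $st$ at some vertex $q\in[s\oplus(r+1),t]$, we may replace $qt$ by the edge $(q,d_2^+)$ to obtain an $m_{i,j}$-$d_2^+$ path in $H_{i,0}-d_1-d_3$. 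Together this gives
\[|\pi_{H_{i,0}-d_1-d_3}(m_{i,j},d_2^+)|\ \geq\ |\og{m_{i,j}}{t}{\set{d_1,d_2,d_3}\cup R_r}|+N\ \geq\ |\og{m_{i,j}}{t}{\set{d_1,d_2,d_3}}|+N,\]
with equality throughout whenever $\og{m_{i,j}}{t}{\set{d_1,d_2,d_3}}$ avoids $R_r$ and avoids the prefix $[s,s\oplus(l-1)]$ (so that its convergence point $q$ with $st$ really lies in $[s\oplus(r+1),t]$). The second sub-lemma is the exact mirror with $H_{i,1}-d_2$, $R_l$ in place of $R_r$, and the analogous ``only on $R_r$'' condition.

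Taking the minimum of the two inequalities in the sub-lemmas yields the general upper bound on $|\og{m_{i,j}}{t}{\set{d_1,d_2,d_3}}|$ stated in Lemma~\ref{lemma:mt}. For the equality case, assume the replacement path $\og{m_{i,j}}{t}{\set{d_1,d_2,d_3}}$ does not enter $[s,s\oplus(l-1)]$. Then by a standard shortcut argument (identical to the one used inside the proof of Lemma~\ref{lemma:sm}: if the path met both $R_l$ and $R_r$, replacing everything between its first and last intersections with $R$ by the $R$-subpath would shorten it), its intersection with $R$ lies entirely in exactly one of $R_l$ or $R_r$. In the first case the equality in sub-lemma one is attained; in the second case the equality in sub-lemma two is attained. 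Either way the minimum equals $|\og{m_{i,j}}{t}{\set{d_1,d_2,d_3}}|+N$.

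There is no substantive new obstacle: the only thing to be careful about is that we are now attaching the ``virtual'' endpoint $d_2^+$ to the \emph{end} of the path rather than $d_1^-$ to the beginning, so the bookkeeping about which of $R_l,R_r$ is present in $H_{i,0}$ versus $H_{i,1}$ and which exclusion corresponds to which case must be consistently tracked; once this is set up, the argument is literally symmetric to Lemma~\ref{lemma:sm}.
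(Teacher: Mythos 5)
Your proposal is correct and matches the paper's intent: the paper itself does not spell out a proof of Lemma~\ref{lemma:mt} beyond the remark that it "is necessarily the same as Lemma~\ref{lemma:sm}," and you are simply expanding that symmetry into two sub-lemmas analogous to Lemmas~\ref{lemma:smcase1} and~\ref{lemma:smcase2} (with $d_2^+$/suffix $[s\oplus(r+1),t]$ playing the role of $d_1^-$/prefix $[s,s\oplus(l-1)]$), and then combining them exactly as in the proof of Lemma~\ref{lemma:sm}. One minor note: the hypothesis in the lemma statement writes $\og{s}{m_{i,j}}{\set{d_1,d_2,d_3}}$, which appears to be a typo for $\og{m_{i,j}}{t}{\set{d_1,d_2,d_3}}$ (the path starting at $s$ trivially meets $[s,s\oplus(l-1)]$); you implicitly and correctly read the condition as applying to the $m_{i,j}$-to-$t$ subpath.
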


Now consider the case that $\og{s}{t}{\set{d_1, d_2, d_3}}$ goes through $m_{i,j}$. We claim below that we have an equation for the last value as in Lemma \ref{2edgecase4}.

\begin{lemma}\label{lemma:2edgescase4-2}
    If $\og{s}{t}{\set{d_1, d_2, d_3}}$ goes through $m_{i,j}$, then:
    \begin{itemize}
    \item $|\og{s}{m_{i,j}}{\set{d_1, d_2, d_3}}| = \min \{ |\pi_{H_{i,0}-d_1-d_3}(d_1^-,m_{i,j})|,|\pi_{H_{i,1}-d_2-d_3}(d_1^-,m_{i,j})| \} -N$;
    \item $|\og{m_{i,j}}{t}{\set{d_1, d_2, d_3}}| = \min \{ |\pi_{H_{i,0}-d_1-d_3}(m_{i,j},d_2^+)|,|\pi_{H_{i,1}-d_2-d_3}(m_{i,j},d_2^+)| \} -N$.
\end{itemize}
\end{lemma}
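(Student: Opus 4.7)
The plan is to reduce Lemma~\ref{lemma:2edgescase4-2} to the equality cases already established in Lemma~\ref{lemma:sm} and Lemma~\ref{lemma:mt}. Each of those lemmas already provides the ``$\leq$'' direction of the desired equation unconditionally, and the ``$=$'' direction under an avoidance hypothesis on a subpath (that $\og{s}{m_{i,j}}{\set{d_1,d_2,d_3}}$ avoids $[s\oplus(r+1),t]$, respectively $\og{m_{i,j}}{t}{\set{d_1,d_2,d_3}}$ avoids $[s,s\oplus(l-1)]$). Thus all that remains is to show that, whenever the full replacement path $\og{s}{t}{\set{d_1,d_2,d_3}}$ passes through $m_{i,j}$, these two avoidance hypotheses are automatic.

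I would first prove that $\og{s}{m_{i,j}}{\set{d_1,d_2,d_3}}$ contains no vertex of $[s\oplus(r+1),t]$. Suppose for contradiction that some such $q$ lies on it. Because the full replacement path passes through $m_{i,j}$, its $s$-to-$m_{i,j}$ prefix coincides with $\og{s}{m_{i,j}}{\set{d_1,d_2,d_3}}$ by the unique shortest path assumption; hence the full path visits $q$ before $m_{i,j}$, and its $q$-to-$t$ suffix is itself a shortest $q$-$t$ path in $G-\set{d_1,d_2,d_3}$ that contains $m_{i,j}$. On the other hand, the subpath $qt$ of the original $st$ lies entirely in $[q,t]$, avoids $d_1$ and $d_2$ (which are on $st$ before $q$) and avoids $d_3$ (which is not on $st$), so $qt$ survives in $G-\set{d_1,d_2,d_3}$, and being a subpath of a shortest $s$-$t$ path it is a shortest $q$-$t$ path there. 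Uniqueness of shortest paths then forces the suffix to coincide with $qt$; but $m_{i,j}$ lies strictly before $q$ on $st$ and therefore is not on $qt$, contradicting the fact that the suffix contains $m_{i,j}$.

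By a symmetric argument using the prefix $sp$ of $st$ for any putative $p\in[s,s\oplus(l-1)]$ on $\og{m_{i,j}}{t}{\set{d_1,d_2,d_3}}$, I would conclude that this second subpath avoids $[s,s\oplus(l-1)]$. With both avoidance conditions verified, Lemma~\ref{lemma:sm} applied to $\og{s}{m_{i,j}}{\set{d_1,d_2,d_3}}$ yields the first displayed equation, and Lemma~\ref{lemma:mt} applied to $\og{m_{i,j}}{t}{\set{d_1,d_2,d_3}}$ yields the second. The main obstacle is the uniqueness argument above: one must correctly identify that the suffix of the original replacement path from $q$ to $t$ is \emph{another} shortest $q$-$t$ path in $G-\set{d_1,d_2,d_3}$ distinct from $qt$ only if $qt$ itself fails in the damaged graph, and then observe that our failure set $\set{d_1,d_2,d_3}$ cannot break the tail $qt$. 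Once that observation is made the rest is a direct appeal to previously proved lemmas.
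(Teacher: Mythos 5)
Your proposal is correct and follows essentially the same route as the paper: reduce to the equality cases of Lemmas~\ref{lemma:sm} and~\ref{lemma:mt} and verify the two avoidance hypotheses by a uniqueness contradiction. Two small points of comparison. The paper derives the contradiction by forming the full $s$--$t$ path $\og{s}{q}{\set{d_1,d_2,d_3}}\circ qt$ and asserting it is ``shorter'' than $\og{s}{t}{\set{d_1,d_2,d_3}}$; strictly speaking that concatenation has the \emph{same} length (both $q$-$t$ pieces are shortest in $G-\set{d_1,d_2,d_3}$), so the contradiction really comes from having two distinct $s$-$t$ paths of equal minimum length. Your version, comparing the $q$-to-$t$ suffix directly to $qt$ as two candidate shortest $q$-$t$ paths, states that argument more cleanly. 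You also silently correct what looks like a typo in Lemma~\ref{lemma:mt}'s hypothesis (which is written in terms of $\og{s}{m_{i,j}}{\cdot}$ rather than $\og{m_{i,j}}{t}{\cdot}$); your reading is the one that makes the lemma usable, and it matches what the paper's proof of Lemma~\ref{lemma:2edgescase4-2} actually needs. One tiny expository gap worth closing in your write-up: ``being a subpath of a shortest $s$-$t$ path it is a shortest $q$-$t$ path there'' jumps from shortest in $G$ to shortest in $G-\set{d_1,d_2,d_3}$; the missing half-step is that edge removals can only increase distances, so a surviving shortest path remains shortest.
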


\begin{proof}
If $|\og{s}{t}{\set{d_1, d_2, d_3}}|$ goes through $m_{i,j}$, and $\og{s}{m_{i,j}}{\set{d_1, d_2, d_3}}$ intersects $[s\oplus (r+1),t]$ at any vertex $q$, then we consider the path $\og{s}{q}{\set{d_1, d_2, d_3}} \circ qt$, which does not go through $m_{i,j}$ and is shorter than $\og{s}{t}{\set{d_1, d_2, d_3}}$.This is impossible, so $\og{s}{m_{i,j}}{\set{d_1, d_2, d_3}}$ will not intersect $[s\oplus (r+1),t]$ at any node. Similarly, it is also impossible that $\og{s}{m_{i,j}}{\set{d_1, d_2, d_3}}$ intersects $[s, s \oplus (l-1)]$ at any node in this case. Therefore, from both equations in Lemma \ref{lemma:sm} and Lemma \ref{lemma:mt}, we can obtain this lemma.
\end{proof}

%\setred{TODO: Correct Theorem \ref{theorem:2edge} and put it to last.}

Now all cases are evaluated in the lemmas and we begin our final analysis for $\og{s}{t}{\set{d_1, d_2, d_3}}$. We can get it by a minimization among all the four cases above in this section.

\begin{theorem}(cf. Theorem \ref{theorem:2edge})
The length of the replacement path $|\og{s}{t}{\set{d_1, d_2, d_3}}|$ equals the minimum one in the following four values:
\begin{itemize}
    \item $|\hg{d_1^-}{d_2^+}{d_3}|-2N$ (as in Lemma \ref{2edgecase1});
    \item $|\pi_{H_{i,0}-d_1-d_3}(d_1^-,d_2^+)|-2N$ (as in Lemma \ref{2edgecase2});
    \item $|\pi_{H_{i,1}-d_2-d_3}(d_1^-,d_2^+)|-2N$ (as in Lemma \ref{2edgecase3});
    \item $\min \{ |\pi_{H_{i,0}-d_1-d_3}(d_1^-,m_{i,j})|,|\pi_{H_{i,1}-d_2-d_3}(d_1^-,m_{i,j})| \}$ 
    
    $+ \min \{ |\pi_{H_{i,0}-d_1-d_3}(m_{i,j},d_2^+)|,|\pi_{H_{i,1}-d_2-d_3}(m_{i,j},d_2^+)| \} -2N$ (adding two equations in Lemma \ref{lemma:2edgescase4-2}).
\end{itemize}
\end{theorem}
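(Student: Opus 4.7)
My plan is to observe that the theorem is a direct synthesis of the four case lemmas (Lemmas~\ref{2edgecase1}, \ref{2edgecase2}, \ref{2edgecase3}, \ref{2edgecase4}) together with the ``split at $m_{i,j}$'' lemmas (Lemmas~\ref{lemma:sm}, \ref{lemma:mt}, and \ref{lemma:2edgescase4-2}). I will establish the two directions separately.

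First I will show each of the four listed quantities is an upper bound on $|\og{s}{t}{\set{d_1, d_2, d_3}}|$. For the first three quantities this is immediate from the $\geq$ parts of Lemmas~\ref{2edgecase1}, \ref{2edgecase2}, and \ref{2edgecase3}. For the fourth, I will add the inequalities in Lemmas~\ref{lemma:sm} and~\ref{lemma:mt} and then apply the triangle inequality in $G-\set{d_1,d_2,d_3}$ (which is exactly the first inequality of Lemma~\ref{2edgecase4}) to pass from $|\og{s}{m_{i,j}}{\set{d_1, d_2, d_3}}| + |\og{m_{i,j}}{t}{\set{d_1, d_2, d_3}}|$ down to $|\og{s}{t}{\set{d_1, d_2, d_3}}|$.

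Next I will show the minimum is attained, by a case analysis on the edge-intersection of $\og{s}{t}{\set{d_1, d_2, d_3}}$ with the interval $R = [s \oplus l, s \oplus r]$ that lies strictly between $d_1$ and $d_2$ on $st$. The structural observation recorded just after Theorem~\ref{ReplacementPath}, which itself follows by applying Theorems~\ref{ReplacementPath} and~\ref{thm4-2} to the middle segment between two failures on $st$, tells me that this intersection is either empty or a single contiguous subpath of $R$. Hence exactly one of four cases occurs: the intersection is empty; or it is nonempty and contained in $R_l=[s \oplus l, m_{i,j}]$; or it is nonempty and contained in $R_r=[m_{i,j},s \oplus r]$; or it is a subpath containing the vertex $m_{i,j}$. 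In the first three cases, the equality part of Lemma~\ref{2edgecase1}, \ref{2edgecase2}, or \ref{2edgecase3} respectively yields equality for one of the first three listed values. In the last case, the path passes through $m_{i,j}$, so the equality part of Lemma~\ref{2edgecase4} rewrites $|\og{s}{t}{\set{d_1, d_2, d_3}}|$ as $|\og{s}{m_{i,j}}{\set{d_1, d_2, d_3}}| + |\og{m_{i,j}}{t}{\set{d_1, d_2, d_3}}|$, and Lemma~\ref{lemma:2edgescase4-2} rewrites each summand as the appropriate minimum over $H_{i,0}-d_1-d_3$ and $H_{i,1}-d_2-d_3$, matching the fourth listed value exactly.

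The main obstacle I foresee is justifying the exhaustiveness of the case analysis, i.e., ruling out an intersection with $R$ that consists of two disjoint subpaths straddling $m_{i,j}$ without actually containing $m_{i,j}$ as a vertex. This is precisely the structural consequence of Theorems~\ref{ReplacementPath} and~\ref{thm4-2} noted in Section~\ref{sec:prelim}: a replacement path with exactly two failures on the original shortest path intersects the middle interval in a single subpath (or not at all). Once this is in hand, combining the four upper bounds with the tight case identifies $|\og{s}{t}{\set{d_1, d_2, d_3}}|$ as the minimum of the four listed values.
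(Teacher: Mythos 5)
Your proof is correct and follows essentially the same two-step strategy as the paper: first show each of the four listed quantities upper-bounds $|\og{s}{t}{\set{d_1,d_2,d_3}}|$ via the inequality halves of Lemmas~\ref{2edgecase1}--\ref{2edgecase3}, \ref{lemma:sm}, \ref{lemma:mt}, and \ref{2edgecase4}, then do a case analysis on the intersection of the replacement path with $R$ and apply the equality halves. Your extra care in spelling out why the intersection with $R$ is a single contiguous interval (so that the four cases are exhaustive) is a nice clarification of something the paper asserts as a structural observation in Section~\ref{sec:prelim} and again at the start of Section~\ref{section:2edge} without detailed rederivation, but the underlying argument is the same.
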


\begin{proof}
    First, we obtain that these four values are all upper bounds of $|\og{s}{t}{\set{d_1, d_2, d_3}}|$. From the inequalities in Lemmas \ref{2edgecase1}, \ref{2edgecase2}, \ref{2edgecase3}, above, the first three values are at least $|\og{s}{t}{\set{d_1, d_2, d_3}}|$. Now for the last value, from Lemma \ref{lemma:sm} and \ref{lemma:mt} we know 
    $$\min \{ |\pi_{H_{i,1}-d_1-d_3}(m_{i,j},d_2^+)|, |\pi_{H_{i,1}-d_2-d_3}(m_{i,j},d_2^+)| \} - N \geq \og{s}{m_{i,j}}{\set{d_1, d_2, d_3}},$$ and $$\min \{ |\pi_{H_{i,0}-d_1-d_3}(m_{i,j},d_2^+)|, |\pi_{H_{i,1}-d_2-d_3}(m_{i,j},d_2^+)| \} - N \geq \og{m_{i,j}}{t}{\set{d_1, d_2, d_3}}.$$ So by adding them, the last value is at least $\og{s}{t}{\set{d_1, d_2, d_3}}$ by triangle inequality in Lemma \ref{2edgecase4}.
    
    Moreover, we consider the intersection of $\og{s}{t}{\set{d_1, d_2, d_3}}$ and $R$. 
    If it is empty, then from Lemma \ref{2edgecase1}, $|\hg{d_1^-}{d_2^+}{d_3}|-2N = |\og{s}{t}{\set{d_1, d_2, d_3}}|$; 
    If it is completely in $R_l$, then from Lemma \ref{2edgecase2}, $|\pi_{H_{i,0}-d_1-d_3}(d_1^-,d_2^+)|-2N = |\og{s}{t}{\set{d_1, d_2, d_3}}|$; 
    If it is completely in $R_r$, then from Lemma \ref{2edgecase3}, $|\pi_{H_{i,1}-d_2-d_3}(d_1^-,d_2^+)|-2N=|\og{s}{t}{\set{d_1, d_2, d_3}}|$; 
    If it contains $m_{i,j}$, then from Lemmas \ref{2edgecase4} and \ref{lemma:2edgescase4-2} the last value will be $|\og{s}{t}{\set{d_1, d_2, d_3}}|$. 
    
    Since one of the above four cases must be true for $\og{s}{t}{\set{d_1, d_2, d_3}}$, we know one upper bound must be exactly $|\og{s}{t}{\set{d_1, d_2, d_3}}|$. Therefore, the minimum of the four values must be $|\og{s}{t}{\set{d_1, d_2, d_3}}|$.
    
\end{proof}

This ends our proof of Theorem \ref{theorem:2edge} and finishes our analysis of the case.

%\setred{TODO: Try putting the algorithm before this chapter.}

%\setred{TODO: Proofread for this subsection.}

\section{Three Failed Edges on Original Shortest Path}\label{sec4}

In this section, we consider the last case that all three failures $d_1,d_2,d_3$ are on the original shortest path $st$ from left (i.e. close to $s$) to right (i.e. close to $t$), which cut $st$ into four intervals $D_1,D_2,D_3,D_4$, also from left to right. We prove the following theorem:

\begin{theorem}\label{thm:3edges}
There exists an algorithm that, given undirected $G$ with two vertices $s,t \in V(G)$, for all edges $d_1,d_2,d_3 \in st$, answers $|\og{s}{t}{\set{d_1, d_2, d_3}}|$ in $\too{n^3}$ time.
\end{theorem}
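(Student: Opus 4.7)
The plan is to classify $\og{s}{t}{\set{d_1,d_2,d_3}}$ according to which of the middle intervals $D_2,D_3$ its edges touch on $st$. If it uses edges from neither $D_2$ nor $D_3$, then structurally it is a 2-failure replacement path avoiding $\set{d_1,d_3}$ (with the understanding that it additionally avoids all of $D_2\cup D_3$), so it is captured by the DSO on the graph $H$ of Section~\ref{sec:1-edge}. If it uses edges from exactly one of $D_2,D_3$, the same structural argument plus the binary partition structure of Section~\ref{sec:2-edge} (treating the nontrivial pair among $\{d_1,d_2\},\{d_2,d_3\}$ as the ``two on-path failures'' and the remaining failed edge together with the untouched middle interval as the ``third failure'') handles it in $\tilde{O}(n^3)$ total time. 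So the real work is the case in which $\og{s}{t}{\set{d_1,d_2,d_3}}$ uses edges from both $D_2$ and $D_3$, which by Theorem~\ref{ReplacementPath} decomposes into four shortest subpaths glued by three ``jump'' edges, with the middle two subpaths lying on $D_2,D_3$ in one of two orders.

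For the hard case, I would first construct, in $\tilde{O}(n^3)$ preprocessing, an oracle answering the two primitives listed in the overview: (1) given vertex-disjoint ranges $R_1,R_2\subseteq st$ (so $R_1$ lies before $R_2$ on $st$), return the shortest path that leaves $R_1$ from its boundary, detours through $G-st$, re-enters $st$ inside $R_2$, and exits $R_2$ at its boundary; (2) the three-range variant that additionally requires the path to begin at $s$, detour over $d_1$, touch $R_1$, detour, touch $R_2$, and end at the boundary of $R_2$. I would store these paths at the nodes of a binary range tree over the vertices of $st$; there are $O(n^2)$ (resp.\ $O(n^3)$ for the three-range version) canonical (pair-of-tree-nodes, leading-edge) tuples, and for each the min over interior divergence/convergence vertices can be filled in $\tilde{O}(1)$ amortized via incremental edge-appended updates along $st$, reusing the $\odg{}{}{}$-style bookkeeping of Section~\ref{sec-inc}. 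Any query with arbitrary $R_1,R_2$ decomposes into $O(\log^2 n)$ canonical tuples and can be answered in $\tilde{O}(1)$.

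Using this oracle, I would iterate over every pair $(d_1,d_3)$ and handle all middle failures $d_2$ in $\tilde{O}(n)$ amortized per pair. Let $I_0\subseteq st$ be the interval strictly between $d_1$ and $d_3$. Pick the middle edge $e_1$ of $I_0$ and compute $P_1=\og{s}{t}{\set{d_1,d_3,e_1}}$ in $\tilde{O}(n)$ time by enumerating the pivot vertex $b\in I_0$ where $P_1$ crosses between $D_2$ and $D_3$ and using the oracle for each $b$. For any $d_2\in I_0$ with $d_2\notin P_1\cap st$, we have $\og{s}{t}{\set{d_1,d_2,d_3}}\in\{P_1,\og{s}{t}{\set{d_1,d_3,e_1,d_2}}\}$ and both are obtainable in $\tilde{O}(1)$ from the oracle, so such $d_2$ are resolved. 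The remaining unresolved $d_2$'s lie in $S_1:=(P_1\cap st)\cap I_0$, which is the union of at most two sub-intervals of $I_0$. Pick the middle edge $e_2$ of the longer sub-interval, compute $P_2=\og{s}{t}{\set{d_1,d_3,e_1,e_2\ldots}}$, resolve all $d_2\notin P_2\cap st$ as before, and recurse on $S_2=(P_2\cap st)\cap S_1$. A standard averaging argument shows that every two rounds shrink the total length of the unresolved region by a constant factor, so $O(\log n)$ rounds suffice; each round costs $\tilde{O}(n)$, giving $\tilde{O}(n)$ per $(d_1,d_3)$ and $\tilde{O}(n^3)$ overall.

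The main obstacle I anticipate is twofold. First, proving correctness of the pivot oracle when the replacement path visits $D_2$ and $D_3$ in either order, and when the jumps between shortest subpaths degenerate (e.g.\ one subpath collapses to a single vertex or an endpoint of $R_i$); each subcase must be matched to exactly one of the two query primitives, and uniqueness must be argued via the unique shortest path assumption. Second, making the shrinking step rigorous: one must show that after picking the middle edge $e_k$ of the longest remaining sub-interval, any $d_2$ still requiring attention lies in $P_{k+1}\cap st\cap S_k$, and that this forces a constant-factor decrease in total unresolved length every two iterations; the subtlety is that $P_{k+1}$ may re-use parts of $S_k$ that $P_k$ did not, so the argument has to account for the bounded number of sub-intervals in $P_k\cap I_0$ (at most two) and carefully charge the shrinking to the choice of $e_k$ as a midpoint. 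Once both are settled, combining with the oracle's $\tilde{O}(1)$ query and $\tilde{O}(n^3)$ preprocessing yields Theorem~\ref{thm:3edges}.
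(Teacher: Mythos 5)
Your overall route matches the paper's: classify $\og{s}{t}{\set{d_1,d_2,d_3}}$ by which of $D_2,D_3$ it uses, build range-tree oracles for the two query primitives, and for the hard ``snake-path'' case (uses both $D_2$ and $D_3$) iterate over pairs $(d_1,d_3)$ picking midpoints of the surviving unresolved region. Your potential function is different in form (linear total length decreasing by a constant factor every two rounds, versus the paper's quadratic $S^{(i)}=\|E_1^{(i)}\|^2+\|E_2^{(i)}\|^2$ decreasing by $5/8$ per round), but both give $O(\log n)$ rounds, so that part is fine.

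There is one genuine gap: your handling of the case where the path uses exactly one of $D_2,D_3$. You propose to feed this into the binary-partition machinery of Section~\ref{sec:2-edge} ``treating $\{d_1,d_2\}$ as the two on-path failures and the remaining failed edge together with the untouched interval as the third failure.'' That machinery is built around a \emph{single} extra failed edge $d_3\notin st$ that is queried from a 1-failure DSO on each $H_{i,\cdot}-d$. Here $d_3\in st$, and the object you want to avoid, $\{d_3\}\cup D_3$ (or $\{d_2\}\cup D_2$), is an entire interval whose position varies with $(d_2,d_3)$ and does not coincide with the removed ranges of any fixed $H_{i,0}/H_{i,1}$; one cannot treat it as a single DSO failure, nor afford to re-delete it as $d_2$ varies. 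You do not actually need Section~\ref{sec:2-edge} at all: the three-range oracle you already construct captures these types directly. For the path that uses only $D_2$, a single query $B^r(d_1,D_2,D_4)$ (in the paper's notation) gives the answer, and symmetrically $B^r(d_1,D_3,D_4)$ for the $D_3$-only case. Once you replace the Section~\ref{sec:2-edge} appeal with these direct oracle queries, the rest of your proof goes through.
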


In this section we still assume that there are $2^k$ edges on $st$ for some integer $k=O(\log n)$. Let the replacement path be $\pi'=\pi_{G-\{d_1,d_2,d_3\}}(s,t)$. It is straightforward to observe the followings:

\begin{observation}[informal] The replacement path $\pi'$ satisfies:
\begin{itemize}
    \item It diverges at $D_1$ at first, and converges at $D_4$ at last.
    \item It may converges and diverges at $D_2$, $D_3$, or both of them (in any order) in the middle.
    \item Between a convergence point and next divergence point on $st$, it will take the shortest path on $st$.
    \item Between a divergence point and next convergence point on $st$, it will take the shortest path in $G-st$.
\label{obs:3failure}
\end{itemize}    
\end{observation}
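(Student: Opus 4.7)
The observation makes four structural claims about $\pi'=\pi_{G-\{d_1,d_2,d_3\}}(s,t)$, and I would handle each with a short direct argument using only the definitions of divergence and convergence from Section~\ref{sec:prelim} together with the optimality of $\pi'$.

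For the first bullet, the plan is to observe that $s\in D_1$ and $\pi'$ avoids $d_1$, so $\pi'$ cannot remain on $st$ all the way through $d_1$; hence its first divergence vertex lies at or before the $s$-side endpoint of $d_1$, i.e., inside $D_1$. The symmetric argument at the other end places the last convergence inside $D_4$. The second bullet is then a definitional observation: between the first divergence (in $D_1$) and the last convergence (in $D_4$), any further intersection of $\pi'$ with $st$ consists of vertices of $st$, and because $\pi'$ avoids $\{d_1,d_2,d_3\}$, any $st$-edges it traverses must lie wholly inside $D_2$ or wholly inside $D_3$; undirectedness of $G$ allows these visits in either order.

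For the third bullet, fix a convergence point $c$ on $\pi'$ with next divergence $d$. By definition of ``next divergence,'' every edge of $\pi'$ strictly between $c$ and $d$ is an edge of $st$, so the $c$-to-$d$ subpath of $\pi'$ is a walk along $st$. Since shortest paths are simple, this walk is a subpath of $st$, which is itself a shortest path in $G$, and it automatically avoids $\{d_1,d_2,d_3\}$ because $\pi'$ does. For the fourth bullet, fix a divergence point $p$ with next convergence $q$; by the dual definition, every edge of $\pi'$ between $p$ and $q$ lies outside $st$, so this subpath is a $p$-$q$ path in $G-st$. If a strictly shorter $p$-$q$ path existed in $G-st$, substituting it into $\pi'$ would still avoid $\{d_1,d_2,d_3\}$ (which are edges of $st$) and would yield a shorter $s$-$t$ path in $G-\{d_1,d_2,d_3\}$, contradicting the optimality of $\pi'$; combined with the unique shortest path assumption, this forces the subpath to be exactly the shortest $p$-$q$ path in $G-st$.

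The main delicacy, and the step I would check most carefully, is the substitution argument in the fourth bullet: one needs to ensure that splicing a different $p$-$q$ path into $\pi'$ still yields a simple $s$-$t$ path, since reusing a vertex could in principle invalidate the comparison. This is handled by the standard observation that any vertex repetition introduced by the substitution could be shortcut, which again would contradict the optimality of $\pi'$; hence the splice is simple and the comparison is legitimate. Everything else reduces to bookkeeping on the definitions of divergence and convergence, together with the fact that $\pi'$, being a shortest path, is simple.
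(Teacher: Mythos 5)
Your proposal is correct for the informal observation as stated: the arguments for the first, third, and fourth bullets all follow from the definitions of divergence/convergence together with simplicity and optimality of $\pi'$, and that is exactly the kind of reasoning the paper tacitly relies on here (the paper gives no separate proof for the informal version and moves straight to the formal case split). Two small remarks. First, in bullet four, you are right to flag the splicing step; note also that the detour from a divergence point $p$ to the next convergence point $q$ may touch $st$-vertices but uses no $st$-edges (since touching a vertex without using an $st$-edge is not a convergence by the paper's definition), so it is genuinely a path in $G-st$ and the substitution argument applies. Second, the place where your proof is weaker than what the paper ultimately needs is bullet two: you show that any $st$-edges traversed between $D_1$ and $D_4$ lie inside $D_2$ or $D_3$, which suffices for the informal ``may converge/diverge'' statement, but the paper's subsequent formal version (Types 1--5) relies on the stronger fact that in an undirected graph $\pi'$ never reconverges on an interval it has already diverged from, which is what caps the middle segment at two visits, one per interval. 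That additional observation is not in your argument; it is harmless for the statement you were given, but worth keeping in mind since the formal observation is what the algorithm actually uses.
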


With this observation, we can first compute all-pairs shortest paths in $G-st$, which makes it possible to simplify our analysis to purely interval queries for the nodes on the original shortest path $st$. 

Therefore, in this section, we use a \textbf {range tree structure} from Section \ref{sec:binary}, Namely: \begin{itemize}
    \item The ranges contained in the range tree are all $\{Q_{i,j}\}$ in the binary partition structure.
    \item For any range $\{Q_{i,j}\}$ with $i<k$, let its childs to be $Q_{i+1,2j}$ and $Q_{i+1,2j+1}$, which divide it into two subranges.
    %\item In this section every single vertex on $st$ is also in the range tree structure, whose parent is the single edge $\{Q_{i,j}\}$ of 
\end{itemize}

Note that any interval on $st$ can be represented as a union of at most $O(\log n)$ ranges in the range tree. In this section, for any interval $R$ on $st$, we use $R^l$ to denote the leftmost vertex of $R$, and use $R^r$ to denote the rightmost vertex of $R$.

Below, we build an oracle based on the range tree structure. Moreover, we also show how to use these oracles for 3-fault queries.

\subsection{Oracle Build-up}\label{sec:3-fault-oracle}

% Below for any range $R$ on $st$, we use $R^l$ to denote the leftmost vertex of $R$, and use $R^r$ to denote the rightmost vertex of $R$.

\begin{theorem}\label{thm:oracle-A}
   For a graph $G$ and vertices $s,t$, we can construct an oracle $A$ in $\too{n^3}$ preprocessing time that answers the following type of query in $\too{1}$ time: given two vertex-disjoint intervals $R_1$ and $R_2$ of consecutive edges on the original shortest path $st$ ($R_1, R_2$ can be single vertices), answer the shortest path that starts from the leftmost (rightmost) vertex of $R_1$, diverges from $R_1$, converges in $R_2$, and finally ends at the leftmost (rightmost) vertex of $R_2$. Denote these by: $A^{l,l}(R_1,R_2) = \min_{x\in R_1, y\in R_2}\{\zdd{R_1}x\circ \pi_{G-st}(x, y) \circ y\zdd{R_2}\}$ and $A^{l,r},A^{r,l},A^{r,r}$ are defined accordingly.   
   %Denote it by $A^l(d_1,R_1) = \min_{y\leq a, x \in R_1} \{sy \circ\pi_{G-st}(y, x) \circ x\zdd{R_1}\}$ and $A_1^r(d_1,R_1) =  \min_{y\leq a, x \in R_1} \{sy\circ\pi_{G-st}(y, x) \circ x\ydd{R_1}\}$, where $x,y$ are on $\pi(s,t)$. 
\end{theorem}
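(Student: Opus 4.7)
The plan is to build the oracle by dynamic programming over the binary range tree on $st$ from Section~\ref{sec:binary}. The key observation is that a path of the target form decomposes into an on-$st$ prefix $\zdd{R_1}\,x$ (resp.\ $\ydd{R_1}\,x$), an off-$st$ middle $\pi_{G-st}(x, y)$, and an on-$st$ suffix $y\,\zdd{R_2}$ (resp.\ $y\,\ydd{R_2}$). So I would minimize over the divergence point $x$ and the convergence point $y$ by two nested DPs on the range tree, precomputing the four variants for every pair of range-tree ranges, and then answer queries on arbitrary intervals via the standard range-tree decomposition into $O(\log n)$ pieces.

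First, run APSP on $G - st$ in $\tilde{O}(n^3)$ time to obtain all $|\pi_{G-st}(x, y)|$ together with a witness next-hop table for path retrieval. Then, for each range-tree range $R$ and each vertex $y \in st$, define and compute
\[ B^l(R, y) := \min_{x \in R}\bigl\{ |\zdd{R}\,x| + |\pi_{G-st}(x, y)| \bigr\} \]
bottom-up: at a singleton $R = \{x\}$ set $B^l(R, y) = |\pi_{G-st}(x, y)|$, and at an internal node $R = R' \cup R''$, using $\zdd{R} = \zdd{R'}$,
\[ B^l(R, y) = \min\bigl\{ B^l(R', y),\ |\zdd{R}\,\zdd{R''}| + B^l(R'', y) \bigr\}, \]
where the offset is read off from $st$ in $O(1)$. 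Since there are $O(n)$ ranges and $O(n)$ choices of $y$, this table costs $O(n^2)$; symmetrically build $B^r$.

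Next, for each vertex-disjoint pair of range-tree ranges $(R_1, R_2)$, compute
\[ A^{l,l}(R_1, R_2) = \min_{y \in R_2}\bigl\{ B^l(R_1, y) + |y\,\zdd{R_2}| \bigr\} \]
by an analogous bottom-up DP on $R_2$, and the other three variants by pairing the appropriate $B^{\cdot}$ table with the corresponding reference endpoint of $R_2$. Keeping back-pointers at each $\min$ allows $O(1)$-per-edge path retrieval. Summed over $O(n^2)$ pairs, this phase is again $O(n^2)$, and the total preprocessing is dominated by the $\tilde{O}(n^3)$ APSP.

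For a query on arbitrary intervals $R_1, R_2$, decompose $R_1 = \bigcup_i S_i$ and $R_2 = \bigcup_j T_j$ into $O(\log n)$ range-tree pieces each. Because every $x \in R_1$ lies in a unique $S_i$ with $|\zdd{R_1}\,x| = |\zdd{R_1}\,\zdd{S_i}| + |\zdd{S_i}\,x|$ (and likewise for $y$), we have
\[ A^{l,l}(R_1, R_2) = \min_{i,j}\bigl\{ |\zdd{R_1}\,\zdd{S_i}| + A^{l,l}(S_i, T_j) + |\zdd{T_j}\,\zdd{R_2}| \bigr\}, \]
answered in $O(\log^2 n)$ time; the three other variants are analogous. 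The main obstacle I anticipate is the bookkeeping around degenerate single-vertex ranges and around vertex-disjointness of the pairs $(S_i, T_j)$ when pieces of the decompositions of $R_1$ and $R_2$ share a boundary vertex; this will require defining $\pi_{G-st}(x, x) = 0$ and verifying that such degenerate contributions cannot beat the true optimum inside the outer $\min$, so that the precomputed values on vertex-disjoint range-tree pairs still suffice to realize it.
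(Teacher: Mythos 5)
Your approach is correct and essentially the same as the paper's: both build a bottom-up dynamic program over the binary range tree (the paper folds your two-stage construction $B^l(R,y)\to A^{l,l}(R_1,R_2)$ into a single merging rule, Lemma~\ref{lemma:merging-A}) after $\tilde{O}(n^3)$ APSP on $G-st$, and answer queries by decomposing each interval into $O(\log n)$ range-tree pieces. Your closing worry about boundary-vertex degeneracies is unnecessary: since $R_1$ and $R_2$ are assumed vertex-disjoint, every decomposition pair $(S_i,T_j)$ with $S_i\subseteq R_1$ and $T_j\subseteq R_2$ is automatically vertex-disjoint, so only the already-precomputed entries are ever consulted.
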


\begin{center}
    \tikzset{every picture/.style={line width=0.75pt}} %set default line width to 0.75pt        

\begin{tikzpicture}[x=0.75pt,y=0.75pt,yscale=-1,xscale=1]
%uncomment if require: \path (0,771); %set diagram left start at 0, and has height of 771

%Straight Lines [id:da7713437168162701] 
\draw    (40,241.55) -- (401,241.55) ;
%Straight Lines [id:da40033668732376504] 
\draw [color={rgb, 255:red, 144; green, 19; blue, 254 }  ,draw opacity=0.5 ][line width=1.5]    (260,241.55) -- (302,241.55) ;
%Straight Lines [id:da4350334896050976] 
\draw    (279.6,272.35) -- (260,272.35) ;
\draw [shift={(260,272.35)}, rotate = 360] [color={rgb, 255:red, 0; green, 0; blue, 0 }  ][line width=0.75]    (0,5.59) -- (0,-5.59)   ;
%Straight Lines [id:da5550608348508895] 
\draw    (298,272.35) -- (317,272.35) ;
\draw [shift={(317,272.35)}, rotate = 180] [color={rgb, 255:red, 0; green, 0; blue, 0 }  ][line width=0.75]    (0,5.59) -- (0,-5.59)   ;
%Curve Lines [id:da7685603627403287] 
\draw [color={rgb, 255:red, 144; green, 19; blue, 254 }  ,draw opacity=0.5 ][line width=1.5]    (148.8,241.55) .. controls (200,186.35) and (249.2,185.95) .. (302,241.55) ;
%Straight Lines [id:da45753946368877263] 
\draw    (155.6,272.35) -- (136,272.35) ;
\draw [shift={(136,272.35)}, rotate = 360] [color={rgb, 255:red, 0; green, 0; blue, 0 }  ][line width=0.75]    (0,5.59) -- (0,-5.59)   ;
%Straight Lines [id:da38370463274850586] 
\draw    (174,272.35) -- (193,272.35) ;
\draw [shift={(193,272.35)}, rotate = 180] [color={rgb, 255:red, 0; green, 0; blue, 0 }  ][line width=0.75]    (0,5.59) -- (0,-5.59)   ;
%Straight Lines [id:da07857987119117704] 
\draw [color={rgb, 255:red, 144; green, 19; blue, 254 }  ,draw opacity=0.5 ][line width=1.5]    (193,241.55) -- (148.8,241.55) ;

% Text Node
\draw (398,244.95) node [anchor=north west][inner sep=0.75pt]  [font=\footnotesize]  {$t$};
% Text Node
\draw (37,244.95) node [anchor=north west][inner sep=0.75pt]  [font=\footnotesize]  {$s$};
% Text Node
\draw (257,244.95) node [anchor=north west][inner sep=0.75pt]  [font=\footnotesize]  {$R_{1}^{l}$};
% Text Node
\draw (296,244.95) node [anchor=north west][inner sep=0.75pt]  [font=\footnotesize]  {$x$};
% Text Node
\draw (281,267.75) node [anchor=north west][inner sep=0.75pt]  [font=\footnotesize]  {$R_{1}$};
% Text Node
\draw (144.6,244.95) node [anchor=north west][inner sep=0.75pt]  [font=\footnotesize]  {$y$};
% Text Node
\draw (157,267.75) node [anchor=north west][inner sep=0.75pt]  [font=\footnotesize]  {$R_{2}$};
% Text Node
\draw (188.2,244.95) node [anchor=north west][inner sep=0.75pt]  [font=\footnotesize]  {$R_{2}^{r}$};
% Text Node
\draw (196.67,182.45) node [anchor=north west][inner sep=0.75pt]  [font=\footnotesize]  {$A^{l,r}( R_{1} ,R_{2})$};

\end{tikzpicture}
\end{center}

Note that we can store the path by storing the positions of $x$ and $y$. First, we prove the following lemma.

\begin{lemma}\label{lemma:merging-A}
    If the interval $R_1$ on $st$ is the union of intervals $R_1'$ and $R_1''$ ($\ydd{R_1'}=\zdd{R_1''}$ or there is an edge connecting $\ydd{R_1'}$ and $\zdd{R_1''}$), and we already have $A^{l,l}(R_1',R_2)$ and $A^{l,l}(R_1'',R_2)$, then we can obtain $A^{l,l}(R_1,R_2)$ in $O(1)$ time. Symmetrically, we can get $A^{l,l}(R_1,R_2)$ from $A^{l,l}(R_1,R_2')$ and $A^{l,l}(R_1,R_2'')$ if $R_2$ is the union of $R_2'$ and $R_2''$. This holds for all other $A^{l,r},A^{r,l}$ and $A^{r,r}$ cases as well.
\end{lemma}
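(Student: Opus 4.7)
The plan is to reduce the merge to a two-way case analysis on which half of the split contains the optimal divergence vertex $x \in R_1$, using a one-time precomputation of prefix sums of edge weights along the original shortest path $st$. With these prefix sums, the length $|uv|$ of any subpath $uv \subseteq st$ is obtained in $O(1)$ time, and this is the only ingredient, besides the two precomputed $A$-values, needed for the merge.

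Concretely, for $A^{l,l}(R_1, R_2)$ with $R_1 = R_1' \cup R_1''$ and $R_1'$ lying to the left of $R_1''$ (so $\zdd{R_1} = \zdd{R_1'}$ and $\ydd{R_1} = \ydd{R_1''}$), I would split on where the optimal $x$ lies. If $x \in R_1'$, then the prefix $\zdd{R_1} x$ coincides with $\zdd{R_1'} x$ and stays inside $R_1'$, so the candidate cost $\zdd{R_1} x \circ \pi_{G-st}(x,y) \circ y \zdd{R_2}$ is exactly the one being minimized by $A^{l,l}(R_1', R_2)$; thus the minimum over this case is $A^{l,l}(R_1', R_2)$. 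If $x \in R_1''$, then $\zdd{R_1} x$ decomposes as $\zdd{R_1}\zdd{R_1''} \circ \zdd{R_1''} x$, going through either the shared vertex or the connecting edge, so its length is $|\zdd{R_1}\zdd{R_1''}| + |\zdd{R_1''} x|$ and the total candidate cost becomes $|\zdd{R_1}\zdd{R_1''}| + (\zdd{R_1''} x \circ \pi_{G-st}(x,y) \circ y \zdd{R_2})$; the minimum over this case is $|\zdd{R_1}\zdd{R_1''}| + A^{l,l}(R_1'', R_2)$. Combining the two yields
\[
A^{l,l}(R_1, R_2) = \min\bigl\{A^{l,l}(R_1', R_2),\ |\zdd{R_1}\zdd{R_1''}| + A^{l,l}(R_1'', R_2)\bigr\},
\]
which is an $O(1)$ computation given the prefix sums. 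Splitting $R_2 = R_2' \cup R_2''$ is handled symmetrically: if $y \in R_2'$ the cost matches $A^{l,l}(R_1, R_2')$, and if $y \in R_2''$ the ending subpath $y\zdd{R_2}$ factors as $y\zdd{R_2''} \circ \zdd{R_2''}\zdd{R_2}$, giving $A^{l,l}(R_1, R_2'') + |\zdd{R_2''}\zdd{R_2}|$. The variants $A^{l,r}, A^{r,l}, A^{r,r}$ follow by swapping the relevant anchoring endpoint from $\zdd{\cdot}$ to $\ydd{\cdot}$ on the appropriate side; the case-analysis structure and the prefix-sum bookkeeping are identical.

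I do not anticipate a genuine obstacle in this lemma: it is essentially a bookkeeping identity that separates a minimum over $R_1 \times R_2$ into two minima over the halves and absorbs the overlap as a constant offset along $st$. The only slightly delicate point is uniformly handling the two ways in which $R_1'$ and $R_1''$ can abut (shared vertex versus connecting edge); both are covered because $|\zdd{R_1}\zdd{R_1''}|$ is defined along $st$ and automatically includes the weight of any connecting edge. No unique-shortest-path or proper-form assumption is needed, so the identity transfers verbatim to all four flavors of $A$.
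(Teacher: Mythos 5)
Your proof is correct and takes essentially the same approach as the paper: both split on whether the optimal divergence vertex $x$ lies in $R_1'$ or $R_1''$, recovering $A^{l,l}(R_1',R_2)$ in the first case and $\zdd{R_1}\zdd{R_1''} \circ A^{l,l}(R_1'',R_2)$ in the second. Your additional remark about prefix sums along $st$ is a useful implementation detail but does not change the argument.
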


\begin{proof}
    W.l.o.g., assume $R_1'$ is on the left of $R_1''$. To get $A^{l,l}(R_1,R_2)= \min_{x\in R_1, y\in R_2}\{\zdd{R_1}x\circ \pi_{G-st}(x, y) \circ y\zdd{R_2}\}$, consider the position of $x$. If $x$ is in $R_1'$, then the path simply avoids $R_1''$, so it equals $A^{l,l}(R_1',R_2)$. If $x$ is in $R_1''$, then the path first goes through $R_1'$ to $\zdd{R_1''}$, then travels through $A^{l,l}(R_1'',R_2)$. Therefore,
    $$A^{l,l}(R_1,R_2)=\min\left\{A^{l,l}(R_1',R_2),\zdd{R_1'}\zdd{R_1''}\circ A^{l,l}(R_1'',R_2)\right\}$$
    Other cases can be proved similarly.
    \begin{center}
        \tikzset{every picture/.style={line width=0.75pt}} %set default line width to 0.75pt        

\begin{tikzpicture}[x=0.75pt,y=0.75pt,yscale=-1,xscale=1]
%uncomment if require: \path (0,185); %set diagram left start at 0, and has height of 185

%Straight Lines [id:da5358567828674866] 
\draw    (40,91.55) -- (401,91.55) ;
%Straight Lines [id:da5659380944410126] 
\draw [color={rgb, 255:red, 144; green, 19; blue, 254 }  ,draw opacity=0.5 ][line width=1.5]    (260,91.55) -- (302,91.55) ;
%Straight Lines [id:da7664269210839472] 
\draw    (279.6,122.35) -- (260,122.35) ;
\draw [shift={(260,122.35)}, rotate = 360] [color={rgb, 255:red, 0; green, 0; blue, 0 }  ][line width=0.75]    (0,5.59) -- (0,-5.59)   ;
%Straight Lines [id:da6967228947006948] 
\draw    (298,122.35) -- (317,122.35) ;
\draw [shift={(317,122.35)}, rotate = 180] [color={rgb, 255:red, 0; green, 0; blue, 0 }  ][line width=0.75]    (0,5.59) -- (0,-5.59)   ;
%Curve Lines [id:da928169277804989] 
\draw [color={rgb, 255:red, 144; green, 19; blue, 254 }  ,draw opacity=0.5 ][line width=1.5]    (160,91.6) .. controls (211.2,36.4) and (249.2,35.95) .. (302,91.55) ;
%Straight Lines [id:da7256132202715087] 
\draw    (155.6,122.35) -- (136,122.35) ;
\draw [shift={(136,122.35)}, rotate = 360] [color={rgb, 255:red, 0; green, 0; blue, 0 }  ][line width=0.75]    (0,5.59) -- (0,-5.59)   ;
%Straight Lines [id:da2023653999838001] 
\draw    (174,122.35) -- (193,122.35) ;
\draw [shift={(193,122.35)}, rotate = 180] [color={rgb, 255:red, 0; green, 0; blue, 0 }  ][line width=0.75]    (0,5.59) -- (0,-5.59)   ;
%Straight Lines [id:da33156181088149717] 
\draw [color={rgb, 255:red, 144; green, 19; blue, 254 }  ,draw opacity=0.5 ][line width=1.5]    (160,91.55) -- (79,91.55) ;
%Straight Lines [id:da38027534541529995] 
\draw    (98.6,122.3) -- (79,122.3) ;
\draw [shift={(79,122.3)}, rotate = 360] [color={rgb, 255:red, 0; green, 0; blue, 0 }  ][line width=0.75]    (0,5.59) -- (0,-5.59)   ;
%Straight Lines [id:da2808461973630846] 
\draw    (117,122.3) -- (136,122.3) ;
\draw [shift={(136,122.3)}, rotate = 180] [color={rgb, 255:red, 0; green, 0; blue, 0 }  ][line width=0.75]    (0,5.59) -- (0,-5.59)   ;

% Text Node
\draw (398,94.95) node [anchor=north west][inner sep=0.75pt]  [font=\footnotesize]  {$t$};
% Text Node
\draw (37,94.95) node [anchor=north west][inner sep=0.75pt]  [font=\footnotesize]  {$s$};
% Text Node
\draw (257,90.95) node [anchor=north west][inner sep=0.75pt]  [font=\footnotesize]  {$\zdd{R_{2}}$};
% Text Node
\draw (296,94.95) node [anchor=north west][inner sep=0.75pt]  [font=\footnotesize]  {$y$};
% Text Node
\draw (281,117.75) node [anchor=north west][inner sep=0.75pt]  [font=\footnotesize]  {$R_{2}$};
% Text Node
\draw (74,90.28) node [anchor=north west][inner sep=0.75pt]  [font=\footnotesize]  {$\zdd{R'_{1}}$};
% Text Node
\draw (157,117.75) node [anchor=north west][inner sep=0.75pt]  [font=\footnotesize]  {$R''_{1}$};
% Text Node
\draw (155.2,94.95) node [anchor=north west][inner sep=0.75pt]  [font=\footnotesize]  {$x$};
% Text Node
\draw (100,117.7) node [anchor=north west][inner sep=0.75pt]  [font=\footnotesize]  {$R'_{1}$};
% Text Node
\draw (178,26.45) node [anchor=north west][inner sep=0.75pt]  [font=\footnotesize]  {$\zdd{R'_{1}}\zdd{R''_{1}} \circ A_{2}^{l,l}( R'_{1} ,R_{2})$};
% Text Node
\draw (128.67,90.95) node [anchor=north west][inner sep=0.75pt]  [font=\footnotesize]  {$\zdd{R''_{1}}$};

\end{tikzpicture}
    \end{center}
\end{proof}

Thus, the data structure of \cref{thm:oracle-A} stores the paths for ranges $R_1$ and $R_2$ in the range tree structure or single vertices. The preprocessing and query algorithms for $A^{l,l}$ are as follows, and $A^{l,r},A^{r,l},A^{r,r}$ are similar.
\begin{itemize}
    \item Preprocessing: First compute $A^{l,l}(a,b)$ for all vertices $a,b$ on $st$, which can be done by APSP in $G-st$ in $O(n^3)$ time. Then by \cref{lemma:merging-A}, we can compute $A^{l,l}(R_1,R_2)$ for all vertex-disjoint ranges $R_1$ and $R_2$ in the range tree structure in $O(n^2)$ time.
    \item Query: Since every interval is the union of $O(\log n)$ ranges in the range tree structure (or a single vertex), by \cref{lemma:merging-A}, we can get $A^{l,l}(R_1,R_2)$ for any $R_1$ and $R_2$ in $\too{1}$ time.
\end{itemize}

This completes the proof of \cref{thm:oracle-A}.

\begin{theorem}\label{thm:oracle-B}
   For a graph $G$ and vertices $s,t$, we can construct an oracle $B$ in $\too{n^3}$ preprocessing time, and answer the following type of query in $\too{1}$ time: given an edge $d_1=(a,b)$ on $st$ and two vertex-disjoint intervals $R_1$ and $R_2$ of consecutive edges on $st$ after $d_1$ ($R_1$ can be on the left or right of $R_2$), answer the shortest path that starts from $s$, diverges before $d_1$, converges in $R_1$, and diverges from $R_1$, then converges in $R_2$, and finally ends at the leftmost (rightmost) point of $R_2$. Denote them by 
    $$B^l(d_1,R_1,R_2) = \min_{x\leq a, y_1, y_2\in {R_1}, z\in {R_2}} \{sx\circ \pi_{G-st}(x, y_1)\circ y_1y_2\circ \pi_{G-st}(y_2, z)\circ z\zdd{R_2}\}$$ 
    $$B^r(d_1,R_1,R_2) = \min_{x\leq a, y_1, y_2\in {R_1}, z\in {R_2}} \{sx\circ \pi_{G-st}(x, y_1)\circ y_1y_2\circ \pi_{G-st}(y_2, z)\circ z\ydd{R_2}\}$$ 
\end{theorem}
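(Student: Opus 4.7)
The plan is to combine oracle $A$ of \cref{thm:oracle-A} with a ``prefix'' table capturing the segment $sx\circ\pi_{G-st}(x,y_1)$ that leads from $s$ into $R_1$, and then collapse the minimization over $y_1,y_2\in R_1$ via a range-tree DP. The auxiliary tables are: (i)~$U(d_1,y)=\min_{x\le a}\{|sx|+\pi_{G-st}(x,y)\}$ for every $d_1=(a,b)\in st$ and every $y\in st$ after $d_1$, computed in $O(n^3)$ via APSP on $G-st$ plus enumeration of $x$ (exactly as in \cref{sec:two-2FRP}); (ii)~for every $d_1$ and every range $R$ of the range tree from \cref{sec:binary} (including single-vertex leaves), $E^l(d_1,R)=\min_{y\in R}\{U(d_1,y)+|y\zdd{R}|\}$ and $E^r(d_1,R)=\min_{y\in R}\{U(d_1,y)+|y\ydd{R}|\}$, which satisfy $E^l(d_1,R'\cup R'')=\min\{E^l(d_1,R'),\,E^l(d_1,R'')+|\zdd{R''}\zdd{R}|\}$ when $R'$ lies left of $R''$ (and symmetrically for $E^r$), and thus fill in $O(n^2)$ time bottom-up across all $d_1$.

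\textbf{Core DP.} Compute $B^l(d_1,R_1,R_2)$ for every $d_1$ and every disjoint pair of tree ranges $(R_1,R_2)$ by DP on $R_1$. The base case $R_1=\{y\}$ forces $y_1=y_2=y$, giving $B^l(d_1,\{y\},R_2)=U(d_1,y)+A^{l,l}(\{y\},R_2)$. When $R_1=R_1'\cup R_1''$ with $R_1'$ left of $R_1''$, case-splitting on which halves of $R_1$ contain $y_1$ and $y_2$ yields
\begin{align*}
B^l(d_1,R_1,R_2)=\min\bigl\{\;&B^l(d_1,R_1',R_2),\; B^l(d_1,R_1'',R_2),\\
&E^r(d_1,R_1')+A^{l,l}(R_1'',R_2),\; E^l(d_1,R_1'')+A^{r,l}(R_1',R_2)\bigr\}.
\end{align*}
The cross-half terms are justified by the factorization: when $y_1\in R_1'$ and $y_2\in R_1''$, the $st$-subpath $y_1y_2$ splits at the shared endpoint $\ydd{R_1'}=\zdd{R_1''}$, so the overall sum factors into a prefix $U(d_1,y_1)+|y_1\ydd{R_1'}|$ that minimizes to $E^r(d_1,R_1')$, and a tail $|\zdd{R_1''}y_2|+\pi_{G-st}(y_2,z)+|z\zdd{R_2}|$ that minimizes to $A^{l,l}(R_1'',R_2)$; the symmetric case ($y_1\in R_1''$, $y_2\in R_1'$) yields the fourth term. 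Each entry takes $O(1)$ time given its children and two $O(1)$ lookups into the precomputed $A$ table, so the full $B^l$ table is built in $O(n^3)$ time.

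\textbf{Query.} Given arbitrary disjoint intervals $R_1,R_2$, decompose each into $O(\log n)$ tree ranges $R_1=\bigcup_i R_1^{(i)}$, $R_2=\bigcup_j R_2^{(j)}$. The ``$y_1,y_2$ both in $R_1^{(i)}$'' contribution is $\min_j\{B^l(d_1,R_1^{(i)},R_2^{(j)})+|\zdd{R_2^{(j)}}\zdd{R_2}|\}$; the ``$y_1\in R_1^{(i)}$, $y_2\in R_1^{(i')}$ with $i<i'$'' contribution is $E^r(d_1,R_1^{(i)})+|\ydd{R_1^{(i)}}\zdd{R_1^{(i')}}|+A^{l,l}(R_1^{(i')},R_2)$, where the middle offset is a precomputed sum of tree-range lengths and $A^{l,l}$ on arbitrary intervals is $\tilde O(1)$ by \cref{thm:oracle-A} (and symmetrically for $i>i'$). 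Taking the $\min$ over the $O(\log^2 n)$ combinations yields $\tilde O(1)$ query time. The oracle $B^r$ is built identically after substituting $A^{*,r}$ for $A^{*,l}$ everywhere.

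\textbf{Main obstacle.} The hardest part is verifying that the cross-half decoupling is tight. Once the half containing $y_1$ and the half containing $y_2$ are fixed, the prefix piece (whose support is $[s,\ydd{R_1'}]$ on $st$ plus a $G-st$ detour into $R_1'$) and the tail piece (whose support is $[\zdd{R_1''},\zdd{R_2}]$ on $st$ plus a $G-st$ detour out of $R_1''$) share only the pivot vertex $\ydd{R_1'}=\zdd{R_1''}$, so their minimizers do not interfere under the unique-shortest-path assumption. Once this independence is established, correctness of both the DP recurrence and the query decomposition follows by routine bookkeeping of path lengths along $st$.
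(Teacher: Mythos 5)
Your proposal is correct and follows essentially the same route as the paper: the core recurrence you derive for merging $R_1=R_1'\cup R_1''$ (keeping two ``both halves'' terms and two ``cross-half'' terms that factor into a prefix piece and an $A$-oracle piece) is exactly the paper's Lemma~\ref{lemma:merging-B}, with your $E^r(d_1,R_1')$ and $E^l(d_1,R_1'')$ being precisely $A^{l,r}(R_3,R_1')$ and $A^{l,l}(R_3,R_1'')$ for $R_3=[s,a]$, and your base case $U(d_1,y)+A^{l,l}(\{y\},R_2)$ matching the paper's $A^{l,l}(R_3,x)\circ A^{l,l}(x,R_2)$. The only cosmetic difference is that you materialize the prefix minima into auxiliary tables $U$ and $E^{l},E^{r}$ (saving a $\log$ factor on those particular lookups, since $R_3$ is not itself a range-tree node), whereas the paper queries the $A$-oracle directly on the arbitrary interval $R_3$; both give $\tilde O(n^3)$ preprocessing and $\tilde O(1)$ query.
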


\begin{center}
    \tikzset{every picture/.style={line width=0.75pt}} %set default line width to 0.75pt        

\begin{tikzpicture}[x=0.75pt,y=0.75pt,yscale=-1,xscale=1]
%uncomment if require: \path (0,459); %set diagram left start at 0, and has height of 459

%Straight Lines [id:da593669140143748] 
\draw    (40,402.8) -- (102,402.8) ;
%Straight Lines [id:da44263402092307014] 
\draw    (115,402.8) -- (401,402.8) ;
%Curve Lines [id:da006343615500995381] 
\draw [color={rgb, 255:red, 144; green, 19; blue, 254 }  ,draw opacity=0.5 ][line width=1.5]    (70,402.8) .. controls (130.2,331.6) and (209.2,333.2) .. (271,402.8) ;
%Straight Lines [id:da27634853964844164] 
\draw [color={rgb, 255:red, 144; green, 19; blue, 254 }  ,draw opacity=0.5 ][line width=1.5]    (40,402.8) -- (71,402.8) ;
%Straight Lines [id:da08824053820837985] 
\draw [color={rgb, 255:red, 144; green, 19; blue, 254 }  ,draw opacity=0.5 ][line width=1.5]    (271,402.8) -- (302,402.8) ;
%Straight Lines [id:da2426064555379338] 
\draw    (279.6,433.6) -- (260,433.6) ;
\draw [shift={(260,433.6)}, rotate = 360] [color={rgb, 255:red, 0; green, 0; blue, 0 }  ][line width=0.75]    (0,5.59) -- (0,-5.59)   ;
%Straight Lines [id:da5900066165089589] 
\draw    (298,433.6) -- (317,433.6) ;
\draw [shift={(317,433.6)}, rotate = 180] [color={rgb, 255:red, 0; green, 0; blue, 0 }  ][line width=0.75]    (0,5.59) -- (0,-5.59)   ;
%Curve Lines [id:da9004967390369095] 
\draw [color={rgb, 255:red, 144; green, 19; blue, 254 }  ,draw opacity=0.5 ][line width=1.5]    (148.8,402.8) .. controls (200,347.6) and (249.2,347.2) .. (302,402.8) ;
%Straight Lines [id:da023857609440415928] 
\draw    (155.6,433.6) -- (136,433.6) ;
\draw [shift={(136,433.6)}, rotate = 360] [color={rgb, 255:red, 0; green, 0; blue, 0 }  ][line width=0.75]    (0,5.59) -- (0,-5.59)   ;
%Straight Lines [id:da01990586760559987] 
\draw    (174,433.6) -- (193,433.6) ;
\draw [shift={(193,433.6)}, rotate = 180] [color={rgb, 255:red, 0; green, 0; blue, 0 }  ][line width=0.75]    (0,5.59) -- (0,-5.59)   ;
%Straight Lines [id:da2778430015565435] 
\draw [color={rgb, 255:red, 144; green, 19; blue, 254 }  ,draw opacity=0.5 ][line width=1.5]    (193,402.8) -- (148.8,402.8) ;

% Text Node
\draw (102.8,399.6) node [anchor=north west][inner sep=0.75pt]  [font=\footnotesize]  {$d_{1}$};
% Text Node
\draw (398,406.2) node [anchor=north west][inner sep=0.75pt]  [font=\footnotesize]  {$t$};
% Text Node
\draw (37,406.2) node [anchor=north west][inner sep=0.75pt]  [font=\footnotesize]  {$s$};
% Text Node
\draw (267,406.2) node [anchor=north west][inner sep=0.75pt]  [font=\footnotesize]  {$y_{1}$};
% Text Node
\draw (296,406.2) node [anchor=north west][inner sep=0.75pt]  [font=\footnotesize]  {$y_{2}$};
% Text Node
\draw (281,429) node [anchor=north west][inner sep=0.75pt]  [font=\footnotesize]  {$R_{1}$};
% Text Node
\draw (144.6,406.2) node [anchor=north west][inner sep=0.75pt]  [font=\footnotesize]  {$z$};
% Text Node
\draw (157,429) node [anchor=north west][inner sep=0.75pt]  [font=\footnotesize]  {$R_{2}$};
% Text Node
\draw (187.53,405) node [anchor=north west][inner sep=0.75pt]  [font=\footnotesize]  {$\ydd{R_{2}}$};
% Text Node
\draw (66,406.73) node [anchor=north west][inner sep=0.75pt]  [font=\footnotesize]  {$x$};
% Text Node
\draw (165.33,330.45) node [anchor=north west][inner sep=0.75pt]  [font=\footnotesize]  {$B^{r}( d_{1} ,R_{1} ,R_{2})$};

\end{tikzpicture}
\end{center}

Also we can store the path by storing the positions of $x,y_1,y_2,z$. We also apply the idea of \cref{lemma:merging-A} to this theorem.

\begin{lemma}\label{lemma:merging-B}
    If the interval $R_1$ on $st$ is the union of intervals $R_1'$ and $R_1''$ ($\ydd{R_1'}=\zdd{R_1''}$ or there is an edge connecting $\ydd{R_1'}$ and $\zdd{R_1''}$), and we already have $B^l(d_1,R_1',R_2)$, $B^l(d_1,R_1'',R_2)$ and oracle $A$, then we can get $B^l(d_1,R_1,R_2)$ in $\too{1}$ time. Also, we can get $B^l(d_1,R_1,R_2)$ from $B^l(d_1,R_1,R_2')$ and $B^l(d_1,R_1,R_2'')$ in $O(1)$ time if $R_2$ is the union of $R_2'$ and $R_2''$. This also holds for $B^r$.
\end{lemma}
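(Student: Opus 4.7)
The plan is to reduce each merge to a case analysis on where the indices $y_1,y_2$ (for the $R_1$-merge) or $z$ (for the $R_2$-merge) lie among the two sub-intervals, and then identify each surviving case with one of the two given $B^l$ values or with a short composition of oracle-$A$ queries.

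\textbf{Merging $R_1=R_1'\cup R_1''$.} I would branch on which sub-interval contains $y_1$ and $y_2$. When both lie in $R_1'$ the subpath $y_1y_2$ stays inside $R_1'$ by the convexity of a sub-interval of $st$, so the minimum over this case is precisely $B^l(d_1,R_1',R_2)$; the case $y_1,y_2\in R_1''$ gives $B^l(d_1,R_1'',R_2)$. In the mixed case $y_1\in R_1'$, $y_2\in R_1''$, I would cut the path at the boundary vertex $\ydd{R_1'}$, inserting the connecting edge $(\ydd{R_1'},\zdd{R_1''})$ (of weight $0$ if the two coincide). The resulting prefix
\[
\min_{x\le a,\;y_1\in R_1'}\; sx\circ\pi_{G-st}(x,y_1)\circ y_1\ydd{R_1'}
\]
is exactly $A^{l,r}([s,a],R_1')$, where $a$ is the left endpoint of $d_1$, because the constraint $x\le a$ is equivalent to $x\in[s,a]$ and the oracle-$A$ format starts at $\zdd{[s,a]}=s$ and ends at $\ydd{R_1'}$. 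The resulting suffix
\[
\min_{y_2\in R_1'',\;z\in R_2}\; \zdd{R_1''}y_2\circ\pi_{G-st}(y_2,z)\circ z\zdd{R_2}
\]
is exactly $A^{l,l}(R_1'',R_2)$. The symmetric case $y_1\in R_1''$, $y_2\in R_1'$ decomposes, by the same splitting at the boundary, into $A^{l,l}([s,a],R_1'')$ plus the boundary-edge weight plus $A^{r,l}(R_1',R_2)$. Since $[s,a]$ decomposes into $O(\log n)$ ranges in the range-tree structure, each oracle-$A$ query on it costs $\too{1}$ by Theorem~\ref{thm:oracle-A}, and the final value of $B^l(d_1,R_1,R_2)$ is the minimum of these four contributions.

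\textbf{Merging $R_2=R_2'\cup R_2''$.} Here I would branch on where $z$ falls; assume without loss of generality that $R_2'$ is to the left of $R_2''$ so that $\zdd{R_2}=\zdd{R_2'}$. If $z\in R_2'$, the tail $z\zdd{R_2}$ stays in $R_2'$ and the contribution is exactly $B^l(d_1,R_1,R_2')$. If $z\in R_2''$, the path optimized by $B^l(d_1,R_1,R_2'')$ ends at $\zdd{R_2''}$, and one must append the fixed subpath of $st$ from $\zdd{R_2''}$ back to $\zdd{R_2'}$; its length is a single precomputed $st$-distance, so this contribution equals $B^l(d_1,R_1,R_2'')$ plus that known length. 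The minimum of the two cases is $B^l(d_1,R_1,R_2)$. The $B^r$ version is identical after replacing $\zdd{R_2}$ by $\ydd{R_2}$ and flipping orientations accordingly; in the $R_1$-merge the only change is that the suffix queries become $A^{l,r}$ and $A^{r,r}$, which are also supplied by Theorem~\ref{thm:oracle-A}.

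\textbf{The main obstacle.} The non-trivial step is the mixed case in the $R_1$-merge: it is not immediately obvious that the prefix of the path---a single-detour piece that diverges before $d_1$, converges \emph{inside} $R_1'$, and then continues along $st$ to the boundary vertex---is expressible by data we already have, since the given $B^l$ values all carry two detours and the disjointness constraint in oracle $A$ forbids letting $R_2$ be a singleton inside $R_1'$. The resolution is to recognise that, after reinterpreting the constraint $x\le a$ as ``$x$ lies in the interval $[s,a]$'' (which is vertex-disjoint from $R_1'$ because $d_1$ separates them), this prefix is precisely an $A^{l,r}$ query on the pair $([s,a],R_1')$. Every remaining piece is then either a given $B^l$ value or another $O(\log n)$-range oracle-$A$ query, so the total merge time is $\too{1}$ as claimed.
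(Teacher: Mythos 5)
Your proof is correct and takes essentially the same approach as the paper: a case split on which of $R_1',R_1''$ contains each of $y_1,y_2$ (giving the two stored $B^l$ values plus two mixed cases resolved by splitting at the boundary vertex and invoking oracle-$A$ queries on $[s,a]$, $R_1'$, $R_1''$, $R_2$), and for the $R_2$-merge a case split on $z$ with an $st$-distance append, mirroring the $A$-merge lemma. The only cosmetic difference is that you make the coincide-vs-edge boundary handling and the $R_2$-merge explicit where the paper defers to its Lemma for oracle $A$.
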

\begin{proof}
    The case where $R_2$ is the union of $R_2'$ and $R_2''$ is similar to \cref{lemma:merging-A}, so here we only consider the case where $R_1$ is the union of $R_1'$ and $R_1''$. W.l.o.g., assume $R_1'$ is on the left of $R_1''$.

    The path of $B^l(d_1,R_1,R_2)$ is the minimum of these cases: 
    \begin{enumerate}
        \item It converges and diverges in interval $R_1'$
        \item It converges and diverges in interval $R_1''$
        \item It converges in $R_1'$ and diverges from $R_1''$
        \item It converges in $R_1''$ and diverges from $R_1'$        
    \end{enumerate}

    The first two cases are equal to $B^l(d_1,R_1',R_2)$ and $B^l(d_1,R_1'',R_2)$, respectively, and the paths in the last two cases must go through the vertex (or edge) between $R_1'$ and $R_1''$, and we can use oracle $A$ to deal with these cases. Let $R_3$ be the interval from $s$ to the left vertex $a$ of $d_1$, then the path $B^l(d_1,R_1,R_2)$ is:

     $$\min\left\{B^l(d_1,R_1',R_2),\; B^l(d_1,R_1'',R_2),\; A^{l,r}(R_3,R_1')\circ A^{l,l}(R_1'',R_2),\; A^{l,l}(R_3,R_1'')\circ A^{r,l}(R_1',R_2)\right\}$$

    Also $B^r$ can be computed similarly.

    \begin{center}
        \tikzset{every picture/.style={line width=0.75pt}} %set default line width to 0.75pt        

\begin{tikzpicture}[x=0.75pt,y=0.75pt,yscale=-1,xscale=1]
%uncomment if require: \path (0,442); %set diagram left start at 0, and has height of 442

%Straight Lines [id:da35487158823293985] 
\draw    (40,136.5) -- (102,136.5) ;
%Straight Lines [id:da3993088177073125] 
\draw    (115,136.5) -- (401,136.5) ;
%Curve Lines [id:da18997029367861873] 
\draw [color={rgb, 255:red, 144; green, 19; blue, 254 }  ,draw opacity=0.5 ][line width=1.5]    (70,136.5) .. controls (130.2,65.3) and (209.2,66.9) .. (271,136.5) ;
%Straight Lines [id:da903188413657013] 
\draw [color={rgb, 255:red, 144; green, 19; blue, 254 }  ,draw opacity=0.5 ][line width=1.5]    (40,136.5) -- (71,136.5) ;
%Straight Lines [id:da9378651004302334] 
\draw [color={rgb, 255:red, 144; green, 19; blue, 254 }  ,draw opacity=0.5 ][line width=1.5]    (271,136.5) -- (330,136.5) ;
%Straight Lines [id:da719872136672571] 
\draw    (279.6,167.3) -- (260,167.3) ;
\draw [shift={(260,167.3)}, rotate = 360] [color={rgb, 255:red, 0; green, 0; blue, 0 }  ][line width=0.75]    (0,5.59) -- (0,-5.59)   ;
%Straight Lines [id:da9731768411203893] 
\draw    (298,167.3) -- (317,167.3) ;
\draw [shift={(317,167.3)}, rotate = 180] [color={rgb, 255:red, 0; green, 0; blue, 0 }  ][line width=0.75]    (0,5.59) -- (0,-5.59)   ;
%Curve Lines [id:da35057112630315357] 
\draw [color={rgb, 255:red, 144; green, 19; blue, 254 }  ,draw opacity=0.5 ][line width=1.5]    (160,136.5) .. controls (211.2,81.3) and (277.2,80.9) .. (330,136.5) ;
%Straight Lines [id:da04188935080605649] 
\draw    (155.6,167.3) -- (136,167.3) ;
\draw [shift={(136,167.3)}, rotate = 360] [color={rgb, 255:red, 0; green, 0; blue, 0 }  ][line width=0.75]    (0,5.59) -- (0,-5.59)   ;
%Straight Lines [id:da7455091727383432] 
\draw    (174,167.3) -- (193,167.3) ;
\draw [shift={(193,167.3)}, rotate = 180] [color={rgb, 255:red, 0; green, 0; blue, 0 }  ][line width=0.75]    (0,5.59) -- (0,-5.59)   ;
%Straight Lines [id:da5369932580908392] 
\draw [color={rgb, 255:red, 144; green, 19; blue, 254 }  ,draw opacity=0.5 ][line width=1.5]    (136,136.5) -- (146,136.5) -- (160,136.5) ;
%Straight Lines [id:da9734061778684258] 
\draw    (336.6,167.3) -- (317,167.3) ;
\draw [shift={(317,167.3)}, rotate = 360] [color={rgb, 255:red, 0; green, 0; blue, 0 }  ][line width=0.75]    (0,5.59) -- (0,-5.59)   ;
%Straight Lines [id:da5617482726767967] 
\draw    (355,167.3) -- (374,167.3) ;
\draw [shift={(374,167.3)}, rotate = 180] [color={rgb, 255:red, 0; green, 0; blue, 0 }  ][line width=0.75]    (0,5.59) -- (0,-5.59)   ;
%Straight Lines [id:da4584264033980283] 
\draw    (60.27,167.3) -- (40.67,167.3) ;
\draw [shift={(40.67,167.3)}, rotate = 360] [color={rgb, 255:red, 0; green, 0; blue, 0 }  ][line width=0.75]    (0,5.59) -- (0,-5.59)   ;
%Straight Lines [id:da5689199531585339] 
\draw    (78.67,167.3) -- (102,167.3) ;
\draw [shift={(102,167.3)}, rotate = 180] [color={rgb, 255:red, 0; green, 0; blue, 0 }  ][line width=0.75]    (0,5.59) -- (0,-5.59)   ;

% Text Node
\draw (102.8,133.3) node [anchor=north west][inner sep=0.75pt]  [font=\footnotesize]  {$d_{1}$};
% Text Node
\draw (398,139.9) node [anchor=north west][inner sep=0.75pt]  [font=\footnotesize]  {$t$};
% Text Node
\draw (37,139.9) node [anchor=north west][inner sep=0.75pt]  [font=\footnotesize]  {$s$};
% Text Node
\draw (267,139.9) node [anchor=north west][inner sep=0.75pt]  [font=\footnotesize]  {$y_{1}$};
% Text Node
\draw (324,139.9) node [anchor=north west][inner sep=0.75pt]  [font=\footnotesize]  {$y_{2}$};
% Text Node
\draw (281,162.7) node [anchor=north west][inner sep=0.75pt]  [font=\footnotesize]  {$R'_{1}$};
% Text Node
\draw (130.6,137) node [anchor=north west][inner sep=0.75pt]  [font=\footnotesize]  {$\zdd{R'_{2}}$};
% Text Node
\draw (157,162.7) node [anchor=north west][inner sep=0.75pt]  [font=\footnotesize]  {$R_{2}$};
% Text Node
\draw (156.53,139.9) node [anchor=north west][inner sep=0.75pt]  [font=\footnotesize]  {$z$};
% Text Node
\draw (338,162.7) node [anchor=north west][inner sep=0.75pt]  [font=\footnotesize]  {$R''_{1}$};
% Text Node
\draw (150,52.4) node [anchor=north west][inner sep=0.75pt]  [font=\footnotesize]  {$A^{l,r}( R_{3} ,R'_{1}) \circ A^{l,l}( R''_{1} ,R_{2})$};
% Text Node
\draw (66.53,140.56) node [anchor=north west][inner sep=0.75pt]  [font=\footnotesize]  {$x$};
% Text Node
\draw (61.67,162.7) node [anchor=north west][inner sep=0.75pt]  [font=\footnotesize]  {$R_{3}$};

\end{tikzpicture}
    \end{center}

\end{proof}

So the data structure of \cref{thm:oracle-B} stores the paths for all $d_1$ on $s,t$ and ranges $R_1$ and $R_2$ in the range tree structure or single vertices, so the total space is $O(n^3)$. The preprocessing and query algorithm for $B^l$ is as follows, and $B^r$ is similar.
\begin{itemize}
    \item Preprocessing: For all $d_1$ on $s,t$ and $R_2$ in the range tree structure or single vertex, consider $R_1$ in the range tree structure which is disjoint with $R_2$.
    \begin{itemize}
        \item First compute $B^l(d_1,x,R_2)$ for all single vertex $x$ after $d_1$ and not in $R_2$, which we can get from oracle $A$ in $\too{n^3}$ time. Let $R_3$ be the interval from $s$ to the left vertex of $d_1$, then $B^l(d_1,x,R_2)=A^{l,l}(R_3,x)\circ A^{l,l}(x,R_2)$.
        \item Then by \cref{lemma:merging-B}, we can compute $B^l(d_1,R_1,R_2)$ for all vertex-disjoint ranges $R_1$ and $R_2$ in the range tree structure in $\too{n^3}$ time.
    \end{itemize}
    \item Query: Since every interval on $st$ is the union of $O(\log n)$ ranges in the range tree structure (or a single vertex), by \cref{lemma:merging-B}, we can get $B^l(d_1,R_1,R_2)$ for any $d_1,R_1,R_2$ in $\too{1}$ time.
\end{itemize}

This completes the proof of \cref{thm:oracle-B}. Note that \cref{thm:oracle-B} can be extended to the symmetric case that the shortest path that starts at the leftmost (rightmost) point of $R_1$, diverges from $R_1$, converges in $R_2$, and diverges from $R_2$, then converges after $d_3$, and finally ends at $t$, for $d_3$ on $s,t$ and vertex-disjoint intervals $R_1$ and $R_2$ on $st$ before $d_3$.

\subsection{Answering 3-Fault Queries}
Now we are ready to answer queries for the replacement path $\pi'=\pi_{G-\{d_1,d_2,d_3\}}(s,t)$ given three failed edges $d_1,d_2,d_3$ on $st$. We suppose the original shortest path $st$ is cut into four intervals $D_1,D_2,D_3,D_4$. Here we first state a more detailed version of Observation \ref{obs:3failure}:

\begin{observation}[cf. \ref{obs:3failure}, formal]
The 3-fault replacement path $\pi'$ falls in one of the cases:
\begin{itemize}
    \item[-] \textbf{Type 1:} Diverges at $D_1$, converges at $D_4.$
    \item[-] \textbf{Type 2:} Diverges at $D_1$, converges and diverges at $D_2$, and finally converges at $D_4.$
    \item[-] \textbf{Type 3:} Diverges at $D_1$, converges and diverges at $D_3$, and finally converges at $D_4.$
    \item[-] \textbf{Type 4:} Diverges at $D_1$, converges and diverges at $D_2$, then converges and diverges at $D_3$, and finally converges at $D_4.$
    \item[-] \textbf{Type 5:} Diverges at $D_1$, converges and diverges at $D_3$, then converges and diverges at $D_2$, and finally converges at $D_4.$
    %\item[-] \textbf{Type 5:} diverges at $D_1$, converges at $D_3$, then diverges at $D_3$, and converges at $D_2$, diverges at $D_2$, and converges at $D_4.$
\end{itemize}    
\end{observation}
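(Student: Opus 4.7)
The plan is to deduce the structure of $\pi'$ from a single ``no-revisit'' lemma: for each $i \in \{1,2,3,4\}$, the set of edges $\pi' \cap D_i$ forms a single contiguous subpath of $D_i$ (possibly empty). Once this is established, the enumeration of the five types follows by direct combinatorics.

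I would prove the contiguity lemma by a shortcutting argument using the unique shortest path assumption. Set $F = \{d_1,d_2,d_3\}$. Because each $D_i$ is a subpath of the original shortest path $st$ and contains none of the failed edges (those are precisely the separating edges), every sub-interval of $D_i$ remains a shortest path in $G-F$ between its endpoints: its length in $G-F$ equals its length in $G$, and no shorter path can exist in the subgraph $G-F$ of $G$. Suppose, toward contradiction, that $\pi' \cap D_i$ consists of at least two disjoint maximal subpaths. Let $p$ be the last vertex of the earlier piece along $\pi'$ and let $u$ be the first vertex of the later piece. The portion of $\pi'$ from $p$ to $u$ is a shortest $p$-$u$ path in $G-F$, and so is the $D_i$-subpath from $p$ to $u$. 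These two paths are distinct (the $\pi'$-portion leaves $D_i$ while the $D_i$-subpath does not), which contradicts the unique shortest path assumption.

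Given this lemma, the remainder is a short case analysis. Since $s \in D_1$ and $t \in D_4$, both $\pi' \cap D_1$ and $\pi' \cap D_4$ are non-empty. Because $d_1, d_3 \in F$, the piece $\pi' \cap D_1$ is a prefix of $D_1$ that ends strictly before its right endpoint (the divergence point), and $\pi' \cap D_4$ is a suffix of $D_4$ that begins strictly after its left endpoint (the convergence point). Each of $D_2, D_3$ appears in at most one contiguous subpath of $\pi'$. Listing the ordered sequence of nonempty interval intersections traversed by $\pi'$, it begins with $D_1$, ends with $D_4$, never repeats, and may contain any subset of $\{D_2,D_3\}$ between them in either order. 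This yields exactly the five possibilities $D_1D_4$, $D_1D_2D_4$, $D_1D_3D_4$, $D_1D_2D_3D_4$, $D_1D_3D_2D_4$, matching Types~1--5. Between consecutive interval visits $\pi'$ must leave $st$ entirely; otherwise the current maximal $D_i$-piece could be extended, contradicting maximality. This justifies the ``diverges/converges'' phrasing.

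The main obstacle will be making the contiguity proof fully rigorous: one must verify carefully that the $D_i$-subpath is indeed a shortest path in $G-F$ (an easy observation but essential, since it relies on $D_i$ containing no failed edge), and then invoke uniqueness on two genuinely distinct candidate shortest paths. Once this single lemma is in hand, the combinatorial enumeration of the five patterns is immediate.
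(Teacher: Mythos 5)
Your proof is correct and follows essentially the same route as the paper: establish that $\pi'$ cannot return to an interval $D_i$ once it has left it, then enumerate the $D_1 \to \{\text{subset/order of } D_2,D_3\} \to D_4$ visit patterns. The paper simply asserts this ``no-revisit'' property (``if $\pi'$ diverges from some interval $D_i$, it never converges on the same interval'') without a proof; you supply the missing argument via the standard shortcutting/uniqueness device, which is exactly the right justification. One small nit: you claim $\pi' \cap D_1$ ``ends strictly before'' the right endpoint of $D_1$, but the divergence point may coincide with that endpoint (i.e.\ $\pi'$ may traverse all of $D_1$); this degenerate boundary case does not affect the enumeration.
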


\begin{center}
    \tikzset{every picture/.style={line width=0.75pt}} %set default line width to 0.75pt        

\begin{tikzpicture}[x=0.75pt,y=0.75pt,yscale=-1,xscale=1]
%uncomment if require: \path (0,398); %set diagram left start at 0, and has height of 398

%Straight Lines [id:da4931727375013517] 
\draw    (60,102.8) -- (122,102.8) ;
%Straight Lines [id:da12838087273708154] 
\draw    (220,102.8) -- (333,102.8) ;
%Curve Lines [id:da8524781080553463] 
\draw [color={rgb, 255:red, 144; green, 19; blue, 254 }  ,draw opacity=0.5 ][line width=1.5]    (90,102.8) .. controls (150.2,31.6) and (248.2,33.2) .. (310,102.8) ;
%Straight Lines [id:da6972102542324661] 
\draw [color={rgb, 255:red, 144; green, 19; blue, 254 }  ,draw opacity=0.5 ][line width=1.5]    (60,102.8) -- (91,102.8) ;
%Straight Lines [id:da4579368678235747] 
\draw    (135,102.8) -- (205,102.8) ;
%Straight Lines [id:da5679443721389744] 
\draw    (350,102.8) -- (420,102.8) ;
%Curve Lines [id:da21458755671383567] 
\draw [color={rgb, 255:red, 144; green, 19; blue, 254 }  ,draw opacity=0.5 ][line width=1.5]    (172,102.8) .. controls (194.5,84.25) and (238,83.75) .. (261,102.8) ;
%Curve Lines [id:da630111169877283] 
\draw [color={rgb, 255:red, 144; green, 19; blue, 254 }  ,draw opacity=0.5 ][line width=1.5]    (192,102.8) .. controls (227.5,60.75) and (353,62.75) .. (377,102.8) ;
%Straight Lines [id:da9921684255240881] 
\draw [color={rgb, 255:red, 144; green, 19; blue, 254 }  ,draw opacity=0.5 ][line width=1.5]    (172,102.8) -- (192,102.8) ;
%Straight Lines [id:da20740541428118475] 
\draw [color={rgb, 255:red, 144; green, 19; blue, 254 }  ,draw opacity=0.5 ][line width=1.5]    (261,102.8) -- (310,102.8) ;
%Straight Lines [id:da3545338721398962] 
\draw [color={rgb, 255:red, 144; green, 19; blue, 254 }  ,draw opacity=0.5 ][line width=1.5]    (377,102.8) -- (420,102.8) ;
%Straight Lines [id:da7828383917387708] 
\draw    (81.6,139.8) -- (60,139.8) ;
\draw [shift={(60,139.8)}, rotate = 360] [color={rgb, 255:red, 0; green, 0; blue, 0 }  ][line width=0.75]    (0,5.59) -- (0,-5.59)   ;
%Straight Lines [id:da8287328780822234] 
\draw    (98,139.8) -- (122,139.8) ;
\draw [shift={(122,139.8)}, rotate = 180] [color={rgb, 255:red, 0; green, 0; blue, 0 }  ][line width=0.75]    (0,5.59) -- (0,-5.59)   ;
%Straight Lines [id:da9872778061345695] 
\draw    (159.6,139.8) -- (135,139.8) ;
\draw [shift={(135,139.8)}, rotate = 360] [color={rgb, 255:red, 0; green, 0; blue, 0 }  ][line width=0.75]    (0,5.59) -- (0,-5.59)   ;
%Straight Lines [id:da42556218938472523] 
\draw    (176,139.8) -- (205,139.8) ;
\draw [shift={(205,139.8)}, rotate = 180] [color={rgb, 255:red, 0; green, 0; blue, 0 }  ][line width=0.75]    (0,5.59) -- (0,-5.59)   ;
%Straight Lines [id:da7346071388771434] 
\draw    (268.6,139.8) -- (220,139.8) ;
\draw [shift={(220,139.8)}, rotate = 360] [color={rgb, 255:red, 0; green, 0; blue, 0 }  ][line width=0.75]    (0,5.59) -- (0,-5.59)   ;
%Straight Lines [id:da7955354318127885] 
\draw    (285,139.8) -- (333,139.8) ;
\draw [shift={(333,139.8)}, rotate = 180] [color={rgb, 255:red, 0; green, 0; blue, 0 }  ][line width=0.75]    (0,5.59) -- (0,-5.59)   ;
%Straight Lines [id:da0036866085137805538] 
\draw    (376.6,139.8) -- (350,139.8) ;
\draw [shift={(350,139.8)}, rotate = 360] [color={rgb, 255:red, 0; green, 0; blue, 0 }  ][line width=0.75]    (0,5.59) -- (0,-5.59)   ;
%Straight Lines [id:da7317179677200065] 
\draw    (393,139.8) -- (420,139.8) ;
\draw [shift={(420,139.8)}, rotate = 180] [color={rgb, 255:red, 0; green, 0; blue, 0 }  ][line width=0.75]    (0,5.59) -- (0,-5.59)   ;

% Text Node
\draw (122.8,99.6) node [anchor=north west][inner sep=0.75pt]  [font=\footnotesize]  {$d_{1}$};
% Text Node
\draw (418,106.2) node [anchor=north west][inner sep=0.75pt]  [font=\footnotesize]  {$t$};
% Text Node
\draw (57,106.2) node [anchor=north west][inner sep=0.75pt]  [font=\footnotesize]  {$s$};
% Text Node
\draw (206.8,99.6) node [anchor=north west][inner sep=0.75pt]  [font=\footnotesize]  {$d_{2}$};
% Text Node
\draw (334.8,99.6) node [anchor=north west][inner sep=0.75pt]  [font=\footnotesize]  {$d_{3}$};
% Text Node
\draw (83,135.2) node [anchor=north west][inner sep=0.75pt]  [font=\footnotesize]  {$D_{1}$};
% Text Node
\draw (161,135.2) node [anchor=north west][inner sep=0.75pt]  [font=\footnotesize]  {$D_{2}$};
% Text Node
\draw (270,135.2) node [anchor=north west][inner sep=0.75pt]  [font=\footnotesize]  {$D_{3}$};
% Text Node
\draw (378,135.2) node [anchor=north west][inner sep=0.75pt]  [font=\footnotesize]  {$D_{4}$};

\end{tikzpicture}
\end{center}

\begin{proof}
The key observation is that in an undirected graph, if $\pi'$ diverges from some interval $D_i$, it never converges on the same interval. %Otherwise, suppose $\pi'$ diverges at point $x$ and converges at point $y$, then there exists a detour from $x$ to $y$ that does not intersect with $\pi(s,t)$ and is shorter than the path from $x$ to $y$ on $D_i.$ This contradicts the fact that $D_i$ is part of a shortest path. Therefore, $\pi$ can use any piece at most once while beginning at $D_1$ and ending at $D_4$. 
Then we can enumerate all the possible cases as shown. 

\end{proof}

{
%\red 
Therefore, in the following part we consider the shortest candidates in all 5 types, and a minimization of them will be the correct replacement path that we want.
}

~\\
\noindent\textbf{When shortest path $\pi'$ is of type 1,} we can solve this case by straightly calling $A^{l,r}(D_1,D_4)$ from Theorem \ref{thm:oracle-A}, which captures type 1 paths.

%Let $\pi(s,t)=p_0p_1\cdots p_m.$ Let $d_3=p_ip_{i+1}.$ We call interval $R_1$ to be the consecutive points from $p_{i+1}$ to $t.$ (If $p_{i+1}=t$, the interval degenerates to a single point $t.$) 

%We call the oracle and query $A_1^r(d_1,R_1)$, which is the optimal value in this case.

~\\
\noindent\textbf{When shortest path $\pi'$ is of type 2 or type 3,} similarly, by Theorem \ref{thm:oracle-B}, we can call $B^r(d_1,D_2,D_4)$ and $B^r(d_1,D_3,D_4)$ to capture type 2 and type 3 paths, respectively.

% By symmetry, W.l.o.g., suppose $\pi'$ is of type 2. Let $R_1$ be the interval $D_2$, i.e. the subpath from $d_1$ to $d_2$ on $\pi(s,t).$ We call the oracle and query $A_3^r(d_1,R_1,t)$, which is the optimal value in this case.

~\\
\noindent\textbf{When shortest path $\pi'$ is of type 4 or type 5,} the case is harder. For example, suppose we want to compute the path with type 5 below (and similarly for type 4). Here, as illustrated in the above figure, the path can be formally expressed as:
$$\min_{s_0\in D_1, x_1, x_2\in D_3, y_1, y_2\in D_2, t_0\in D_4} \left\{s s_0 \circ \pi_{G-st}(s_0, x_1) \circ x_1x_2 \circ \pi_{G-st}(x_2, y_1) \circ y_1y_2 \circ \pi_{G-st}(y_2, t_0) \circ t_0t \right\}. $$

\begin{lemma}\label{lemma:snake3}
    Let $d_1,d_2,d_3$ be three edges on $\pi(s,t)$, and $x$ be a vertex (or an edge) on $\pi(s,t)$, which is after $d_1$ and before $d_3.$ Calculating a shortest path of type 4 or 5 from $s$ to $t$ in $G-\{d_1,d_2,d_3\}$ that goes through $x$ requires time $\too{1}.$ 
\end{lemma}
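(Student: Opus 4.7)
The plan is to case-split on (i) whether the target path $\pi'$ is of Type 4 or Type 5 and (ii) whether $x$ lies in $D_2$ or in $D_3$, and in each of the four sub-cases to split $\pi'$ at $x$ into a prefix from $s$ to $x$ and a suffix from $x$ to $t$. The key observation is that, once split in this way, each prefix and each suffix exactly matches one of the query patterns already supported in $\too{1}$ time by the oracles $A$ (Theorem~\ref{thm:oracle-A}), $B$ (Theorem~\ref{thm:oracle-B}), or the symmetric version $\tilde B$ of $B$ mentioned in the paragraph after Theorem~\ref{thm:oracle-B}.

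To illustrate, consider a Type 5 path through $x \in D_3$. By definition it diverges from $D_1$, converges at some $x_1 \in D_3$ with $x_1 \le x$, travels along $D_3$ through $x$, diverges at some $x_2 \in D_3$ with $x_2 \ge x$, then converges and diverges on $D_2$, and finally converges on $D_4$. Splitting at $x$, the prefix $s$-to-$x$ matches $A^{l,r}(D_1, [D_3^l, x])$ exactly, because the leftmost vertex of $D_1$ is $s$ and the rightmost of $[D_3^l, x]$ is $x$; and the suffix $x$-to-$t$ matches $\tilde B^l(d_3, [x, D_3^r], D_2)$, which starts at the leftmost point $x$ of $[x, D_3^r]$, diverges, converges and diverges on $D_2$, converges after $d_3$, and ends at $t$. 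The other three sub-cases are analogous and the answer equals, respectively, $B^r(d_1, D_2, [D_3^l, x]) + A^{l,r}([x, D_3^r], D_4)$ for Type 4 with $x \in D_3$; $A^{l,r}(D_1, [D_2^l, x]) + \tilde B^l(d_3, [x, D_2^r], D_3)$ for Type 4 with $x \in D_2$; and $B^r(d_1, D_3, [D_2^l, x]) + A^{l,r}([x, D_2^r], D_4)$ for Type 5 with $x \in D_2$.

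The algorithm simply returns the minimum over the at most four feasible sub-cases determined by the location of $x$. Each sub-case requires $O(1)$ oracle calls, each of which is answered in $\too{1}$ time, so the total time per query is $\too{1}$. When $x$ is an edge on $st$ rather than a vertex, I apply the same decomposition at each of its two endpoints (forcing the edge $x$ itself to appear between the prefix and the suffix), which only doubles the constant.

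The main obstacle in the argument is bookkeeping: one must verify that in every sub-case the prefix and suffix returned by the oracles genuinely avoid $\{d_1, d_2, d_3\}$ and that each on-$st$ segment of the oracle's output lies inside the intended interval $D_i$. This is automatic from the patterns defined in Theorems~\ref{thm:oracle-A} and~\ref{thm:oracle-B}: off-$st$ excursions live in $G-st$, and on-$st$ segments are forced into the intervals we supplied; since each $D_i$ is by construction a maximal piece of $st$ not containing any of $d_1, d_2, d_3$, the failed edges are automatically avoided by the concatenation.
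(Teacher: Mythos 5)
Your overall strategy---splitting the snake path at $x$ and matching prefix and suffix to the oracle patterns $A$, $B$, and the symmetric $\tilde B$---is the same one the paper uses, but each of your four sub-cases misses one of the two possible traversal directions through $x$, which is a genuine gap. When you write, for a Type~5 path through $x \in D_3$, that ``by definition it \ldots converges at some $x_1 \in D_3$ with $x_1 \le x$ \ldots diverges at some $x_2 \in D_3$ with $x_2 \ge x$,'' this is not forced: the path may just as well enter $D_3$ at some $x_1 > x$ and leave at some $x_2 < x$, traversing the $D_3$-segment leftward through $x$. Nothing in the definition of a snake path or in shortest-path optimality constrains which endpoint of the on-$D_3$ segment lies to the left of $x$, because the two off-$st$ detours attached to that segment can land anywhere in $D_3$. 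Concretely, if the cheapest off-path route from $D_1$'s side lands to the right of $x$ and the cheapest off-path route toward $D_2$ departs from the left of $x$, your term $A^{l,r}(D_1,[D_3^l,x]) \circ \tilde B^l(d_3,[x,D_3^r],D_2)$ never sees that path, since it forces the convergence point into $[D_3^l,x]$ and the divergence point into $[x,D_3^r]$; the value you return can be strictly larger than the true shortest snake path through $x$.

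The fix---and what the paper actually does---is to take the minimum of two expressions per sub-case, one per traversal direction. For Type~5 with $x\in D_2$, writing $R_2 = [D_2^l, x]$ and $R_3 = [x, D_2^r]$, the paper returns $\min\bigl\{B^l(d_1,D_3,R_3)\circ A^{r,r}(R_2,D_4),\; B^r(d_1,D_3,R_2)\circ A^{l,r}(R_3,D_4)\bigr\}$: the first term handles entering $D_2$ to the right of $x$ and exiting to the left, while the second (the only one you wrote) handles the opposite. You need the analogous partner term in each of the other three sub-cases, e.g.\ $A^{l,l}(D_1,[x,D_3^r]) \circ \tilde B^r(d_3,[D_3^l,x],D_2)$ as the missing companion for Type~5 with $x\in D_3$. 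This doubling keeps the number of oracle calls $O(1)$, so the $\too{1}$ time bound is unaffected; only the minimization you perform needs to be enlarged.
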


\begin{proof}
    W.l.o.g., we suppose $x$ is a vertex in $D_2$ and the path is with type 5 (since all other cases can be dealt with similarly). Suppose $x$ divides $D_2$ into two intervals $R_2$ and $R_3$. ($R_2$ is on the left of $R_3$.)
    
    % Let $x$ be a point on the subpath from $d_1$ to $d_2$, The subpath from $d_2$ to $d_3$ is called $D_3.$ The subpath from $d_1$ to $x$ is called $R_2$, and the subpath from $x$ to $d_2$ is called $R_3.$ In the case where $x$ is an edge $p_kp_{k+1}$, define the subpath from $d_1$ to $p_k$ as $R_2$, and the subpath from $p_{k+1}$ to $d_2$ as $R_3.$
    
    We call the oracles in \cref{sec:3-fault-oracle} and the answer is the minimum of $B^l(d_1,D_3,R_3)\circ {A}^{r,r}(R_2,D_4)$ and $B^r(d_1,D_3,R_2)\circ {A}^{l,r}(R_3,D_4)$. %where the oracle $\bar{B}$ is built inversely with starting point $t$ and ending point $s.$
    
\end{proof}

Below we generalize type 4 and 5 paths as  \textbf{snake paths} when there are $w=\Tilde{O}(1)$ edges removed.

\begin{definition}

Suppose $w=\Tilde{O}(1)$, and there are $w$ edges $d_1,d_2,\cdots,d_w$ on $st$ removed. Suppose they cut $st$ into $w+1$ intervals $D_1, \dots , D_{w+1}$, a \textbf{snake path} is defined as a path that only converges and then diverges in exactly two intervals in $D_2, \dots, D_w$ in the middle. 

\end{definition}

With this definition, we similarly extend the lemma as follows:

%{\red I put the below paragraph out of the proof part? - Benyu}

\begin{lemma}\label{lemma:snake}
    If $x$ is a vertex (or an edge) after $d_1$ and before $d_w$, then calculating a shortest snake path that goes through $x$ avoiding $\{d_1,d_2,\cdots,d_w\}$ still requires time $\too{1}.$ Moreover, we can compute a shortest snake path in $G-\{d_1,d_2,\cdots,d_w\}$ in time $\too{n}.$
\end{lemma}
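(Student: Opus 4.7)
The plan is to generalize Lemma~\ref{lemma:snake3} to $w$-many failures by enumerating over a bounded number of configurations, and then to derive the $\tilde{O}(n)$ bound by iterating over the $O(n)$ candidate ``pivot'' vertices.

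For the $\tilde{O}(1)$ query, suppose $x$ lies on $st$ in some middle interval $D_k$ with $2\le k\le w$. Any snake through $x$ must include $D_k$ among its two chosen middle intervals, since a snake touches the interior of $D_2,\dots,D_w$ only inside its two chosen middle intervals and $x$ is between $d_1$ and $d_w$. I would enumerate the other chosen interval $D_j\in\{D_2,\dots,D_w\}\setminus\{D_k\}$, the order in which the snake visits $D_j$ and $D_k$, and the traversal direction inside each, giving $O(w^2)=\tilde{O}(1)$ configurations. Splitting $D_k$ at $x$ into $R^-$ (left of $x$) and $R^+$ (right of $x$), the snake in each configuration decomposes into three pieces, each answerable by one oracle call: a $B$-oracle query starting at $s$, diverging before $d_1$, converging in the first visited interval, then reaching $x$ through $R^-$ or $R^+$; an $A$-oracle query for the piece that crosses $x$, follows the other side of $D_k$, and jumps via $G-st$ into the second visited interval; and a symmetric $B$-style query ending at $t$, whose construction mirrors Theorem~\ref{thm:oracle-B} per the remark after it. Each oracle call takes $\tilde{O}(1)$ time, so the total query time is $\tilde{O}(1)$.

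For the $\tilde{O}(n)$ bound, I would enumerate $x$ over the $O(n)$ vertices of $st$ lying in $D_2\cup\cdots\cup D_w$ and return the minimum Part~1 answer. Every snake path uses a nonempty $st$-subpath (possibly a single vertex) inside each of its two chosen middle intervals, so at least one enumerated $x$ lies on any given snake; the Part~1 query at that $x$ returns a value no larger than the snake's length. Conversely, every output of Part~1 is the length of a genuine snake avoiding $\{d_1,\dots,d_w\}$, so the minimum over all $x$ equals the shortest snake length.

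The main obstacle is the Part~1 case analysis: one must account for each relative order of $D_j$ and $D_k$, each traversal direction in both intervals, and the degenerate cases where the snake enters or exits $D_k$ exactly at $x$ rather than strictly inside $R^-$ or $R^+$. The saving grace is that every oracle parameter lies entirely within a single interval $D_i$, so no failed edge on $st$ is ever traversed by the oracle outputs, and the $G-st$ segments returned by different oracle calls cannot contain failed edges either; hence each configuration faithfully represents a valid snake path.
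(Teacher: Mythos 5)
Your high-level strategy matches the paper's: fix the interval $D_k$ containing $x$, enumerate the other middle interval $D_j$ and the visit order/directions, reduce to the $w=3$ oracle calls of \cref{lemma:snake3}, and then for the $\tilde{O}(n)$ bound iterate $x$ over the $O(n)$ vertices between $d_1$ and $d_w$ and take the minimum. This is exactly what the paper does.

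However, your claimed decomposition into \emph{three} oracle pieces is not right, and this is a substantive gap rather than a cosmetic one. The correct decomposition — and the one used in \cref{lemma:snake3} — is always \emph{two} oracle calls joined at the single fixed pivot $x$: either $B^{\cdot}(d_1,D_i,R^{\mp})\circ A^{\cdot,r}(R^{\pm},D_{w+1})$ when $x$ lies in the second visited interval, or $A^{l,\cdot}(D_1,R^{\mp})\circ B_{\mathrm{sym}}^{\cdot}(d_w,D_j,R^{\pm})$ when $x$ lies in the first (using the symmetric $B$ oracle mentioned after \cref{thm:oracle-B}). A snake path has exactly three $G-st$ detours, and the $B$-oracle already bundles two of them; a $B+A+B_{\mathrm{sym}}$ chain would account for five. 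If instead you use $A+A+A$, the juncture between the second and third pieces sits at an endpoint of $D_j$ rather than at the unknown convergence/divergence points of the snake inside $D_j$; the concatenated cost then includes a spurious round-trip from those points to the interval endpoint (this is precisely what the careful monotone-traversal bookkeeping in \cref{lemma:merging-A} and \cref{lemma:merging-B} is designed to avoid, and it fails here because the chain is not monotone across the junction). Consequently, for the true shortest snake $P$ the three-piece cost can strictly exceed $|P|$, and your minimization over $x$ and configurations need not ever hit $|P|$. The cure is to never introduce a second junction: join exactly at $x$, and let the $B$-oracle (or its $t$-side mirror) absorb the two consecutive detours on the side of $x$ containing the other middle interval. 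With that fix, your Part~1 is tight, and your Part~2 correctness argument goes through unchanged.
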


%Further, if $\Tilde{O}(1)$ number of edges $d_1,d_2,\cdots,d_w$ is removed, and $x$ is a point (or an edge) after $d_1$ and before $d_w$, then calculating a shortest snake path that goes through $x$ still requires time $\too{1}.$

    %If $\tilde{O}(1)$ number of edges $d_1,d_2,\cdots,d_w$ is removed from the original graph, let $D_i$ be the subpath from $d_i$ to $d_{i+1}.$ 

\begin{proof}
    First, for the shortest snake path passing $x$, we can enumerate which $D_i$ is the first interval that the snake path intersects with $st$ after diverging from $D_1$ and which $D_j$ is the second one (one of them contains $x$). After fixing $D_i$ and $D_j$, we perform the same oracle calls as in Lemma \ref{lemma:snake3}. Since the total number of possible pairs of $D_i,D_j$ is still $\too{1}$, we can still compute in $\too{1}$ time.

    Moreover, if we do not know such an $x$ in advance, we can just enumerate all possible $x$ and find the shortest one among the snake paths passing them. This will cost time $\too{n}$.
    
\end{proof}

% \begin{corollary}\label{cor1-4}
%     Let $d_1,d_2,d_3$ be three edges on $\pi(s,t).$ Calculating a shortest snake path from $s$ to $t$ in $G-\{d_1,d_2,d_3\}$ requires time $\too{n}.$

%     Further, let $w=\Tilde{O}(1)$ and if edges $d_1,d_2,\cdots,d_w$ are removed, calculating a shortest snake path in $G-\{d_1,d_2,\cdots,d_w\}$ requires time $\too{n}.$
% \end{corollary}

% \begin{proof}
%     We just enumerate all the possible $x$'s in theorem \ref{thm1-3} and choose the minimum of the $O(n)$ results, and the time is $\too{n}.$
% \end{proof}

We are ready to close this section by proving the following lemma.
\begin{lemma}\label{thm1-5}
    Let $d_1,d_3$ be two edges on $\pi(s,t).$ Calculating a shortest snake path from $s$ to $t$ in $G-\{d_1,d_2,d_3\}$ for all possible $d_2$ between $d_1$ and $d_3$ requires time $\too{n}.$
\end{lemma}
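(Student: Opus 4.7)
The plan is to carry out the iterative intersection-shrinking scheme outlined at the end of the overview. Fix $d_1$ and $d_3$; I would maintain a small auxiliary set $E$ of extra failed edges on the sub-path $(d_1,d_3)$, always of size $\too{1}$, together with an ``uncertain'' subset $I\subseteq (d_1,d_3)$ of candidate $d_2$'s whose shortest snake path in $G-\{d_1,d_2,d_3\}$ has not yet been produced. Initially $E=\emptyset$ and $I=(d_1,d_3)$.

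In each iteration I would (i) pick $e$ to be the middle edge of the largest maximal sub-interval of $I$ and add it to $E$; (ii) invoke Lemma~\ref{lemma:snake} in $\too{n}$ time to compute the shortest snake path $P$ in $G-(\{d_1,d_3\}\cup E)$; (iii) for every $d_2\in I$ that does not lie on $P$, report the answer for $d_2$ as the minimum of $|P|$ (when $P$ itself is a snake path for $\{d_1,d_2,d_3\}$, which can be read off from $P$'s intersection pattern with $st$ in $\too{1}$) together with, for each $e'\in E$, the shortest snake path in $G-\{d_1,d_2,d_3\}$ that passes through $e'$, each obtained in $\too{1}$ by the per-vertex/edge query form of Lemma~\ref{lemma:snake}; and (iv) replace $I$ by $I\cap P$ and iterate until $I$ is empty.

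For correctness, consider any $d_2$ resolved at some iteration $k$; since $d_2\notin P_k$, the path $P_k$ avoids $\{d_1,d_2,d_3\}$ entirely. The true shortest snake path $S^{\star}$ for $\{d_1,d_2,d_3\}$ either (a) avoids every $e'\in E$, in which case, together with the at-most-two-component snake structure of $S^{\star}$ on $[d_1,d_3]$, a short case analysis on how these two components sit relative to $d_2$ and the $e'$'s shows that $S^{\star}$ must coincide with $P_k$; or (b) uses some $e'\in E$, which is exactly captured by the enumerated ``through-$e'$'' queries. Therefore the reported minimum equals $|S^{\star}|$.

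The running time is $T\cdot\too{n}+\sum_k|I_{k-1}\setminus I_k|\cdot\too{1}$, where $T$ is the number of iterations. The second term telescopes to $\too{n}$ because the resolved sets are disjoint. The main obstacle is showing $T=O(\log n)$. I would prove this from the snake structure: every $P$ has at most two maximal intersection components with $[d_1,d_3]$, so when $e$ is placed at the middle of the largest maximal sub-interval $J^{\star}$ of $I$, either both components of $P$ fall inside $J^{\star}$, cutting $|I\cap J^{\star}|$ by at least a factor of two while leaving $I$ outside $J^{\star}$ untouched; or only one component of $P$ lies inside $J^{\star}$, in which case the second component contributes at most one new piece to $I$ while $|J^{\star}|$ itself is halved on the subsequent iteration. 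A potential argument tracking $\Phi(I)=\sum_J|J|^2$ over the maximal sub-intervals $J$ of $I$ then yields a constant-factor drop in $\Phi$ every two iterations, giving $T=O(\log n)$ and a total runtime of $\too{n}$ as required.
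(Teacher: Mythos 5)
Your proposal takes essentially the same route as the paper: iteratively bisect the candidate range, compute the shortest snake path avoiding the removed probe edges, resolve all $d_2$ not on that path (via the ``through-$e'$'' subqueries), and shrink the candidate set to the intersection with the new snake. The correctness argument you give — the true snake path for $\{d_1,d_2,d_3\}$ either avoids all probe edges (and then equals $P_k$ by the pair of inequalities $|S^{\star}|\geq|P_k|$ and $|P_k|\geq|S^{\star}|$) or passes one of them — matches the paper's. The potential function $\Phi(I)=\sum_J\lVert J\rVert^2$ is also the one the paper uses, written as $S^{(i)}=\lVert E_1^{(i)}\rVert^2+\lVert E_2^{(i)}\rVert^2$.

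The gap is in the time analysis, specifically in the step ``the second component contributes at most one new piece to $I$.'' That claim is true only if $I$ is always a union of at most two maximal intervals — otherwise $F_2\cap I$ can split into as many pieces as $I$ has components, and then $|J^{\star}|^2$ is only an $\Theta(1/m)$ fraction of $\Phi(I)$, so the drop per iteration degrades with $m$ and you no longer get $O(\log n)$ rounds. You assert the one-new-piece claim without proof; the paper supplies it by maintaining exactly two intervals $E_1^{(i)},E_2^{(i)}$ and arguing that the gap between them contains one of the previously removed probe edges $e_j$, so the two components of $\pi_i\cap D$ (which avoid every $e_j$) can each touch at most one of $E_1^{(i-1)},E_2^{(i-1)}$. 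You need this two-interval invariant (or some other $O(1)$ bound on the number of components of $I$) for your potential argument to close. Two smaller nits in the same paragraph: in your Case~1 the new $I$ is entirely inside $J^{\star}$, so $I$ outside $J^{\star}$ is \emph{discarded}, not ``untouched'' — that error is in the safe direction but the statement is wrong as written; and you omit the case where \emph{neither} component of $P$ lies inside $J^{\star}$, which kills $J^{\star}$ outright and is likewise fine but should be named.
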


We can easily see that, if Lemma \ref{thm1-5} is true, then we can enumerate all possible $d_1$ and $d_3$ and then get answers for all $d_2$. Therefore, we can answer the replacement path for all triples $\{d_1,d_2,d_3\}$, and the total time is in $\too{n^3}$. Therefore, to prove Theorem \ref{thm:3edges}, it remains for us to prove this lemma. 
 
\begin{proof}
~\\~\\
\noindent\textbf{Algorithm.}

We begin with introducing the algorithm. Let $D$ be the interval between $d_1$ and $d_3$. %Initially, we let $E_1^{(0)}$ to be the whole range $d_1d_3$ and $E_2^{(0)}$ to be empty. In each stage $i$ ($i \geq 1$), we do the following:

\begin{itemize}

    \item In Stage 1, let $e_1$ be the middle edge of $D$. We calculate the shortest snake path $\pi_1$ that avoids $\{d_1,d_3,e_1\}$ in $\too{n}$ time by \cref{lemma:snake}. So its intersection with $D$ is a union of two intervals. Let $E_1^{(1)}$ be the one with more edges, and $E_2^{(1)}$ be the other one. For $d_2$ not in $E_1^{(1)}\cup E_2^{(1)}$, the shortest snake path avoiding $\{d_1,d_2,d_3\}$ either goes through $e_1$, which can be computed by \cref{lemma:snake3}, or avoids $e_1$, which is just $\pi_1$. So next we only need to consider $d_2\in E_1^{(1)}\cup E_2^{(1)}$.

    \item In Stage $i$ ($i\geq 2$):
    \begin{itemize}

        \item[-] Let $e_i$ be the middle edge of $E_1^{(i-1)}$. Calculate the shortest snake path that avoids $\{d_1,d_3,e_1,\cdots,e_i\}$ and call it $\pi_i$.
    
        \item[-] Let $F=\pi_i \cap D$. Since $\pi_i$ is a snake path, $F$ is a union of at most two intervals. Consider $F \cap \left(E_1^{(i-1)}\cup E_2^{(i-1)}\right)$. Since there is at least an edge between $E_1^{(i-1)}$ and $E_2^{(i-1)}$ that is not in $F$, each interval of $F$ can not intersect with both $E_1^{(i-1)}$ and $E_2^{(i-1)}$, so $F \cap \left(E_1^{(i-1)}\cup E_2^{(i-1)}\right)$ consists of at most two intervals. Call the one with more edges $E_1^{(i)}$, and call the other one $E_2^{(i)}$ (both can be empty).
      \end{itemize}
    \item When both of $E_1^{(i)}$ and $E_2^{(i)}$ become empty or single vertex, end the loop. Otherwise, go to stage $i+1$.
        
%\begin{center}
%    \input{figure/fig-thm8-2-1}\\
%    Stage 1
%\end{center}

%\begin{center}
%    \input{figure/fig-thm8-2-2}\\
%    Stage 2
%\end{center}
    
\begin{center}
    \tikzset{every picture/.style={line width=0.75pt}} %set default line width to 0.75pt        

\begin{tikzpicture}[x=0.75pt,y=0.75pt,yscale=-1,xscale=1]
%uncomment if require: \path (0,404); %set diagram left start at 0, and has height of 404

%Straight Lines [id:da8707323545130606] 
\draw    (60,102) -- (122,102) ;
%Straight Lines [id:da11793205889287373] 
\draw    (238,102) -- (333,102) ;
%Curve Lines [id:da5848654664161711] 
\draw [color={rgb, 255:red, 144; green, 19; blue, 254 }  ,draw opacity=0.5 ][line width=1.5]    (90,102) .. controls (150.2,30.8) and (258.2,32.4) .. (320,102) ;
%Straight Lines [id:da7913986375169486] 
\draw [color={rgb, 255:red, 144; green, 19; blue, 254 }  ,draw opacity=0.5 ][line width=1.5]    (60,102) -- (91,102) ;
%Straight Lines [id:da06937805386454643] 
\draw    (135,102) -- (225,102) ;
%Straight Lines [id:da20433583278314704] 
\draw    (350,102) -- (420,102) ;
%Curve Lines [id:da35149956235355573] 
\draw [color={rgb, 255:red, 144; green, 19; blue, 254 }  ,draw opacity=0.5 ][line width=1.5]    (145,102) .. controls (179,71.25) and (230.5,71.25) .. (261,102) ;
%Curve Lines [id:da5711518441931145] 
\draw [color={rgb, 255:red, 144; green, 19; blue, 254 }  ,draw opacity=0.5 ][line width=1.5]    (192,102) .. controls (227.5,59.95) and (353,61.95) .. (377,102) ;
%Straight Lines [id:da04821981648783491] 
\draw [color={rgb, 255:red, 144; green, 19; blue, 254 }  ,draw opacity=0.5 ][line width=1.5]    (145,102) -- (192,102) ;
%Straight Lines [id:da6627963194923704] 
\draw [color={rgb, 255:red, 144; green, 19; blue, 254 }  ,draw opacity=0.5 ][line width=1.5]    (261,102) -- (320,102) ;
%Straight Lines [id:da24192396830843732] 
\draw [color={rgb, 255:red, 144; green, 19; blue, 254 }  ,draw opacity=0.5 ][line width=1.5]    (377,102) -- (420,102) ;
%Straight Lines [id:da6033910893268022] 
\draw    (192,139) -- (145,139) ;
\draw [shift={(145,139)}, rotate = 360] [color={rgb, 255:red, 0; green, 0; blue, 0 }  ][line width=0.75]    (0,5.59) -- (0,-5.59)   ;
\draw [shift={(192,139)}, rotate = 360] [color={rgb, 255:red, 0; green, 0; blue, 0 }  ][line width=0.75]    (0,5.59) -- (0,-5.59)   ;
%Straight Lines [id:da7311692096402028] 
\draw    (320,139) -- (261,139) ;
\draw [shift={(261,139)}, rotate = 360] [color={rgb, 255:red, 0; green, 0; blue, 0 }  ][line width=0.75]    (0,5.59) -- (0,-5.59)   ;
\draw [shift={(320,139)}, rotate = 360] [color={rgb, 255:red, 0; green, 0; blue, 0 }  ][line width=0.75]    (0,5.59) -- (0,-5.59)   ;
%Straight Lines [id:da3261849339728341] 
\draw    (60,302) -- (122,302) ;
%Straight Lines [id:da7680506116298979] 
\draw    (238,302) -- (279,302) ;
%Curve Lines [id:da6961208133173566] 
\draw [color={rgb, 255:red, 144; green, 19; blue, 254 }  ,draw opacity=0.5 ][line width=1.5]    (90,302) .. controls (150.2,230.8) and (268.2,232.4) .. (330,302) ;
%Straight Lines [id:da25106738282846985] 
\draw [color={rgb, 255:red, 144; green, 19; blue, 254 }  ,draw opacity=0.5 ][line width=1.5]    (60,302) -- (91,302) ;
%Straight Lines [id:da15545722311976917] 
\draw    (135,302) -- (225,302) ;
%Straight Lines [id:da729022532028285] 
\draw    (350,302) -- (420,302) ;
%Curve Lines [id:da3732174087695166] 
\draw [color={rgb, 255:red, 144; green, 19; blue, 254 }  ,draw opacity=0.5 ][line width=1.5]    (165,301.75) .. controls (200,270.25) and (270,270.25) .. (299,302) ;
%Curve Lines [id:da3786470977357188] 
\draw [color={rgb, 255:red, 144; green, 19; blue, 254 }  ,draw opacity=0.5 ][line width=1.5]    (210,301.25) .. controls (245.5,259.2) and (353,261.95) .. (377,302) ;
%Straight Lines [id:da6995116456195751] 
\draw [color={rgb, 255:red, 144; green, 19; blue, 254 }  ,draw opacity=0.5 ][line width=1.5]    (165,302) -- (210,302) ;
%Straight Lines [id:da6247729336799359] 
\draw [color={rgb, 255:red, 144; green, 19; blue, 254 }  ,draw opacity=0.5 ][line width=1.5]    (377,302) -- (420,302) ;
%Straight Lines [id:da1467333555999789] 
\draw    (192,339) -- (165,339) ;
\draw [shift={(165,339)}, rotate = 360] [color={rgb, 255:red, 0; green, 0; blue, 0 }  ][line width=0.75]    (0,5.59) -- (0,-5.59)   ;
\draw [shift={(192,339)}, rotate = 360] [color={rgb, 255:red, 0; green, 0; blue, 0 }  ][line width=0.75]    (0,5.59) -- (0,-5.59)   ;
%Straight Lines [id:da011195023983497698] 
\draw    (333,302) -- (293,302) ;
%Straight Lines [id:da5678091437979917] 
\draw    (320,339) -- (299,339) ;
\draw [shift={(299,339)}, rotate = 360] [color={rgb, 255:red, 0; green, 0; blue, 0 }  ][line width=0.75]    (0,5.59) -- (0,-5.59)   ;
\draw [shift={(320,339)}, rotate = 360] [color={rgb, 255:red, 0; green, 0; blue, 0 }  ][line width=0.75]    (0,5.59) -- (0,-5.59)   ;
%Straight Lines [id:da5253291806735774] 
\draw [color={rgb, 255:red, 144; green, 19; blue, 254 }  ,draw opacity=0.5 ][line width=1.5]    (299,302) -- (330,302) ;
%Straight Lines [id:da14956830420192346] 
\draw [color={rgb, 255:red, 0; green, 0; blue, 0 }  ,draw opacity=0.3 ] [dash pattern={on 4.5pt off 4.5pt}]  (165,80) -- (165,360) ;
%Straight Lines [id:da12484631684379643] 
\draw [color={rgb, 255:red, 0; green, 0; blue, 0 }  ,draw opacity=0.3 ] [dash pattern={on 4.5pt off 4.5pt}]  (192,80) -- (192,360) ;
%Straight Lines [id:da8942033803493419] 
\draw [color={rgb, 255:red, 0; green, 0; blue, 0 }  ,draw opacity=0.3 ] [dash pattern={on 4.5pt off 4.5pt}]  (299,85) -- (299,365) ;
%Straight Lines [id:da44751056177046566] 
\draw [color={rgb, 255:red, 0; green, 0; blue, 0 }  ,draw opacity=0.3 ] [dash pattern={on 4.5pt off 4.5pt}]  (320,85) -- (320,365) ;

% Text Node
\draw (122.8,98.8) node [anchor=north west][inner sep=0.75pt]  [font=\footnotesize]  {$d_{1}$};
% Text Node
\draw (418,105.4) node [anchor=north west][inner sep=0.75pt]  [font=\footnotesize]  {$t$};
% Text Node
\draw (57,105.4) node [anchor=north west][inner sep=0.75pt]  [font=\footnotesize]  {$s$};
% Text Node
\draw (226.8,98.8) node [anchor=north west][inner sep=0.75pt]  [font=\footnotesize]  {$e_{1}$};
% Text Node
\draw (334.8,98.8) node [anchor=north west][inner sep=0.75pt]  [font=\footnotesize]  {$d_{3}$};
% Text Node
\draw (159,119.6) node [anchor=north west][inner sep=0.75pt]  [font=\footnotesize]  {$E_{2}^{( 1)}$};
% Text Node
\draw (280,119.6) node [anchor=north west][inner sep=0.75pt]  [font=\footnotesize]  {$E_{1}^{( 1)}$};
% Text Node
\draw (237.5,37.4) node [anchor=north west][inner sep=0.75pt]  [font=\footnotesize]  {$\pi _{1}$};
% Text Node
\draw (122.8,298.8) node [anchor=north west][inner sep=0.75pt]  [font=\footnotesize]  {$d_{1}$};
% Text Node
\draw (418,305.4) node [anchor=north west][inner sep=0.75pt]  [font=\footnotesize]  {$t$};
% Text Node
\draw (57,305.4) node [anchor=north west][inner sep=0.75pt]  [font=\footnotesize]  {$s$};
% Text Node
\draw (226.8,298.8) node [anchor=north west][inner sep=0.75pt]  [font=\footnotesize]  {$e_{1}$};
% Text Node
\draw (334.8,298.8) node [anchor=north west][inner sep=0.75pt]  [font=\footnotesize]  {$d_{3}$};
% Text Node
\draw (170,319.6) node [anchor=north west][inner sep=0.75pt]  [font=\footnotesize]  {$E_{1}^{( 2)}$};
% Text Node
\draw (237.5,237.4) node [anchor=north west][inner sep=0.75pt]  [font=\footnotesize]  {$\pi _{2}$};
% Text Node
\draw (280.8,298.8) node [anchor=north west][inner sep=0.75pt]  [font=\footnotesize]  {$e_{2}$};
% Text Node
\draw (300,319.6) node [anchor=north west][inner sep=0.75pt]  [font=\footnotesize]  {$E_{2}^{( 2)}$};
% Text Node
\draw (222,164.9) node [anchor=north west][inner sep=0.75pt]  [font=\footnotesize]  {$\text{Stage\ 1}$};
% Text Node
\draw (222,364.9) node [anchor=north west][inner sep=0.75pt]  [font=\footnotesize]  {$\text{Stage\ } 2$};

\end{tikzpicture}
\end{center}

\end{itemize}

We observe that $E_1^{(i)} \cup E_2^{(i)}=D\cap (\pi_1 \cap \pi_2 \cap \dots \cap \pi_i)$. Therefore, for each $d_2$ between $d_1,d_3$, there exists some $\pi_i$ avoiding $d_2$, and we can find the smallest $k$ such that $\pi_k$ avoids $d_2$. Then we take the shortest snake path avoiding $\{d_1,d_2,d_3\}$ as the shortest one among $\pi_k$ and all snake paths going through $e_i$ that avoids $\{d_1,d_2,d_3\}$ for all $i\leq k$.

%For all $i\leq k$, we calculate the shortest snake path in $G-\{d_1,d_2,d_3,e_1,e_2,\cdots,e_{i-1}\}$ that goes through $e_i$, and output the shortest one among them and $\pi_k$. 

The correctness proof of this algorithm is straightforward. Consider any 
edge $d_2$ between $d_1$ and $d_3$, suppose $\pi_k$ is the first one in $\pi_1, \pi_2, \dots$ that avoids $d_2$. When $\og{s}{t}{\set{d_1, d_2, d_3}}$ is a snake path:

\begin{itemize}
    \item If it avoids $e_1,e_2, \dots, e_{k-1},e_k$, then it is $\pi_k$ by definition. 
    \item Otherwise, it will go through one of $e_1,e_2, \dots, e_{k-1},e_k$, so we can still capture it.
    %there exists an $i \leq k$ that $\og{s}{t}{\set{d_1, d_2, d_3}}$ avoids $\pi_1, \dots \pi_{i-1}$ and passes $\pi_i$. In the algorithm we have explicitly calculated the cases for all $i \leq k$, so we can still capture it.
\end{itemize}

Therefore, our algorithm is correct.
\\

\noindent\textbf{Time analysis.}

By Lemma \ref{lemma:snake3} and \ref{lemma:snake}, the time of each stage is $\too{n}$, and for each $d_2$, the time needed to find shortest snake path avoiding $\set{d_1, d_2, d_3}$ is $\too{1}$. We will show that there are at most $O(\log{n})$ stages. Let $S^{(i)}= {\left\| E_1^{(i)}\right\|}^2+{\left\| E_2^{(i)}\right\|}^2$, and we prove that $S^{(i+1)}\le\frac{5}{8}S^{(i)}.$

In stage $i+1$, the larger interval $E_1^{(i)}$ is cut into two almost equal-size intervals which we call $E_3^{(i)}$ and $E_4^{(i)}.$ (Note that the number of edges in either of $E_3^{(i)}$ and $E_4^{(i)}$ is at most half in $E_1^{(i)}$.) Then $E_1^{(i+1)}$ and $E_2^{(i+1)}$ is contained in two intervals among $E_2^{(i)}$, $E_3^{(i)}$ and $E_4^{(i)}.$ If they are contained in $E_3^{(i)}$ and $E_4^{(i)}$,
\begin{equation*}
\begin{split}
    S^{(i+1)}&={\left\| E_1^{(i+1)}\right\|}^2+{\left\| E_2^{(i+1)}\right\|}^2
    \le{\left\| E_3^{(i)}\right\|}^2+{\left\| E_4^{(i)}\right\|}^2
    \le\frac{1}{4}{\left\| E_1^{(i)}\right\|}^2+\frac{1}{4}{\left\| E_1^{(i)}\right\|}^2
    \le\frac{1}{2}S^{(i)}.\\
\end{split}
\end{equation*}

If at least one of them is contained in $E_2^{(i)}$, we know
\begin{equation*}
\begin{split}
    S^{(i+1)}&={\left\| E_1^{(i+1)}\right\|}^2+{\left\| E_2^{(i+1)}\right\|}^2
    \le{\left\| E_2^{(i)}\right\|}^2+{\left\| E_3^{(i)}\right\|}^2\\
    &\le{\left\| E_2^{(i)}\right\|}^2+\frac{1}{4}{\left\| E_1^{(i)}\right\|}^2
    \le \frac{5}{8}{\left\| E_2^{(i)}\right\|}^2+\frac{5}{8}{\left\| E_1^{(i)}\right\|}^2=\frac{5}{8}S^{(i)}.\\
    \end{split}
\end{equation*}

\end{proof}
\section{Hardness of 2FRP}\label{sec:hardness}

In this section we show that the 2FRP problem in undirected graphs is hard, assuming the APSP conjecture. We note that in directed graphs, 1FRP is as hard as APSP~\cite{WW18}. However, in undirected graphs, 1FRP can be solved in almost-optimal $\tilde{O}(m)$ time~\cite{NPW01}. 

\begin{theorem}\label{theorem:2FRP}
    If 2FRP in undirected weighted graphs can be solved in $O(n^{3-\epsilon})$ time for any $\epsilon>0$, then in undirected weighted graphs, the all-pairs shortest path problem can also be solved in $O(n^{3-\epsilon})$ time.
\end{theorem}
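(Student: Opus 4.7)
My plan is to reduce undirected APSP on an $n$-vertex graph $H$ to 2FRP on a single $O(n)$-vertex undirected gadget $G$, via the standard $O(\log n)$-round repeated-squaring reduction from APSP to $(\min,+)$-matrix multiplication. It therefore suffices to exhibit, for any two $n\times n$ matrices $A,B$ with entries in $[0,M]$, a construction of an undirected graph $G$ with $|V(G)|=O(n)$ such that a single 2FRP invocation on $G$ outputs every entry of the product $A\star B$ (up to known constants).

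Let $X=Y=2M+1$ and let $Z$ be polynomially large. Take $V(G)=\{s,t\}\cup\{x_1,\dots,x_{2n-1}\}\cup\{c_1,\dots,c_n\}$ with $x_0:=s$ and $x_{2n}:=t$. The ``spine'' is the path $s=x_0\!-\!x_1\!-\!\cdots\!-\!x_{2n}=t$ consisting of $2n$ zero-weight edges $e_1,\dots,e_{2n}$ (with an infinitesimal perturbation to guarantee uniqueness of shortest paths). For every $i,k\in[n]$, I add a row-side edge $(x_{i-1},c_k)$ of weight $A'_{ik}:=A_{ik}+X(n-i)+Z$; for every $k,j\in[n]$, a column-side edge $(c_k,x_{n+j})$ of weight $B'_{kj}:=B_{kj}+Yj+Z$. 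Crucially the middle spine vertex $x_n$ is \emph{not} adjacent to any $c_k$, so the $c$-layer is the only bridge from the row half of the spine to the column half. The spine is then the unique shortest $s$-$t$ path (of cost $\approx 0$), since any alternative must use at least one row-side and one column-side edge and thus pay at least $2Z>0$.

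For a query $(i,j)\in[n]^2$ I fail $F=\{e_i,e_{n+j}\}$ and claim
\[
\bigl|\pi_{G-F}(s,t)\bigr|\;=\;C'_{ij}\;:=\;2Z+X(n-i)+Yj+(A\star B)_{ij},
\]
so $(A\star B)_{ij}$ is recovered in $O(1)$ arithmetic. The canonical replacement path $s\to x_1\to\cdots\to x_{i-1}\to c_k\to x_{n+j}\to\cdots\to x_{2n}=t$ has cost $A'_{ik}+B'_{kj}$ and yields $C'_{ij}$ when minimized over $k$. Summing $O(1)$ extraction cost over all $n^2$ pairs on top of one 2FRP call gives $A\star B$ in time $\tilde{O}(n^{3-\epsilon})$, and iterating $O(\log n)$ squarings computes undirected APSP in the same asymptotic bound.

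The hard part is showing that no other replacement path beats $C'_{ij}$, and two families need to be ruled out. (a) \emph{Wider single crossings} at $(i',j')$ with $i'\le i$, $j'\ge j$ and $(i',j')\ne(i,j)$ have cost $C'_{i'j'}=2Z+X(n-i')+Yj'+(A\star B)_{i'j'}$; each unit of widening increases the position-dependent penalty by at least $\min(X,Y)=2M+1$, which strictly exceeds the at-most-$2M$ swing of $A\star B$, so $(i,j)$ is the unique minimizer. (b) \emph{Reverse crossings} that first enter the $c$-layer from a column-side vertex $x_{n+j'}$ are infeasible as single crossings: reaching any $x_{n+j'}$ with $j'\ge 1$ from $s$ on the remaining spine necessarily traverses the failed edge $e_i$, so any such path must already include a prior row-to-column crossing and is in fact a multi-crossing; multi-crossing paths pay at least $4Z$ in $c$-layer edges, which is beaten by $C'_{ij}\le 2Z+n(X+Y)+2M$ as soon as $Z>n(X+Y)+2M$. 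This second exclusion is where the decision to disconnect $x_n$ from the $c$-layer pays off; in an undirected setting without this separation a naive construction would admit cheap reverse crossings that would dissolve the encoding. Combining (a) and (b) establishes the claimed equality, completing the reduction and the proof.
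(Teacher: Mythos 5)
Your proof is correct, but it takes a genuinely different (and more roundabout) route than the paper's. The paper constructs a single gadget $H$ that directly embeds the undirected APSP instance $G$: two zero-weight paths $S=(s_0,\dots,s_{n+1}=s)$ and $T=(t_0,\dots,t_{n+1}=t)$, and two matchings $E_1=\{(s_i,v_i)\}$, $E_2=\{(v_i,t_i)\}$ with escalating weights $iN$. Failing $(s_{i-1},s_i)$ and $(t_{j-1},t_j)$ forces the $s$--$t$ detour to enter $G$ at $v_i$ and leave at $v_j$ (the escalating weights make any ``wider'' entry/exit strictly more expensive, and $N$ being large forbids re-entering the matchings), so $|\pi_{H-F}(s,t)|=|\pi_G(v_i,v_j)|+(i+j)N$ in one shot. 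Your proof instead goes through the textbook chain undirected APSP $\to$ $(\min,+)$-matrix product via $O(\log n)$ repeated squarings $\to$ 2FRP, with a spine-plus-bipartite-layer gadget per multiplication. The core mechanism (two failures on a path index $(i,j)$, positional penalties $X,Y>2M$ pin the crossing, a huge offset $Z$ forbids multi-crossings) is the same as the paper's, and your exclusion arguments (a) and (b) are sound --- I checked the case analysis including the delicate $j=1$ and $i=n$ corners, where the row-row shortcut in $\pi_{G-e_i}(s,t)$ still uses $e_{n+1}$ (or, when $i=n$, the quantities happen to agree so the trivially inherited distance is still $C'_{n1}$). What the paper's construction buys over yours is that it is fully direct: one 2FRP call with no $(\min,+)$-MM middleman, no $\log n$ factor, and no need to invoke the APSP $\leftrightarrow$ $(\min,+)$-MM equivalence. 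What your route buys is that it shows 2FRP is at least as hard as generic $(\min,+)$-matrix product, which is a cleaner conceptual anchor in the fine-grained world, at the cost of being less self-contained.
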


We remark that by \cite{WW18}, the hardness of the all-pairs shortest path problem in undirected weighted graphs is equivalent to the hardness of APSP. Therefore, any algorithm in $O(n^{3-\epsilon})$ time for 2FRP will refuse the APSP conjecture. So we immediately get the corollary:

\begin{corollary}
    Assuming the APSP conjecture that APSP cannot be solved in $O(n^{3-\epsilon})$ time for any $\epsilon>0$, 2FRP problem in undirected weighted graphs cannot be solved in $O(n^{3-\epsilon})$ time for any $\epsilon>0$.
\end{corollary}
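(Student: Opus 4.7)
The plan is to give a polynomial-time, $O(n)$-size reduction from undirected weighted APSP to undirected weighted 2FRP, so that an $O(n^{3-\epsilon})$-time 2FRP algorithm immediately yields an $O(n^{3-\epsilon})$-time APSP algorithm. Given an APSP instance $G' = (V', E')$ with $V' = \{v_1, \dots, v_n\}$ and positive real edge weights, I construct an auxiliary $O(n)$-vertex undirected weighted graph $G$ with source $s$ and sink $t$ such that $\Theta(n^2)$ designated pairs of edge failures in $G$ produce replacement-path lengths that encode every pairwise distance $d_{G'}(v_i, v_j)$, each up to a known additive constant. Running the purported 2FRP algorithm on $(G, s, t)$ and reading off the relevant $\Theta(n^2)$ outputs then recovers APSP on $G'$ in total time $O(n^{3-\epsilon})$, as desired.

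The construction is a ``spine plus gadget'' embedding: I attach two selector paths---a left spine on vertices $s = p_0, p_1, \dots, p_n$ and a right spine on vertices $q_n, q_{n-1}, \dots, q_0 = t$ joined by a bridging edge $(p_n, q_n)$---to a copy of $G'$, together with branch edges $(p_i, v_i)$ and $(v_i, q_i)$ for every $i \in [n]$. All spine edges share a common weight $W$ which I will pick strictly larger than the diameter of $G'$, and the branch-edge weights will be calibrated by a linear-in-$i$ rule. The goals are (i) that the canonical spine $s, p_1, \dots, p_n, q_n, \dots, q_1, t$ is the unique $s$-$t$ shortest path in $G$, and (ii) that for every pair $(i,j) \in [n]^2$ the replacement path $\pi_{G - \{(p_{i-1}, p_i), (q_j, q_{j-1})\}}(s, t)$ is forced to leave the spine at $v_i$, cross $G'$ along a shortest $v_i$-to-$v_j$ path, and re-enter the spine at $v_j$. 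Given (ii), the replacement-path length equals $d_{G'}(v_i, v_j)$ plus a constant depending only on $i, j, n, W$, so each such 2FRP query outputs one APSP distance.

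The main obstacle I anticipate is certifying (ii). Because $G$ is undirected the replacement path has broad routing freedom: it could enter $G'$ at some $v_k$ with $k \ne i$, re-emerge at any $v_\ell$ on either spine, or perform a U-turn at a single branch, re-using the same $v_k$ for both entry and exit. The plan to rule these alternatives out is to calibrate the branch-edge weights so that any unit of spine-deviation from the intended entry or exit point pays a net surcharge of at least $W$, which strictly exceeds the maximum compensating saving $\mathrm{diam}(G')$ achievable by rerouting inside $G'$. A finite case analysis on the shape of the replacement path---a single cross-spine detour, two separate local detours, or a U-turn at one branch---then bounds each candidate against the intended detour via the triangle inequality in $G'$, confirming that the shortest replacement path is exactly the intended one and completing the reduction.
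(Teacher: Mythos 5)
Your reduction is the same one the paper uses: build a gadget graph around a copy of $G'$ in which two parallel "selector" paths from $s$ and toward $t$, joined to $G'$ by per-vertex branch edges, turn each choice of one failed edge per selector into a specific $v_i$-to-$v_j$ query in $G'$. The paper's gadget is somewhat cleaner than yours, though. It keeps the two selector paths $S$ and $T$ completely disjoint (no bridging edge) and gives all spine edges weight $0$, with branch edges $(s_i,v_i),(v_i,t_i)$ of weight $iN$ where $N$ exceeds the total edge weight of $G'$; the no-failure $s$-$t$ path then already passes through $v_1$, and after deleting $(s_{i-1},s_i)$ and $(t_{j-1},t_j)$ the only cheap exit from $S$ is at $s_i$ (the smallest surviving index) because the branch cost strictly increases in the index while the spine costs nothing. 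This collapses the case analysis to a one-line cost comparison. Your version carries extra complications: the bridging edge $(p_n,q_n)$ creates a spine-only $s$-$t$ route and opens the door to "double-crossing" detours through $G'$ that you then have to rule out, and the nonzero spine weight forces you to calibrate the branch weights so that the \emph{decrement} per index step dominates $W$ (not merely that $W > \mathrm{diam}(G')$). Also watch the indexing: with $s=p_0$ and the failure $(p_{i-1},p_i)$, the source-side component only reaches $p_{i-1}$, so the forced exit is at $v_{i-1}$, not $v_i$; in particular $i=1$ isolates $s$ entirely, and $v_n$ is never the exit vertex, so the construction as stated misses a row and column of the distance matrix. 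These are off-by-one and calibration issues, not conceptual errors, but they would need to be repaired for the argument to close. The paper's zero-weight-spine, disjoint-path choice sidesteps all of them at once, which is what makes its Lemma 8.3 proof a short direct comparison rather than a multi-case bound.
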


Suppose toward contradiction that we can solve 2FRP in $O(n^{3-\epsilon})$ time. Consider any undirected weighted all-pairs shortest path instance $G$ with $n$ nodes $v_1, v_2, ..., v_n$, we will construct an undirected weighted 2FRP instance $H$, solving which will result in solving all-pairs shortest path in $G$.

Let $N$ be a number that is larger than the sum of all weights in $G$. Let $S$ be a path with $(n+2)$ nodes {$(s_0,s_1,s_2,...,s_n,s_{n+1}=s)$} with weight 0 edges $(s_i,s_{i+1})$ for all $0 \leq i \leq n$. Similarly, let $T$ be a path with $(n+2)$ nodes $(t_0,t_1,t_2,...,t_n,t_{n+1}=t)$ with weight 0 edges $(t_i,t_{i+1})$ for all $0 \leq i \leq n$. Moreover, we construct two matchings $E_1 = \set{(s_i,v_i)}$ with weight $w(s_i,v_i)=iN$ for all $1 \leq i \leq n$, and $E_2 = \set{(v_i,t_i)}$ with weight $w(v_i,t_i)=iN$ for all $1 \leq i \leq n$.

Let {$H=S \cup E_1 \cup G \cup E_2 \cup T$}. By inspection, the $s$ to $t$ shortest path in $H$ is precisely the path goes from $s$ to $s_1$ in $S$, from $s_1$ to $v_1$ to $t1$, and then from $t_1$ to $t$ in $T$. Consider any pair of failed edges, one on $\pi_H(s,s_1)$ and the other on $\pi_H(t_1,t)$. We will show the following relationship for 2FRP in $H$ and all-pairs shortest paths in $G$:

\begin{center}
    \tikzset{every picture/.style={line width=0.75pt}} %set default line width to 0.75pt        

\begin{tikzpicture}[x=0.75pt,y=0.75pt,yscale=-1,xscale=1]
%uncomment if require: \path (0,245); %set diagram left start at 0, and has height of 245

%Straight Lines [id:da9996226787097123] 
\draw [color={rgb, 255:red, 0; green, 0; blue, 0 }  ,draw opacity=0.2 ]   (80,160) -- (328,160) ;
%Straight Lines [id:da5915061592504605] 
\draw    (120,120) -- (416,120) ;
%Straight Lines [id:da9754491514474838] 
\draw [color={rgb, 255:red, 0; green, 0; blue, 0 }  ,draw opacity=0.2 ]   (80,80) -- (264,80) ;
%Straight Lines [id:da378107878892892] 
\draw    (120,80) -- (120,160) ;
%Straight Lines [id:da2752767900650339] 
\draw    (160,80) -- (160,160) ;
%Straight Lines [id:da2719415238490511] 
\draw    (200,80) -- (200,160) ;
%Straight Lines [id:da3738825594192047] 
\draw    (416,80) -- (416,160) ;
%Straight Lines [id:da5899184064390524] 
\draw [color={rgb, 255:red, 0; green, 0; blue, 0 }  ,draw opacity=0.2 ]   (284,80) -- (456,80) ;
%Straight Lines [id:da6974244261659748] 
\draw [color={rgb, 255:red, 0; green, 0; blue, 0 }  ,draw opacity=0.2 ]   (348,160) -- (456,160) ;
%Straight Lines [id:da007392327998444359] 
\draw [color={rgb, 255:red, 144; green, 19; blue, 254 }  ,draw opacity=0.5 ][line width=1.5]    (292,80) -- (456,80) ;
%Straight Lines [id:da03927555090963386] 
\draw [color={rgb, 255:red, 144; green, 19; blue, 254 }  ,draw opacity=0.5 ][line width=1.5]    (292,80) -- (292,120) ;
%Straight Lines [id:da3352853890629407] 
\draw [color={rgb, 255:red, 144; green, 19; blue, 254 }  ,draw opacity=0.5 ][line width=1.5]    (356,120) -- (356,160) ;
%Curve Lines [id:da12167873938026053] 
\draw [color={rgb, 255:red, 144; green, 19; blue, 254 }  ,draw opacity=0.5 ][line width=1.5]    (292,120) .. controls (311,100.92) and (339,100.26) .. (356,120) ;
%Straight Lines [id:da06307583934891026] 
\draw [color={rgb, 255:red, 144; green, 19; blue, 254 }  ,draw opacity=0.5 ][line width=1.5]    (356,160) -- (456,160) ;

% Text Node
\draw (460,155.4) node [anchor=north west][inner sep=0.75pt]  [font=\footnotesize]  {$t_{n+1} =t$};
% Text Node
\draw (68,155.4) node [anchor=north west][inner sep=0.75pt]  [font=\footnotesize]  {$t_{0}$};
% Text Node
\draw (417,106.4) node [anchor=north west][inner sep=0.75pt]  [font=\footnotesize]  {$v_{n}$};
% Text Node
\draw (121,106.4) node [anchor=north west][inner sep=0.75pt]  [font=\footnotesize]  {$v_{1}$};
% Text Node
\draw (460,75.4) node [anchor=north west][inner sep=0.75pt]  [font=\footnotesize]  {$s_{n+1} =s$};
% Text Node
\draw (67,74.4) node [anchor=north west][inner sep=0.75pt]  [font=\footnotesize]  {$s_{0}$};
% Text Node
\draw (418,66.4) node [anchor=north west][inner sep=0.75pt]  [font=\footnotesize]  {$s_{n}$};
% Text Node
\draw (417,146.4) node [anchor=north west][inner sep=0.75pt]  [font=\footnotesize]  {$t_{n}$};
% Text Node
\draw (121,146.4) node [anchor=north west][inner sep=0.75pt]  [font=\footnotesize]  {$t_{1}$};
% Text Node
\draw (161,146.4) node [anchor=north west][inner sep=0.75pt]  [font=\footnotesize]  {$t_{2}$};
% Text Node
\draw (201,146.4) node [anchor=north west][inner sep=0.75pt]  [font=\footnotesize]  {$t_{3}$};
% Text Node
\draw (161,106.4) node [anchor=north west][inner sep=0.75pt]  [font=\footnotesize]  {$v_{2}$};
% Text Node
\draw (201,106.4) node [anchor=north west][inner sep=0.75pt]  [font=\footnotesize]  {$v_{3}$};
% Text Node
\draw (121,66.4) node [anchor=north west][inner sep=0.75pt]  [font=\footnotesize]  {$s_{1}$};
% Text Node
\draw (161,66.4) node [anchor=north west][inner sep=0.75pt]  [font=\footnotesize]  {$s_{2}$};
% Text Node
\draw (201,66.4) node [anchor=north west][inner sep=0.75pt]  [font=\footnotesize]  {$s_{3}$};
% Text Node
\draw (107,94.4) node [anchor=north west][inner sep=0.75pt]  [font=\footnotesize]  {$N$};
% Text Node
\draw (107,134.4) node [anchor=north west][inner sep=0.75pt]  [font=\footnotesize]  {$N$};
% Text Node
\draw (141,95.4) node [anchor=north west][inner sep=0.75pt]  [font=\footnotesize]  {$2N$};
% Text Node
\draw (141,134.4) node [anchor=north west][inner sep=0.75pt]  [font=\footnotesize]  {$2N$};
% Text Node
\draw (181,95.4) node [anchor=north west][inner sep=0.75pt]  [font=\footnotesize]  {$3N$};
% Text Node
\draw (181,134.4) node [anchor=north west][inner sep=0.75pt]  [font=\footnotesize]  {$3N$};
% Text Node
\draw (396,95.4) node [anchor=north west][inner sep=0.75pt]  [font=\footnotesize]  {$nN$};
% Text Node
\draw (397,135.4) node [anchor=north west][inner sep=0.75pt]  [font=\footnotesize]  {$nN$};

\end{tikzpicture}
\end{center}

\begin{lemma}\label{lemma:2FRP}
    
For any $1 \leq i,j \leq n$, let $D=\set{(s_{i-1},s_{i}),(t_{j-1},t_{j})}$ be a set of two edge failures with one in $S$ and one in $T$. Then regarding the failure set $D$:
    \[ |\hg{s}{t}{D}| = |\pi_G(v_i,v_j)| + (i+j)N.\]
    
\end{lemma}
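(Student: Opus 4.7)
}

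My plan is to establish the equality by a matching upper and lower bound, leveraging that $N$ strictly exceeds the total weight of $G$ so that the $N$-weighted matching edges in $E_1 \cup E_2$ dominate the cost.

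For the upper bound, I will exhibit an explicit $s$-$t$ path in $H - D$: starting from $s = s_{n+1}$, walk backward in $S$ along the zero-weight edges to $s_i$ (this avoids the failed edge $(s_{i-1},s_i)$ since we only use $s_k$ with $k \geq i$), cross via the matching edge $(s_i, v_i)$ of weight $iN$, follow the shortest $G$-path $\pi_G(v_i, v_j)$, cross via $(v_j, t_j)$ of weight $jN$, and finally walk forward through $T$ from $t_j$ to $t_{n+1} = t$ along zero-weight edges (avoiding $(t_{j-1}, t_j)$). This path has weight exactly $iN + |\pi_G(v_i, v_j)| + jN$, so $|\pi_{H-D}(s,t)| \leq |\pi_G(v_i, v_j)| + (i+j)N$.

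For the lower bound, consider any $s$-$t$ path $P$ in $H - D$. Since $S$ is connected to the rest of $H$ only through $E_1$ and $T$ only through $E_2$, $P$ must contain $c_1 \geq 1$ edges of $E_1$ and $c_2 \geq 1$ edges of $E_2$. Moreover, the failure $(s_{i-1},s_i)$ isolates $\{s_0,\dots,s_{i-1}\}$ from $s$, so every $E_1$-edge $P$ uses is of the form $(s_k, v_k)$ with $k \geq i$, contributing at least $iN$; symmetrically each $E_2$-edge contributes at least $jN$. I will split into two cases. In the \textbf{main case} $c_1 = c_2 = 1$, the path has the form $s \rightsquigarrow s_k \to v_k \rightsquigarrow v_l \to t_l \rightsquigarrow t$ with $k \geq i$, $l \geq j$, and the middle segment lying in $G$. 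Its weight is at least $kN + |\pi_G(v_k, v_l)| + lN = (i+j)N + |\pi_G(v_i,v_j)| + \bigl[(k-i) + (l-j)\bigr]N + \bigl(|\pi_G(v_k,v_l)| - |\pi_G(v_i,v_j)|\bigr)$; since $|\pi_G(v_i,v_j)| < N$ and the bracketed quantity is nonnegative, the expression is at least $(i+j)N + |\pi_G(v_i,v_j)|$, with equality attainable only when $k=i, l=j$. In the \textbf{remaining case} $c_1 + c_2 \geq 3$, the $E_1 \cup E_2$-edges alone contribute at least $(c_1 i + c_2 j) N \geq (i+j+\min(i,j)) N \geq (i+j+1)N$, which already exceeds $(i+j)N + |\pi_G(v_i,v_j)|$ because $|\pi_G(v_i,v_j)| < N$.

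The main obstacle is the tension in the main case between increasing $k$ or $l$ (which costs at least $N$ per unit increase) and possibly shortening $|\pi_G(v_k, v_l)|$; the key observation that unlocks it is precisely the choice of $N$ as larger than the sum of all edge weights in $G$, guaranteeing $|\pi_G(v_k,v_l)| - |\pi_G(v_i,v_j)| > -N$. Combining both bounds gives the stated equality.
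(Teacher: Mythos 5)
Your overall strategy mirrors the paper's: construct the explicit path $s \rightsquigarrow s_i \to v_i \rightsquigarrow v_j \to t_j \rightsquigarrow t$ for the upper bound, and for the lower bound charge the $N$-scaled matching edges. However, there is a genuine error in your lower bound, specifically in the case $c_1+c_2 \geq 3$. You assert that ``the failure $(s_{i-1},s_i)$ isolates $\{s_0,\dots,s_{i-1}\}$ from $s$, so every $E_1$-edge $P$ uses is of the form $(s_k,v_k)$ with $k\geq i$.'' That isolation holds only \emph{within the subgraph $S$}. In $H-D$ the vertices $s_0,\dots,s_{i-1}$ are still reachable from $s$ through $G$ and low-index $E_1$ edges (e.g.\ $s\rightsquigarrow s_i \to v_i \rightsquigarrow v_1 \to s_1$), so a simple $s$-$t$ path can make an excursion into $\{s_0,\dots,s_{i-1}\}$ by entering and leaving via $E_1$ edges of index strictly less than $i$. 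Those edges contribute $kN$ with $k<i$, not at least $iN$, so the bound $(c_1 i + c_2 j)N$ is not justified and the inequality $c_1 i + c_2 j \geq i+j+\min(i,j)$ is then vacuous.

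The conclusion of that case is nonetheless true and can be recovered with a small repair: the \emph{first} $E_1$ edge along $P$ (read from $s$) must have index at least $i$, since the initial $S$-segment of $P$ starts at $s$ and stays in the component $\{s_i,\dots,s_{n+1}\}$; symmetrically the \emph{last} $E_2$ edge has index at least $j$; and every remaining matching edge has weight at least $N$ (indices start at $1$). So when $c_1+c_2\geq 3$ the matching edges alone contribute at least $iN+jN+(c_1+c_2-2)N\geq (i+j+1)N > (i+j)N + |\pi_G(v_i,v_j)|$. With that replacement your argument is complete and equivalent to the paper's, which picks exactly one $E_1$ edge of index $\geq i$ and one $E_2$ edge of index $\geq j$ and observes that any strictly larger index or any extra matching edge pushes the total past $(i+j+1)N$.
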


\begin{proof}
    %First consider the path $P = ss_{i} \circ \set{(s_i,v_i)} \circ \pi_G(v_i,v_j) \circ \set{(v_j,t_j)} \circ t_jt$ in $H$. Since $|ss_i|=|t_jt|=0$ by the construction of paths $S$ and $T$, and $w(s_i,v_i)=iN, w(v_j,t_j)=jN$, we obtain that the path $P$ we constructed is with length $|P|=0 + iN + |\pi_G(v_i,v_j)| + jN + 0 = |\pi_G(v_i,v_j)| + (i+j)N$. Therefore, $|\hg{s}{t}{D}| \leq |\pi_G(v_i,v_j)| + (i+j)N.$

    For any $s$-$t$ path in $H-D$, we observe that there must be an edge $(s_{a},v_{a})$ for some $i \leq a \leq n$ and an edge $(v_{b},t_{b})$ for some $j \leq b \leq n$ that is taken. Consider the case that $a>i$ or $b>j$, then $(a+b) > (i+j)$ and the path will be of length at least $(a+b)N > |\pi_G(v_i,v_j)| + (i+j)N$, where $|\pi_G(v_i,v_j)| + (i+j)N$ is the length of the $s$-$t$ path $P = \pi_H(s,s_{i}) \circ \set{(s_i,v_i)} \circ \pi_G(v_i,v_j) \circ \set{(v_j,t_j)} \circ \pi_H(t_j,t)$ in $H$. Therefore, we know the shortest path in $H-D$ must take two edges $(s_{i},v_{i})$ and $(v_{j},t_{j})$, and thus $|\hg{s}{t}{D}| = |\hg{v_i}{v_j}{D}| + (i+j)N$.

    Now consider $\hg{v_i}{v_j}{D}$. If it takes any edge in the matching $E_1 \cup E_2$, then $|\hg{v_i}{v_j}{D}| \geq N > \pi_G(v_i,v_j)$. Therefore, $\hg{v_i}{v_j}{D}$ can only take edges in $G$ and thus we know $|\hg{v_i}{v_j}{D}| = |\pi_G(v_i,v_j)|$. Therefore, we obtain $|\hg{s}{t}{D}| = |\pi_G(v_i,v_j)| + (i+j)N$.
    
\end{proof}

With this lemma, we can prove Theorem \ref{theorem:2FRP}. Considering any undirected weighted all-pairs shortest path instance $G$, we construct an undirected weighted graph $H$ as above. Now we check all failure set $D=\set{(s_{i-1},s_{i}),(t_{j-1},t_{j})}$ for all pairs $(i,j)$, $1 \leq i,j \leq n$. By Lemma \ref{lemma:2FRP}, we can obtain $|\pi_G(v_i,v_j)|$ from $|\hg{s}{t}{D}|$, which means by solving 2FRP in $H$ we can solve all-pairs shortest paths in $G$.

Since we know $n' = 2(n+2)+n=O(n)$ nodes in $H$, if we have a truly subcubic time algorithm that solving 2FRP in time $O((n')^{3-\epsilon}) = O(n^{3-\epsilon})$ for $H$, then using this algorithm we can solve all-pairs shortest paths in $O(n^{3-\epsilon})$ time for $G$. Therefore, Theorem \ref{theorem:2FRP} is true.
\section{Almost Optimal Single-Source Replacement Path under Multiple Edge Failures}\label{sec:2ssrp}

For a graph $G$ and a vertex $s,$ the 2-fault single-source replacement path problem is that for every tuple of two edges $d_1,d_2$ and a vertex $t,$ answer the distance between $s$ and $t$ in $G$ after removing edges $d_1$ and $d_2.$ %This problem is called a Single-Source Replacement Path (SSRP) problem. 
We show how to solve it in almost optimal time $\too{n^3}$ for undirected weighted graphs, via the incremental DSO.

Let $T$ be the shortest path tree of $s$ in $G,$ with edges $e_1,e_2,\cdots,e_{n-1}$. Our algorithm goes as follows: we maintain a dynamic DSO on $G.$ For each $1\le i<n,$ we remove $e_i$ in $G,$ maintain the dynamic DSO. For every vertex $t_j$ in the subtree of $e_i$ in $T$ and every edge $d_k$ on the replacement path $\og{s}{t_j}{e_i},$ save the distance $\pi_{(G-e_i)-d_k}(s,t_j)$ in $G-{e_i}$ via querying the dynamic DSO. At the end of each step, add $e_i$ back into $G.$

\paragraph{Correctness.}

For each tuple $(d_1,d_2,t),$ if none of $d_1,d_2$ is in $T,$ we know the shortest path $st$ is the same after removing $d_1,d_2.$ W.l.o.g., suppose $d_1\in T,$ and $t$ is in the subtree of $d_1$ in $T.$ If $d_2$ is not in $\og{s}{t}{d_1},$ we know that $\og{s}{t}{d_1-d_2}=\og{s}{t}{d_1},$ and this value can be queried in the static DSO. If $d_2\in \og{s}{t}{d_1},$ we know the value equals $\pi_{(G-d_1)-d_2}(s,t)$ from the dynamic DSO in graph $G-d_1,$ which is obtained in our algorithm.

\paragraph{Time Analysis.}

The incremental DSO in Section \ref{sec-inc} can be viewed as an offline fully dynamic DSO by Theorem \ref{thm1-2}. Therefore, the update time is $\too{n^2}$ for each edge $e_i.$ The number of edges on a shortest path tree is $O(n),$ so the total update time is $\too{n^3}.$ The number of distances we query is $O(n^3),$ each with query time $\too{1}$, so the total query time is also $\too{n^3}.$ We also retrieve all 1-fault single-source replacement paths $\og{s}{t_j}{e_i}$ in the static DSO, and the time is also bounded by $\too{n^3}.$

\paragraph{Extend to $f$-fault SSRP for $f\geq 2$.}

As the $f$FRP reduction in~\cite{WWX22}, when we have an $f$-fault SSRP algorithm in $O(n^s)$ time, we can construct an $(f+1)$-fault SSRP algorithm in $O(n^{s+1})$ time, by computing the $f$-fault SSRP in graph $G-e$ for every $e$ in the shortest path tree from $s$. Thus, the undirected 2-fault SSRP algorithm in $\tilde{O}(n^3)$ time can be extended to undirected $f$-fault SSRP algorithm in $\tilde{O}(n^{f+1})$ time for all $f\geq 2$, which is almost optimal.

\vspace{5mm}
\bibliographystyle{plain}
\bibliography{ref}

\end{document}